\documentclass[
  reprint,
  aps,
  prx,
  longbibliography,
  superscriptaddress
]{revtex4-2}

\usepackage{style/setup}
\usepackage{style/macros}
\makeatletter
\newcommand{\DisableTOC}{%
  \@ifundefined{orig@addcontentsline}{%
    \let\orig@addcontentsline\addcontentsline
  }{}%
  \renewcommand{\addcontentsline}[3]{}%
}
\newcommand{\EnableTOC}{%
  \@ifundefined{orig@addcontentsline}{}{%
    \let\addcontentsline\orig@addcontentsline
  }%
}
\makeatother

\begin{document}
\DisableTOC

\title{Ansatz-Free Learning of Lindbladian Dynamics In Situ}

\author{Petr Ivashkov}
\thanks{Equal contribution}
\affiliation{Department of Information Technology and Electrical Engineering, ETH Z\"urich, Z\"urich, Switzerland}
\affiliation{Department of Physics, Harvard University, Cambridge, MA 02138, USA}
\author{Nikita Romanov}
\thanks{Equal contribution}
\affiliation{Department of Physics, Harvard University, Cambridge, MA 02138, USA}
\affiliation{Quantum Science and Engineering, Harvard University, Cambridge, MA 02138, USA}
\author{Weiyuan Gong}
\affiliation{School of Engineering and Applied Sciences, Harvard University, Allston, MA 02134, USA}
\author{Andi Gu}
\affiliation{Department of Physics, Harvard University, Cambridge, MA 02138, USA}

\author{Hong-Ye Hu}
\email[]{hongyehu.physics@gmail.com}
\affiliation{Department of Physics, Harvard University, Cambridge, MA 02138, USA}
\author{Susanne F. Yelin}
\email[]{syelin@g.harvard.edu}
\affiliation{Department of Physics, Harvard University, Cambridge, MA 02138, USA}

\begin{abstract}
Characterizing the dynamics of open quantum systems at the level of microscopic interactions and error mechanisms is essential for calibrating quantum hardware, designing robust simulation protocols, and developing tailored error-correction methods. Under Markovian noise/dissipation, a natural characterization approach is to identify the full Lindbladian generator that gives rise to both coherent (Hamiltonian) and dissipative dynamics. Prior protocols for learning Lindbladians from dynamical data assumed pre-specified interaction structure, which can be restrictive when the relevant noise channels or control imperfections are not known in advance. In this paper, we present the first sample-efficient protocol for learning sparse Lindbladians without assuming any a priori structure or locality. Our protocol is ancilla-free, uses only product-state preparations and Pauli-basis measurements, and achieves near-optimal time resolution, making it compatible with near-term experimental capabilities. The final sample complexity depends on linear-system conditioning, which we find empirically to be moderate for a broad class of physically motivated models. Together, this provides a systematic route to scalable characterization of open-system quantum dynamics, especially in settings where the error mechanisms of interest are unknown.
\end{abstract}

\maketitle
\twocolumngrid


\section{Introduction}\label{sec:introduction}

Recent advances in quantum error correction have brought fault-tolerant quantum computation closer to reality \cite{bluvstein2024logical,bluvstein2025fault,google2024threshold, google2023suppressing,besedin2026lattice,putterman2025hardware, brock2025quantum,paetznick2024demonstration,ryan2024high}. At the same time, achieving fault tolerance in practice relies on precise control at the physical level, which in turn requires a detailed understanding of device dynamics and the underlying noise sources. This need has motivated the development of scalable benchmarking and characterization protocols \cite{emerson2005scalable,knill2008randomized,dankert2009exact,flammia2011direct,da2011practical,magesan2011scalable,moussa2012practical,wallman2015estimating,wallman2016robust,boixo2018characterizing,erhard2019characterizing,proctor2022scalable, harper2021fast,Bayesian_noise}, including randomized benchmarking, cycle benchmarking, and cross-entropy benchmarking, among others. These tools have proven invaluable for assessing overall device performance. However, they typically provide only coarse-grained metrics, such as the average circuit fidelity or effective error rates, and therefore offer limited insight into the microscopic mechanisms governing coherent and incoherent errors. At the opposite extreme, quantum process tomography  can fully characterize a specified quantum channel, but its experimental cost scales exponentially with system size \cite{chuang1997prescription, flammia2012quantum}, which limits its application to current large-scale quantum devices.

A direct way to bridge this characterization gap is to identify the generator of the device's dynamics, providing a microscopic description without requiring exponentially many experimental queries. In the Markovian regime, the dynamics of a quantum device are governed by a Lindblad master equation capturing both \emph{coherent} (Hamiltonian) and \emph{dissipative} contributions \cite{gorini1976completely, lindblad1976generators}. Full knowledge of the Lindbladian yields a complete characterization of the device dynamics and enables a range of downstream applications, including device optimization \cite{boulant2003robust, shulman2014suppressing, innocenti2020supervised, samach2022lindblad, hangleiter2024robustly, dobrynin2024compressed}, error mitigation \cite{temme2017error, bravyi2021mitigating, van2023probabilistic, cai2021multi, kim2023evidence, strikis2021learning,generalized_DD}, design of tailored quantum error-correcting codes \cite{fletcher2008channel,aliferis2008fault,bonilla2021xzzx,tuckett2019tailoring,chuang1997bosonic,wu2025bias,kuehnke2025hardware, valenti2019hamiltonian,rl_qec}, and verification of analog quantum simulation \cite{kraft2025bounded}. In this work, we study the problem of learning unknown Lindbladian dynamics using only access to its time evolution, without additional quantum control during the evolution, a setting often referred to as \emph{in situ} learning \cite{liu2025optimal}. This restriction is crucial because auxiliary control (e.g., dynamical decoupling~\cite{viola1999dynamical},
Trotterization~\cite{childs2021theory}, or interleaved/continuous-control schemes~\cite{dutkiewicz2024advantage,huang2023learning,hu2025ansatz,bakshi2024structure}) can modify the effective generator being probed, thereby introducing systematic bias in the inferred Lindbladian.

A simpler and well-studied limit of this problem is Hamiltonian learning, which aims to infer the generator of unitary dynamics, typically under idealized assumptions such as noiseless evolution~\citep{da2011practical,bairey2019learning,zubida2021optimal,haah2024learning,stilck2024efficient,gu2024practical, li2024heisenberg, mirani2024learning, ni2024quantum, dutkiewicz2024advantage, bakshi2024structure, ma2024learning, zhao2025learning, hu2025ansatz, sinha2025improvedhamiltonianlearningsparsity} or the ability to prepare equilibrium states (e.g., Gibbs states) of the Hamiltonian~\citep{qi2019determining, evans2019scalable, gu2024practical, anshu2021sample, rouze2024learning, haah2024learning, bakshi2024learning}. Over the past decade, a broad class of algorithms has been developed, achieving near-optimal performance across a wide range of control models and structural assumptions \cite{haah2024learning, huang2023learning, dutkiewicz2024advantage, hu2025ansatz, zhao2025learning}. While these advances are central for learning generators of unitary dynamics, realistic quantum devices inevitably experience system–environment interactions, leading to dissipation and noise that violate the assumptions underlying most Hamiltonian-learning protocols \cite{zhou2018achieving,cotler2026noisylearning,gong2026multiparameter}.
As a result, Hamiltonian-learning guarantees do not automatically extend, and dissipation can obscure the coherent generator one aims to learn. More fundamentally, a faithful characterization of realistic evolution requires identifying \emph{both} the coherent and dissipative components, making Lindbladian learning a necessary next milestone.

Recently, significant progress has been made toward scalable protocols for learning Lindbladian generators. One line of work infers Lindbladians from their steady states \cite{bairey2020learning}. However, steady states generally do not uniquely determine the underlying generator \cite{baumgartner2008analysis, bairey2020learning} and can be challenging to prepare in practice \cite{temme2010chi}. We therefore focus on learning from real-time evolution, where the dynamics can be probed by simply letting the device evolve under its intrinsic Lindbladian. In this dynamical setting, recent works extend derivative-estimation--based Hamiltonian-learning techniques to Lindbladian generators and provide rigorous guarantees \cite{stilck2024efficient,franca2025learning}. Despite their efficiency, these methods assume the generator structure is known \emph{a priori}, and their provable guarantees are currently limited to restricted dissipative models, e.g., dissipation generated by single-qubit jump operators. Other approaches have demonstrated promising numerical performance on systems with tens to hundreds of qubits \cite{pastori2022characterization, olsacher2025hamiltonian,berg2025large,kraft2025bounded}, but they either lack rigorous guarantees or similarly rely on prior structural assumptions. In many practical scenarios, however, the relevant Hamiltonian terms and error mechanisms are unknown and may be nonlocal~\cite{kraft2025bounded}, leading to an exponentially large space of candidate Lindbladian generators. This motivates the central question addressed in this work: can one efficiently learn the full Lindbladian without strong prior assumptions on its interaction structure, using only its native dynamics (i.e., without interleaving quantum control during the evolution)? We refer to this setting as \emph{ansatz-free} learning of Lindbladian dynamics \emph{in situ}. The ansatz-free requirement removes structural biases commonly imposed in learning algorithms, while the in situ constraint ensures that the learned generator faithfully reflects the device’s intrinsic dynamics by avoiding unintended modifications induced by control protocols.

The main result of this work is a sample-efficient algorithm that achieves the described learning task in an experimentally practical setting, restricted to product-state preparations, time evolution, and Pauli-basis measurements. Concretely, any $n$-qubit Lindbladian governing the evolution of a density operator $\rho$ can be written as
\begin{equation} \label{eq:lindbladian-intro}
    \begin{aligned}
        \dv{}{t}\rho
    =\Lindblad(\rho) &= -i\underbrace{\sum_{P_i\in \HamiltStruct} h_i \comm{P_i}{\rho}}_{\mathrm{Hamiltonian}\,(H)} + \\ &+\underbrace{\sum_{P_i,P_j \in \DiagDissStruct} a_{ij}\left(P_i \rho P_j - \tfrac12\acomm{P_j P_i}{\rho}\right)}_{\mathrm{Dissipator}\,(\mathcal D)},
    \end{aligned}
\end{equation}
where $P_i,P_j$ are $n$-qubit traceless Pauli operators \footnote{Although Lindbladian generators admit a unitary gauge freedom in the choice of jump operators, the Pauli-basis expansion \eqref{eq:lindbladian-intro} is unique.}. Here $\HamiltStruct$ and $\DiagDissStruct$ denote the Pauli supports of the Hamiltonian and dissipator, respectively, and $\{h_i,a_{ij}\}$ are unknown coefficients. We denote by $M$ the number of Pauli terms appearing with nonzero coefficients in the Lindbladian, providing a natural notion of sparsity. Our algorithm reconstructs all nonzero Lindbladian coefficients without prior knowledge of which Pauli terms appear in its expansion, and achieves quantum query complexity $\widetilde{\mathcal O}(\varepsilon^{-4})$ in the target accuracy $\varepsilon$. Importantly, the time resolution of our algorithm depends on the target accuracy only poly-logarithmically, so higher precision does not force vanishingly short evolution times. Furthermore, we prove that this time resolution is near-optimal. Specifically, we show that any coarser time resolution (up to polylog factors) can incur a sample overhead that is exponential in the system size $n$. Finally, our protocol can be made robust to state-preparation-and-measurement (SPAM) noise under the standard assumption of independent single-qubit depolarizing errors, with a quantified and controlled overhead (see \cref{sec:spam_robustness}).

To the best of our knowledge, this is the first protocol that achieves ansatz-free Lindbladian learning in situ with minimal experimental access, providing a scalable route to generator-level characterization of noisy quantum devices.

\section{Overview of Results}
\label{sec:results}

This section provides an overview of the learning algorithm and states the main guarantees informally.

We consider the task of learning an unknown time-independent $n$-qubit Lindbladian $\mathcal{L}$ of the form presented in \cref{eq:lindbladian-intro} from dynamical data. 
Each query consists of preparing a product Pauli eigenstate, applying the Markovian evolution
\begin{equation}
    \mathcal{E}_t = e^{t\mathcal{L}}, \qquad t \ge 0,
\end{equation}
and then measuring each qubit in a chosen single-qubit Pauli basis. Throughout, we restrict to ancilla-free experiments.

As the Lindbladian global scale can be absorbed into the evolution time, we assume without loss of generality that $|h_i|\le 1$ and $|a_{ij}|\le 1$. We also let $\eta := \min\left(\min_{h_{i}\neq 0}|h_i|,\ \min_{a_{ij}\neq 0}|a_{ij}|\right)$ denote
the smallest nonzero Lindbladian coefficient \footnote{In practice, one may choose $\eta$ as a conservative lower bound satisfying
$\eta \le \min\!\left(\min_{h_{i}\neq 0}|h_i|,\ \min_{a_{ij}\neq 0}|a_{ij}|\right)$.
If $\eta$ is chosen larger than the true minimum nonzero magnitude, the structure-learning routine
returns the support of the $\eta$-heavy part of the dissipator and (almost all of) the $\eta$-heavy
Hamiltonian support (see \cref{remark:eta-heavy-support}).
While this can be of independent interest, our coefficient-learning guarantees
(\cref{theorem:ancilla_free_coefficient_learning}) require supersets of the \emph{full} structure as input
(hence $\eta$ below the smallest nonzero coefficient), since otherwise, residual weak terms can act as adversarial model mismatch and bias the recovered coefficients.}.

\begin{figure*}
    \centering
    \includegraphics[width=\linewidth]{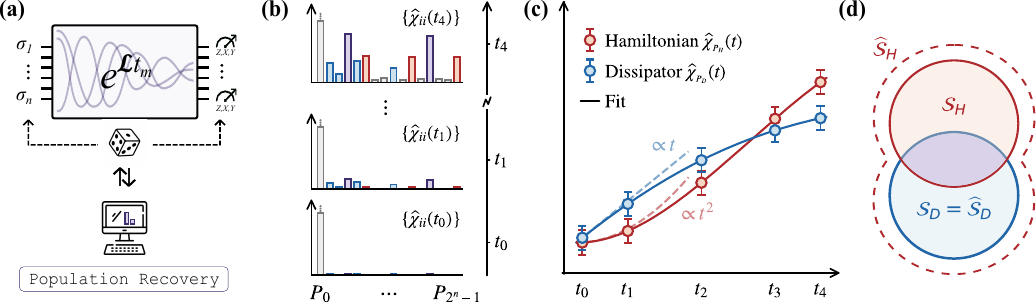}
    \caption{
    \textbf{Structure learning from Pauli error rates at short times.} The protocol outputs sets $\widehat{\mathcal{S}}_D$ and $\widehat{\mathcal{S}}_H$ containing the true Hamiltonian and dissipator structures $\mathcal{S}_D$ and $\mathcal{S}_H$, respectively.  
    (a)~We fix a set of short evolution times $\{t_m\}$ and, for each $t_m$, query the channel $e^{\mathcal{L}t_m}$ using the population-recovery protocol~\cite{flammia2021pauli} to obtain a distribution of Pauli error rates $\{\widehat{\chi}_{ii}(t_m)\}$. Each query consists of preparing a random Pauli product state, evolving for time $t_m$, and measuring in a random Pauli basis.
    (b)~Collecting these estimated distributions over $\{t_m\}$ yields time traces of Pauli error rates. For very short times, only the identity component is appreciable. As $t$ increases, the rates associated with Hamiltonian terms (red), dissipator terms (blue), and terms in the overlap $\mathcal{S}_H\cap\mathcal{S}_D$ (purple) begin to grow, spreading the rate distribution.
    (c)~For each observed trace, we fit a low-degree Chebyshev interpolant and use the first and second derivatives of the fit at $t=0$ to decide whether the Pauli enters through dissipation (first order) or through the Hamiltonian (second order): Pauli rates corresponding to dissipator terms $P_D \in \mathcal{S}_D$ exhibit linear short-time growth, while purely Hamiltonian rates $P_H \in \mathcal{S}_H \setminus \mathcal{S}_D$ enter only at second order and exhibit quadratic short-time growth.
    (d)~Finally, we threshold these derivative estimates to obtain candidate supports $\widehat{\mathcal{S}}_D$ and $\widehat{\mathcal{S}}_H$ that satisfy $\widehat{\mathcal{S}}_D = \mathcal{S}_D$ and $\widehat{\mathcal{S}}_H \supseteq \mathcal{S}_H$.
    }
    \label{fig:structure_learning}
\end{figure*}

\begin{problem}[Ansatz-free Lindbladian learning]
\label{problem:lindbladian_learning}
Let $\mathcal{L}$ be an unknown time-independent $n$-qubit Lindbladian as in~\cref{eq:lindbladian-intro}. 
Assume that the total number of nonzero coefficients among $\{h_i\}\cup\{a_{ij}\}_{i,j}$ is at most $M$, and that the supports $\HamiltStruct$ and $\DiagDissStruct$ are not known.
Given query access to $e^{\mathcal{L}t}$, output estimates $\{\widehat{h}_i,\widehat{a}_{ij}\}$ of all nonzero coefficients to within additive error $\varepsilon$, with high success probability.
\end{problem}

We summarize our solution to the ansatz-free Lindbladian learning in two parts: a structure-learning procedure that outputs candidate supports, and a coefficient-learning procedure that estimates the associated coefficients. This two-stage decomposition naturally introduces two accuracy scales: $\eta$, which governs the
resolution required to identify nonzero terms in the structure-learning stage, and $\varepsilon$, which
controls the final additive accuracy of the recovered coefficients.

\noindent\textbf{Structure learning.}
Our first result identifies candidate supports of the Hamiltonian and dissipator by estimating the first two time derivatives of the \emph{Pauli error rates} in the short-time regime. Concretely, writing the channel in the Pauli basis as $\mathcal{E}_t(\rho)=\sum_{i,j}\chi_{ij}(t)\,P_i\rho P_j$, the Pauli error rates are the diagonal entries $\{\chi_{ii}(t)\}_i$. Importantly, these quantities can be efficiently estimated experimentally within our query model \cite{flammia2021pauli}. The key insight is that the first and second time-derivatives of these error rates can be directly related to the coefficients of the dissipator and the Hamiltonian. By identifying all derivatives above an appropriate threshold, we obtain the Lindbladian structure estimates $\widehat{\mathcal S}_H$ and $\widehat{\mathcal S}_D$, as summarized in \Cref{result:structure_learning_informal} and illustrated schematically in \cref{fig:structure_learning}~(a--d). The structure-learning algorithm is described in \cref{sec:structure_learning_main_A}, with formal guarantees stated in \Cref{theorem:ancilla_free_structure_learning}.

\begin{result}[Structure Learning]
    \label{result:structure_learning_informal}
    There exists an ancilla-free quantum algorithm that with high probability identifies the dissipator structure $\widehat{\mathcal{S}}_D =S_D$ and a superset $\widehat{\mathcal{S}}_H \supseteq \mathcal{S}_H$ of the Hamiltonian  structure using
    \begin{equation}
    m = \widetilde{\mathcal{O}}\left(M^4/\eta^4\right)
    \end{equation}
    applications of the channel $e^{\mathcal{L}t}$. The size of the Hamiltonian superset $\widehat{\mathcal{S}}_H$ is at most $\mathcal O(M^2/\eta^2)$. The algorithm prepares a product input state, evolves it for time $t = \widetilde{\Theta}(1/M)$, and measures each qubit in a Pauli basis. 
\end{result}

Once the candidate sets $\widehat{\mathcal S}_H, \widehat{\mathcal S}_D$ have been identified, we could take advantage of the observed locality in the subsequent computations. Motivated by the Pauli-basis expansion in \cref{eq:lindbladian-intro}, we define
\begin{equation}
\label{eq:def-k}
\begin{aligned}
k \,:=\, \max\Bigl\{&
\max_{P_i\in\widehat{\mathcal S}_H}\,|\supp{P_i}|,\;
\\
&
\max_{P_i,P_j\in\widehat{\mathcal S}_D}\,|\supp{P_i}\cup\supp{P_j}|
\Bigr\}.
\end{aligned}
\end{equation}
where $\supp{P}$ denotes the set of sites on which the Pauli $P$ acts non-trivially. Importantly, even for highly nonlocal $k = \Theta(n)$, our Lindbladian learning algorithm doesn't require estimating exponentially many observables. At the same time, as we summarize below, bounded locality $k$ could substantially accelerate classical pre-processing and reduce the number of quantum queries during the coefficient learning stage.

\begin{figure*}
    \centering
    \includegraphics[width=\linewidth]{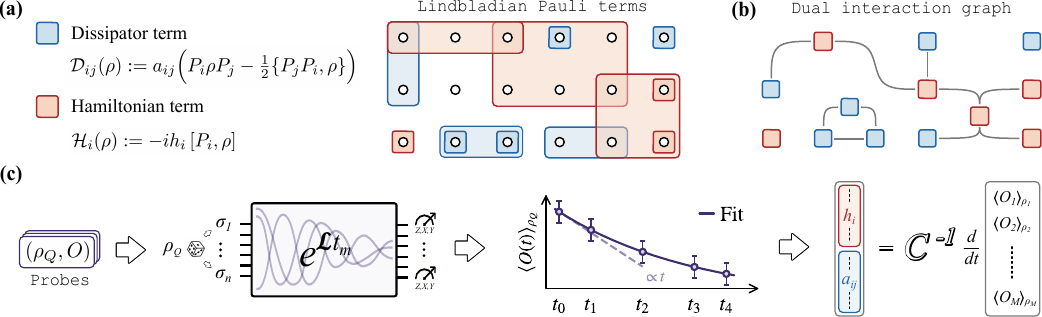}
    \caption{
    \textbf{Coefficient learning from Pauli observables at short-times.}
    (a)~The (unknown) Lindbladian is a sparse sum of Pauli terms: coherent Hamiltonian terms (red) and dissipative terms (blue), each acting nontrivially on some subset of qubits, represented by white circles.
    (b)~In the dual picture, we define an interaction graph whose vertices are the nonzero Lindbladian Pauli terms and whose edges connect terms with overlapping qubit support. The maximum degree $\mathfrak d$ of the dual graph upper bounds how rapidly Heisenberg-evolved Pauli observables can spread. If $\mathfrak d$ is not known in advance, it can be upper bounded from the terms induced by the candidate structures $\widehat{\mathcal{S}}_H, \widehat{\mathcal{S}}_D$.
    (c)~To learn the coefficients, we use the candidate supports $\widehat{\mathcal S}_H$ and $\widehat{\mathcal S}_D$ returned by structure learning to select a sufficient set of Pauli probes $(\rho_{Q},O)$ so that the resulting design matrix $C$ is full rank. For each chosen probe, we sample $f(t)=\tr \{\rho\,e^{t\mathcal L^\dagger}(O)\}$ at short Gauss--Chebyshev times $\{t_m\}$, fit a low-degree interpolant, and extract $\left.\frac{d}{dt}f(t)\right|_{t=0}$. Stacking these derivative estimates yields a linear system $\vb d=C\vb x$ for the Lindbladian coefficients, which we solve classically.
    }
    \label{fig:coefficient_learning}
\end{figure*}

\noindent\textbf{Coefficient learning.}
With $\widehat{\mathcal S}_H$ and $\widehat{\mathcal S}_D$ in hand, our second result estimates the associated coefficients to accuracy $\varepsilon$ by estimating short-time derivatives of a judiciously chosen set of Pauli observables. Similar to other derivative-based approaches~\cite{gu2024practical, stilck2024efficient, haah2024learning}, we use the identity
\begin{equation}
\left.\dv{}{t}\expval{O(t)}\right|_{t=0}
=
\tr\left(\rho\,\mathcal L^\dagger(O)\right),
\end{equation}
and by varying the probes $(\rho,O)$ obtain a linear system in the unknown parameters $\{h_i\}$ and $\{a_{ij}\}$, which we then solve. We advance prior derivative-based coefficient-learning approaches in two directions. First, we introduce \emph{Pauli patchwise tomography}, a classical probe-selection routine that leverages $\widehat{\mathcal S}_H$ and $\widehat{\mathcal S}_D$ to construct an invertible design matrix without brute-force enumeration of all $k$-local Pauli probes. This reduces the required classical preprocessing time to $\mathcal O(16^k\widehat M)$, compared to $\mathcal O(16^k n^k)$ for naive enumeration, where $\widehat M$ is the number of unknown coefficients supported on $\widehat{\mathcal S}_H$ and $\widehat{\mathcal S}_D$.
Second, we harness the learned candidate structures to tighten derivative-estimation scheduling by extending the Hamiltonian dual interaction graph~\cite{gu2024practical, haah2024learning} to Lindbladians. In particular, we form a graph whose vertices correspond to the individual Pauli-basis summands of $\mathcal L^\dagger$, and connect two vertices whenever
their qubit supports overlap, as illustrated in \cref{fig:coefficient_learning}~(a,b). The maximum degree $\mathfrak d$ of this graph governs the evolution-time and sampling requirements of \Cref{result:coefficient_learning_informal}. As discussed in \cref{sec:dual-interaction-graph}, many physically motivated Lindbladians have $\mathfrak d=O(1)$. One may upper bound $\mathfrak d$ using $\widehat{\mathcal S}_H,\widehat{\mathcal S}_D$, or in the worst case, we have the trivial bound $\mathfrak d \le M$. Finally, the numerical stability of the resulting linear system is dictated by the conditioning factor $\nu$.
In \cref{sec:coefficient_learning_main}, we present numerical evidence that $\nu$ remains moderate (typically on the order of $10$--$30$) in a lattice model (up to $n=42$) that combines local interactions with both local and collective noise. The full coefficient-learning algorithm is presented in \cref{sec:coefficient_learning_main} with formal guarantees stated in \Cref{theorem:ancilla_free_coefficient_learning}. \cref{fig:coefficient_learning}~(c) illustrates the coefficient learning procedure schematically.

\begin{result}[Coefficient Learning]
    \label{result:coefficient_learning_informal}
    There exists an ancilla-free quantum algorithm for learning the Lindbladian coefficients as in \cref{eq:lindbladian-intro} that takes as input structure supersets $\widehat{\mathcal{S}}_H, \widehat{\mathcal{S}}_D$, and outputs an estimator $(\hat{\bm{h}},\hat{\bm{a}})$ of the true Hamiltonian and dissipator coefficient vector $(\bm{h},\bm{a})$ such that:
    \begin{equation}
        \|\hat{\bm{h}} - \bm{h}\|_\infty \le \varepsilon \qq{and} \|\hat{\bm{a}} - \bm{a}\|_\infty \le \varepsilon
    \end{equation}
    with high probability. The total number of the $e^{\mathcal{L}t}$ channel queries is 
    \begin{equation}
        m = \widetilde{\mathcal{O}}\left( \min(9^k, \widehat M\,)\, \mathfrak d^{2}\nu^{2} / \varepsilon^{2}\right)
    \end{equation}
    The algorithm prepares a Pauli eigenstate, evolves it for time $t = \widetilde{\Theta}(1/ \mathfrak{d})$, and measures each qubit in a Pauli basis.
\end{result}

Running the two stages sequentially yields an end-to-end, ansatz-free Lindbladian learning algorithm whose total query complexity is the sum of the above bounds with $\widehat M = \mathcal O(M^2/\eta^2)$. In the most pessimistic regime---highly nonlocal candidates $k=\Theta(n)$ and $\mathfrak d\le M$---and targeting full resolution of all nonzero coefficients (i.e., $\varepsilon=\eta/2$), the end-to-end Lindbladian learning protocol requires:
\begin{equation}
    m_{\mathrm{tot}}
    = \widetilde{\mathcal O}\left(\frac{M^4\nu^2}{\varepsilon^4}\right)
\end{equation}
quantum channel queries. Notably, although both structure and coefficient learning routines rely on derivative estimation, they probe fundamentally different quantities. Structure learning uses derivatives of global Pauli error rates (channel-level information), whereas coefficient learning uses derivatives of selected Pauli expectation values tailored to the candidate terms. 

\noindent\textbf{Optimality of the resolution time.}
Crucially, we prove that the minimum time resolution of our protocol $t_{\min}=\widetilde{\Theta}(M^{-1})$ is optimal up to polylogarithmic factors in $(M/\eta)$. In particular,  enforcing a coarser time resolution $t \gtrsim t_0=O(M^{-(1-\theta)})$ for constant threshold $\eta$ and $\theta >0$ can incur an exponential overhead in the required sampling complexity, as summarized in \Cref{result:informal_lower}. Intuitively, we show that time $t \gtrsim t_0=(M^{-(1-\theta)})$ suffices to drive any input state exponentially close to a steady state of some $M$-sparse Lindbladian. Since steady states generally do not uniquely identify generators, we then prove that recovering the Lindbladian structure is hard. The outline of the proof is provided in \Cref{sec:optimality_resolution_time} with a formal statement in \Cref{thm:lower}. 

\begin{result}[Optimality of Time Resolution]\label{result:informal_lower}
Consider a (possibly adaptive) protocol that can prepare any single-copy input state, evolve it for time $t$ longer than a time resolution $t_0=\Theta(M^{-(1-\theta)})$ for some constant $\theta>0$, and perform any single-copy measurement. There always exists a Lindbladian that only contains $M=\Theta(n^{\kappa})$ nonzero terms with $\kappa=\lceil2/\theta\rceil$ and $\eta = 1/4$ such that identifying its structure requires at least $\Omega(\exp(n))$ channel queries.
\end{result}

\section{Structure learning}\label{sec:structure_learning_main_A}
The core challenge of structure learning is that even if the Lindbladian $\mathcal{L}$ is sparse, the time-evolution channel $e^{\mathcal{L}t}$ can become dense once $t$ is sufficiently large. To reveal the Lindbladian structure, we therefore focus on the short-time regime, where the growth of the operator structure of $e^{\mathcal{L}t}$ is governed by the first few terms of the Taylor expansion
\begin{equation}\label{eq:channel_taylor_expansion}
	e^{\mathcal{L}t}(\rho)
	\;=\;
	\rho
	\;+\;
	t\mathcal{L}(\rho)
	\;+\;
	\tfrac{t^2}{2}\mathcal{L}^2(\rho)
	\;+\;
	\mathcal{O}(t^3).
\end{equation} 
It is instructive to consider the $\chi$-matrix representation of $e^{\mathcal{L}t}$,
\begin{equation}\label{eq:chi_matrix_representation}
    e^{\mathcal{L}t}(\rho)
    \;=\;
    \sum_{i,j}\chi_{ij}(t)\,P_i\rho P_j.
\end{equation}
The diagonal entries of the (time-dependent) $\chi$-matrix are known as the Pauli error rates $\{\chi_{ii}(t)\}_i$ and form a probability distribution. At $t=0$, all rates are $0$ except $\chi_{00}(0)=1$, corresponding to the identity channel. As $t$ increases, the weight of $\mathcal{L}$ and $\mathcal{L}^2$ in the Taylor expansion grows, giving rise to new Pauli error rates at the expense of the decreasing identity component, as shown in \cref{fig:structure_learning}~(b). The key observation that enables our Lindbladian structure learning algorithm is that the first two derivatives of these Pauli error rates at $t=0$ can be directly related to the Lindbladian terms with non-vanishing coefficients. Indeed, by comparing terms in \cref{eq:channel_taylor_expansion,eq:chi_matrix_representation}, we can notice that $\mathcal{L}(\rho) \;=\; \sum_{i,j} \chi_{ij}^{(1)}\, P_i \rho P_j$, where we defined $\chi_{ij}^{(1)}=\dv{t}\chi_{ij}(0)$. Similarly, $\mathcal{L}^2(\rho) \;=\; \sum_{i,j} \chi_{ij}^{(2)}\, P_i \rho P_j$ with $\chi_{ij}^{(2)}=\dv[2]{t}\chi_{ij}(0)$. In \cref{sec:structure_learning} we show that
\begin{align}
    \chi_{ii}^{(1)} &= a_{ii},
    && \text{if } a_{ii} > 0, \label{eq:chi_first_deriv}\\[4pt]
    \chi_{ii}^{(1)} &= 0,\quad
    \chi_{ii}^{(2)} \ge 2h_i^{2},
    && \text{if } a_{ii} = 0. \label{eq:chi_second_deriv}
\end{align}
Therefore, the presence of a Pauli $P_i$ in the dissipator structure $\DiagDissStruct$ leads to a non-vanishing first derivative $\chi_{ii}^{(1)}$, while the presence of a Pauli $P_i$ in the Hamiltonian-only structure $\HamiltStruct\setminus \DiagDissStruct$ leads to a non-vanishing second derivative $\chi_{ii}^{(2)}$. Hence, by identifying all first and second derivatives above appropriate thresholds, one is guaranteed to identify a superset of all Pauli terms present in the Lindbladian, as shown in \cref{fig:structure_learning}~(c,d).

The key subroutine that allows the estimation of Pauli rate derivatives is the protocol of \citet{flammia2021pauli} (see \cref{fig:structure_learning}~(a)), who showed that, nontrivially, all Pauli rates can be estimated efficiently using the classical population recovery algorithm. The protocol uses only product-state preparations and single-qubit Pauli measurements to estimate all diagonal rates simultaneously, producing estimators $\widehat{\chi_{ii}}(t)$ satisfying
\begin{equation} \label{eq:pauli-error-rates-estimation-error_A}
    \bigl|\widehat{\chi_{ii}}(t)-\chi_{ii}(t)\bigr|\le \varepsilon_s,\qquad \forall\, i,
\end{equation}
with probability at least $1-\delta$, using
$\mathcal{O}\left(\varepsilon_s^{-2}\log\frac{n}{\varepsilon_s\delta}\right)$ queries to the $n$-qubit channel $\mathcal{E}_t$. Importantly, the protocol outputs at most $\mathcal{O}(1/\varepsilon_s)$ nonzero estimates $\widehat\chi_{ii}(t)$ and treats all other rates as zero.

Naively, one could estimate the derivatives $\chi^{(1)}_{ii}$ and $\chi^{(2)}_{ii}$ via finite differences by $\chi^{(1)}_{ii}\;\approx\;\tfrac{1}{t}\bigl[\chi_{ii}(t)-\chi_{ii}(0)\bigr]$, and similarly for $\chi^{(2)}_{ii}$ using a second-order stencil. However, identifying whether $\chi^{(1)}_{ii} \gtrsim \eta$ requires estimating $\chi^{(1)}_{ii}$ to additive precision $\Theta(\eta)$, and controlling the finite-difference bias forces $t\propto \eta$. In other words, it would require resolving vanishingly short dynamics as the desired precision increases, leading to infeasible experimental requirements. 

To alleviate impractical time resolution, we use Chebyshev interpolation, which achieves the desired derivative estimates with exponentially better time resolution in terms of $\eta$. Specifically, we fix a set of $r{+}1$ carefully chosen time points $\{t_m\}_{m=0}^r$ within a short-time window $[0,\tau_{\max}]$. At each time $t_m$, we estimate all Pauli error rates to a fixed accuracy $\varepsilon_s$ using the population recovery protocol. Finally, for each observed rate, we fit a low-degree Chebyshev interpolant to the noisy time series $\{\widehat{\chi}_{ii}(t_m)\}_{m=0}^r$ and then take the first and second derivatives of the fitted polynomial at $t=0$ as the estimators $\widehat{\chi}^{(1)}_{ii}$ and $\widehat{\chi}^{(2)}_{ii}$, as shown in \cref{fig:structure_learning}~(c).

The resulting derivative-estimation error has two contributions: a systematic interpolation bias controlled by higher-order derivatives of $\chi_{ii}(t)$, and the nodewise sampling noise $\varepsilon_s$. For an $M$-sparse Lindbladian, as in \cref{eq:lindbladian-intro}, we show in \cref{sec:deriv-bounds-pauli-error-rates} the uniform derivative bound
\begin{equation}
    \Big|\frac{d^\ell}{dt^\ell}\chi_{ii}(t)\Big|
    \;\le\;
    (2M)^\ell,\qquad \ell\ge 0,
\end{equation}
which implies that the Pauli error rates are uniformly smooth on time scales of order $1/M$. Using this bound, the interpolation bias can be kept below the target scale by choosing the time window $\tau_{\max}=\Theta(1/M)$ and polynomial degree $r=\Theta(\log(M/\eta))$. To control the contribution of sampling noise, we enforce that the Pauli-rate estimates returned by population recovery satisfy
$\varepsilon_s=\widetilde{\Theta}(\eta^2/M^2)$. 

With the choice of parameters above, the differentiation step returns $\widehat{\chi}^{(1)}_{ii}$ to additive accuracy $\eta/2$ and $\widehat{\chi}^{(2)}_{ii}$ to additive accuracy $\eta^2$, which suffices to resolve the nonzero Lindbladian terms according to \cref{eq:chi_first_deriv,eq:chi_second_deriv}. In particular, \cref{eq:chi_first_deriv} guarantees that
\begin{equation}
    \widehat{\mathcal S}_D
    \;:=\;
    \bigl\{\,P_i:\ \widehat{\chi}^{(1)}_{ii}\ge\eta/2\,\bigr\}
\end{equation}
includes all (and only) dissipator terms with $a_{ii}\ge \eta$. Similarly, according to \cref{eq:chi_second_deriv},
\begin{equation}
    \widehat{\mathcal S}_H
    \;:=\;
    \widehat{\mathcal S}_D
    \ \cup\
    \bigl\{\,P_i:\ \widehat{\chi}^{(2)}_{ii}\ge\eta^2\,\bigr\}
\end{equation}
contains all Hamiltonian terms with $|h_i| \ge \eta$, as shown in \cref{fig:structure_learning}~(d). Evidently, $\widehat{\mathcal S}_H$ may contain dissipator terms that are not present in the Hamiltonian. Moreover, it may further include spurious Pauli terms arising from second-order dissipative couplings that induce $\widehat{\chi}^{(2)}_{ii}>\eta^2$. However, the worst-case number of such false positives is bounded by the total number of Pauli-rate estimates returned across all sampled times, i.e., $|\widehat{\mathcal S}_H| \;=\; \widetilde{\mathcal{O}}\left(M^2/\eta^2\right)$, and therefore remains polynomial in the problem parameters. Moreover, such false positives will be eliminated in the subsequent coefficient-learning stage, which accurately estimates all coefficients corresponding to $\widehat{\mathcal S}_H$ and $\widehat{\mathcal S}_D$ and filters out any below the desired threshold~$\eta$.

\section{Coefficient learning}\label{sec:coefficient_learning_main}

Having identified supersets $\widehat{\mathcal S}_H$ and $\widehat{\mathcal S}_D$ that contain all nonvanishing Hamiltonian and dissipative Pauli terms, the remaining task is to estimate their associated coefficients to a target accuracy $\varepsilon$.

Our coefficient-learning protocol builds on a standard linear-response relation: short-time derivatives of expectation values provide linear access to the Lindbladian coefficients \cite{haah2024learning, gu2024practical, stilck2024efficient}. For any observable $O$ and initial state $\rho$, the Heisenberg evolution satisfies
\begin{equation}
\label{eq:derivative_linear_access}
    \left.\dv{}{t}\expval{O(t)}\right|_{t=0}
    =
    \tr\left(\rho\,\mathcal L^\dagger(O)\right),
\end{equation}
which is linear in the coefficients $\{h_i\}_{P_i\in\widehat{\mathcal{S}}_H}$ and $\{a_{ij}\}_{P_i,P_j\in\widehat{\mathcal{S}}_D}$. Substituting the Pauli expansion of the adjoint Lindbladian makes this linear dependence explicit:
\begin{equation} \label{eq:coefficient_learning_linear_relation_main}
    \begin{aligned} 
        &\dv{t}\expval{O(t)}\bigg|_{t=0}
        = \sum_{P_i\in\widehat{\mathcal{S}}_H} h_i\,\tr{i[P_i,O]\rho} \\
          &+ \sum_{P_i,P_j \in \widehat{\mathcal{S}}_D} a_{ij}
          \Bigl(\tr{P_j O P_i\rho} - \tfrac{1}{2}\tr{\acomm{P_j P_i}{O}\rho}\Bigr).
    \end{aligned}
\end{equation}
Fixing a probe $(\rho,O)$ therefore yields a single linear constraint on the unknown coefficients \{$h_i, a_{ij}$\}. Obtaining such constraints over multiple probes gives a linear system $\vb d=C\vb x$, where $\vb x$ collects all the unknown coefficients, $\vb d$ collects the measured observable derivatives, and each row of the design matrix $C$ is fixed classically according to \cref{eq:coefficient_learning_linear_relation_main}. We consider Pauli probes, meaning that $O\in\mathcal P_n$ is a Pauli observable and $\rho$ is a Pauli eigenstate of the form $\rho_{\pm Q} = \tfrac{1}{2^n}(I \pm Q), \, Q\in\mathcal P_n$.
Then, the entries of $C$ can be determined efficiently using Pauli algebra. Finally, if $C$ has full rank, the coefficients can be uniquely recovered by inversion. 

To select an informationally complete set of probes, we introduce \emph{Lindbladian patchwise Pauli tomography}. 
Within this framework, we restrict attention to probes supported on subsystems (``patches'') induced by the candidate Lindbladian terms (\cref{fig:coefficient_learning}~(a)). Note that multiple terms can induce the same patch if their supports are identical. Concretely, we consider the following family of patches:
\begin{equation} \label{eq:def_patch_family_maintext}
    \begin{aligned}
        \mathcal T
        &:=
        \bigl\{\,\supp{P_i}: P_i\in\widehat{\mathcal S}_H\,\bigr\}
        \\
        &\quad\cup
        \bigl\{\,\supp{P_i}\cup\supp{P_j}:
        P_i,P_j\in\widehat{\mathcal S}_D\,\bigr\}.
    \end{aligned}
\end{equation}
For each patch $T\in\mathcal T$, let $\mathcal P_T$ denote the Pauli operators supported on $T$, and collect all probes $(\rho_{\pm Q},O)$ with $O,Q\in\mathcal P_T$ across all $T\in\mathcal T$. In \cref{sec:lindbladian_patchwise_tomography} we show that the resulting collection of derivatives
$\tr\left(\rho_{\pm Q}\,\mathcal L^\dagger(O)\right)$
uniquely determines all Lindbladian coefficients consistent with the supports $\widehat{\mathcal S}_H$ and $\widehat{\mathcal S}_D$.
Equivalently, patchwise Pauli probes yield a tomographically complete linear system of dimension
\begin{equation}
\widehat M \;=\; |\widehat{\mathcal S}_H| + |\widehat{\mathcal S}_D|^2,
\end{equation}
whose associated design matrix $C$ is invertible. Enumerating all such patchwise probes requires classical time $\mathcal O(16^k\,\widehat M)$, which could be far more tractable than a brute-force search over all $k$-local Pauli probes that scales as $\mathcal O(16^kn^k)$.

Having identified a sufficient set of probes, we now estimate the corresponding time derivatives
(cf.~\cref{eq:derivative_linear_access}) from quantum queries. Similarly to the derivative estimation for
structure-learning (\cref{sec:structure_learning_main_A}), we measure
$f(t)\coloneq \tr\bigl(\rho\,e^{t\mathcal L^\dagger}(O)\bigr)$ at several time points, fit the data using Chebyshev
polynomial interpolation, and compute the required derivatives via numerical differentiation in classical post-processing, as shown in \cref{fig:coefficient_learning}~(c). The key additional input is that, for Pauli probes, the smoothness of $f(t)$ can be controlled via the \emph{dual interaction graph} of $\mathcal L$
(a Lindbladian analogue of the Hamiltonian dual graph~\cite{gu2024practical, haah2024learning}). Concretely, we form a graph whose vertices correspond to the individual Pauli-basis summands of $\mathcal L^\dagger$, and connect two vertices whenever
their qubit supports overlap, as illustrated in \cref{fig:coefficient_learning}~(b); let $\mathfrak d$ denote its maximum degree (see \cref{sec:dual-interaction-graph} for a formal definition and representative physical models that have $\mathfrak d=O(1)$). Intuitively, $\mathfrak d$ quantifies how quickly the
support of $(\mathcal L^\dagger)^\ell(O)$ can grow: each application of $\mathcal L^\dagger$ only mixes $O$ with Pauli components whose supports overlap it, and $\mathfrak d$ bounds the number of such neighbors at each step.
Concretely, in \cref{sec:dual-interaction-graph}, we show that for our patchwise Pauli probes,
\begin{equation}
    \Big|\dv[\ell]{t} f(t)\Big|
    \;\le\; \bigl\|(\mathcal{L}^\dagger)^\ell(O)\bigr\|_\infty
    \;\le\;
    (2\mathfrak d)^\ell\,\ell!,
\end{equation}
implying that $f(t)$ is well approximated by a low-degree polynomial for $t\in[0,\tau_{\max}]$ with $\tau_{\max}=\Theta(1/\mathfrak d)$.
Thus, using Chebyshev interpolation, one can estimate each required derivative to accuracy $\varepsilon_d$ with
$\widetilde{\mathcal O}(\mathfrak d^{2}/\varepsilon_d^{2})$ queries per probe. Applying this procedure probe-by-probe yields a naive coefficient learning algorithm with a quantum query complexity of $\widetilde{\mathcal O}(\widehat M\,\mathfrak d^{2}/\varepsilon_d^{2})$, saturating the standard quantum limit up to polylogarithmic factors.  

\begin{figure}
    \centering
    \includegraphics[width=0.95\columnwidth]{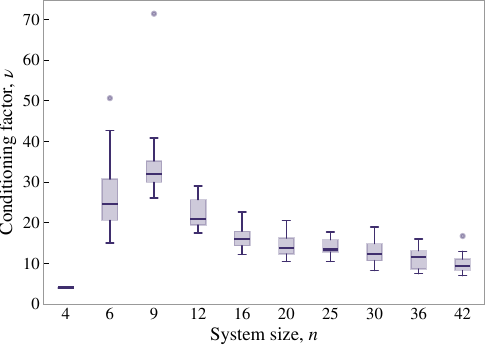}
    \caption{
    \textbf{Conditioning factor distribution across system sizes.} The conditioning factor \(\nu=\|C^{-1}\|_{\infty\to\infty}\) quantifies error amplification when solving for Lindbladian coefficients with design matrix \(C\).
    We construct \(C\) via Pauli patchwise tomography using candidate structures \(\widehat{\mathcal S}_H\) and \(\widehat{\mathcal S}_D\) that include (in a lattice model) all single-qubit Paulis and all two-qubit Paulis on nearest-, next-nearest-, and next-next-nearest-neighbour pairs, plus \(\Theta(n)\) randomly sampled nonlocal three- and four-qubit Paulis in \(\widehat{\mathcal S}_H\) to model unexpected couplings.
    This family covers common lattice Hamiltonians, arbitrary local noise, and collective jump operators \(J_{\alpha}=\sum_{i=1}^{n}(\alpha_{i,x}X_i+\alpha_{i,y}Y_i+\alpha_{i,z}Z_i)\) arising from correlated processes such as collective decay or global drive fluctuations~\cite{kraft2025bounded}. We use \(16\) seeds for each $n$, randomizing the nonlocal terms in \(\widehat{\mathcal S}_H\) and the probe-selection order.
    }
    \label{fig:conditioning_factor}
\end{figure}

The derivative-estimation step can be parallelized using the shadow process tomography method of~\citet{stilck2024efficient}; this parallelization yields an improved query complexity when the extracted Lindbladian locality $k$ is sufficiently small. Because every patch in $\mathcal T$ has a size of at most $k$, the entire probe family consists of $k$-local Pauli eigenstates and $k$-local Pauli observables. Therefore, at each Gauss--Chebyshev time node, one can estimate all required $k$-local expectation values
$f(t)=\tr\bigl(\rho\,e^{t\mathcal L^\dagger}(O)\bigr)$ across the probe set simultaneously using
$\widetilde{\mathcal O}\left(9^{k}\,\mathfrak d^{2}/\varepsilon_d^{2}\right)$
samples.

Finally, after estimating the required derivatives, we assemble the square linear system $\vb d=C\vb x$ and solve it classically to recover the coefficients, as shown in \cref{fig:coefficient_learning}~(c). Any error in derivative estimates $\vb d$ is amplified by the inversion of $C$, with an amplification factor $\nu=\|C^{-1}\|_{\infty\to\infty}$, which is efficiently computable. Thus, to achieve coefficient accuracy~$\varepsilon$, it suffices to estimate each derivative to accuracy $\varepsilon_d=\varepsilon/\nu$. Substituting this requirement into the derivative-estimation cost yields the overall resource bounds stated in \Cref{result:coefficient_learning_informal}. In \cref{fig:conditioning_factor}, we numerically demonstrate that the conditioning factor $\nu$ remains moderate and shows no systematic growth up to $n=42$ ($\approx 2\cdot 10^4$ unknown parameters)  for a wide family of physically motivated lattice models with dense local interactions and both local and collective noise mechanisms. Details of this numerical study appear in \cref{sec:numerics_probe_selection_conditioning_factor}.

\section{Optimality of time resolution}\label{sec:optimality_resolution_time}

The minimum evolution time required by our Lindbladian learning algorithm is $t_{\min}=\widetilde{\Theta}(M^{-1})$. A natural question is whether one can work with larger minimum times (i.e., a coarser time resolution), ideally approaching a constant, which would be more experimentally convenient. Unfortunately, we show that this is impossible even for the simpler task of learning only the dissipator support $\DiagDissStruct$. In particular, for dissipator-structure learning, enforcing a time resolution $t \gtrsim \Theta(M^{2/\kappa-1})$ (for integer $\kappa\ge2$) can incur an exponential-in-$n$ sample-complexity overhead (see \cref{sec:appendix-optimality-of-time-resoltion} for more details).

To show the $\Omega(\exp(n))$ sample-complexity lower bound for dissipator-structure learning when
$t \gtrsim \Theta(M^{2/\kappa-1})$, we first specify the access model. We consider protocols that perform experiments of the following form:
(i) prepare an arbitrary single-copy $n$-qubit input state $\rho_{\mathrm{in}}$,
(ii) evolve it under the unknown Lindbladian $\mathcal L$ for some time $t\ge t_0 = \Theta(M^{2/\kappa-1})$ (where $t_0$ is the assumed time resolution),
and (iii) perform an arbitrary single-copy measurement on the output state
$\rho_{\mathrm{out}} = e^{t\mathcal L}(\rho_{\mathrm{in}})$. This access model strictly strengthens the experimental capabilities assumed in
\Cref{theorem:ancilla_free_structure_learning} (product-state preparations and single-qubit Pauli measurements).
Hence, the lower bound proved here also holds for our structure-learning setting.

The high-level idea behind our lower bound is as follows. We construct two purely dissipative
Lindbladians, $\mathcal{L}_{0,\kappa}$ and $\mathcal{L}_{1,\kappa}$, with distinct dissipator supports.
Each instance has $M=\Theta(n^\kappa)$ nonzero terms of magnitude at least $\eta=1/4$.
Since any procedure that learns the dissipator support would, in particular, distinguish between these
two cases, it suffices to lower bound the sample complexity of the corresponding Lindbladian
distinguishing task. A key feature of our construction is rapid mixing. For any input state $\rho_{\mathrm{in}}$ and any
$t \ge t_0=\Theta(M^{2/\kappa-1})$, the evolution under either hypothesis drives the system
exponentially close to the maximally mixed state. Specifically, for all $\rho_{\mathrm{in}}$ and both
$\mathcal{L}\in\{\mathcal{L}_{0,\kappa},\mathcal{L}_{1,\kappa}\}$,
\begin{equation}
    \left\|e^{t\mathcal{L}}(\rho_{\mathrm{in}})-\frac{I}{2^n}\right\|_1 \le 2^{-n}.
\end{equation}
By the triangle inequality, the two hypotheses therefore produce output states whose trace distance is
at most $\delta_{\mathrm{tr}}\coloneq \mathcal O(2^{-n})$, regardless of the chosen inputs. Standard bounds from quantum hypothesis testing imply that any $m$-sample (possibly adaptive) protocol can achieve a distinguishing advantage of at most $\mathcal{O}(m\delta_{\mathrm{tr}})$ \cite{holevo1973statistical, helstrom1969quantum}; equivalently, constant success probability
requires $m=\Omega(1/\delta_{\mathrm{tr}})=\Omega(2^n)$ samples. Consequently, at time resolution $t_0 \ge \Theta(M^{2/\kappa-1})$ (for any integer $\kappa\ge 2$), there exists a
family of Lindbladians with $M=\Theta(n^\kappa)$ nonzero dissipator terms for which dissipator-structure learning
requires an exponential number of samples.

\section{Conclusion and Outlook}
In this work, we presented the first ansatz-free Lindbladian learning algorithm that assumes no prior knowledge about the generator’s locality or structure. Our protocol efficiently reconstructs both the Hamiltonian and dissipative components of an $M$-sparse Lindbladian in an experimentally motivated, ancilla-free setting restricted to product-state preparations, time evolution, and Pauli-basis measurements. This provides a practical tool for identifying unknown error mechanisms and controlling non-idealities, with potential relevance to tailored error correction/mitigation, noisy quantum metrology, and analog simulation.

Our results highlight several open theoretical questions. First, the overall sample-complexity dependence on the resolution $\eta$ of our end-to-end protocol is currently bottlenecked by Hamiltonian structure learning, scaling as $\widetilde{\mathcal O}(\eta^{-4})$. This quartic dependence stems from the fact that Hamiltonian terms contribute to the relevant Pauli-error rates ($\{\chi_{ii}\}$) only at second order in time. This raises the question of whether Hamiltonian-support identification---and, by extension, Lindbladian structure learning---can be achieved with $\widetilde{\mathcal O}(\eta^{-2})$ samples in comparable access models. If so, combining such a structure-learning improvement with the coefficient-learning stage developed here would yield an ansatz-free Lindbladian learning protocol with overall $\widetilde{\mathcal O}(\eta^{-2})$ sample complexity, matching the standard-quantum-limit scaling. Second, our coefficient-learning stage depends on a conditioning factor $\nu$ that controls the stability of the final linear system; while we present numerical evidence that this factor remains moderate for a wide family of lattice models, it will be important to prove that it remains bounded in the general case. 

Beyond the in situ regime, it is also important to understand the power of additional control and quantum memory in ansatz-free Lindbladian learning. In the follow-up work \cite{romanov2026ansatzfreeqec}, we show that adaptive quantum error correction can recover Heisenberg-limited scaling for learning the Hamiltonian component (up to overlap with dissipator terms) while achieving standard-quantum-limited scaling for end-to-end Lindbladian learning in the ansatz-free setting.

\section{Acknowledgement}
We are thankful for the insightful discussions with
John Martyn, Ishaan Kannan and Andrei C Diaconu. We acknowledge the support from DOE through
the QUACQ program (DE-SC0025572). N.R. acknowledges support from the NTT research fellowship. W.G. acknowledges support from NSF Grant CCF-24303751368 and the Von Neumann Award from Harvard Computer Science. 

\bibliography{main}

@book{Watrous_2018, 
    place={Cambridge}, 
    title={The Theory of Quantum Information}, 
    publisher={Cambridge University Press}, 
    author={Watrous, John}, 
    year={2018}
}

@article{lindblad1976generators,
  title={On the generators of quantum dynamical semigroups},
  author={Lindblad, Goran},
  journal={Communications in mathematical physics},
  volume={48},
  pages={119--130},
  year={1976},
  publisher={Springer},
  doi = {10.1007/BF01608499}
}

@article{gorini1976completely,
  title={Completely positive dynamical semigroups of N-level systems},
  author={Gorini, Vittorio and Kossakowski, Andrzej and Sudarshan, Ennackal Chandy George},
  journal={Journal of Mathematical Physics},
  volume={17},
  number={5},
  pages={821--825},
  year={1976},
  publisher={American Institute of Physics},
  doi = {10.1063/1.522979},
  url = {https://doi.org/10.1063/1.522979}
}

@article{baumgartner2008analysis,
doi = {10.1088/1751-8113/41/39/395303},
url = {https://doi.org/10.1088/1751-8113/41/39/395303},
year = {2008},
month = {sep},
publisher = {},
volume = {41},
number = {39},
pages = {395303},
author = {Baumgartner, Bernhard and Narnhofer, Heide},
title = {Analysis of quantum semigroups with GKS–Lindblad generators: {II}. General},
journal = {Journal of Physics A: Mathematical and Theoretical}
}

@article{temme2010chi,
  title={The $\chi^2$-divergence and mixing times of quantum Markov processes},
  author={Temme, Kristan and Kastoryano, Michael James and Ruskai, Mary Beth and Wolf, Michael Marc and Verstraete, Frank},
  journal={Journal of Mathematical Physics},
  volume={51},
  number={12},
  year={2010},
  publisher={AIP Publishing},
  doi={10.1063/1.3511335}
}

@article{boulant2003robust,
  title={Robust method for estimating the Lindblad operators of a dissipative quantum process from measurements of the density operator at multiple time points},
  author={Boulant, Nicolas and Havel, Timothy F and Pravia, Marco A and Cory, David G},
  journal={Physical Review A},
  volume={67},
  number={4},
  pages={042322},
  year={2003},
  publisher={APS},
  doi = {10.1103/PhysRevA.67.042322},
}

@article{bairey2020learning,
  title={Learning the dynamics of open quantum systems from their steady states},
  author={Bairey, Eyal and Guo, Chu and Poletti, Dario and Lindner, Netanel H and Arad, Itai},
  journal={New Journal of Physics},
  volume={22},
  number={3},
  pages={032001},
  year={2020},
  publisher={IOP Publishing},
  doi={10.1088/1367-2630/ab73cd}
}

@article{samach2022lindblad,
  title={Lindblad tomography of a superconducting quantum processor},
  author={Samach, Gabriel O and Greene, Ami and Borregaard, Johannes and Christandl, Matthias and Barreto, Joseph and Kim, David K and McNally, Christopher M and Melville, Alexander and Niedzielski, Bethany M and Sung, Youngkyu and others},
  journal={Physical Review Applied},
  volume={18},
  number={6},
  pages={064056},
  year={2022},
  publisher={APS},
  doi = {10.1103/PhysRevApplied.18.064056},
}

@article{dobrynin2024compressed,
  title={Compressed-sensing Lindbladian quantum tomography with trapped ions},
  author={Dobrynin, Dmitrii and Cardarelli, Lorenzo and M{\"u}ller, Markus and Bermudez, Alejandro},
  journal = {Quantum Science and Technology},
  year = {2025},
  month = {sep},
  publisher = {IOP Publishing},
  volume = {10},
  number = {4},
  pages = {045041},
  doi = {10.1088/2058-9565/ae0363},
  url = {https://doi.org/10.1088/2058-9565/ae0363},
}

@article{stilck2024efficient,
  title={Efficient and robust estimation of many-qubit Hamiltonians},
  author={Stilck Fran{\c{c}}a, Daniel and Markovich, Liubov A and Dobrovitski, VV and Werner, Albert H and Borregaard, Johannes},
  journal={Nature Communications},
  volume={15},
  number={1},
  pages={311},
  year={2024},
  publisher={Nature Publishing Group UK London},
  doi={10.1038/s41467-023-44012-5}
}

@misc{franca2025learning,
      title={Learning and certification of local time-dependent quantum dynamics and noise}, 
      author={Daniel Stilck França and Tim Möbus and Cambyse Rouzé and Albert H. Werner},
      year={2025},
      eprint={2510.08500},
      archivePrefix={arXiv},
      primaryClass={quant-ph},
      url={https://arxiv.org/abs/2510.08500}, 
}

@article{olsacher2025hamiltonian,
  title={Hamiltonian and Liouvillian learning in weakly-dissipative quantum many-body systems},
  author={Olsacher, Tobias and Kraft, Tristan and Kokail, Christian and Kraus, Barbara and Zoller, Peter},
  journal={Quantum Science and Technology},
  volume={10},
  number={1},
  pages={015065},
  year={2025},
  publisher={IOP Publishing},
  doi={10.1088/2058-9565/ad9ed5}
}

@article{pastori2022characterization,
  title={Characterization and verification of Trotterized digital quantum simulation via Hamiltonian and Liouvillian learning},
  author={Pastori, Lorenzo and Olsacher, Tobias and Kokail, Christian and Zoller, Peter},
  journal={PRX Quantum},
  volume={3},
  number={3},
  pages={030324},
  year={2022},
  publisher={APS},
  doi={10.1103/PRXQuantum.3.030324}
}

@misc{berg2025large,
      title={Large-scale Lindblad learning from time-series data}, 
      author={Ewout van den Berg and Brad Mitchell and Ken Xuan Wei and Moein Malekakhlagh},
      year={2025},
      eprint={2512.08165},
      archivePrefix={arXiv},
      primaryClass={quant-ph},
      url={https://arxiv.org/abs/2512.08165}, 
}

@misc{kraft2025bounded,
      title={Bounded-Error Quantum Simulation via Hamiltonian and Lindbladian Learning}, 
      author={Tristan Kraft and Manoj K. Joshi and William Lam and Tobias Olsacher and Florian Kranzl and Johannes Franke and Lata Kh Joshi and Rainer Blatt and Augusto Smerzi and Daniel Stilck França and Benoît Vermersch and Barbara Kraus and Christian F. Roos and Peter Zoller},
      year={2025},
      eprint={2511.23392},
      archivePrefix={arXiv},
      primaryClass={quant-ph},
      url={https://arxiv.org/abs/2511.23392}, 
}

@article{shulman2014suppressing,
  title={Suppressing qubit dephasing using real-time Hamiltonian estimation},
  author={Shulman, Michael D and Harvey, Shannon P and Nichol, John M and Bartlett, Stephen D and Doherty, Andrew C and Umansky, Vladimir and Yacoby, Amir},
  journal={Nature communications},
  volume={5},
  number={1},
  pages={5156},
  year={2014},
  publisher={Nature Publishing Group UK London},
  doi = {10.1038/ncomms6156}
}

@article{hangleiter2024robustly,
  title={Robustly learning the Hamiltonian dynamics of a superconducting quantum processor},
  author={Hangleiter, Dominik and Roth, Ingo and Fuksa, Jon{\'a}{\v{s}} and Eisert, Jens and Roushan, Pedram},
  journal={Nature Communications},
  volume={15},
  number={1},
  pages={9595},
  year={2024},
  publisher={Nature Publishing Group UK London},
  doi = {10.1038/s41467-024-52629-3}
}

@article{valenti2019hamiltonian,
  title = {Hamiltonian learning for quantum error correction},
  author = {Valenti, Agnes and van Nieuwenburg, Evert and Huber, Sebastian and Greplova, Eliska},
  journal = {Phys. Rev. Res.},
  volume = {1},
  issue = {3},
  pages = {033092},
  numpages = {14},
  year = {2019},
  month = {Nov},
  publisher = {American Physical Society},
  doi = {10.1103/PhysRevResearch.1.033092},
  url = {https://link.aps.org/doi/10.1103/PhysRevResearch.1.033092}
}

@article{innocenti2020supervised,
  title={Supervised learning of time-independent Hamiltonians for gate design},
  author={Innocenti, Luca and Banchi, Leonardo and Ferraro, Alessandro and Bose, Sougato and Paternostro, Mauro},
  journal={New Journal of Physics},
  volume={22},
  number={6},
  pages={065001},
  year={2020},
  publisher={IOP Publishing},
  doi = {10.1088/1367-2630/ab8aaf},
}

@article{hu2025ansatz,
  title = {Ansatz-Free Hamiltonian Learning with Heisenberg-Limited Scaling},
  author = {Hu, Hong-Ye and Ma, Muzhou and Gong, Weiyuan and Ye, Qi and Tong, Yu and Flammia, Steven T. and Yelin, Susanne F.},
  journal = {PRX Quantum},
  volume = {6},
  issue = {4},
  pages = {040315},
  numpages = {30},
  year = {2025},
  month = {Oct},
  publisher = {American Physical Society},
  doi = {10.1103/j7b8-pb77},
  url = {https://link.aps.org/doi/10.1103/j7b8-pb77}
}

@article{dutkiewicz2024advantage,
  title={The advantage of quantum control in many-body Hamiltonian learning},
  author={Dutkiewicz, Alicja and O'Brien, Thomas E and Schuster, Thomas},
  journal={Quantum},
  volume={8},
  pages={1537},
  year={2024},
  publisher={Verein zur F{\"o}rderung des Open Access Publizierens in den Quantenwissenschaften},
  doi= {10.22331/q-2024-11-26-1537}
}

@article{anshu2021sample,
  title={Sample-efficient learning of interacting quantum systems},
  author={Anshu, Anurag and Arunachalam, Srinivasan and Kuwahara, Tomotaka and Soleimanifar, Mehdi},
  journal={Nature Physics},
  volume={17},
  number={8},
  pages={931--935},
  year={2021},
  publisher={Nature Publishing Group UK London},
  doi={10.1038/s41567-021-01232-0}
}

@inproceedings{bakshi2024structure,
  title={Structure learning of Hamiltonians from real-time evolution},
  author={Bakshi, Ainesh and Liu, Allen and Moitra, Ankur and Tang, Ewin},
  booktitle={2024 IEEE 65th Annual Symposium on Foundations of Computer Science (FOCS)},
  pages={1037--1050},
  year={2024},
  organization={IEEE},
  doi={10.1109/FOCS61266.2024.00069}
}

@article{gu2024practical,
  title={Practical Hamiltonian learning with unitary dynamics and Gibbs states},
  author={Gu, Andi and Cincio, Lukasz and Coles, Patrick J},
  journal={Nature Communications},
  volume={15},
  number={1},
  pages={312},
  year={2024},
  publisher={Nature Publishing Group UK London}, 
  doi={10.1038/s41467-023-44008-1}
}

@inproceedings{zhao2025learning,
author = {Zhao, Andrew},
title = {Learning the Structure of Any Hamiltonian from Minimal Assumptions},
year = {2025},
isbn = {9798400715105},
publisher = {Association for Computing Machinery},
address = {New York, NY, USA},
url = {https://doi.org/10.1145/3717823.3718115},
doi = {10.1145/3717823.3718115},
booktitle = {Proceedings of the 57th Annual ACM Symposium on Theory of Computing},
pages = {1201–1211},
numpages = {11},
location = {Prague, Czechia},
series = {STOC '25}
}

@misc{sinha2025improvedhamiltonianlearningsparsity,
      title={Improved Hamiltonian learning and sparsity testing through Bell sampling}, 
      author={Savar D. Sinha and Yu Tong},
      year={2025},
      eprint={2509.07937},
      archivePrefix={arXiv},
      primaryClass={quant-ph},
      url={https://arxiv.org/abs/2509.07937}, 
}

@article{haah2024learning,
  title={Learning quantum Hamiltonians from high-temperature Gibbs states and real-time evolutions},
  author={Haah, Jeongwan and Kothari, Robin and Tang, Ewin},
  journal={Nature Physics},
  volume={20},
  number={6},
  pages={1027--1031},
  year={2024},
  publisher={Nature Publishing Group UK London},
  doi={10.1038/s41567-023-02376-x}
}

@article{huang2023learning,
  title = {Learning Many-Body Hamiltonians with Heisenberg-Limited Scaling},
  author = {Huang, Hsin-Yuan and Tong, Yu and Fang, Di and Su, Yuan},
  journal = {Phys. Rev. Lett.},
  volume = {130},
  issue = {20},
  pages = {200403},
  numpages = {7},
  year = {2023},
  month = {May},
  publisher = {American Physical Society},
  doi = {10.1103/PhysRevLett.130.200403},
  url = {https://link.aps.org/doi/10.1103/PhysRevLett.130.200403}
}

@article{li2024heisenberg,
  title={Heisenberg-limited Hamiltonian learning for interacting bosons},
  author={Li, Haoya and Tong, Yu and Gefen, Tuvia and Ni, Hongkang and Ying, Lexing},
  journal={npj Quantum Information},
  volume={10},
  number={1},
  pages={83},
  year={2024},
  publisher={Nature Publishing Group UK London},
  doi={10.1038/s41534-024-00881-2}
}

@misc{ma2024learning,
      title={Learning $k$-body Hamiltonians via compressed sensing}, 
      author={Muzhou Ma and Steven T. Flammia and John Preskill and Yu Tong},
      year={2024},
      eprint={2410.18928},
      archivePrefix={arXiv},
      primaryClass={quant-ph},
      url={https://arxiv.org/abs/2410.18928}, 
}

@article{mirani2024learning,
  title={Learning interacting fermionic Hamiltonians at the Heisenberg limit},
  author={Mirani, Arjun and Hayden, Patrick},
  journal={Physical Review A},
  volume={110},
  number={6},
  pages={062421},
  year={2024},
  publisher={APS},
  doi={10.1103/PhysRevA.110.062421}
}

@article{ni2024quantum,
  title={Quantum hamiltonian learning for the fermi-hubbard model},
  author={Ni, Hongkang and Li, Haoya and Ying, Lexing},
  journal={Acta Applicandae Mathematicae},
  volume={191},
  number={1},
  pages={2},
  year={2024},
  publisher={Springer},
  doi={10.1007/s10440-024-00651-4}
}

@misc{zubida2021optimal,
      title={Optimal short-time measurements for Hamiltonian learning}, 
      author={Assaf Zubida and Elad Yitzhaki and Netanel H. Lindner and Eyal Bairey},
      year={2021},
      eprint={2108.08824},
      archivePrefix={arXiv},
      primaryClass={quant-ph},
      url={https://arxiv.org/abs/2108.08824}, 
}

@misc{evans2019scalable,
      title={Scalable Bayesian Hamiltonian learning}, 
      author={Tim J. Evans and Robin Harper and Steven T. Flammia},
      year={2019},
      eprint={1912.07636},
      archivePrefix={arXiv},
      primaryClass={quant-ph},
      url={https://arxiv.org/abs/1912.07636}, 
}

@article{bairey2019learning,
  title={Learning a local Hamiltonian from local measurements},
  author={Bairey, Eyal and Arad, Itai and Lindner, Netanel H},
  journal={Physical review letters},
  volume={122},
  number={2},
  pages={020504},
  year={2019},
  publisher={APS},
  doi={10.1103/PhysRevLett.122.020504}
}

@article{qi2019determining,
  title={Determining a local Hamiltonian from a single eigenstate},
  author={Qi, Xiao-Liang and Ranard, Daniel},
  journal={Quantum},
  volume={3},
  pages={159},
  year={2019},
  publisher={Verein zur F{\"o}rderung des Open Access Publizierens in den Quantenwissenschaften},
  doi={10.22331/q-2019-07-08-159}
}

@article{rouze2024learning,
  title={Learning quantum many-body systems from a few copies},
  author={Rouz{\'e}, Cambyse and Fran{\c{c}}a, Daniel Stilck},
  journal={Quantum},
  volume={8},
  pages={1319},
  year={2024},
  publisher={Verein zur F{\"o}rderung des Open Access Publizierens in den Quantenwissenschaften},
  doi={10.22331/q-2024-04-30-1319}
}

@inproceedings{bakshi2024learning,
author = {Bakshi, Ainesh and Liu, Allen and Moitra, Ankur and Tang, Ewin},
title = {Learning Quantum Hamiltonians at Any Temperature in Polynomial Time},
year = {2024},
isbn = {9798400703836},
publisher = {Association for Computing Machinery},
address = {New York, NY, USA},
url = {https://doi.org/10.1145/3618260.3649619},
doi = {10.1145/3618260.3649619},
booktitle = {Proceedings of the 56th Annual ACM Symposium on Theory of Computing},
pages = {1470–1477},
numpages = {8},
location = {Vancouver, BC, Canada},
series = {STOC 2024}
}

@article{liu2025optimal,
  title={Optimal and Robust In-situ Quantum Hamiltonian Learning through Parallelization},
  author={Liu, Suying and Wu, Xiaodi and Niu, Murphy Yuezhen},
  journal={arXiv preprint arXiv:2510.07818},
  year={2025},
  url={https://arxiv.org/abs/2510.07818}
}

@article{huelga1997improvement,
  title={Improvement of frequency standards with quantum entanglement},
  author={Huelga, Susanna F and Macchiavello, Chiara and Pellizzari, Thomas and Ekert, Artur K and Plenio, Martin B and Cirac, J Ignacio},
  journal={Physical Review Letters},
  volume={79},
  number={20},
  pages={3865},
  year={1997},
  publisher={APS},
  doi={10.1103/PhysRevLett.79.3865}
}

@article{escher2011general,
  title={General framework for estimating the ultimate precision limit in noisy quantum-enhanced metrology},
  author={Escher, BM and de Matos Filho, Ruynet Lima and Davidovich, Luiz},
  journal={Nature Physics},
  volume={7},
  number={5},
  pages={406--411},
  year={2011},
  publisher={Nature Publishing Group UK London},
  doi = {10.1038/nphys1958}
}

@article{demkowicz2014using,
  title={Using entanglement against noise in quantum metrology},
  author={Demkowicz-Dobrza{\'n}ski, Rafal and Maccone, Lorenzo},
  journal={Physical review letters},
  volume={113},
  number={25},
  pages={250801},
  year={2014},
  publisher={APS},
  doi={10.1103/PhysRevLett.113.250801}
}

@article{zhou2018achieving,
  title={Achieving the Heisenberg limit in quantum metrology using quantum error correction},
  author={Zhou, Sisi and Zhang, Mengzhen and Preskill, John and Jiang, Liang},
  journal={Nature communications},
  volume={9},
  number={1},
  pages={78},
  year={2018},
  publisher={Nature Publishing Group UK London},
  doi={10.1038/s41467-017-02510-3}
}

@book{mason2002chebyshev,
  title={Chebyshev polynomials},
  author={Mason, John C and Handscomb, David C},
  year={2002},
  publisher={Chapman and Hall/CRC}
}

@book{trefethen2019approximation,
author = {Trefethen, Lloyd N.},
title = {Approximation Theory and Approximation Practice, Extended Edition},
publisher = {Society for Industrial and Applied Mathematics},
year = {2019},
doi = {10.1137/1.9781611975949},
address = {Philadelphia, PA},
edition   = {},
URL = {https://epubs.siam.org/doi/abs/10.1137/1.9781611975949},
}

@book{boyd2001chebyshev,
  title={Chebyshev and Fourier spectral methods},
  author={Boyd, John P},
  year={2001},
  publisher={Courier Corporation}
}

@misc{marshall2024chebyshev,
  title={Chebyshev interpolation},
  author={Marshall, NICHOLAS F},
  year={2024}
}

@article{chen2024tight,
  title={Tight bounds on Pauli channel learning without entanglement},
  author={Chen, Senrui and Oh, Changhun and Zhou, Sisi and Huang, Hsin-Yuan and Jiang, Liang},
  journal={Physical Review Letters},
  volume={132},
  number={18},
  pages={180805},
  year={2024},
  publisher={APS},
  doi = {10.1103/PhysRevLett.132.180805}
}

@article{chen2022quantum,
  title={Quantum advantages for Pauli channel estimation},
  author={Chen, Senrui and Zhou, Sisi and Seif, Alireza and Jiang, Liang},
  journal={Physical Review A},
  volume={105},
  number={3},
  pages={032435},
  year={2022},
  publisher={APS},
  doi={10.1103/PhysRevA.105.032435}
}

@article{flammia2021pauli,
  title={Pauli error estimation via population recovery},
  author={Flammia, Steven T and O'Donnell, Ryan},
  journal={Quantum},
  volume={5},
  pages={549},
  year={2021},
  publisher={Verein zur F{\"o}rderung des Open Access Publizierens in den Quantenwissenschaften},
  doi={10.22331/q-2021-09-23-549}
}

@misc{o2025spam,
      title={SPAM Tolerance for Pauli Error Estimation}, 
      author={Ryan O'Donnell and Samvitti Sharma},
      year={2025},
      eprint={2510.00230},
      archivePrefix={arXiv},
      primaryClass={quant-ph},
      url={https://arxiv.org/abs/2510.00230}, 
}

@article{temme2017error,
  title = {Error Mitigation for Short-Depth Quantum Circuits},
  author = {Temme, Kristan and Bravyi, Sergey and Gambetta, Jay M.},
  journal = {Phys. Rev. Lett.},
  volume = {119},
  issue = {18},
  pages = {180509},
  numpages = {5},
  year = {2017},
  month = {Nov},
  publisher = {American Physical Society},
  doi = {10.1103/PhysRevLett.119.180509},
  url = {https://link.aps.org/doi/10.1103/PhysRevLett.119.180509}
}

@article{bravyi2021mitigating,
  title = {Mitigating measurement errors in multiqubit experiments},
  author = {Bravyi, Sergey and Sheldon, Sarah and Kandala, Abhinav and Mckay, David C. and Gambetta, Jay M.},
  journal = {Phys. Rev. A},
  volume = {103},
  issue = {4},
  pages = {042605},
  numpages = {12},
  year = {2021},
  month = {Apr},
  publisher = {American Physical Society},
  doi = {10.1103/PhysRevA.103.042605},
  url = {https://link.aps.org/doi/10.1103/PhysRevA.103.042605}
}

@article{van2023probabilistic,
  title={Probabilistic error cancellation with sparse Pauli--Lindblad models on noisy quantum processors},
  author={Van Den Berg, Ewout and Minev, Zlatko K and Kandala, Abhinav and Temme, Kristan},
  journal={Nature physics},
  volume={19},
  number={8},
  pages={1116--1121},
  year={2023},
  publisher={Nature Publishing Group UK London},
  doi = {10.1038/s41567-023-02042-2}
}

@article{cai2021multi,
  title={Multi-exponential error extrapolation and combining error mitigation techniques for NISQ applications},
  author={Cai, Zhenyu},
  journal={npj Quantum Information},
  volume={7},
  number={1},
  pages={80},
  year={2021},
  publisher={Nature Publishing Group UK London}, 
  doi = {10.1038/s41534-021-00404-3}
}

@article{kim2023evidence,
  title={Evidence for the utility of quantum computing before fault tolerance},
  author={Kim, Youngseok and Eddins, Andrew and Anand, Sajant and Wei, Ken Xuan and Van Den Berg, Ewout and Rosenblatt, Sami and Nayfeh, Hasan and Wu, Yantao and Zaletel, Michael and Temme, Kristan and others},
  journal={Nature},
  volume={618},
  number={7965},
  pages={500--505},
  year={2023},
  publisher={Nature Publishing Group UK London},
  doi={10.1038/s41586-023-06096-3}
}

@article{harper2021fast,
  title = {Fast Estimation of Sparse Quantum Noise},
  author = {Harper, Robin and Yu, Wenjun and Flammia, Steven T.},
  journal = {PRX Quantum},
  volume = {2},
  issue = {1},
  pages = {010322},
  numpages = {26},
  year = {2021},
  month = {Feb},
  publisher = {American Physical Society},
  doi = {10.1103/PRXQuantum.2.010322},
  url = {https://link.aps.org/doi/10.1103/PRXQuantum.2.010322}
}

@article{strikis2021learning,
  title = {Learning-Based Quantum Error Mitigation},
  author = {Strikis, Armands and Qin, Dayue and Chen, Yanzhu and Benjamin, Simon C. and Li, Ying},
  journal = {PRX Quantum},
  volume = {2},
  issue = {4},
  pages = {040330},
  numpages = {30},
  year = {2021},
  month = {Nov},
  publisher = {American Physical Society},
  doi = {10.1103/PRXQuantum.2.040330},
  url = {https://link.aps.org/doi/10.1103/PRXQuantum.2.040330}
}

@article{ruskai1994subadditivity,
author = {RUSKAI, MARY BETH},
title = {BEYOND STRONG SUBADDITIVITY? IMPROVED BOUNDS ON THE CONTRACTION OF GENERALIZED RELATIVE ENTROPY},
journal = {Reviews in Mathematical Physics},
volume = {06},
number = {05a},
pages = {1147-1161},
year = {1994},
doi = {10.1142/S0129055X94000407},
url = {https://doi.org/10.1142/S0129055X94000407}
}

@article{verstraete2006matrix,
  title={Matrix product states represent ground states faithfully},
  author={Verstraete, Frank and Cirac, J Ignacio},
  journal={Physical Review B—Condensed Matter and Materials Physics},
  volume={73},
  number={9},
  pages={094423},
  year={2006},
  publisher={APS},
  doi = {10.1103/PhysRevB.73.094423}
}

@article{holevo1973statistical,
  title={Statistical decision theory for quantum systems},
  author={Holevo, Alexander S},
  journal={Journal of multivariate analysis},
  volume={3},
  number={4},
  pages={337--394},
  year={1973},
  publisher={Elsevier},
  doi = {10.1016/0047-259X(73)90028-6}
}

@article{helstrom1969quantum,
  title={Quantum detection and estimation theory},
  author={Helstrom, Carl W},
  journal={Journal of Statistical Physics},
  volume={1},
  number={2},
  pages={231--252},
  year={1969},
  publisher={Springer},
  doi={10.1007/BF01007479}
}

@article{childs2021theory,
  title={Theory of trotter error with commutator scaling},
  author={Childs, Andrew M and Su, Yuan and Tran, Minh C and Wiebe, Nathan and Zhu, Shuchen},
  journal={Physical Review X},
  volume={11},
  number={1},
  pages={011020},
  year={2021},
  publisher={APS},
  doi={PhysRevX.11.011020}
}

@article{viola1999dynamical,
  title={Dynamical decoupling of open quantum systems},
  author={Viola, Lorenza and Knill, Emanuel and Lloyd, Seth},
  journal={Physical Review Letters},
  volume={82},
  number={12},
  pages={2417},
  year={1999},
  publisher={APS},
  doi={10.1103/PhysRevLett.82.2417}
}

@misc{cotler2026noisylearning,
      title={Noisy Quantum Learning Theory}, 
      author={Jordan Cotler and Weiyuan Gong and Ishaan Kannan},
      year={2026},
      eprint={2512.10929},
      archivePrefix={arXiv},
      primaryClass={quant-ph},
      url={https://arxiv.org/abs/2512.10929}, 
}

@misc{gong2026multiparameter,
      title={Robust multiparameter estimation using quantum scrambling}, 
      author={Wenjie Gong and Bingtian Ye and Daniel Mark and Soonwon Choi},
      year={2026},
      eprint={2601.23283},
      archivePrefix={arXiv},
      primaryClass={quant-ph},
      url={https://arxiv.org/abs/2601.23283}, 
}

@article{emerson2005scalable,
doi = {10.1088/1464-4266/7/10/021},
url = {https://doi.org/10.1088/1464-4266/7/10/021},
year = {2005},
month = {sep},
publisher = {},
volume = {7},
number = {10},
pages = {S347},
author = {Emerson, Joseph and Alicki, Robert and Życzkowski, Karol},
title = {Scalable noise estimation with random unitary operators},
journal = {Journal of Optics B: Quantum and Semiclassical Optics},
}

@article{knill2008randomized,
  title = {Randomized benchmarking of quantum gates},
  author = {Knill, E. and Leibfried, D. and Reichle, R. and Britton, J. and Blakestad, R. B. and Jost, J. D. and Langer, C. and Ozeri, R. and Seidelin, S. and Wineland, D. J.},
  journal = {Phys. Rev. A},
  volume = {77},
  issue = {1},
  pages = {012307},
  numpages = {7},
  year = {2008},
  month = {Jan},
  publisher = {American Physical Society},
  doi = {10.1103/PhysRevA.77.012307},
  url = {https://link.aps.org/doi/10.1103/PhysRevA.77.012307}
}

@article{dankert2009exact,
  title = {Exact and approximate unitary 2-designs and their application to fidelity estimation},
  author = {Dankert, Christoph and Cleve, Richard and Emerson, Joseph and Livine, Etera},
  journal = {Phys. Rev. A},
  volume = {80},
  issue = {1},
  pages = {012304},
  numpages = {6},
  year = {2009},
  month = {Jul},
  publisher = {American Physical Society},
  doi = {10.1103/PhysRevA.80.012304},
  url = {https://link.aps.org/doi/10.1103/PhysRevA.80.012304}
}

@article{flammia2011direct,
  title = {Direct Fidelity Estimation from Few Pauli Measurements},
  author = {Flammia, Steven T. and Liu, Yi-Kai},
  journal = {Phys. Rev. Lett.},
  volume = {106},
  issue = {23},
  pages = {230501},
  numpages = {4},
  year = {2011},
  month = {Jun},
  publisher = {American Physical Society},
  doi = {10.1103/PhysRevLett.106.230501},
  url = {https://link.aps.org/doi/10.1103/PhysRevLett.106.230501}
}

@article{da2011practical,
  title = {Practical Characterization of Quantum Devices without Tomography},
  author = {da Silva, Marcus P. and Landon-Cardinal, Olivier and Poulin, David},
  journal = {Phys. Rev. Lett.},
  volume = {107},
  issue = {21},
  pages = {210404},
  numpages = {5},
  year = {2011},
  month = {Nov},
  publisher = {American Physical Society},
  doi = {10.1103/PhysRevLett.107.210404},
  url = {https://link.aps.org/doi/10.1103/PhysRevLett.107.210404}
}

@article{magesan2011scalable,
  title = {Scalable and Robust Randomized Benchmarking of Quantum Processes},
  author = {Magesan, Easwar and Gambetta, J. M. and Emerson, Joseph},
  journal = {Phys. Rev. Lett.},
  volume = {106},
  issue = {18},
  pages = {180504},
  numpages = {4},
  year = {2011},
  month = {May},
  publisher = {American Physical Society},
  doi = {10.1103/PhysRevLett.106.180504},
  url = {https://link.aps.org/doi/10.1103/PhysRevLett.106.180504}
}

@article{moussa2012practical,
  title = {Practical Experimental Certification of Computational Quantum Gates Using a Twirling Procedure},
  author = {Moussa, Osama and da Silva, Marcus P. and Ryan, Colm A. and Laflamme, Raymond},
  journal = {Phys. Rev. Lett.},
  volume = {109},
  issue = {7},
  pages = {070504},
  numpages = {5},
  year = {2012},
  month = {Aug},
  publisher = {American Physical Society},
  doi = {10.1103/PhysRevLett.109.070504},
  url = {https://link.aps.org/doi/10.1103/PhysRevLett.109.070504}
}

@article{wallman2015estimating,
doi = {10.1088/1367-2630/17/11/113020},
url = {https://doi.org/10.1088/1367-2630/17/11/113020},
year = {2015},
month = {nov},
publisher = {IOP Publishing},
volume = {17},
number = {11},
pages = {113020},
author = {Wallman, Joel and Granade, Chris and Harper, Robin and Flammia, Steven T},
title = {Estimating the coherence of noise},
journal = {New Journal of Physics},
}

@article{wallman2016robust,
doi = {10.1088/1367-2630/18/4/043021},
url = {https://doi.org/10.1088/1367-2630/18/4/043021},
year = {2016},
month = {apr},
publisher = {IOP Publishing},
volume = {18},
number = {4},
pages = {043021},
author = {Wallman, Joel J and Barnhill, Marie and Emerson, Joseph},
title = {Robust characterization of leakage errors},
journal = {New Journal of Physics},
}

@article{boixo2018characterizing,
  title        = {Characterizing quantum supremacy in near-term devices},
  author       = {Boixo, Sergio and Isakov, Sergei V. and Smelyanskiy, Vadim N. and Babbush, Ryan and Ding, Nan and Jiang, Zhang and Bremner, Michael J. and Martinis, John M. and Neven, Hartmut},
  journal      = {Nature Physics},
  volume       = {14},
  number       = {6},
  pages        = {595--600},
  year         = {2018},
  doi          = {10.1038/s41567-018-0124-x},
  url          = {https://doi.org/10.1038/s41567-018-0124-x}
}

@article{erhard2019characterizing,
  title={Characterizing large-scale quantum computers via cycle benchmarking},
  author={Erhard, Alexander and Wallman, Joel J and Postler, Lukas and Meth, Michael and Stricker, Roman and Martinez, Esteban A and Schindler, Philipp and Monz, Thomas and Emerson, Joseph and Blatt, Rainer},
  journal={Nature communications},
  volume={10},
  number={1},
  pages={5347},
  year={2019},
  publisher={Nature Publishing Group UK London},
  doi          = {10.1038/s41467-019-13068-7},
  url          = {https://doi.org/10.1038/s41467-019-13068-7}
}

@article{proctor2022scalable,
  title={Scalable randomized benchmarking of quantum computers using mirror circuits},
  author={Proctor, Timothy and Seritan, Stefan and Rudinger, Kenneth and Nielsen, Erik and Blume-Kohout, Robin and Young, Kevin},
  journal={Physical Review Letters},
  volume={129},
  number={15},
  pages={150502},
  year={2022},
  publisher={APS},
  doi = {10.1103/PhysRevLett.129.150502},
  url = {https://link.aps.org/doi/10.1103/PhysRevLett.129.150502}
}

@article{chuang1997prescription,
  title={Prescription for experimental determination of the dynamics of a quantum black box},
  author={Chuang, Isaac L},
  journal={Journal of Modern Optics},
  volume={44},
  number={11-12},
  pages={2455--2467},
  year={1997},
  publisher={Taylor \& Francis},
  doi = {10.1080/09500349708231894},
  url = {https://www.tandfonline.com/doi/abs/10.1080/09500349708231894},
}

@article{flammia2012quantum,
  title={Quantum tomography via compressed sensing: error bounds, sample complexity and efficient estimators},
  author={Flammia, Steven T and Gross, David and Liu, Yi-Kai and Eisert, Jens},
  journal={New Journal of Physics},
  volume={14},
  number={9},
  pages={095022},
  year={2012},
  publisher={IOP Publishing},
  doi = {10.1088/1367-2630/14/9/095022},
  url = {https://doi.org/10.1088/1367-2630/14/9/095022},
}

@article{fletcher2008channel,
  title={Channel-adapted quantum error correction for the amplitude damping channel},
  author={Fletcher, Andrew S and Shor, Peter W and Win, Moe Z},
  journal={IEEE Transactions on Information Theory},
  volume={54},
  number={12},
  pages={5705--5718},
  year={2008},
  publisher={IEEE},
  doi={10.1109/TIT.2008.2006458}
}

@article{aliferis2008fault,
  title = {Fault-tolerant quantum computation against biased noise},
  author = {Aliferis, Panos and Preskill, John},
  journal = {Phys. Rev. A},
  volume = {78},
  issue = {5},
  pages = {052331},
  numpages = {9},
  year = {2008},
  month = {Nov},
  publisher = {American Physical Society},
  doi = {10.1103/PhysRevA.78.052331},
  url = {https://link.aps.org/doi/10.1103/PhysRevA.78.052331}
}

@article{bonilla2021xzzx,
  title={The XZZX surface code},
  author={Bonilla Ataides, J Pablo and Tuckett, David K and Bartlett, Stephen D and Flammia, Steven T and Brown, Benjamin J},
  journal={Nature communications},
  volume={12},
  number={1},
  pages={2172},
  year={2021},
  publisher={Nature Publishing Group UK London},
  doi={10.1038/s41467-021-22274-1}
}

@article{tuckett2019tailoring,
  title = {Tailoring Surface Codes for Highly Biased Noise},
  author = {Tuckett, David K. and Darmawan, Andrew S. and Chubb, Christopher T. and Bravyi, Sergey and Bartlett, Stephen D. and Flammia, Steven T.},
  journal = {Phys. Rev. X},
  volume = {9},
  issue = {4},
  pages = {041031},
  numpages = {22},
  year = {2019},
  month = {Nov},
  publisher = {American Physical Society},
  doi = {10.1103/PhysRevX.9.041031},
  url = {https://link.aps.org/doi/10.1103/PhysRevX.9.041031}
}

@article{chuang1997bosonic,
  title = {Bosonic quantum codes for amplitude damping},
  author = {Chuang, Isaac L. and Leung, Debbie W. and Yamamoto, Yoshihisa},
  journal = {Phys. Rev. A},
  volume = {56},
  issue = {2},
  pages = {1114--1125},
  numpages = {0},
  year = {1997},
  month = {Aug},
  publisher = {American Physical Society},
  doi = {10.1103/PhysRevA.56.1114},
  url = {https://link.aps.org/doi/10.1103/PhysRevA.56.1114}
}

@article{wu2025bias,
  title={Bias-tailored single-shot quantum LDPC codes},
  author={Wu, Shixin and Brun, Todd A and Lidar, Daniel A},
  journal={arXiv preprint arXiv:2507.02239},
  year={2025},  
  url = {https://arxiv.org/abs/2507.02239}
}

@article{kuehnke2025hardware,
  title={Hardware-tailored logical Clifford circuits for stabilizer codes},
  author={Kuehnke, Eric J and Levi, Kyano and Roffe, Joschka and Eisert, Jens and Miller, Daniel},
  journal={arXiv preprint arXiv:2505.20261},
  year={2025},
  url={https://arxiv.org/abs/2505.20261}
}

@article{bluvstein2024logical,
  title={Logical quantum processor based on reconfigurable atom arrays},
  author={Bluvstein, Dolev and Evered, Simon J and Geim, Alexandra A and Li, Sophie H and Zhou, Hengyun and Manovitz, Tom and Ebadi, Sepehr and Cain, Madelyn and Kalinowski, Marcin and Hangleiter, Dominik and others},
  journal={Nature},
  volume={626},
  number={7997},
  pages={58--65},
  year={2024},
  publisher={Nature Publishing Group UK London},
  doi = {10.1038/s41586-023-06927-3}
}

@article{bluvstein2025fault,
  title={A fault-tolerant neutral-atom architecture for universal quantum computation},
  author={Bluvstein, Dolev and Geim, Alexandra A and Li, Sophie H and Evered, Simon J and Bonilla Ataides, J Pablo and Baranes, Gefen and Gu, Andi and Manovitz, Tom and Xu, Muqing and Kalinowski, Marcin and others},
  journal={Nature},
  pages={1--3},
  year={2025},
  publisher={Nature Publishing Group UK London},
  doi = {10.1038/s41586-025-09848-5}
}

@article{google2024threshold,
  title={Quantum error correction below the surface code threshold},
  author={{Google Quantum AI and Collaborators}},
  journal={Nature},
  volume={638},
  pages={920},
  year={2024},
  doi={10.1038/s41586-024-08449-y}
}

@article{google2023suppressing,
  author       = {{Google Quantum AI}},
  title        = {Suppressing quantum errors by scaling a surface code logical qubit},
  journal      = {Nature},
  volume       = {614},
  number       = {7949},
  pages        = {676--681},
  year         = {2023},
  doi          = {10.1038/s41586-022-05434-1}
}

@article{besedin2026lattice,
  title={Lattice surgery realized on two distance-three repetition codes with superconducting qubits},
  author={Besedin, Ilya and Kerschbaum, Michael and Knoll, Jonathan and Hesner, Ian and B{\"o}deker, Lukas and Colmenarez, Luis and Hofele, Luca and Lacroix, Nathan and Hellings, Christoph and Swiadek, Fran{\c{c}}ois and others},
  journal={Nature Physics},
  pages={1--6},
  year={2026},
  publisher={Nature Publishing Group UK London},
  doi={10.1038/s41567-025-03090-6}
}

@article{putterman2025hardware,
  title={Hardware-efficient quantum error correction via concatenated bosonic qubits},
  author={Putterman, Harald and Noh, Kyungjoo and Hann, Connor T and MacCabe, Gregory S and Aghaeimeibodi, Shahriar and Patel, Rishi N and Lee, Menyoung and Jones, William M and Moradinejad, Hesam and Rodriguez, Roberto and others},
  journal={Nature},
  volume={638},
  number={8052},
  pages={927--934},
  year={2025},
  publisher={Nature Publishing Group UK London},
  doi={10.1038/s41586-025-08642-7}
}

@article{brock2025quantum,
  title={Quantum error correction of qudits beyond break-even},
  author={Brock, Benjamin L and Singh, Shraddha and Eickbusch, Alec and Sivak, Volodymyr V and Ding, Andy Z and Frunzio, Luigi and Girvin, Steven M and Devoret, Michel H},
  journal={Nature},
  volume={641},
  number={8063},
  pages={612--618},
  year={2025},
  publisher={Nature Publishing Group UK London},
  doi ={10.1038/s41586-025-08899-y}
}

@misc{paetznick2024demonstration,
      title={Demonstration of logical qubits and repeated error correction with better-than-physical error rates}, 
      author={Paetznick, A and Da Silva, MP and Ryan-Anderson, C and Bello-Rivas, JM and Campora III, JP and Chernoguzov, A and Dreiling, JM and Foltz, C and Frachon, F and Gaebler, JP and others},
      year={2024},
      eprint={2404.02280},
      archivePrefix={arXiv},
      primaryClass={quant-ph},
      url={https://arxiv.org/abs/2404.02280}, 
}

@article{ryan2024high,
  title={High-fidelity teleportation of a logical qubit using transversal gates and lattice surgery},
  author={Ryan-Anderson, Ciaran and Brown, NC and Baldwin, CH and Dreiling, JM and Foltz, C and Gaebler, JP and Gatterman, TM and Hewitt, N and Holliman, C and Horst, CV and others},
  journal={Science},
  volume={385},
  number={6715},
  pages={1327--1331},
  year={2024},
  publisher={American Association for the Advancement of Science},
  doi={10.1126/science.adp6016}
}

@article{generalized_DD,
  title = {Syncopated dynamical decoupling to suppress crosstalk in quantum circuits},
  author = {Evert, Bram and Gonzalez Izquierdo, Zoe and Sud, James and Hu, Hong-Ye and Grabbe, Shon and Rieffel, Eleanor G. and Reagor, Matthew J. and Wang, Zhihui},
  journal = {Phys. Rev. Appl.},
  volume = {24},
  issue = {4},
  pages = {044025},
  numpages = {15},
  year = {2025},
  month = {Oct},
  publisher = {American Physical Society},
  doi = {10.1103/8lxc-lvv1},
  url = {https://link.aps.org/doi/10.1103/8lxc-lvv1}
}

@article{rl_qec,
  title = {Discovery of optimal quantum codes via reinforcement learning},
  author = {Su, Vincent Paul and Cao, ChunJun and Hu, Hong-Ye and Yanay, Yariv and Tahan, Charles and Swingle, Brian},
  journal = {Phys. Rev. Appl.},
  volume = {23},
  issue = {3},
  pages = {034048},
  numpages = {19},
  year = {2025},
  month = {Mar},
  publisher = {American Physical Society},
  doi = {10.1103/PhysRevApplied.23.034048},
  url = {https://link.aps.org/doi/10.1103/PhysRevApplied.23.034048}
}

@article{Bayesian_noise,
	author = {Hu, Hong-Ye and Gu, Andi and Majumder, Swarnadeep and Ren, Hang and Zhang, Yipei and Wang, Derek S. and You, Yi-Zhuang and Minev, Zlatko and Yelin, Susanne F. and Seif, Alireza},
	date = {2025/03/26},
	doi = {10.1038/s41467-025-57349-w},
	id = {Hu2025},
	isbn = {2041-1723},
	journal = {Nature Communications},
	number = {1},
	pages = {2943},
	title = {Demonstration of robust and efficient quantum property learning with shallow shadows},
	url = {https://doi.org/10.1038/s41467-025-57349-w},
	volume = {16},
	year = {2025}}

@article{romanov2026ansatzfreeqec,
  author       = {Romanov, Nikita and Ivashkov, Peter and Gong, Weiyuan and Kannan, Ishaan and Gu, Andi and Hu, Hong-Ye and Yelin, Susanne},
  title        = {Ansatz-free {L}indbladian Learning using Quantum Error Correction},
  journal      = {manuscript in preparation},
  year         = {2026}
}

\clearpage
\onecolumngrid

\appendix
\EnableTOC

\begingroup
\renewcommand{\tocname}{Contents of Appendix}
\tableofcontents
\endgroup

\makeatletter
\markboth{\MakeUppercase{\@shorttitle}}{\MakeUppercase{\@shorttitle}}
\makeatother
\clearpage

\section{Notation}
\label{sec:appendix_notation_prelim}

\noindent We consider $n$ qubits with Hilbert space $\mathcal H=(\mathbb C^2)^{\otimes n}$ and dimension $d = 2^n$. We write $\mathcal B(\mathcal H)$ for the space of linear operators on $\mathcal H$.
A linear map $\Phi:\mathcal B(\mathcal H)\to\mathcal B(\mathcal H)$ is called a superoperator. We use the Hilbert--Schmidt inner product $\langle A,B\rangle \;\coloneq\; \tr(A^\dagger B)$. The adjoint of a superoperator $\Phi$ is defined with respect to $\langle\cdot,\cdot\rangle$ by
\begin{equation}
    \langle \Phi(A),B\rangle \;=\; \langle A,\Phi^\dagger(B)\rangle
    \qquad \forall\,A,B\in\mathcal B(\mathcal H).
\end{equation}
For an observable $O$ and a state $\rho$, we write $\langle O\rangle_\rho \coloneq \tr(O\rho)$.

\medskip
\noindent\textbf{Norms.}
For $x\in\mathbb C^m$ and $p\in[1,\infty)$ we set
\begin{equation}
    \|x\|_p \;\coloneq\; \Big(\sum_{i=1}^m |x_i|^p\Big)^{1/p},
    \qquad
    \|x\|_\infty \;\coloneq\; \max_{1\le i\le m}|x_i|.
\end{equation}
We use $\|\cdot\|_\infty$ for entrywise control of noise vectors (e.g.\ the derivative-estimation error vector $\vb e$
in \cref{sec:stability_least_squares}). For a matrix $A\in\mathbb C^{m\times m'}$ and $p,q\in[1,\infty]$, the induced $(p\to q)$ norm is
\begin{equation}
    \|A\|_{p\to q}\;\coloneq\;\sup_{x\neq 0}\frac{\|Ax\|_q}{\|x\|_p}.
\end{equation}
The only case we use explicitly is the $\ell_\infty$-induced norm
\begin{equation}
    \|A\|_{\infty\to\infty}
    \;=\;
    \max_{1\le i\le m}\sum_{j=1}^{m'}|A_{ij}|,
\end{equation}
i.e.\ the maximum absolute row sum. This appears as the conditioning factor $\nu \;=\; \|C^{-1}\|_{\infty\to\infty}$ for the square reconstruction system $\vb d=C\vb x$ (\cref{sec:stability_least_squares}). For $X\in\mathcal B(\mathcal H)$ with singular values $\{s_i\}_{i=1}^d$, the Schatten-$p$ norm is
\begin{equation}
    \|X\|_{p}\;\coloneq\;\Big(\sum_{i=1}^d s_i^p\Big)^{1/p}\quad(1\le p<\infty),
    \qquad
    \|X\|_{\infty}\;\coloneq\;\max_i s_i(X).
\end{equation}
Thus $\|X\|_1$ is the trace norm, $\|X\|_2$ is the Hilbert--Schmidt norm, and $\|X\|_\infty$ is the operator norm.
We use repeatedly that
\begin{equation}
    \|X\|_\infty \le \|X\|_2 \le \|X\|_1,
    \qquad
    \|X\|_1 \le \sqrt d\,\|X\|_2,
    \qquad
    \|X\|_2 \le \sqrt d\,\|X\|_\infty.
\end{equation}
For a density operator $\rho\succeq 0$ with $\tr(\rho)=1$, one has $\|\rho\|_1=1$ and $\|\rho\|_2\le 1$. For a superoperator $\Phi:\mathcal B(\mathcal H)\to\mathcal B(\mathcal H)$ and $p,q\in[1,\infty]$, we define the induced Schatten $(p\to q)$ norm 
\begin{equation}
    \|\Phi\|_{p\to q}\;\coloneq\;\sup_{X\neq 0}\frac{\|\Phi(X)\|_q}{\|X\|_p},
\end{equation}
where $\|\cdot\|_p$ and $\|\cdot\|_q$ are Schatten norms on $\mathcal B(\mathcal H)$.
The induced Schatten-$2$ norm $\|\mathcal L\|_{2\to2}$ is used in the $\chi$-derivative bounds (e.g.\ \cref{cor:chi_derivative_bound_M_sparse})
and is bounded in \cref{sec:induced_norm_bound}.
Induced norms are submultiplicative:
\begin{equation}
    \|\Phi\circ\Psi\|_{p\to q}\le \|\Phi\|_{r\to q}\,\|\Psi\|_{p\to r}.
\end{equation}
We also use H\"older's inequality for Schatten norms:
\begin{equation}
    |\tr(A^\dagger B)|\le \|A\|_p\,\|B\|_q,
    \qquad \frac1p+\frac1q=1.
\end{equation}
Finally, for a scalar function $f:[0,\infty)\to\mathbb C$ we write
\begin{equation}
    \|f\|_\infty \;\coloneq\; \sup_{t\ge 0}|f(t)|.
\end{equation}
This convention is used throughout the Chebyshev derivative analysis (e.g.\ bounds of the form
$\|f^{(k)}\|_\infty \le B\,\Lambda^k k!$ in \cref{sec:chebyshev_interpolation_derivative_analysis}) and in the uniform
$\chi$-derivative bounds.

\medskip
\noindent\textbf{Pauli group and Pauli operator basis.}
We use $I, X, Y, Z$ to denote the identity and the three Pauli matrices:
\begin{equation}
    I=\begin{pmatrix}1&0\\0&1\end{pmatrix},\quad
    X=\begin{pmatrix}0&1\\1&0\end{pmatrix},\quad
    Y=\begin{pmatrix}0&-i\\ i&0\end{pmatrix},\quad
    Z=\begin{pmatrix}1&0\\0&-1\end{pmatrix}.
\end{equation}
The single-qubit Pauli group $\mathbb{P}_1$ is the group generated by the three Pauli matrices, along with the identity:
\begin{equation}
    \mathbb{P}_1 = \langle I, X,Y,Z \rangle = \{\pm I,\pm iI,\pm X,\pm iX,\pm Y,\pm iY,\pm Z,\pm iZ \}.
\end{equation}
The $n$-qubit Pauli group $\mathbb{P}_n$ of order $|\mathbb{P}_n| = 4\cdot4^{n}$ is generated by single-qubit Pauli operators acting on each of the $n$ qubits:
\begin{equation}
    \mathbb{P}_n = \langle \sigma_1\otimes\cdots\otimes \sigma_n \;:\; \sigma_i \in \{I,X,Y,Z \} \rangle.
\end{equation}
For computational purposes, a convenient Hermitian operator basis for the space of linear operators $\mathcal{B}(\mathcal{H})$ is given by the set of all $n$-qubit Pauli strings:
\begin{equation}
  \mathcal{P}_n = \{I, X,Y,Z\}^{\otimes n}.
\end{equation}
In other words, $\mathcal{P}_n$ is obtained by stripping off the overall phases from $\mathbb{P}_n$, such that any two elements $P_i,P_j \in \mathcal{P}_n$ are orthogonal to each other under the Hilbert–Schmidt inner product: $\tr{P_i P_j} = \tfrac{\delta_{ij}}{2^n}$. The set $\mathcal{P}_n$ has cardinality $|\mathcal{P}_n| = 4^n$. When multiplying two Pauli strings $P_i,P_j\in\mathcal P_n$, the product is another Pauli up to a phase:
\begin{equation}
    P_i P_j = \omega_{ij}\, P_{i\oplus j}, \qquad \omega_{ij}\in\{\pm1,\pm i\}.
\end{equation}
These phases record the commutation relations of Paulis and satisfy $\omega_{ij}=\omega_{ji}^*$, so that $(\omega_{ij})$ is Hermitian. Keeping track of $\omega_{ij}$ allows us to work with the phase-stripped basis $\mathcal P_n$, which, unlike $\mathbb P_n$, is not closed under multiplication.

We adopt the convention of labeling $P_0 = I^{\otimes n}$ as the $n$-qubit identity operator. Every other element $P_i \in \mathcal{P}_n$ with $i \neq 0$ is traceless: $\tr{P_i} = 2^n \delta_{i0}$. The orthogonality and completeness of $\mathcal{P}_n$ means that any operator $X \in \mathcal{B}(\mathcal{H})$ can be uniquely expanded in the Pauli basis as:
\begin{equation}
X = \sum_{i=0}^{4^n-1} x_i P_i,
\end{equation}
where the expansion coefficients are given by $x_i = \frac{1}{2^n}\tr{P_i X}$.

\medskip
\noindent\textbf{Supports and patches.}
For an $n$-qubit Pauli string $P$, its support $\supp{P}\subseteq[n]$ is the set of qubits on which $P$ acts nontrivially,
and its weight is $w(P)\coloneq|\supp{P}|$. A patch $T\subseteq[n]$ is a subset of $n$ qubits. For a patch $T$, we write
\begin{equation}
    \mathcal P_T := \{P\in\mathcal P_n : \mathrm{supp}(P)\subseteq T\}.
\end{equation}
to denote all $n$-qubit Pauli strings that are supported entirely on this patch.
This notation is used for patchwise tomography and parallel estimation of many $k$-local derivatives (e.g.\ \cref{sec:lindbladian_patchwise_tomography}).

\noindent\textbf{Lindbladian locality.}
Given a Lindbladian superoperator $\mathcal L$ expressed in the Pauli basis,
\begin{equation*}
    \mathcal L(\rho)
    =-i\sum_{P_k\in \mathcal S_H} h_k [P_k,\rho]
    +\sum_{(P_i,P_j)\in \mathcal S_D} a_{ij}\!\left(P_i \rho P_j-\tfrac12\{P_jP_i,\rho\}\right),
\end{equation*}
we define the Hamiltonian locality $k_H$ and dissipator locality $k_D$ by
\begin{equation}\label{eq:lindbladian-pauli-locality}
    k_H \coloneq \max_{P\in\mathcal S_H} |\supp{P}|,
    \qquad
    k_D \coloneq \max_{(P_i,P_j)\in\mathcal S_D} |\supp{P_i}\cup\supp{P_j}|.
\end{equation}
These localities are invariant under the unitary gauge freedom in the choice of jump operators and coincide
with the largest patch sizes appearing in \cref{lemma:injectivity_of_patch_wise_tomography}. They govern both
the classical preprocessing cost and the quantum query complexity of the coefficient-learning stage
(\cref{theorem:ancilla_free_coefficient_learning}).

\medskip
\noindent\textbf{Asymptotic notation.}
We use standard Landau notation and write $\widetilde{\mathcal O}(\cdot)$ to suppress polylogarithmic factors in the relevant
problem parameters (e.g.\ $n$, $M$, $1/\varepsilon$, $1/\delta$).
\section{Chebyshev interpolation}\label{sec:chebyshev_interpolation}

Chebyshev interpolation is instrumental to our structure and coefficient learning algorithms. We therefore provide a brief overview based on standard references~\citep{boyd2001chebyshev, mason2002chebyshev, trefethen2019approximation, marshall2024chebyshev}.

\subsection{Overview and intuition}\label{sec:chebyshev_interpolation_overview}
Given a function $f:[a,b]\to\mathbb{R}$, we would like to approximate it by a polynomial. The Weierstrass approximation theorem guarantees that any continuous function on a bounded interval can be approximated arbitrarily well by a polynomial. The question is, how do we construct such a polynomial systematically and with good numerical properties.

A natural idea is to pick $r{+}1$ points $t_0 < t_1 < \dots <t_r \in [a,b]$, evaluate $f(t)$ at those points, and then find a polynomial $p(t)$ of degree at most $r$ that exactly fits the data:
\begin{equation}
    p(t_m) = f(t_m), \qquad m=0,\dots,r.
\end{equation}
Such a polynomial, called the interpolant, always exists and is unique. It can be written explicitly in the Lagrange form
\begin{equation}
    p(t) = \sum_{m=0}^{r} f(t_m)\,\ell_m(t),
    \qquad
    \ell_m(t) = \prod_{\substack{n \neq m}}^{r} \frac{t - t_n}{t_m - t_n},
    \label{eq:lagrange_form}
\end{equation}
where $\ell_m(t)$ are the cardinal basis polynomials satisfying $\ell_m(t_n)=\delta_{mn}$. If $f$ is $(r{+}1)$-times continuously differentiable, the remainder theorem guarantees that the pointwise interpolation error can be bounded as follows:
\begin{equation}
    |f(t)-p(t)|
    \;\le\;
    \frac{\|f^{(r{+}1)}\|_\infty}{(r+1)!}
    \;\prod_{m=0}^{r}|t-t_m|,
\end{equation}
where $\|f^{(r{+}1)}\|_\infty : = \sup_{t\in[a,b]}|f^{(r+1)}(t)|$. The second term, $\prod_{m=0}^{r}|t-t_m|$, depends entirely on the choice of nodes. For example, the naïve choice of equispaced nodes often leads to large oscillations near the boundaries--the Runge phenomenon--and a rapidly growing error term. To obtain the best possible set of nodes for a fixed degree $r$, one can minimize the worst-case size of this node-dependent product. On the standard interval $[-1,1]$, the unique minimizer is given by the set of Gauss-Chebyshev nodes
\begin{equation}
    z_m = \cos\left(\frac{2m+1}{r+1} \, \frac{\pi}{2}\right),
    \qquad m=0,\dots,r,
    \label{eq:chebyshev_nodes}
\end{equation}
which are precisely the roots of the $(r{+}1)$-st Chebyshev polynomial of the first kind $T_{r+1}(z)$. Chebyshev polynomials can be defined recursively as
\begin{equation}
    T_0(z)=1, \qquad
    T_1(z)=z, \qquad
    T_{r+1}(z)=2zT_r(z)-T_{r-1}(z).
\end{equation}
Gauss-Chebyshev nodes $\{Z_m\}_{m=0}^r$ minimize the node-dependent factor for a fixed $r$. Mapping back to the physical interval $[a,b]$ via $t_m := \tfrac{b-a}{2}\, z_m + \tfrac{a+b}{2}$, the polynomial $p(t)$ fitted through the data $\{(t_m, f(t_m)\}_{m=0}^r$ obeys
\begin{equation}
    \max_{t\in[a,b]} |f(t)-p(t)|
    \;\le\;
    \frac{2\|f^{(r{+}1)}\|_\infty}{(r+1)!}\,
    \Big(\frac{b-a}{4}\Big)^{r+1}.
    \label{eq:chebyshev_error_bound}
\end{equation}
Although the resulting polynomial has guaranteed approximation accuracy, the Lagrange form \eqref{eq:lagrange_form} is numerically unstable to evaluate $p(t)$ for large $r$. An equivalent but more stable and computationally efficient representation is obtained by expanding $p(t)$ in the Chebyshev basis,
\begin{equation}
    p(t) = \sum_{n=0}^{r} c_n\,T_n\left(z(t)\right),
\end{equation}
where we map $[a,b] \mapsto [-1,1]$ via $z(t) = \tfrac{2t - (a+b)}{b-a}$ and the coefficients $\{c_n\}$ are called the Chebyshev coefficients. These coefficients can readily be computed in $O(r\log r)$ time by applying the Fast Fourier Transform (FFT) to the function values $\{f(t_m)\}$. Equivalently, by the discrete orthogonality of the Chebyshev polynomials on Chebyshev nodes, the coefficients admit an explicit form
\begin{equation}
    c_0 \;=\; \frac{1}{r+1}\sum_{m=0}^{r} f(t_m),
    \qquad
    c_n \;=\; \frac{2}{r+1}\sum_{m=0}^{r} f(t_m)\,T_n(z_m),
    \quad n=1,\ldots,r .
    \label{eq:cheb_coeffs_explicit}
\end{equation}

If our only goal were to approximate $f(t)$ on $[a,b]$, we would be done here. However, we are interested not in $f(t)$ itself, but in its first and second derivatives at $x=0$. In addition, the data points $\{f(t_i)\}$ are obtained from noisy measurements with an uncertainty $\pm\varepsilon_s$. In the following, we will show that the Chebyshev interpolant can be used to estimate the first and second derivatives of $f$ at $t=0$ from noisy function evaluations. The analysis follows the approach of~\citet{gu2024practical}, which we generalize to include the estimation of the second derivative.

\subsection{Derivative analysis of the Chebyshev interpolant}\label{sec:chebyshev_interpolation_derivative_analysis}

From now on, the interval of interest corresponds to a finite time window $[0,\tau_{\max}]$, and the map to the standard domain $[-1,1]$ is given by $z(t) = \frac{2t}{\tau_{\max}} - 1$. Let 
\begin{equation}
    \widehat p(t)=\sum_{n=0}^r \widehat c_n\,T_n\big(z(t)\big)
\end{equation}
denote the degree-$r$ Chebyshev interpolant constructed from the noisy function estimates 
$\{\widehat f(t_m)\}_{m=0}^r$ at the Gauss-Chebyshev nodes $\{t_m\}_{m=0}^r$. We assume that the measurement noise is uniformly bounded and controllable:
\begin{equation}
    |\widehat f(t_m)-f(t_m)|\le\varepsilon_s.
\end{equation}
The corresponding derivative estimators at $t=0$ (i.e.\ $z=-1$) are
\begin{align}
    \label{eq:chebyshev_derivative_estimator_first}
    \widehat{p}^{\,\,\prime}(0)
    &= \sum_{n=0}^r \widehat c_n\,\dv{t}T_n(z(t))\Big|_{t=0}
     = \frac{2}{\tau_{\max}}\sum_{n=1}^r (-1)^{n+1} n^2\,\widehat c_n,\\[3pt]
    \label{eq:chebyshev_derivative_estimator_second}
    \widehat{p}^{\,\,\prime\prime}(0)
    &= \sum_{n=0}^r \widehat c_n\,\dv[2]{t}T_n(z(t))\Big|_{t=0}
     = \frac{4}{3\tau_{\max}^2}\sum_{n=2}^r (-1)^n n^2(n^2-1)\,\widehat c_n,
\end{align}
where we used $T_n'(-1)=(-1)^{n+1}n^{2}$ and $T_n''(-1)=(-1)^n n^2(n^2-1)/3$ (see, e.g.,~\cite[§2.4.5]{mason2002chebyshev}). By substituting the explicit form of the Chebyshev coefficients from \cref{eq:cheb_coeffs_explicit}, the derivative estimators can be expressed as weighted sums of the sampled function values:
\begin{equation}
    \widehat{p}^{\,\prime}(0)
    \;=\;
    \sum_{m=0}^{r} \alpha^{(1)}_m\,\widehat f(t_m),
    \qquad
    \widehat{p}^{\,\prime\prime}(0)
    \;=\;
    \sum_{m=0}^{r} \alpha^{(2)}_m\,\widehat f(t_m),
\end{equation}
where the weights $\{\alpha^{(1)}_m\}$ and $\{\alpha^{(2)}_m\}$ can be precomputed and reused for all derivative estimators on the same Chebyshev grid:
\begin{align}
    \label{eq:chebyshev_derivative_weights}
    \alpha^{(1)}_m
    \;&=\;
    \frac{4}{\tau_{\max}(r{+}1)}\,
    \sum_{n=1}^{r} (-1)^{n+1} n^2\,T_n(z_m)\\
    \alpha^{(2)}_m
    \;&=\;
    \frac{8}{3\tau_{\max}^2(r{+}1)}\,
    \sum_{n=2}^{r} (-1)^{n} n^2(n^2{-}1)\,T_n(z_m),
\end{align}
Then, the following result quantifies the total error of the derivative estimators defined above.

\begin{theorem}[Noisy derivative estimators]
    \label{theorem:chebyshev_derivative_total_error}
    Let $\tau_{\max}>0$ and $f$ be $(r{+}1)$-times continuously differentiable on $[0,\tau_{\max}]$. Suppose the available noisy samples $\{\widehat f(t_m)\}_{m=0}^r$ on the set of Gauss-Chebyshev nodes $\{t_m\}_{m=0}^r \subset [0,\tau_{\max}]$ have bounded uniform measurement noise: $|\widehat f(t_m)-f(t_m)|\le \varepsilon_s$. Then, the estimators $\widehat{p}^{\,\prime}(0)$ and $\widehat{p}^{\,\prime\prime}(0)$ defined in \cref{eq:chebyshev_derivative_estimator_first} and \cref{eq:chebyshev_derivative_estimator_second} satisfy
    \begin{align}
        \label{eq:chebyshev_total_error_first}
        \bigl|\widehat{p}^{\,\prime}(0)-f'(0)\bigr|
        &\le
        \underbrace{\frac{5r^3}{2\tau_{\max}}\,\varepsilon_s}_{\text{\em noise term}}
        \;+\;
        \underbrace{2(r+1)^2\tau_{\max}^r\frac{\|f^{(r+1)}\|_\infty}{r!}}_{\text{\em bias term}}, \\[4pt]
        \label{eq:chebyshev_total_error_second}
        \bigl|\widehat{p}^{\,\prime\prime}(0)-f''(0)\bigr|
        &\le
        \underbrace{\frac{r^5}{\tau_{\max}^2}\,\varepsilon_s}_{\text{\em noise term}}
        \;+\;
        \underbrace{\tfrac{2}{3}(r+1)^2((r+1)^2-1) \tau_{\max}^{r-1}\frac{\|f^{(r+1)}\|_\infty}{(r-1)!}}_{\text{\em bias term}}.
    \end{align}    
    where $\|f^{(r+1)}\|_\infty : = \sup_{t\in[0,\tau_{\max}]}|f^{(r+1)}(t)|$ upper bounds the $(r+1)$-st derivative of $f$ on $[0,\tau_{\max}]$.
\end{theorem}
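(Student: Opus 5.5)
The plan is to split each error into a stochastic part and a deterministic part. Let $p$ denote the degree-$r$ Chebyshev interpolant built from the exact values $\{f(t_m)\}$. The triangle inequality gives
\begin{equation*}
\bigl|\widehat p^{\,\prime}(0)-f'(0)\bigr|\le \bigl|\widehat p^{\,\prime}(0)-p'(0)\bigr| + \bigl|p'(0)-f'(0)\bigr|,
\end{equation*}
and the same split holds for the second derivative. The first summand depends only on the noise and the second only on the smoothness of $f$. I treat them separately.

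\textbf{Noise term.} The map from the samples $\{\widehat f(t_m)\}$ to the derivative estimators is linear. From \cref{eq:cheb_coeffs_explicit} and $|T_n(z_m)|\le 1$, each coefficient error satisfies $|\widehat c_n-c_n|\le 2\varepsilon_s$ (and $\le\varepsilon_s$ for $n=0$). Substituting into \cref{eq:chebyshev_derivative_estimator_first} gives
\begin{equation*}
\bigl|\widehat p^{\,\prime}(0)-p'(0)\bigr|\le \frac{4\varepsilon_s}{\tau_{\max}}\sum_{n=1}^r n^2=\frac{2r(r+1)(2r+1)}{3\tau_{\max}}\,\varepsilon_s .
\end{equation*}
This is at most $\tfrac{5r^3}{2\tau_{\max}}\varepsilon_s$ once $1.75r^2\ge 3r+1$, i.e.\ for $r\ge 3$. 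That is the regime of interest, since later $r=\Theta(\log(M/\eta))$; if needed, the small-$r$ cases can be checked by hand or absorbed into the constant.

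For the second derivative, \cref{eq:chebyshev_derivative_estimator_second} gives the bound
\begin{equation*}
\bigl|\widehat p^{\,\prime\prime}(0)-p''(0)\bigr|\le \frac{8\varepsilon_s}{3\tau_{\max}^2}\sum_{n=2}^r n^2(n^2-1).
\end{equation*}
Using $\sum_{n\le r} n^4\le r^5/5+r^4/2+r^3/3$, this sum is at most $r^5\varepsilon_s/\tau_{\max}^2$ for all $r\ge 2$. Equivalently, one can bound $\sum_m|\alpha_m^{(i)}|$ directly from \cref{eq:chebyshev_derivative_weights}.

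\textbf{Bias term.} I would use a Rolle argument rather than differentiating the sup-norm bound \cref{eq:chebyshev_error_bound}. The error $e:=f-p$ vanishes at the $r+1$ distinct nodes in $[0,\tau_{\max}]$. By Rolle's theorem, $e'$ vanishes at $r$ distinct points $s_1,\dots,s_r\in(0,\tau_{\max})$. Since $p'$ has degree $r-1$, it is exactly the Lagrange interpolant of $f'$ at the $s_j$. The standard remainder formula, applied to $f'$ whose $r$-th derivative is $f^{(r+1)}$, then yields
\begin{equation*}
|e'(0)|\le \frac{\|f^{(r+1)}\|_\infty}{r!}\prod_{j=1}^r |0-s_j|\le \frac{\tau_{\max}^r\,\|f^{(r+1)}\|_\infty}{r!}.
\end{equation*}
This is dominated by the stated bias term, because $2(r+1)^2\ge 1$.

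Applying Rolle once more, $e''$ vanishes at $r-1$ points in the interval, and $p''$ (degree $r-2$) interpolates $f''$ there. The same remainder formula gives
\begin{equation*}
|e''(0)|\le \frac{\tau_{\max}^{r-1}\,\|f^{(r+1)}\|_\infty}{(r-1)!},
\end{equation*}
which is again below the stated bias term.

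\textbf{Main obstacle.} The delicate point is the bias. A naive route bounds $\|f-p\|_\infty$ via \cref{eq:chebyshev_error_bound} and tries to transfer this to derivatives, but $f-p$ is not a polynomial. That route would need Markov's inequality applied to $p-q$, where $q$ is a near-best polynomial approximant (e.g.\ the Taylor polynomial). It would produce exactly the $(r+1)^2$ and $(r+1)^2((r+1)^2-1)/3$ Markov factors that appear in the statement. If the Rolle argument runs into difficulty (for instance at the endpoint $t=0$, which lies outside the convex hull of the nodes), I would fall back on this Markov route: write $p-q$ as a degree-$r$ polynomial bounded on $[0,\tau_{\max}]$ and use $\|P^{(k)}\|\le (2/\tau_{\max})^k\|T_r^{(k)}(\pm1)\|\,\|P\|$. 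The Rolle argument should, however, go through unchanged, because the remainder formula holds at any point $t$ and not only inside the convex hull of the nodes.
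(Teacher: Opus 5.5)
Your proof is correct. For the noise term it is the same computation as the paper's: coefficient perturbations controlled by $|T_n(z_m)|\le 1$, then the sums $\sum n^2$ and $\sum n^2(n^2-1)$. The genuine difference is in the bias term.

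\textbf{The paper's bias argument.} The paper imports a derivative-error theorem for Lagrange interpolants, $|p^{(j)}(t)-f^{(j)}(t)|\le\|\omega^{(j)}\|_\infty\,\|f^{(r+1)}\|_\infty/(j!\,(r+1-j)!)$ with $\omega(t)=\prod_m(t-t_m)$. It then bounds $\|\omega'\|_\infty$ and $\|\omega''\|_\infty$ by the Markov brothers' inequality applied to the degree-$(r+1)$ nodal polynomial, together with the crude estimate $\|\omega\|_\infty\le\tau_{\max}^{r+1}$. That is where the factors $2(r+1)^2$ and $\tfrac23(r+1)^2((r+1)^2-1)$ come from. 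So Markov is applied to $\omega$, not to $p-q$ as you guessed in your fallback.

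\textbf{Your bias argument.} Rolle's theorem shows that $p^{(j)}$ is the interpolant of $f^{(j)}$ at $r+1-j$ distinct points in $(0,\tau_{\max})$. The Cauchy remainder then finishes, and as you note it is valid at the extrapolation point $t=0$. This is elementary and self-contained, and needs neither the imported theorem nor Markov's inequality. It yields the bias bounds $\tau_{\max}^{r}\|f^{(r+1)}\|_\infty/r!$ and $\tau_{\max}^{r-1}\|f^{(r+1)}\|_\infty/(r-1)!$. These are sharper than the stated ones by exactly those polynomial prefactors, which would give slightly better constants in the downstream corollaries.

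\textbf{What each route buys.} The paper's route keeps the nodal polynomial itself. Inserting the exact value $\|\omega\|_\infty=2(\tau_{\max}/4)^{r+1}$ for Gauss--Chebyshev nodes would gain a further factor $2\cdot 4^{-(r+1)}$. Your argument, which uses only $0<s_j<\tau_{\max}$, does not see this. The paper does not exploit it, however, so as written your bound is the stronger one.

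\textbf{Noise-term corrections.} These are small arithmetic points. The condition $7r^2\ge 12r+4$ already holds at $r=2$, with equality, so your first-derivative noise bound holds for all $r\ge 2$, not only $r\ge 3$.

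For the second derivative, the crude estimate $\sum_{n\le r}n^4\le r^5/5+r^4/2+r^3/3$ gives $\tfrac83\sum_{n=2}^r n^2(n^2-1)\le r^5$ only for $r\ge 4$. For $r=2,3$ use the exact identity $\sum_{n=1}^r(n^4-n^2)=r^5/5+r^4/2-r^2/2-r/5$. This gives $\tfrac83\cdot 12=32=2^5$ and $\tfrac83\cdot 84=224\le 243$.

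Finally, small $r$ cannot be ``absorbed into the constant'', because the constants are explicit and the claim is false at $r=1$. Take $f$ linear, so the bias vanishes. Noise $+\varepsilon_s$ and $-\varepsilon_s$ at the two nodes then gives first-derivative error $2\sqrt2\,\varepsilon_s/\tau_{\max}>\tfrac52\,\varepsilon_s/\tau_{\max}$. You should therefore state the hypothesis $r\ge 2$ explicitly, as the paper's proof implicitly does.
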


\begin{proof}
    The total error can be decomposed into bias and noise:
    \begin{equation}
        \label{eq:bias_noise_decomposition_derivatives}
        |\widehat{p}^{\,\prime}(0)-f'(0)| \;\le\; |p^{\,\prime}(0)-f'(0)| + |\widehat{p}^{\,\prime}(0)-p^{\,\prime}(0)|.
    \end{equation}
    The decomposition for the second derivative is analogous. We first bound the bias term. By construction, the Chebyshev interpolant exactly fits $r+1$ data points $\{(t_m, f(t_m))\}_{m=0}^{r}$. Therefore, it classifies as a Lagrange interpolating polynomial, and we can import derivative bounds for Lagrange interpolants:
    \begin{theorem}[Derivative error bound for Lagrange interpolants~{\cite[Theorem~3]{mason2002chebyshev}}]
        \label{theorem:lagrange_derivative_bound}
        Let $f$ be $(r{+}1)$-times continuously differentiable on $[a,b]$ and let $p(t)$ be the degree-$r$ Lagrange interpolant of $f$ at distinct nodes $\{t_m\}_{m=0}^r\subset[a,b]$.  
        Then, for any $j\le r$ and $t \in [a,b]$
        \begin{equation}
            |p^{(j)}(t) - f^{(j)}(t)|
            \;\le\;
            \|\omega^{(j)}\|_\infty \,\frac{\|f^{(r{+}1)}\|_\infty}{j!\,(r{+}1{-}j)!},
            \qquad
            \omega(t)=\prod_{m=0}^{r}(t-t_m).
            \label{eq:lagrange_derivative_error_bound}
        \end{equation}
    \end{theorem}
    The terms $\|\omega'\|_\infty$ and $\|\omega''\|_\infty$ can be bounded using the Markov brothers' inequality on the interval $[0,\tau_{\max}]$: 
    \begin{equation}
        \|\omega'\|_\infty \;\le\; \frac{2(r+1)^2}{\tau_{\max}} \|\omega\|_\infty
        \qq{and}
        \|\omega''\|_\infty \;\le\; \frac{4(r+1)^2((r+1)^2-1)}{3\tau_{\max}^2} \|\omega\|_\infty
    \end{equation}
    For $\|\omega\|_\infty$, a straightforward upper bound is $\|\omega\|_\infty \le \tau_{\max}^{r+1}$ since $|t-t_m|\le \tau_{\max}$. Putting everything together, we obtain the claimed upper bounds for the two bias terms $\bigl|{p}^{\,\prime}(0)-f'(0)\bigr|$ and $\bigl|{p}^{\,\prime\prime}(0)-f''(0)\bigr|$. 

    To bound the noise term, consider the explicit form of the derivative estimator  in \cref{eq:chebyshev_derivative_estimator_first} and \cref{eq:chebyshev_derivative_estimator_second} and plug in the expansion of Chebyshev coefficients in \cref{eq:cheb_coeffs_explicit}
    \begin{align}
        \widehat{p}^{\,\,\prime}(0)
        &\;=\; \frac{2}{\tau_{\max}}\frac{2}{r+1}\sum_{m=1}^r (-1)^{m+1} m^2 \sum_{i=0}^{r} \widehat f(t_i)\,T_m(z_i) \\[3pt]
        \widehat{p}^{\,\,\prime\prime}(0)
        &\;=\; \frac{4}{3\tau_{\max}^2}\frac{2}{r+1} \sum_{m=2}^r (-1)^m m^2(m^2-1) \sum_{i=0}^{r} \widehat f(t_i)\,T_m(z_i)
    \end{align}
    Then, the noise can be bounded as follows:
    \begin{align}
        |\widehat{p}^{\,\prime}(0)-p^{\,\prime}(0)| 
        \;&\le\; \frac{2}{\tau_{\max}}\frac{2}{r+1}\sum_{m=1}^r m^2 \sum_{i=0}^{r} |\widehat f(t_i) - f(t_i)|\,|T_m(z_i)| \\
        \;&\le\; \frac{5r^3}{2\tau_{\max}}\,\varepsilon_s  \\[3pt]
        |\widehat{p}^{\,\prime\prime}(0)-p^{\,\prime\prime}(0)| 
        \;&\le\; \frac{4}{3\tau_{\max}^2}\frac{2}{r+1}\sum_{m=2}^r m^2(m^2-1) \sum_{i=0}^{r} |\widehat f(t_i) - f(t_i)|\,|T_m(z_i)| \\
        \;&\le\; \frac{r^5}{\tau_{\max}^2}\,\varepsilon_s
    \end{align}
    where we used $|\widehat f(t_i) - f(t_i)| \le \varepsilon_s$ by assumption, and $|T_m(z_i)| \le 1$. We also applied the following bounds for $r\ge 2$: $\sum_{m=1}^r m^2 \le 5r^3/8$ and $\sum_{m=2}^r m^2(m^2-1) \le 3r^5/8$. Putting the bias and noise terms together, as in \cref{eq:bias_noise_decomposition_derivatives}, we obtain the claimed bounds.    
\end{proof}

Now, equipped with \cref{theorem:lagrange_derivative_bound}, we can provide a recipe for the choice of $r$, $\tau_{\max}$, and $\varepsilon_s$ to meet a target approximation accuracy of $\varepsilon$ for the first and second derivative estimators. We first consider the regime where the derivatives of $f$ satisfy the uniform bound $\|f^{(k)}\|_\infty \;\le\; B\,\Lambda^k$, which is the relevant regime in \cref{sec:structure_learning}.

\begin{corollary}[First derivative: sufficient $\tau_{\max}$, $r$, and $\varepsilon_s$]
    \label{cor:first_chebyshev_derivative_parameters} 
    Suppose the derivatives of $f$ satisfy the uniform bound
    \[
        \|f^{(k)}\|_\infty \;\le\; B\,\Lambda^k , \qquad (k \;\ge\; 1).
    \]
    Fix a target accuracy $\varepsilon>0$ and set
    \begin{equation}
    \tau_{\max}=\frac{1}{2\Lambda},
    \qquad
    r \;=\; \big\lceil \log\big(\tfrac{18\,B\Lambda}{\varepsilon}\big)\big\rceil,
    \qquad
    \varepsilon_s \;=\; \frac{\varepsilon}{10\,\Lambda\,r^{3}}
    \end{equation}
    Then the derivative estimator $\widehat{p}^{\,\prime}(0)$ in \cref{eq:chebyshev_derivative_estimator_first} satisfies $\bigl|\widehat p^{\,\prime}(0)-f'(0)\bigr|\le \varepsilon$.
\end{corollary}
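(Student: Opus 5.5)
We plug the stated parameter choices into the first-derivative bound \cref{eq:chebyshev_total_error_first} of \cref{theorem:chebyshev_derivative_total_error}. We then show that the noise term and the bias term are each at most $\varepsilon/2$.

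\emph{Noise term.} With $\tau_{\max}=1/(2\Lambda)$, the noise term is
\begin{equation*}
\frac{5r^3}{2\tau_{\max}}\,\varepsilon_s = 5\Lambda r^3\varepsilon_s .
\end{equation*}
Substituting $\varepsilon_s=\varepsilon/(10\Lambda r^3)$ gives exactly $\varepsilon/2$. This step is mechanical.

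\emph{Bias term.} We use the hypothesis $\|f^{(r+1)}\|_\infty\le B\Lambda^{r+1}$, which is valid because $r+1\ge 1$. Together with $\tau_{\max}^r=(2\Lambda)^{-r}$, the bias term is bounded by
\begin{equation*}
2(r+1)^2\,\frac{B\Lambda^{r+1}}{(2\Lambda)^r\, r!} \;=\; \frac{2B\Lambda\,(r+1)^2}{2^r\, r!}.
\end{equation*}
We need this to be at most $\varepsilon/2$, i.e.
\begin{equation*}
\frac{(r+1)^2}{r!}\,2^{-r} \;\le\; \frac{\varepsilon}{4B\Lambda}.
\end{equation*}

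The key elementary fact is that $(r+1)^2/r!\le 4.5$ for every $r\ge 1$. The extremal values are $r=1$, giving $4$, and $r=2$, giving $4.5$; for $r\ge3$ the ratio decreases to $16/6$, $25/24$, and so on. Hence the left side is at most $4.5\cdot 2^{-r}$.

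Reading $\log$ as $\log_2$, the choice $r\ge\log_2(18B\Lambda/\varepsilon)$ gives $2^{-r}\le \varepsilon/(18B\Lambda)$. Then
\begin{equation*}
4.5\cdot 2^{-r} \;\le\; \frac{4.5\,\varepsilon}{18B\Lambda} \;=\; \frac{\varepsilon}{4B\Lambda},
\end{equation*}
so the bias term is at most $\varepsilon/2$. If $\log$ denotes the natural logarithm, $e^{-r}\le 2^{-r}$ and the same conclusion holds with room to spare.

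Summing the two contributions gives $|\widehat p^{\,\prime}(0)-f'(0)|\le\varepsilon$.

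\emph{Main obstacle.} The only delicate point is the constant bookkeeping in the bias term, in particular controlling $(r+1)^2/r!$ uniformly in $r$. A related edge case is the regime $\varepsilon\gtrsim B\Lambda$, where $r$ could be $0$ or small. \Cref{theorem:chebyshev_derivative_total_error} was stated assuming $r\ge 2$ for its sum bounds. I would handle this by noting that the ceiling with $18B\Lambda/\varepsilon$ typically forces $r\ge 1$. I would also add the implicit assumption $r\ge2$ (replacing $r$ by $\max(r,2)$), which only decreases the bias and keeps the noise term exactly $\varepsilon/2$, since $\varepsilon_s$ is defined in terms of the chosen $r$.
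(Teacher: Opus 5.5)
Your argument is the paper's proof essentially line for line: the same equal split of the error budget, the same noise computation $5\Lambda r^3\varepsilon_s=\varepsilon/2$, and the same bound $(r+1)^2/r!\le 9/2$, which gives a bias of at most $9B\Lambda\,2^{-r}$ when $\log$ is read as $\log_2$.

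One correction concerns your natural-logarithm aside, which runs the wrong way. Since $\ln x<\log_2 x$, the natural-log choice of $r$ is smaller, and $e^{-r}\le 2^{-r}$ gives no upper bound on $2^{-r}$. With $\tau_{\max}=1/(2\Lambda)$ the bias decays only like $2^{-r}$, so the base-2 reading is actually necessary. If $r=\lceil\ln(18B\Lambda/\varepsilon)\rceil$, the inequality $9B\Lambda\,2^{-r}\le\varepsilon/2$ fails once $18B\Lambda/\varepsilon$ is large.

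Your remark that the theorem's noise bound silently requires $r\ge 2$ is correct. It fails at $r=1$, and the paper glosses over this point.
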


\begin{proof}
    We split the error budget equally, requiring (bias) $\le \varepsilon/2$ and (noise) $\le \varepsilon/2$.
    Set $\tau_{\max}=\tfrac{1}{2\Lambda}$, so $1/\tau_{\max}=2\Lambda$ and $(\Lambda\tau_{\max})^{r+1}=2^{-(r+1)}$.
    By \cref{theorem:chebyshev_derivative_total_error},
    \begin{equation}
        \text{bias}=2(r+1)^2\tau_{\max}^r\frac{\|f^{(r+1)}\|_\infty}{r!}
        \le
        \frac{9B}{\tau_{\max}}(\Lambda\tau_{\max})^{r+1}
        =18\,B\Lambda\,2^{-(r+1)}
        \le \frac{\varepsilon}{2}
        \quad\Longrightarrow\quad
        r\ \ge\ \log\Bigl(\tfrac{18\,B\Lambda}{\varepsilon}\Bigr)
    \end{equation}
    where we used that $(r+1)^2/r! \le 9/2$ and $\|f^{(r+1)}\|_\infty \;\le\; B\,\Lambda^{r+1}$. 
    For the noise term, again by \cref{theorem:chebyshev_derivative_total_error},
    \begin{equation}
        \text{noise}=\frac{5r^{3}}{2\tau_{\max}}\,\varepsilon_s
        =5\Lambda\,r^{3}\,\varepsilon_s
        \le \frac{\varepsilon}{2}
        \quad\Longrightarrow\quad
        \varepsilon_s\ \le\ \frac{\varepsilon}{10\,\Lambda\,r^{3}}.
    \end{equation}
    With these two conditions, $|\widehat p^{\,\prime}(0)-f'(0)|\le \varepsilon$.
\end{proof}

\begin{corollary}[Second derivative: sufficient $\tau_{\max}$, $r$, and $\varepsilon_s$]
    \label{cor:second_chebyshev_derivative_parameters}
    Suppose the derivatives of $f$ satisfy the uniform bound
    \[
        \|f^{(k)}\|_\infty \;\le\; B\,\Lambda^k , \qquad (k \;\ge\; 1).
    \]
    Fix a target accuracy $\varepsilon>0$ and set
    \begin{equation}
        \tau_{\max}=\frac{1}{2\Lambda},
        \qquad
        r \;=\; \big\lceil \log\big(\tfrac{320\,B\Lambda^{2}}{\varepsilon}\big)\big\rceil,
        \qquad
        \varepsilon_s \;=\; \frac{\varepsilon}{8\,\Lambda^{2}\,r^{5}} .
    \end{equation}
    Then the second-derivative estimator $\widehat{p}^{\,\prime\prime}(0)$ in \cref{eq:chebyshev_derivative_estimator_second} satisfies $\bigl|\widehat p^{\,\prime\prime}(0)-f''(0)\bigr|\le \varepsilon$.
\end{corollary}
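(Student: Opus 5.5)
The argument mirrors the proof of \cref{cor:first_chebyshev_derivative_parameters}. We split the error budget evenly, requiring that the bias and the noise terms in \cref{eq:chebyshev_total_error_second} of \cref{theorem:chebyshev_derivative_total_error} are each at most $\varepsilon/2$. With $\tau_{\max}=1/(2\Lambda)$ we have $1/\tau_{\max}^2=4\Lambda^2$ and $(\Lambda\tau_{\max})^{r+1}=2^{-(r+1)}$.

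For the bias term, substitute $\|f^{(r+1)}\|_\infty\le B\Lambda^{r+1}$ and rewrite
\begin{equation*}
\tfrac{2}{3}(r+1)^2\bigl((r+1)^2-1\bigr)\tau_{\max}^{r-1}\frac{B\Lambda^{r+1}}{(r-1)!}
=\frac{B}{\tau_{\max}^2}(\Lambda\tau_{\max})^{r+1}\cdot \frac{2(r+1)^2((r+1)^2-1)}{3(r-1)!}.
\end{equation*}
The key step is a uniform bound on the combinatorial prefactor $g(r):=\tfrac{2}{3}(r+1)^2((r+1)^2-1)/(r-1)!$ for $r\ge 2$. Since the factorial eventually dominates the degree-four polynomial, $g$ is maximized at small $r$. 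Direct evaluation gives $g(2)=48$, $g(3)=80$, $g(4)=400/6\approx 66.7$, and $g$ decreases from there on, so $g(r)\le 80$.

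With this bound, the bias is at most $80\cdot 4B\Lambda^2\cdot 2^{-(r+1)}=320\,B\Lambda^2\,2^{-(r+1)}$. This is at most $\varepsilon/2$ exactly when $2^{r}\ge 320B\Lambda^2/\varepsilon$, which holds for $r=\lceil\log(320B\Lambda^2/\varepsilon)\rceil$ with base-2 logarithm, the same convention as in \cref{cor:first_chebyshev_derivative_parameters}. I should also check that this $r$ satisfies $r\ge 2$, which \cref{theorem:chebyshev_derivative_total_error} needs for its sum bounds. In the regime of interest $\varepsilon\le B\Lambda^2$, so the argument of the logarithm is at least $320$ and $r\ge 9$. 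Otherwise I would simply take $r\ge 2$ by fiat, which only strengthens the bias bound.

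For the noise term, \cref{theorem:chebyshev_derivative_total_error} gives the bound $r^5\varepsilon_s/\tau_{\max}^2=4\Lambda^2r^5\varepsilon_s$. This is at most $\varepsilon/2$ precisely when $\varepsilon_s\le\varepsilon/(8\Lambda^2r^5)$, which is the stated choice. Adding the two contributions yields $|\widehat p^{\,\prime\prime}(0)-f''(0)|\le\varepsilon$.

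The only nontrivial step is the bound $g(r)\le 80$. I would verify it by computing the ratio $g(r+1)/g(r)=\frac{(r+2)^2((r+2)^2-1)}{r(r+1)^2((r+1)^2-1)}$ and showing it is below $1$ for $r\ge 3$. The rest is arithmetic.
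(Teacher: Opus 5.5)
Your proposal is correct and follows the paper's proof step for step. Both use the even error split and $\tau_{\max}=1/(2\Lambda)$. Both bound the prefactor by $\tfrac{2}{3}(r+1)^2((r+1)^2-1)/(r-1)!\le 80$; the paper states this as $(r+1)^2((r+1)^2-1)/(r-1)!\le 120$, attained at $r=3$. This gives bias at most $320\,B\Lambda^2\,2^{-(r+1)}$, and the noise term is $4\Lambda^2 r^5\varepsilon_s\le\varepsilon/2$.

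Your ratio argument for the prefactor and your $r\ge 2$ check fill in details the paper leaves implicit. You do not need to raise $r$ \emph{by fiat}, however. If $r<2$, then $\varepsilon\ge 160\,B\Lambda^2\ge|f''(0)|$ and $\widehat p^{\,\prime\prime}(0)=0$, so the claim holds trivially in that case.
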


\begin{proof}
    We split the error budget equally, requiring (bias) $\le \varepsilon/2$ and (noise) $\le \varepsilon/2$.
    Set $\tau_{\max}=\tfrac{1}{2\Lambda}$, so $1/\tau_{\max}^2 = 4\Lambda^2$ and $(\Lambda\tau_{\max})^{r+1}=2^{-(r+1)}$.
    By \cref{theorem:chebyshev_derivative_total_error},
    \begin{align*}
        \text{bias}&=\tfrac{2}{3}(r+1)^2((r+1)^2-1) \tau_{\max}^{r-1}\frac{\|f^{(r+1)}\|_\infty}{(r-1)!}
        \le \frac{80B}{\tau_{\max}^2}(\Lambda\tau_{\max})^{r+1} \\
        &=320\,B\Lambda^{2}\,2^{-(r+1)}
        \le \frac{\varepsilon}{2}
        \quad\Longrightarrow\quad
        r\ \ge\ \log\Bigl(\tfrac{320\,B\Lambda^{2}}{\varepsilon}\Bigr).
    \end{align*}
    where we used that $(r+1)^2((r+1)^2-1)/(r-1)! \le 120$ and $\|f^{(r+1)}\|_\infty \;\le\; B\,\Lambda^{r+1}$. 
    For the noise term, again by \cref{theorem:chebyshev_derivative_total_error},
    \begin{equation}
        \text{noise}=\frac{r^{5}}{\tau_{\max}^{2}}\,\varepsilon_s
        =4\Lambda^{2}r^{5}\varepsilon_s
        \le \frac{\varepsilon}{2}
        \quad\Longrightarrow\quad
        \varepsilon_s\ \le\ \frac{\varepsilon}{8\,\Lambda^{2}\,r^{5}}.
    \end{equation}
    With these two conditions, $|\widehat p^{\,\prime\prime}(0)-f''(0)|\le \varepsilon$.
\end{proof}

We now consider the regime where the derivatives of $f$ satisfy  the uniform bound $\|f^{(k)}\|_\infty \;\le\; B\,\Lambda^k k!$, which is the relevant regime in \cref{sec:coefficient_learning}

\begin{corollary}[First derivative: sufficient $\tau_{\max}$, $r$, and $\varepsilon_s$]
    \label{cor:first_chebyshev_derivative_parameters_factorial} 
    Suppose the derivatives of $f$ satisfy the uniform bound
    \[
        \|f^{(k)}\|_\infty \;\le\; B\,\Lambda^k k! , \qquad (k \;\ge\; 1).
    \]
    Fix a target accuracy $\varepsilon>0$ and set
    \begin{equation}
    \tau_{\max}=\frac{1}{2\Lambda},
    \qquad
    r \;=\; \max\{16,\lceil 4 \log(\tfrac{8\Lambda B}{\varepsilon})\rceil\},
    \qquad
    \varepsilon_s \;=\; \frac{\varepsilon}{10\,\Lambda\,r^{3}}
    \end{equation}
    Then the derivative estimator $\widehat{p}^{\,\prime}(0)$ in \cref{eq:chebyshev_derivative_estimator_first} satisfies $\bigl|\widehat p^{\,\prime}(0)-f'(0)\bigr|\le \varepsilon$.
\end{corollary}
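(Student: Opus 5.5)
I will mirror the proof of \cref{cor:first_chebyshev_derivative_parameters}: split the budget as bias $\le \varepsilon/2$ and noise $\le \varepsilon/2$, and feed both into the first-derivative bound \cref{eq:chebyshev_total_error_first} of \cref{theorem:chebyshev_derivative_total_error}. The noise term needs no change. With $\tau_{\max}=1/(2\Lambda)$ we have $1/\tau_{\max}=2\Lambda$, so noise $=5\Lambda r^3\varepsilon_s$. The choice $\varepsilon_s=\varepsilon/(10\Lambda r^3)$ then gives exactly $\varepsilon/2$, as in the earlier corollary.

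For the bias term, insert $\|f^{(r+1)}\|_\infty\le B\Lambda^{r+1}(r+1)!$. Since $(r+1)!/r!=r+1$, this gives
\begin{equation*}
\text{bias}\le 2(r+1)^2\tau_{\max}^r B\Lambda^{r+1}(r+1) = 2(r+1)^3 B\Lambda\,(\Lambda\tau_{\max})^{r} = 2(r+1)^3 B\Lambda\, 2^{-r}.
\end{equation*}
Unlike the non-factorial case, the geometric decay $2^{-r}$ now has to beat a cubic prefactor $(r+1)^3$ instead of a factorially small one. This is the main (though mild) obstacle. It explains the two features of the parameter choice: the factor $4$ in front of the logarithm, and the floor $r\ge 16$.

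To handle it, I write $2^{-r}=2^{-r/4}\cdot 2^{-3r/4}$ and proceed in two steps.
\begin{enumerate}
\item Show that $(r+1)^3 2^{-3r/4}\le 1$ for $r\ge 16$. At $r=16$ this reads $17^3=4913\le 2^{12}=4096$, which fails, so I must check more carefully. Using $2^{-r/4}$ instead gives $2\cdot 17^3\cdot 2^{-16}\approx 0.15$, so a different split is needed.
\item Use the split $2^{-r}=2^{-r/2}\cdot 2^{-r/2}$. I claim $2(r+1)^3 2^{-r/2}\le 1/4$ for all $r\ge 16$. At $r=16$ this is $2\cdot 4913/256\approx 38$, which also fails.
\end{enumerate}

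Since neither split works at $r=16$, the right approach is to solve directly for the requirement. We need $2(r+1)^3 2^{-r}B\Lambda\le\varepsilon/2$. With $r\ge 4\log(8\Lambda B/\varepsilon)$ we have $2^{-r/4}\le \varepsilon/(8\Lambda B)$, interpreting $\log$ as base $2$. It therefore suffices to show $2(r+1)^3 2^{-3r/4}\le 4$, i.e.\ $(r+1)^3\le 2^{1+3r/4}$. At $r=16$ the right side is $2^{13}=8192\ge 4913$. For larger $r$, the ratio $2^{3r/4}/(r+1)^3$ is increasing: its logarithmic derivative is $\tfrac34\ln 2-3/(r+1)>0$ once $r+1>5.8$. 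Monotonicity then extends the bound to all $r\ge 16$.

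Combining the two halves gives $|\widehat p^{\,\prime}(0)-f'(0)|\le\varepsilon$. If $\log$ denotes the natural logarithm instead, $r\ge 4\ln(\cdot)$ is weaker than $4\log_2(\cdot)$, so the constants would need rechecking against the paper's convention. I will assume base $2$, consistent with the earlier corollaries' use of $2^{-(r+1)}$.
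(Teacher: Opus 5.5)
Your final argument is correct and is essentially the paper's: after bounding the bias by $2(r+1)^3 B\Lambda\,2^{-r}$ and using $2^{-r/4}\le \varepsilon/(8\Lambda B)$, your condition $(r+1)^3\le 2^{1+3r/4}$ for $r\ge 16$ is exactly the paper's requirement $g(r)=r+1-C-3\log(r+1)\ge 0$ with $C\le r/4$. You verify it more explicitly than the paper does, by checking $r=16$ and then using monotonicity. The two abandoned splits in your enumerated list should be deleted; the first failed only because it demanded $(r+1)^3 2^{-3r/4}\le 1$ when $\le 2$ suffices.
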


\begin{proof}
    We split the error budget equally, requiring (bias) $\le \varepsilon/2$ and (noise) $\le \varepsilon/2$.
    Set $\tau_{\max}=\tfrac{1}{2\Lambda}$, so $1/\tau_{\max}=2\Lambda$ and $(\Lambda\tau_{\max})^{r+1}=2^{-(r+1)}$.
    By \cref{theorem:chebyshev_derivative_total_error},
    \begin{equation}
        \text{bias}=2(r+1)^2\tau_{\max}^r\frac{\|f^{(r+1)}\|_\infty}{r!}
        \le \frac{2B(r+1)^3}{\tau_{\max}} (\Lambda \tau_{\max})^{r+1}
        = 4\Lambda B(r+1)^3 2^{-(r+1)},
    \end{equation}
    where we used that $\|f^{(r+1)}\|_\infty \;\le\; B\,\Lambda^{r+1}(r+1)!$. Let $C := \log(\tfrac{8\Lambda B}{\varepsilon})$. To achieve (bias) $\le \varepsilon/2$, we must choose $r$ such that $r+1 \ge C + 3\log(r+1)$. Choosing $r = \max\{16, 4C\}$, one can verify that the function $g(r) = r+1 - C - 3\log(r+1)$ is always nonnegative; therefore, the bias term is bounded by $\varepsilon/2$. For the noise term, again by \cref{theorem:chebyshev_derivative_total_error},
    \begin{equation}
        \text{noise}=\frac{5r^{3}}{2\tau_{\max}}\,\varepsilon_s
        =5\Lambda\,r^{3}\,\varepsilon_s
        \le \frac{\varepsilon}{2}
        \quad\Longrightarrow\quad
        \varepsilon_s\ \le\ \frac{\varepsilon}{10\,\Lambda\,r^{3}}.
    \end{equation}
    With these two conditions, $|\widehat p^{\,\prime}(0)-f'(0)|\le \varepsilon$.
\end{proof}

\subsection{Time-resolution of sampling times}\label{sec:time_resolution_chebyshev_nodes}

Finally, we prove that the spacing between the Gauss--Chebyshev is not too small. This is important for experimental feasibility, where one might have a bounded time-resolution:

\begin{lemma}
    \label{lemma:time_resolution_chebyshev_nodes}
    Let $r\ge 1$. Given $r+1$ Gauss--Chebyshev nodes $z_i=\cos\big(\tfrac{(2i+1)\pi}{2(r+1)}\big)$ for $i=0,\dots,r$ and the corresponding sampling times $t_i = \frac{z_i+1}{2}\tau_{\max}$, the smallest time resolution satisfies:
    \begin{equation}
        \min_{0\le i\le r-1} (t_i - t_{i+1}) \ge \frac{2\tau_{\max}}{(r+1)^2}.
    \end{equation}
\end{lemma}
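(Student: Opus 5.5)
The plan is to write the gaps in closed form with a sum-to-product identity and then bound a product of two sines from below by elementary inequalities. Set $\theta_i=\tfrac{(2i+1)\pi}{2(r+1)}$, so $z_i=\cos\theta_i$ and $t_i=\tfrac{\tau_{\max}}{2}(1+\cos\theta_i)$. The $\theta_i$ increase with $i$, so the $t_i$ decrease and every gap $t_i-t_{i+1}$ is positive. Consecutive angles differ by $\theta_{i+1}-\theta_i=\tfrac{\pi}{r+1}$, and their midpoint is $\phi_i=\tfrac{(i+1)\pi}{r+1}$. The identity $\cos a-\cos b=2\sin\tfrac{a+b}{2}\sin\tfrac{b-a}{2}$ then gives
\begin{equation}
    t_i-t_{i+1}=\tau_{\max}\,\sin\!\Big(\frac{(i+1)\pi}{r+1}\Big)\,\sin\!\Big(\frac{\pi}{2(r+1)}\Big),\qquad i=0,\dots,r-1 .
\end{equation}

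Next we minimize over $i$. As $i$ ranges over $0,\dots,r-1$, the index $j=i+1$ ranges over $1,\dots,r$. By the symmetry $\sin(\pi-x)=\sin x$ and the concavity of $\sin$ on $[0,\pi]$, the minimum of $\sin\tfrac{j\pi}{r+1}$ occurs at the endpoints $j=1$ or $j=r$, where it equals $\sin\tfrac{\pi}{r+1}$. Hence
\begin{equation}
    \min_i (t_i-t_{i+1})=\tau_{\max}\sin\!\Big(\frac{\pi}{r+1}\Big)\sin\!\Big(\frac{\pi}{2(r+1)}\Big).
\end{equation}

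Finally we apply Jordan's inequality $\sin x\ge \tfrac{2x}{\pi}$ for $x\in[0,\pi/2]$. For $r\ge1$ both arguments lie in $[0,\pi/2]$, because $\tfrac{\pi}{r+1}\le\tfrac{\pi}{2}$. This gives $\sin\tfrac{\pi}{r+1}\ge\tfrac{2}{r+1}$ and $\sin\tfrac{\pi}{2(r+1)}\ge\tfrac{1}{r+1}$, so the product is at least $\tfrac{2\tau_{\max}}{(r+1)^2}$, which is the claimed bound.

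The argument has no real obstacle. The only step needing care is the closed form for the gaps. Since the nodes are ordered decreasingly in $t$, the signs in the sum-to-product identity must come out so that each gap is positive, and the endpoint minimization over $j$ is what makes the bound uniform in $i$.
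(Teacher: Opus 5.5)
Your proposal is correct and follows the paper's proof essentially step for step. Both write the gap via sum-to-product as $t_i-t_{i+1}=\tau_{\max}\sin\frac{(i+1)\pi}{r+1}\sin\frac{\pi}{2(r+1)}$, take the minimum at the endpoint indices, and finish with $\sin x\ge 2x/\pi$; your symmetry/concavity argument just makes explicit the endpoint minimization that the paper asserts without justification.
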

\begin{proof}
    \begin{equation}
        \frac{2(t_i - t_{i+1})}{\tau_{\max}}
        = \cos\Big(\frac{(2i+1)\pi}{2(r+1)}\Big) - \cos\Big(\frac{(2i+3)\pi}{2(r+1)}\Big)
        = -2\sin\Big(\frac{(2i+2)\pi}{2(r+1)}\Big)\sin\Big(\frac{-\pi}{2(r+1)}\Big)
    \end{equation}
    Note the sequence $\{z_i\}$ is decreasing. For $r\ge 1$ and for any $0 \le i \le r-1$:
    \begin{equation}
        \frac{2(t_i - t_{i+1})}{\tau_{\max}}
        \;\ge\; 2\sin\Big(\frac{\pi}{r+1}\Big)\sin\Big(\frac{\pi}{2(r+1)}\Big)
        \;\ge\; \frac{4}{(r+1)^2},
    \end{equation}
    where we used $\sin{x} \ge 2x/\pi$ on $[0,\pi/2]$. Since this holds for any nearest sampling times, it also holds for the minimum. Rearranging concludes the proof.
\end{proof}
\section{Derivative bounds of Lindbladian dynamics }
\label{sec:lindbladian-derivative-bounds}

This section derives uniform bounds on time derivatives under Lindbladian dynamics, which underlie the performance guarantees for Chebyshev polynomial interpolation established in \cref{sec:chebyshev_interpolation}. Section~\cref{sec:deriv-bounds-pauli-error-rates} provides derivative bounds for Pauli error rates used in structure learning, while \cref{sec:dual-interaction-graph} provides corresponding derivative bounds for Pauli observables used in coefficient learning.

Use the Lindbladian expansion in the Pauli basis:
\begin{equation}
\label{eq:lindbladian_derivative_bounds_appendix}
\mathcal{L}(\rho)
=
-i\sum_{P_k\in \mathcal{S}_H} h_k\,[P_k,\rho]
\;+\;
\sum_{P_n,P_m\in \mathcal{S}_D} a_{mn}
\Big(P_n\rho P_m-\tfrac12\{P_mP_n,\rho\}\Big),
\end{equation}
where $\mathcal{S}_H$ and $\mathcal{S}_D$ are sets of Pauli operators indexing the Hamiltonian and dissipator structures, respectively. Also use the sparsity definition $M\coloneq |\mathcal{S}_H| + |\mathcal{S}_D|^2$ from \cref{eq:def_M}.

Without loss of generality, we assume the coefficients are normalized: $\max\Big\{\max_{P_k\in\mathcal{S}_H}|h_k|,\ \max_{\substack{P_m,P_n\in\mathcal{S}_D}}|a_{mn}|\Big\}=1.$

\subsection{Derivative bounds of Pauli error rates} \label{sec:deriv-bounds-pauli-error-rates}

In the structure learning stage (\cref{sec:structure_learning}), we estimate derivatives of the Pauli error rates $\chi_{ii}(t)$, hence the following lemma:

\begin{lemma}[Uniform bound on $k$-th derivatives of $\chi$-entries]
    \label{lemma:chi_derivative_uniform_bound}
    Let $t \ge 0$ and let the $n$-qubit channel $\mathcal{E}_t=e^{\mathcal{L}t}$ have $\chi$-matrix representation
    $\mathcal{E}_t(\rho)=\sum_{i,j}\chi_{ij}(t)\,P_i\rho P_j$. Then for all $k\ge 0$,
    \begin{equation}
        \abs{\frac{d^k\chi_{ij}(t)}{dt^k}} \equiv |\chi_{ij}^{(k)}(t)|
        \;\le\;
        \|\mathcal{L}\|_{2\rightarrow2}^{\,k},
    \end{equation}
    where $\|\mathcal{L}\|_{2\rightarrow2}
    := \sup_{X\neq 0} \| \mathcal{L}(X) \|_2 / \| X \|_2$
    denotes the induced Schatten-$2$ norm of the superoperator $\mathcal{L}$.
\end{lemma}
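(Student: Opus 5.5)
The plan is to express each $\chi$-entry as a Hilbert--Schmidt pairing of a fixed unit-norm operator with a vectorized channel, and then differentiate under the pairing. The convenient object is the Choi-type representation. Let $|\Omega\rangle=\sum_{x}|x\rangle|x\rangle/\sqrt{d}$ be the normalized maximally entangled state, and set $J(t)=(\mathcal{E}_t\otimes\mathrm{id})(|\Omega\rangle\langle\Omega|)$. Write $|P_i\rangle\!\rangle := (P_i\otimes I)|\Omega\rangle$. The identity $\tr(P_iP_j)=d\,\delta_{ij}$ shows these vectors are orthonormal, and the $\chi$-representation gives
\begin{equation}
J(t)=\sum_{i,j}\chi_{ij}(t)\,|P_i\rangle\!\rangle\langle\!\langle P_j|,
\qquad\text{hence}\qquad
\chi_{ij}(t)=\langle\!\langle P_i|\,J(t)\,|P_j\rangle\!\rangle=\tr\!\big(A_{ij}^\dagger J(t)\big),
\end{equation}
where $A_{ij}=|P_i\rangle\!\rangle\langle\!\langle P_j|$ is a rank-one operator with $\|A_{ij}\|_2=1$.

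Next I differentiate. Since $\mathcal{E}_t=e^{t\mathcal{L}}$, we have $J^{(k)}(t)=\big((\mathcal{L}^k e^{t\mathcal{L}})\otimes\mathrm{id}\big)(|\Omega\rangle\langle\Omega|)$, and therefore $\chi_{ij}^{(k)}(t)=\tr\!\big(A_{ij}^\dagger J^{(k)}(t)\big)$. Hölder's inequality with $p=q=2$ then gives
\begin{equation}
|\chi^{(k)}_{ij}(t)|\le \|(\mathcal{L}\otimes\mathrm{id})^k\|_{2\to2}\,\|(e^{t\mathcal{L}}\otimes\mathrm{id})(|\Omega\rangle\langle\Omega|)\|_2 .
\end{equation}
The second factor is the Hilbert--Schmidt norm of a density operator: $e^{t\mathcal{L}}\otimes\mathrm{id}$ is CPTP for $t\ge0$, so the output is a state and this factor is at most $1$. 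This is the step where the sign $t\ge 0$ is used. The first factor is at most $\|\mathcal{L}\otimes\mathrm{id}\|_{2\to2}^k$ by submultiplicativity.

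The remaining step, and the one needing care, is to show $\|\mathcal{L}\otimes\mathrm{id}\|_{2\to2}=\|\mathcal{L}\|_{2\to2}$. This holds because the Schatten-2 norm is the Euclidean norm on vectorizations. Under vec, $\mathcal{L}$ becomes a matrix $\hat{L}$ whose spectral norm equals $\|\mathcal{L}\|_{2\to2}$. After reordering tensor factors, $\mathcal{L}\otimes\mathrm{id}$ becomes $\hat{L}\otimes I$, and $\|\hat{L}\otimes I\|_{\mathrm{op}}=\|\hat{L}\|_{\mathrm{op}}$.

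This completes the argument: $|\chi_{ij}^{(k)}(t)|\le\|\mathcal{L}\|_{2\to2}^k$. For $k=0$ the same chain gives $|\chi_{ij}(t)|\le 1$.

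An alternative that avoids the stabilization step is to use the $\chi$-to-Pauli-transfer-matrix relation directly, i.e. express $\chi$ as a unitary rotation (via the orthonormal $|P_i\rangle\!\rangle$ basis reshuffling) of $\mathcal{E}_t$'s matrix. The Choi route above is cleaner, so I would follow it.
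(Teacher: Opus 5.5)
Your proposal is correct and follows essentially the same route as the paper. Both write $\chi_{ij}(t)$ as a Hilbert--Schmidt pairing of a unit-norm rank-one operator with the Choi state, apply Cauchy--Schwarz, bound the Choi state's Schatten-$2$ norm by $1$, and conclude with stability of the induced $2\to2$ norm under tensoring with the identity plus submultiplicativity. The differences are cosmetic: you let the channel act on the first tensor factor, and you justify the stabilization step via vectorization, which the paper only asserts.
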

    
\begin{proof}
    Let $|\psi_0\rangle=\tfrac{1}{\sqrt{d}}\sum_x |x\rangle\otimes|x\rangle$ with $d=2^n$, and define $|\psi_i\rangle=(\mathbb{I}\otimes P_i)|\psi_0\rangle$. Let $\Lambda_t=(\mathrm{id}\otimes \mathcal{E}_t)\big(|\psi_0\rangle\langle\psi_0|\big)$ be the Choi state. Then
    \begin{equation}
        \chi_{ij}(t)=\langle\psi_i|\Lambda_t|\psi_j\rangle
        \;=\; \tr\big[\,|\psi_j\rangle\langle\psi_i|\,\Lambda_t\,\big].
    \end{equation}
    Since $\dv*{t}\mathcal{E}_t=\mathcal{L}\circ\mathcal{E}_t$ for a time-independent generator, we have
    \begin{equation}
        \dv[k]{t}\Lambda_t \;=\; (\mathrm{id}\otimes \mathcal{L}^k)(\Lambda_t),
        \qquad
        \chi_{ij}^{(k)}(t)
        \;=\;
        \tr\Big[\,|\psi_j\rangle\langle\psi_i|\,(\mathrm{id}\otimes \mathcal{L}^k)(\Lambda_t)\,\Big].
    \end{equation}
    Apply Cauchy--Schwarz for the Hilbert--Schmidt inner product:
    \begin{equation}
        \big|\chi_{ij}^{(k)}(t)\big|
        \;\le\;
        \big\||\psi_j\rangle\langle\psi_i|\big\|_2\;
        \big\|(\mathrm{id}\otimes \mathcal{L}^k)(\Lambda_t)\big\|_2
        \;=\;
        \big\|(\mathrm{id}\otimes \mathcal{L}^k)(\Lambda_t)\big\|_2,
    \end{equation}
    since $\big\||\psi_j\rangle\langle\psi_i|\big\|_2=1$. Using the definition of the induced norm $\norm{\cdot}_{2\rightarrow2}$,
    \begin{equation}
        \big\|(\mathrm{id}\otimes \mathcal{L}^k)(\Lambda_t)\big\|_2
        \;\le\;
        \norm{\mathrm{id}\otimes \mathcal{L}^k}_{2\rightarrow2}\;\|\Lambda_t\|_2.
    \end{equation}
    By stability of the induced Schatten-2 norm under tensoring with the identity, $\norm{\mathrm{id}\otimes \mathcal{L}^k}_{2\rightarrow2} = \norm{\mathcal{L}^k}_{2\rightarrow2}$. Moreover, $\Lambda_t$ is a density operator, so $\|\Lambda_t\|_2=\sqrt{\tr\{\Lambda_t^2\}}\le 1$. Therefore,
    \begin{equation}
        \big|\chi_{ij}^{(k)}(t)\big|
        \;\le\;
        \norm{\mathcal{L}^k}_{2\rightarrow2}
        \;\le\;
        \norm{\mathcal{L}}^k_{2\rightarrow2},
    \end{equation}
    by submultiplicativity of the induced norm.
\end{proof}

\begin{corollary}[$\chi$-derivative bound for $M$-sparse Lindbladians]
\label{cor:chi_derivative_bound_M_sparse}
Let $\mathcal{L}$ be of the form~\cref{eq:lindbladian_derivative_bounds_appendix} and recall
$M:=|\mathcal{S}_H|+|\mathcal{S}_D|^2$ from \cref{eq:def_M}. Assume the coefficients are scaled so that
$|h_k|\le 1$ and $|a_{mn}|\le 1$ for all indices in~\cref{eq:lindbladian_derivative_bounds_appendix}. Then, for all
$i,j$, all integers $k\ge 0$, and all $t\ge 0$,
\begin{equation}
\label{eq:chi_derivative_bound_2M}
\big|\chi_{ij}^{(k)}(t)\big|
\;\le\;
\|\mathcal{L}\|_{2\to 2}^{\,k}
\;\le\; \left(\Delta(H)\;+\;2\sum_{m,n}|a_{mn}|\right)^k \;\le\;
(2M)^k,
\end{equation}
where $H:=\sum_{P_k\in\HamiltStruct} h_k P_k$ and $\Delta(H)=E_{\max}-E_{\min}$ is the spectral range of $H$. The second and third inequalities above follow directly from \cref{lemma:operator_norm_bounds}. 
\end{corollary}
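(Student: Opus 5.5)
The first inequality is exactly \cref{lemma:chi_derivative_uniform_bound}, so the work is to bound $\|\mathcal L\|_{2\to2}$ by the triangle inequality over the Hamiltonian part and each dissipator summand. I would write $\mathcal L = \mathcal H + \mathcal D$ with $\mathcal H(X) = -i[H,X]$ and $\mathcal D(X)=\sum_{m,n} a_{mn}\bigl(P_nXP_m - \tfrac12\{P_mP_n,X\}\bigr)$, and bound $\|\mathcal L\|_{2\to2}\le \|\mathcal H\|_{2\to2}+\|\mathcal D\|_{2\to2}$.

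For the Hamiltonian part, the commutator is invariant under shifting $H \mapsto H - c I$. Choosing $c=(E_{\max}+E_{\min})/2$ gives $\|H-cI\|_\infty=\Delta(H)/2$. Using $\|AX\|_2\le\|A\|_\infty\|X\|_2$ and $\|XA\|_2\le \|X\|_2\|A\|_\infty$, I get $\|[H,X]\|_2\le 2\|H-cI\|_\infty\|X\|_2=\Delta(H)\|X\|_2$.

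For each dissipator summand, Paulis are unitary, so $\|P_nXP_m\|_2=\|X\|_2$. Since $P_mP_n$ is unitary up to a phase, $\|\tfrac12\{P_mP_n,X\}\|_2\le\|X\|_2$. Hence each summand has induced norm at most $2|a_{mn}|$, and summing gives $2\sum_{m,n}|a_{mn}|$. Raising to the $k$-th power yields the middle inequality; presumably this is precisely what \cref{lemma:operator_norm_bounds} records.

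For the final inequality, I would bound $\Delta(H)\le 2\|H\|_\infty\le 2\sum_k|h_k|\le 2|\mathcal S_H|$ using $|h_k|\le1$ and $\|P_k\|_\infty=1$. Similarly $2\sum_{m,n}|a_{mn}|\le 2|\mathcal S_D|^2$. The total is therefore $\le 2(|\mathcal S_H|+|\mathcal S_D|^2)=2M$. The only subtle point is the shift trick yielding $\Delta(H)$ rather than $2\|H\|_\infty$, and checking that the anticommutator term costs $1$, not $2$, per pair because of the factor $\tfrac12$. Both are elementary, so I expect no real obstacle.
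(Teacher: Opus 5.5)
Your proposal is correct and follows the paper's argument. The first inequality is \cref{lemma:chi_derivative_uniform_bound}, and \cref{lemma:operator_norm_bounds} likewise splits $\mathcal L=\mathcal H+\mathcal D$, charges each dissipator term $1+\tfrac12+\tfrac12=2$, and then applies the crude counts $|h_k|,|a_{mn}|\le 1$. The only differences are technical: the paper works with the Liouville matrix $\widehat{\mathcal L}$, so that $\|P_n^T\otimes P_m\|_\infty=1$ replaces your unitary invariance of $\|\cdot\|_2$, and it obtains $\|\mathcal H\|_{2\to2}=\Delta(H)$ exactly by diagonalizing the commutator on $\{|j\rangle\langle k|\}$, whereas your shift $H\mapsto H-cI$ gives the upper bound $\Delta(H)$, which is all that is needed.
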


\subsection{Dual interaction graph and derivative bounds of Pauli observables } \label{sec:dual-interaction-graph}

To provide performance guarantees for the polynomial interpolation used in the coefficient-learning stage
(\cref{sec:coefficient_learning}), we upper bound the time derivatives of Pauli observables in
\cref{thm:observable_derivative_bounds}. In contrast to Pauli error rates, which encode global channel-level information, coefficient learning
accesses derivatives of selected observables. When the interaction pattern is taken into account, this
permits derivative bounds that improve over worst-case sparsity-based estimates (e.g., \cref{lemma:chi_derivative_uniform_bound}). To formalize this, we generalize the dual interaction
graph introduced for Hamiltonians in~\cite{haah2024learning} to the Lindbladian setting.

Here, we use the adjoint master equation
\begin{equation} \label{eq:adjoint-master-eq}
    \Lindblad^\dagger(O)
    =
    i \sum_{P_k\in \mathcal{S}_H} h_k \comm{P_k}{O}
    + \sum_{P_k,P_m \in \mathcal{S}_D} a_{km}\!\left(P_m O P_k - \tfrac12\acomm{P_mP_k}{O}\right),
\end{equation}
where $\Lindblad^\dagger$ is the adjoint of $\Lindblad$ with respect to the Hilbert--Schmidt inner product: $
    \tr\!\big(O^\dagger \Lindblad(Q)\big)
    =
    \tr\!\big((\Lindblad^\dagger(O))^\dagger Q\big)$ for all $O,Q$.

\begin{definition}[Pauli Lindbladian components]
\label{def:pauli_lindbladian_components}
Let $\Lindblad^\dagger$ be an adjoint Lindbladian super-operator expanded in the Pauli basis from \cref{eq:adjoint-master-eq}. For each $P_k\in\HamiltStruct$ with $h_k\neq 0$, define the Pauli Hamiltonian component
\begin{equation*}
\label{eq:def_hamiltonian_component}
\mathcal{A}^{H}_k(O) := i h_k\,[P_k,O].
\end{equation*}
For each pair $(k,m)$ with $P_k,P_m\in\DiagDissStruct$ and $a_{km}\neq 0$, define the Pauli dissipative component
\begin{equation*}
\label{eq:def_dissipator_component}
\mathcal{A}^{D}_{km}(O) := a_{km}\Big(P_m O P_k-\tfrac12\{P_mP_k,O\}\Big).
\end{equation*}
Let $\mathcal{A}$ denote the collection of all such components:
\begin{equation}
\label{eq:def_action_set}
\mathcal{A} := \{\mathcal{A}^{H}_k : P_k\in\HamiltStruct,\; h_k\neq 0\}\ \cup\ \{\mathcal{A}^{D}_{km} : P_k,P_m\in\DiagDissStruct,\; a_{km}\neq 0\}.
\end{equation}

This allows to re-write $\Lindblad^\dagger$ as:
\begin{equation*}
    \Lindblad^\dagger (O) = \sum_{\mathcal{A}_k \in \mathcal A} \mathcal{A}_k(O)
\end{equation*}
\end{definition}

\begin{definition}[Pauli component support]
\label{def:component_support}
For a Pauli Hamiltonian component $\mathcal{A}^{H}_k$, define its support as
\begin{equation}
\supp{\mathcal{A}^{H}_k} := \supp{P_k}.
\end{equation}
For a Pauli dissipative component $\mathcal{A}^{D}_{km}$, define its support as
\begin{equation} \label{eq:dissipatove-component-support}
\supp{\mathcal{A}^{D}_{km}} := \supp{P_k}\cup\supp{P_m}.
\end{equation}
where $\supp{P}$ denotes the set of sites on which the Pauli $P$ acts nontrivially;
\end{definition}

\begin{definition}[Dual interaction graph via Pauli components]
\label{def:dual_interaction_graph}
The \emph{dual interaction graph} associated with $\Lindblad$ is the simple graph $\mathfrak{G}=(V,E)$ with vertex set representing the collection of all Pauli components
\begin{equation*}
V := \mathcal{A},
\end{equation*}
and whose edge set is defined by overlap of component supports:
\begin{equation*}
E \;:=\; \Bigl\{\{v_1,v_2\}\subseteq V \;:\; v_1\neq v_2,\ \supp{v_1}\cap\supp{v_2}\neq\emptyset\Bigr\}.
\end{equation*}
That is, vertices correspond to Lindbladian Pauli components, and two vertices are adjacent if and only if their component supports overlap. Denote the maximum degree of this dual interaction graph by
\begin{equation} \label{eq:dual-graph-degree}
\mathfrak d := \max_{v\in V}\deg_\mathfrak{G}(v).
\end{equation}
\end{definition}

\begin{definition}[Sparsely interacting Lindbladian]
\label{def:sparsely_interacting_lindbladian}
A Lindbladian $\mathcal{L}$ of the form~\eqref{eq:lindbladian_derivative_bounds_appendix} is \emph{sparsely interacting}
if the maximum degree $\mathfrak d$ of the dual interaction graph $\mathfrak{G}$ (\cref{def:dual_interaction_graph})
satisfies $\mathfrak d=O(1)$, i.e. is bounded by a constant independent of the system size $n$.
\end{definition}

Although our results (\cref{thm:observable_derivative_bounds,algorithm:coefficient_learning}) hold for general
maximum degree $\mathfrak d$, of particular interest is the case of \emph{sparsely interacting} Lindbladians
(\cref{def:sparsely_interacting_lindbladian}), where $\mathfrak d=O(1)$ is independent of the system size $n$.
This captures, for example, a broad class of \emph{locally generated} Markovian dynamics admitting a GKLS
representation
\begin{equation}\label{eq:diagonilized-adjoint-master-eq}
    \Lindblad^\dagger(O)
    =
    i \sum_{k} \alpha_k [E_k,O]
    + \sum_{j} \gamma_j\!\left(L_j^\dagger O L_j - \tfrac12\{L_j^\dagger L_j,O\}\right),
\end{equation}
in which each Hamiltonian term $E_k$ is at most $k_H$-local and each jump operator $L_j$ is at most $k_D$-local,
and moreover each qubit participates in at most $q_H$ of the $E_k$’s and at most $q_D$ of the $L_j$’s. Assume in addition that each $E_k$ is at most $m_H$-sparse in the Pauli basis and each $L_j$ is at most $m_D$-sparse in the Pauli basis. Such locality/bounded-participation assumptions are standard in local noise models (e.g.\ dephasing or amplitude
damping), locally monitored or reservoir-engineered systems, and local-contact open-system models. To bound the maximum degree $\mathfrak d$ of the dual interaction graph $\mathfrak G$
(\cref{def:dual_interaction_graph}), consider the qubit hypergraph $G$ with vertex set $[n]$ and one hyperedge
$e(\mathcal C):=\supp{\mathcal C}$ for each component $\mathcal C\in\mathcal A$
(\cref{def:pauli_lindbladian_components}). In \eqref{eq:diagonilized-adjoint-master-eq}, each $E_k$ contributes at most $m_H$ Pauli strings and hence at most $m_H$ Hamiltonian components supported on $\supp{E_k}$, so each qubit participates in at most $q_H m_H$ Hamiltonian components. Likewise, each $L_j$ contributes at most $m_D$ Pauli strings, and hence yields at most $m_D^2$ dissipative components $\mathcal A^D_{km}$ supported on $\supp{L_j}$ (cf.~\cref{eq:dissipatove-component-support}), so each qubit participates in at most $q_D m_D^2$ dissipative components. Thus each qubit is contained in at most $q_H m_H+q_D m_D^2$ component supports (and thus hyper-edges). Any component $\mathcal C$ satisfies $|\supp{\mathcal C}|\le \max(k_H,k_D)$, and for each qubit in $\supp{\mathcal
C}$ there are at most $(q_H m_H+q_D m_D^2-1)$ other components overlapping it. Therefore
\begin{equation}\label{eq:dual_degree_local_bound}
\mathfrak d \le \max(k_H,k_D)\bigl(q_H m_H+q_D m_D^2-1\bigr)
\le \max(k_H,k_D)\bigl(q_H4^{k_H}+q_D16^{k_D}-1\bigr).
\end{equation}

Another important example is the class of Lindbladians that are geometrically local on Euclidean space $\mathbb{R}^d$.
Assume a constant number of qubits per lattice site of $\mathbb{Z}^d\subset\mathbb{R}^d$, and that each Hamiltonian term
$E_k$ and jump operator $L_j$ is supported within a single unit hypercube.
Additionally, assume that each $E_k$ is at most $m_H$-sparse in the Pauli basis and each $L_j$ is at most $m_D$-sparse in
the Pauli basis. Since a unit hypercube intersects only hypercubes within $\ell_\infty$-distance $1$ (at most $3^d$ choices),
and each hypercube contributes at most $m_H$ Hamiltonian components and at most $m_D^2$ dissipative components, the dual
interaction graph has maximum degree
\begin{equation}\label{eq:dual_degree_lattice_bound}
\mathfrak d \;\le\; 3^d\,(m_H+m_D^2)-1.
\end{equation}

\begin{lemma}[Single-step Pauli growth on the dual interaction graph]
\label{lemma:single-step-pauli-growth-on-graph}
Let $\Lindblad^\dagger$ be an $n$-qubit adjoint Lindbladian of the form~\eqref{eq:adjoint-master-eq} where $\max(|h_k|, |a_{jm}|) \le 1$. Let
$\mathfrak{G}=(V,E)$ be its dual interaction graph from
\cref{def:dual_interaction_graph} with maximum degree $\mathfrak d$.
Let $\mathcal{A}$ be the collection of adjoint Lindbladian components from
\cref{def:pauli_lindbladian_components}, so that
\[
\Lindblad^\dagger(O)=\sum_{\mathcal{A}_k\in\mathcal{A}} \mathcal{A}_k(O).
\]

Let $O$ be an $n$-qubit Pauli observable and define its neighborhood
\begin{equation}
\label{eq:def_neighborhood_O_components}
\mathcal{N}(O)
:=
\{\mathcal{A}_k\in V : \supp{\mathcal{A}_k}\cap\supp{O}\neq\emptyset\},
\qquad
\mathfrak d_O:=|\mathcal{N}(O)|.
\end{equation}

Then $\Lindblad^\dagger(O)$ admits a decomposition
\begin{equation}
\label{eq:formal_single_step_decomp_components}
\Lindblad^\dagger(O) = \sum_{\mathcal{A}_k\in\mathcal{N}(O)} \mathcal{A}_k(O),
\end{equation}
with the following properties:
\begin{enumerate}
    \item (\textbf{Number of summands})
    The decomposition~\eqref{eq:formal_single_step_decomp_components} contains at most $\mathfrak d_O$ summands; each a scalar multiple of a Pauli.
    \item (\textbf{Summand magnitude})
    Each non-zero summand $\mathcal{A}_k(O)$ satisfies $\|\mathcal{A}_k(O)\|_\infty \le 2$
    \item (\textbf{Neighborhood growth})
    For each non-zero summand $\mathcal{A}_k(O)$, the following holds: $ |\mathcal{N}(\mathcal{A}_k(O))|
        \le \mathfrak d_O+\mathfrak d.$
\end{enumerate}
\end{lemma}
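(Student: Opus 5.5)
The plan is to verify the three properties directly from the Pauli algebra of each component, starting with the decomposition \eqref{eq:formal_single_step_decomp_components} itself. We show that any component whose support is disjoint from $\supp{O}$ annihilates $O$. For a Hamiltonian component, if $\supp{P_k}\cap\supp{O}=\emptyset$ then $P_k$ and $O$ act on disjoint tensor factors, so $[P_k,O]=0$. For a dissipative component with $(\supp{P_k}\cup\supp{P_m})\cap\supp{O}=\emptyset$, both $P_k$ and $P_m$ commute with $O$. Then $P_mOP_k=OP_mP_k$ and $\tfrac12\{P_mP_k,O\}=OP_mP_k$, so the summand vanishes. Dropping these zero terms from $\Lindblad^\dagger(O)=\sum_{\mathcal A_k\in\mathcal A}\mathcal A_k(O)$ leaves exactly the sum over $\mathcal N(O)$, which has $\mathfrak d_O$ terms.

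\textbf{Property 1.} Each summand is a scalar multiple of a single Pauli. This holds because $P_kO$, $OP_k$, $P_mOP_k$, $P_mP_kO$ and $OP_mP_k$ are all equal, up to phases in $\{\pm1,\pm i\}$, to the same Pauli string. That string is $P_m\oplus O\oplus P_k$ in the dissipative case and $P_k\oplus O$ in the Hamiltonian case. Hence every summand has the form $c\,P$ with $P\in\mathcal P_n$.

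\textbf{Property 2.} For the operator norm we use $\|P\|_\infty=1$, the triangle inequality and $|h_k|,|a_{km}|\le1$. This gives $\|ih_k[P_k,O]\|_\infty\le 2$. For the dissipative term it gives $\|P_mOP_k\|_\infty+\tfrac12\|\{P_mP_k,O\}\|_\infty\le 1+1=2$.

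\textbf{Property 3.} This is the only step needing care. The key observation is that any nonzero summand $\mathcal A_k(O)$ has Pauli support contained in $\supp{O}\cup\supp{\mathcal A_k}$, since it is a product of $O$ with Paulis supported on $\supp{\mathcal A_k}$. Now let $\mathcal A_j\in\mathcal N(\mathcal A_k(O))$. Its support meets $\supp{O}\cup\supp{\mathcal A_k}$, so either $\mathcal A_j\in\mathcal N(O)$, contributing at most $\mathfrak d_O$ elements, or $\supp{\mathcal A_j}\cap\supp{\mathcal A_k}\neq\emptyset$. In the second case, either $\mathcal A_j=\mathcal A_k$, which already lies in $\mathcal N(O)$, or $\mathcal A_j$ is a neighbour of $\mathcal A_k$ in $\mathfrak G$, giving at most $\mathfrak d$ elements. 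A union bound yields $|\mathcal N(\mathcal A_k(O))|\le\mathfrak d_O+\mathfrak d$.

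The main subtlety is making sure $\mathcal A_k$ itself is not double counted. It is not, because $\mathcal A_k\in\mathcal N(O)$ by the decomposition step. The only other thing to check is that the support containment is stated for the Pauli string $\mathcal A_k(O)$ rather than for the component.
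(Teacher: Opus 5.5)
Your proof is correct and follows the paper's argument essentially step by step. In both, components whose support is disjoint from $O$ annihilate it; the first two properties come from Pauli-product bookkeeping and the triangle inequality; and the third comes from the containment $\mathrm{supp}(\mathcal A_k(O))\subseteq \mathrm{supp}(O)\cup\mathrm{supp}(\mathcal A_k)$ followed by a union bound. Your explicit remark that $\mathcal A_k$ itself already lies in $\mathcal N(O)$ is actually slightly more careful than the paper. The paper asserts $|\mathcal N(\mathcal A_k)|\le\mathfrak d$ directly, but that set contains $\mathcal A_k$, so a priori it only satisfies $|\mathcal N(\mathcal A_k)|\le\mathfrak d+1$; your observation is what closes that gap.
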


\begin{proof}
We prove each item in turn.

\paragraph{Decomposition and number of summands.}
By definition of the neighborhood,
if $\mathcal{A}_k\notin\mathcal{N}(O)$, then $\supp{\mathcal{A}_k}\cap\supp{O}=\emptyset$.
In this case $\mathcal{A}_k(O)=0$:
for a Hamiltonian component $\mathcal{A}^H_k(O)=-i h_k[P_k,O]$, disjoint support implies $[P_k,O]=0$; for a dissipative component
$\mathcal{A}^D_{km}(O)=a_{km}\big(P_mOP_k-\tfrac12\{P_mP_k,O\}\big)$, disjoint support implies $P_k$ and $P_m$ both commute with $O$, and hence
$P_mOP_k=OP_mP_k$ and $\{P_mP_k,O\}=2OP_mP_k$, so the bracket cancels and $\mathcal{A}^D_{km}(O)=0$.
Therefore,
\[
\Lindblad^\dagger(O)
=\sum_{\mathcal{A}_k\in\mathcal{A}}\mathcal{A}_k(O)
=\sum_{\mathcal{A}_k\in\mathcal{N}(O)}\mathcal{A}_k(O),
\]
which is the decomposition~\eqref{eq:formal_single_step_decomp_components}.
The number of summands is at most $|\mathcal{N}(O)|=\mathfrak d_O$ by construction.

Now, since a product of Paulis is a Pauli operator,  $\mathcal{A}_k^H(O) \propto P_kO$ and $\mathcal{A}_{km}^D(O)\propto P_kP_mO$, we get that each summand is indeed a scalar multiple of a Pauli. 

\paragraph{Summand magnitude.}
We bound the operator norm of each nonzero summand. Since $O$ is Pauli, $\|O\|_\infty=1$ and $\|P\|_\infty=1$ for all Paulis $P$.
First consider a Hamiltonian component:
\[
\|\mathcal{A}^H_k(O)\|_\infty
=|h_k|\,\|[P_k,O]\|_\infty
\le |h_k|\big(\|P_kO\|_\infty+\|OP_k\|_\infty\big)
=2|h_k|\le 2.
\]
Next consider a dissipative component:
\begin{align*}
\|\mathcal{A}^D_{km}(O)\|_\infty
&\le |a_{km}|\left(\|P_mOP_k\|_\infty+\tfrac12\|P_mP_kO\|_\infty+\tfrac12\|OP_mP_k\|_\infty\right) \\
&= |a_{km}|\left(1+\tfrac12+\tfrac12\right)
=2|a_{km}|
\le 2,
\end{align*}
using $\max(|h_k|,|a_{km}|)\le 1$. This proves the claimed bound $\|\mathcal{A}_k(O)\|_\infty\le 2$ for every nonzero summand.

\paragraph{Neighborhood growth.}
Fix a nonzero summand $\mathcal{A}_k(O)$ (either Hamiltonian or dissipative). We first show that its Pauli support is contained in
$\supp{\mathcal{A}_k}\cup\supp{O}$ using \cref{def:component_support}.

If $\mathcal{A}_k=\mathcal{A}^H_\ell$, then $\mathcal{A}^H_\ell(O)$ is proportional to $P_\ell O$, hence
\[
\supp{\mathcal{A}^H_\ell(O)} = \supp{P_\ell O}\subseteq \supp{P_\ell}\cup\supp{O}
= \supp{\mathcal{A}^H_\ell}\cup\supp{O}.
\]
If $\mathcal{A}_k=\mathcal{A}^D_{\ell m}$, then each term in $\mathcal{A}^D_{\ell m}(O)$ is proportional to either $P_m O P_\ell$ or $P_m P_\ell O$ (or $O P_m P_\ell$),
so in all cases
\[
\supp{\mathcal{A}^D_{\ell m}(O)} \subseteq  \supp{P_m}\cup\supp{P_\ell}\cup\supp{O}
= \supp{\mathcal{A}^D_{\ell m}}\cup\supp{O},
\]

Now let $v\in V$ satisfy $\supp{v}\cap\supp{\mathcal{A}_k(O)}\neq\emptyset$, i.e., $v\in\mathcal{N}(\mathcal{A}_k(O))$.
Since $\supp{\mathcal{A}_k(O)}\subseteq \supp{\mathcal{A}_k}\cup\supp{O}$, any such $v$ must overlap either $\supp{O}$ or $\supp{\mathcal{A}_k}$.
Therefore,
\[
\mathcal{N}(\mathcal{A}_k(O))
\subseteq
\mathcal{N}(O)\ \cup\ \mathcal{N}(\mathcal{A}_k).
\]
Taking cardinalities and using $|\mathcal{N}(O)|=\mathfrak d_O$ and $|\mathcal{N}(\mathcal{A}_k)|\le \mathfrak d$ (by definition of maximum degree) gives
\[
|\mathcal{N}(\mathcal{A}_k(O))|
\le \mathfrak d_O+\mathfrak d,
\]
which proves the neighborhood growth claim.
\end{proof}

\begin{corollary}[Iterated Pauli growth]
\label{cor:iterated-pauli-growth}
Using the setting of \cref{lemma:single-step-pauli-growth-on-graph}, for any integer $k\ge 1$ we may write
\begin{equation}
\label{eq:iterated_uncollected_decomposition}
(\Lindblad^\dagger)^k(O)
=
\sum_{\mathcal{A}_{j_1}\in\mathcal{A}}
\sum_{\mathcal{A}_{j_2}\in\mathcal{A}}
\cdots
\sum_{\mathcal{A}_{j_k}\in\mathcal{A}}
\mathcal{A}_{j_1}\!\big(\mathcal{A}_{j_2}(\cdots \mathcal{A}_{j_k}(O))\big),
\end{equation}
This uncollected decomposition contains at most
\[
\mathfrak d_O\,(\mathfrak d_O+\mathfrak d)\cdots(\mathfrak d_O+(k-1)\mathfrak d)
\]
nonzero Pauli summands, and each summand has operator norm at most $2^k$.
\end{corollary}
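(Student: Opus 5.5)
The plan is induction on $k$, applying \cref{lemma:single-step-pauli-growth-on-graph} once per step to the innermost-to-outermost nested applications. The decomposition \eqref{eq:iterated_uncollected_decomposition} itself is immediate: write $\Lindblad^\dagger=\sum_{\mathcal{A}_j\in\mathcal{A}}\mathcal{A}_j$ and expand $(\Lindblad^\dagger)^k$ by multilinearity. The content lies in counting the nonzero terms and bounding their norms.

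The induction hypothesis at level $k$ will be strengthened to the following claim. The nonzero summands of $(\Lindblad^\dagger)^k(O)$ are each of the form $c\,Q$ with $Q$ a Pauli and $|c|\le 2^k$. There are at most $\prod_{s=0}^{k-1}(\mathfrak d_O+s\mathfrak d)$ of them. Each satisfies $|\mathcal{N}(Q)|\le \mathfrak d_O+k\mathfrak d$.

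The base case $k=1$ is exactly \cref{lemma:single-step-pauli-growth-on-graph}: there are at most $\mathfrak d_O$ summands, each of norm at most $2$ and with neighborhood size at most $\mathfrak d_O+\mathfrak d$.

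For the inductive step, take a nonzero summand $S=c\,Q$ at level $k$. Applying $\Lindblad^\dagger$ to it, the lemma applied to the Pauli $Q$ (with the scalar $c$ pulled out by linearity) gives:
\begin{itemize}
\item at most $|\mathcal{N}(Q)|\le \mathfrak d_O+k\mathfrak d$ nonzero children $\mathcal{A}_j(S)$, because every $\mathcal{A}_j\notin\mathcal{N}(Q)$ annihilates $Q$;
\item each child has norm at most $2|c|\le 2^{k+1}$ and is again a scalar multiple of a Pauli $Q'$;
\item $|\mathcal{N}(Q')|\le |\mathcal{N}(Q)|+\mathfrak d\le \mathfrak d_O+(k+1)\mathfrak d$.
\end{itemize}
Multiplying the branching counts gives the claimed product bound at level $k+1$.

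The one subtle point is the neighborhood bound. The lemma's third item is stated as $\mathfrak d_O+\mathfrak d$ relative to the input's own neighborhood size. To apply it at level $k$, I will re-derive it with $O$ replaced by $Q$ using the inclusion $\mathcal{N}(\mathcal{A}_j(Q))\subseteq\mathcal{N}(Q)\cup\mathcal{N}(\mathcal{A}_j)$ from that proof. This yields $|\mathcal{N}(Q')|\le|\mathcal{N}(Q)|+\mathfrak d$, and the neighborhood sizes therefore grow additively by $\mathfrak d$ per step.

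A second point to handle is how zero terms are counted. Zero summands in the uncollected expansion produce only zeros, so restricting the count to descendants of nonzero summands loses nothing. I do not expect a real obstacle beyond this bookkeeping.
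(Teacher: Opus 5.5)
Your argument is correct and is essentially the paper's proof written out as an explicit induction. Both iterate the single-step lemma, use that each intermediate Pauli's neighborhood grows by at most $\mathfrak d$ per step (so a level-$\ell$ summand has at most $\mathfrak d_O+\ell\mathfrak d$ nonzero children), multiply these branching factors, and gain a factor $2$ in norm per component application. The one detail worth writing down is that $\mathcal{N}(\mathcal{A}_j)$ contains $\mathcal{A}_j$ itself and so can have size $\mathfrak d+1$; since $\mathcal{A}_j\in\mathcal{N}(Q)$ whenever $\mathcal{A}_j(Q)\neq 0$, the union $\mathcal{N}(Q)\cup\mathcal{N}(\mathcal{A}_j)$ still has size at most $|\mathcal{N}(Q)|+\mathfrak d$, and your bound goes through.
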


\begin{proof}
The decomposition follows by iterating the expansion of \cref{lemma:single-step-pauli-growth-on-graph}.
At the first step there are at most $\mathfrak d_O$ summands, and by the neighborhood-growth property of the lemma,
after $\ell$ steps each summand has at most $\mathfrak d_O+\ell\mathfrak d$ descendants.
Multiplying these bounds yields the stated count.
Finally, each application of a component increases the operator norm by at most a factor of $2$,
so every $k$-fold nested term has norm at most $2^k$.
\end{proof}

\begin{theorem}[Uniform bound on Pauli observable derivatives]
\label{thm:observable_derivative_bounds}
Let $\mathcal{L}$ be an $n$-qubit Lindbladian of the form~\eqref{eq:lindbladian_derivative_bounds_appendix} and let
$\mathfrak{G}=(V,E)$ be its dual interaction graph from \cref{def:dual_interaction_graph} with maximum degree $\mathfrak d$.
Let $O$ be an $n$-qubit Pauli observable with neighborhood size $\mathfrak d_O:=|\mathcal{N}(O)|\le \mathfrak d$ (cf.~\cref{eq:def_neighborhood_O_components}).
For an arbitrary initial state $\rho$, define $\rho_t := e^{t\mathcal{L}}(\rho)$ and $\langle O(t)\rangle := \tr(O\,\rho_t)$.
Then for every integer $k\ge 0$ and all $t\ge 0$,
\begin{equation}
\label{eq:observable_derivative_bound_statement}
\left|\frac{d^k}{dt^k}\langle O(t)\rangle\right|
\;\le\;
\bigl(2\mathfrak d\bigr)^k\,(k!).
\end{equation}
\end{theorem}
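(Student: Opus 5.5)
We move to the Heisenberg picture and then reuse the counting in \cref{cor:iterated-pauli-growth}. Since $\mathcal L$ is time independent, $\langle O(t)\rangle=\tr(O\,e^{t\mathcal L}(\rho))=\tr\bigl(e^{t\mathcal L^\dagger}(O)\,\rho\bigr)$, where we use that $O$ is Hermitian and $\mathcal L^\dagger$ is Hermiticity-preserving. Differentiating $k$ times and using that $\mathcal L^\dagger$ commutes with $e^{t\mathcal L^\dagger}$ gives
\[
\frac{d^k}{dt^k}\langle O(t)\rangle=\tr\bigl((\mathcal L^\dagger)^k(O)\,\rho_t\bigr),
\]
with $\rho_t=e^{t\mathcal L}(\rho)$ still a density operator. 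Hölder's inequality ($p=\infty$, $q=1$) with $\|\rho_t\|_1=1$ then yields
\[
\Bigl|\frac{d^k}{dt^k}\langle O(t)\rangle\Bigr|\le\bigl\|(\mathcal L^\dagger)^k(O)\bigr\|_\infty .
\]
This bound is uniform in $t\ge0$ because it does not depend on $\rho_t$ at all.

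Next we bound $\|(\mathcal L^\dagger)^k(O)\|_\infty$ using the uncollected expansion \eqref{eq:iterated_uncollected_decomposition}. By \cref{cor:iterated-pauli-growth}, this expansion has at most $\prod_{\ell=0}^{k-1}(\mathfrak d_O+\ell\mathfrak d)$ nonzero summands, each of operator norm at most $2^k$. The triangle inequality therefore gives
\[
\|(\mathcal L^\dagger)^k(O)\|_\infty\le 2^k\prod_{\ell=0}^{k-1}(\mathfrak d_O+\ell\mathfrak d).
\]
Using $\mathfrak d_O\le\mathfrak d$, each factor satisfies $\mathfrak d_O+\ell\mathfrak d\le(\ell+1)\mathfrak d$, so the product is at most $\mathfrak d^k\,k!$. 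The total bound is $(2\mathfrak d)^k k!$, as claimed. The case $k=0$ is trivial, since $|\tr(O\rho_t)|\le\|O\|_\infty=1$.

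The main point needing care is the count of nonzero summands in \cref{cor:iterated-pauli-growth}. The count works because each intermediate nested term $\mathcal A_{j_{\ell+1}}(\cdots\mathcal A_{j_k}(O))$ is a scalar multiple of a single Pauli string. We can therefore reapply \cref{lemma:single-step-pauli-growth-on-graph} to it as a Pauli observable, with neighborhood size bounded inductively by $\mathfrak d_O+\ell\mathfrak d$, and all components outside that neighborhood annihilate it. For the neighborhood-growth step the induction must carry the inclusion $\mathcal N(\mathcal A_k(O))\subseteq\mathcal N(O)\cup\mathcal N(\mathcal A_k)$. Each new step adds at most $\mathfrak d$ new neighbors, namely those of the component just applied; a component's neighbor count excludes the component itself. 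If that exclusion causes an off-by-one, we would absorb it by noting that $\mathcal A_k$ itself already lies in $\mathcal N(O)$.

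We also need the normalization $\max(|h_k|,|a_{km}|)\le1$ to get the per-step factor $2$. It holds under the standing assumption of this section.
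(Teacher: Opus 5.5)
Your proposal is correct and follows the paper's proof essentially step for step. You reduce to $\|(\mathcal L^\dagger)^k(O)\|_\infty$ via Hölder with $\|\rho_t\|_1=1$, then count summands in the uncollected expansion of \cref{cor:iterated-pauli-growth} and use $\mathfrak d_O+\ell\mathfrak d\le(\ell+1)\mathfrak d$. Your remark that $\mathcal A_k\in\mathcal N(O)$ absorbs the self-overlap in $\mathcal N(\mathcal A_k)$ correctly tightens a step the paper states slightly loosely.
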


\begin{proof}
Write $\rho_t := e^{t\mathcal{L}}(\rho)$ so that $\|\rho_t\|_1=1$.
Using $\frac{d}{dt}e^{t\mathcal{L}}=\mathcal{L}\circ e^{t\mathcal{L}}$ and the adjoint relation
$\tr\!\big(O^\dagger \mathcal{L}(Q)\big)=\tr\!\big((\mathcal{L}^\dagger(O))^\dagger Q\big)$, we have for all $k\ge 0$ and $t\ge 0$,
\begin{equation}
\label{eq:observable_derivative_trace_adjoint_thm}
\frac{d^k}{dt^k}\langle O(t)\rangle
=
\tr\!\big(O\,\mathcal{L}^k(\rho_t)\big)
=
\tr\!\big((\mathcal{L}^\dagger)^k(O)\,\rho_t\big),
\end{equation}
where $\mathcal{L}^\dagger$ is given explicitly by the adjoint master equation~\eqref{eq:adjoint-master-eq}.
H\"older's inequality implies
\begin{equation}
\label{eq:observable_derivative_holder_thm}
\left|\frac{d^k}{dt^k}\langle O(t)\rangle\right|
\le
\bigl\|(\mathcal{L}^\dagger)^k(O)\bigr\|_\infty\,\|\rho_t\|_1
=
\bigl\|(\mathcal{L}^\dagger)^k(O)\bigr\|_\infty.
\end{equation}

By \cref{cor:iterated-pauli-growth}, $(\mathcal{L}^\dagger)^k(O)$ admits an uncollected decomposition into at most
\[
\mathfrak d_O(\mathfrak d_O+\mathfrak d)\cdots(\mathfrak d_O+(k-1)\mathfrak d)
\]
(nonzero) Pauli summands, each of operator norm at most $2^k$. Applying the triangle inequality and using $\|P\|_\infty=1$ for any Pauli $P$ yields
\begin{equation}
\label{eq:Ldaggerk_triangle_bound_thm}
\bigl\|(\mathcal{L}^\dagger)^k(O)\bigr\|_\infty
\le
2^k\,
\mathfrak d_O(\mathfrak d_O+\mathfrak d)\cdots(\mathfrak d_O+(k-1)\mathfrak d).
\end{equation}
Since $\mathfrak d_O\le \mathfrak d$, we further bound each factor as
$\mathfrak d_O+\ell\mathfrak d \le (\ell+1)\mathfrak d$ for $\ell=0,1,\ldots,k-1$, giving
\begin{equation}
\label{eq:factorial_bound_thm}
\mathfrak d_O(\mathfrak d_O+\mathfrak d)\cdots(\mathfrak d_O+(k-1)\mathfrak d)
\le
\mathfrak d^k \prod_{\ell=0}^{k-1}(\ell+1)
=
\mathfrak d^k\,(k!).
\end{equation}
Combining \eqref{eq:observable_derivative_holder_thm}, \eqref{eq:Ldaggerk_triangle_bound_thm}, and \eqref{eq:factorial_bound_thm} yields
\[
\left|\frac{d^k}{dt^k}\langle O(t)\rangle\right|
\le
(2\mathfrak d)^k\,(k!).
\]
\end{proof}

\begin{remark}[Derivative bound for $M$-sparse Lindbladians]
\label{remark:observable_derivative_bounds_M_sparse}
In the setting of \cref{thm:observable_derivative_bounds}, if $\mathcal{L}$ is $M$-sparse with
$M:=|\mathcal{S}_H|+|\mathcal{S}_D|^2$ (cf.~\cref{eq:def_M}), then for all $k\ge 0$ and $t\ge 0$,
\begin{equation}
\left|\frac{d^k}{dt^k}\langle O(t)\rangle\right|
\le (2M)^k,
\end{equation}
since in the worst case, the uncollected expansion of $(\mathcal{L}^\dagger)^k(O)$ contains at most $M^k$ nonzero summands, each of amplitude at most $2^k$.
\end{remark}

\begin{remark}[Derivative bound for Pauli input state]
\label{remark:derivative_bounds_Pauli_input_state}
    In the setting of \cref{thm:observable_derivative_bounds}, if the initial state is a taken to be a scaled $n$-qubit Pauli $2^{-n}Q$ instead of a density matrix, then for all $k\ge 0$ and $t\ge 0$,
    \begin{equation}
    \left|\frac{d^k}{dt^k}  \tr\!\big(O\,e^{\mathcal{L}t}(2^{-n}Q)\big)\right| = 
    \left|\tr\!\big((\mathcal{L}^\dagger)^k(O)\,e^{\mathcal{L}t}(2^{-n}Q)\big)\right| \le 
    \bigl(2\mathfrak d\bigr)^k\,(k!),
    \end{equation}
\end{remark}
\begin{proof}
    The proof is identical to that of \cref{thm:observable_derivative_bounds} except for \cref{eq:observable_derivative_holder_thm}. By H\"older's inequality, contractivity of trace norm under CPTP maps \cite{ruskai1994subadditivity,verstraete2006matrix} and the fact that $\|Q\|_1 = 2^n$:
    \begin{equation}
        \left|\tr\!\big((\mathcal{L}^\dagger)^k(O)\,e^{\mathcal{L}t}(2^{-n}Q)\big)\right| \le 
        \bigl\|(\mathcal{L}^\dagger)^k(O)\bigr\|_\infty\,\|e^{\mathcal{L}t}(2^{-n}Q)\|_1 \le
        \bigl\|(\mathcal{L}^\dagger)^k(O)\bigr\|_\infty\,\|(2^{-n}Q)\|_1 = \bigl\|(\mathcal{L}^\dagger)^k(O)\bigr\|_\infty
    \end{equation}
    The rest of the proof follows.
\end{proof}

\subsection{Lindbladian Induced Schatten-2 norm bound}\label{sec:induced_norm_bound}

In this subsection, we upper bound the induced Lindbladian Schatten-$2$ norm $\|\mathcal{L}\|_{2\to2}$ as defined in \cref{sec:appendix_notation_prelim}. This norm is used by the Pauli error rates derivative bound in \cref{cor:chi_derivative_bound_M_sparse}.

We briefly recall the operator--vector correspondence (``vectorization'') following \cite[Eq.~(1.127)]{Watrous_2018}, except we use a tensor-factor ordering corresponding to column-stacking (\cref{eq:vec_def}) rather than row-stacking.
Let $\{|a\rangle\}_{a=1}^d$ be the computational basis and $E_{ab}:=|a\rangle\langle b|$ the matrix units.
We define the linear map $\vectorized{\cdot}:\mathbb{C}^{d\times d}\to \mathbb{C}^{d}\otimes\mathbb{C}^{d}$ by
\begin{equation}
\label{eq:vec_def}
\vectorized{E_{ab}} := |b\rangle\otimes|a\rangle,
\qquad\text{extended by linearity.}
\end{equation}

With this choice one has, for all $A,X,B\in\mathbb{C}^{d\times d}$,
\begin{equation}
\label{eq:vec_AXB}
\vectorized{AXB} \;=\; (B^{T}\otimes A)\,\vectorized{X}.
\end{equation}
Moreover, $\vectorized{\cdot}$ is an isometry for the Hilbert--Schmidt inner product:
\begin{equation}
\label{eq:vec_isometry}
\langle \vectorized{X},\vectorized{Y}\rangle \;=\; \tr(X^\dagger Y),
\qquad\text{so in particular}\qquad
\|\vectorized{X}\|_2=\|X\|_2 .
\end{equation}
Here $\|\cdot\|_2$ on $\mathbb{C}^d\otimes\mathbb{C}^d$ denotes the Euclidean ($\ell_2$) norm, while $\|X\|_2$ on $\mathbb{C}^{d\times d}$ denotes the Schatten-$2$ (Hilbert--Schmidt) norm.

Given a linear map (superoperator) $\Phi$, we define its matrix (Liouville) representation $\widehat{\Phi}$ by
\begin{equation}
\label{eq:liouville_def}
\vectorized{\Phi(X)} \;=\; \widehat{\Phi}\,\vectorized{X}\qquad\forall\,X.
\end{equation}
Then the induced Schatten-$2$ norm of $\Phi$ coincides with the operator norm of its Liouville matrix representation, i.e., the Schatten-$\infty$ norm:
\begin{equation}
\label{eq:induced_equals_matrix}
\|\Phi\|_{2\to 2}
:=\sup_{X\neq 0}\frac{\|\Phi(X)\|_2}{\|X\|_2}
\;=\;
\sup_{v\neq 0}\frac{\|\widehat{\Phi}v\|_2}{\|v\|_2}
\;=\;\|\widehat{\Phi}\|_{\infty}.
\end{equation}
In particular, for the Lindbladian $\Lindblad$ we upper bound $\|\Lindblad\|_{2\to 2}$ via $\|\widehat{\Lindblad}\|_{\infty}$.

\begin{lemma}[Induced Schatten-$2$ norm bound for Lindbladians]
\label{lemma:operator_norm_bounds}
Recall the Lindbladian expansion in the Pauli basis from \cref{eq:lindbladian_derivative_bounds_appendix} and write
$\Lindblad=\mathcal{H}+\mathcal{D}$ for its Hamiltonian and dissipative parts, respectively.
Assume the coefficients have magnitude less than $1$ as in \cref{eq:lindbladian_derivative_bounds_appendix},
and recall the sparsity parameter $M:=|\HamiltStruct|+|\DiagDissStruct|^2$ from \cref{eq:def_M}.

Then
\begin{equation}
\label{eq:phys_bound}
\|\Lindblad\|_{2\to 2}
\;\le\;
\Delta(H)\;+\;2\sum_{m,n}|a_{mn}|
\;\le\;
2|\HamiltStruct|+2|\DiagDissStruct|^2
\;=\;
2M,
\end{equation}
where $H:=\sum_{P_k\in\HamiltStruct} h_k P_k$ and $\Delta(H)=E_{\max}-E_{\min}$ is the spectral range of $H$.
\end{lemma}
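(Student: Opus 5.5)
The plan is to work in the Liouville representation of \cref{eq:liouville_def} and split $\Lindblad=\mathcal H+\mathcal D$. By the triangle inequality for the operator norm, $\|\Lindblad\|_{2\to2}\le\|\widehat{\mathcal H}\|_\infty+\|\widehat{\mathcal D}\|_\infty$. Each piece is then bounded separately. The Hamiltonian part will give the spectral range $\Delta(H)$, and the dissipator will give $2\sum_{m,n}|a_{mn}|$. The final inequality $\le 2M$ is then a counting step: $\Delta(H)\le 2\|H\|_\infty\le 2\sum_k|h_k|\le 2|\HamiltStruct|$, and $\sum_{m,n}|a_{mn}|\le|\DiagDissStruct|^2$ because there are at most $|\DiagDissStruct|^2$ pairs with $|a_{mn}|\le1$.

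For the Hamiltonian part, $\mathcal H(X)=-i[H,X]$. Using \cref{eq:vec_AXB}, its Liouville matrix is
\[
\widehat{\mathcal H}=-i\bigl(I\otimes H-H^{T}\otimes I\bigr).
\]
Since $H$ is Hermitian, $H^T$ has the same spectrum as $H$. Moreover, $I\otimes H$ and $H^T\otimes I$ commute, so they are simultaneously diagonalizable. The eigenvalues of $\widehat{\mathcal H}$ are therefore $-i(E_a-E_b)$, and the matrix is normal. Its operator norm equals the largest eigenvalue modulus, which is $\max_{a,b}|E_a-E_b|=\Delta(H)$.

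For the dissipator, I avoid diagonalizing and apply the triangle inequality term by term. Each component $X\mapsto a_{mn}\bigl(P_nXP_m-\tfrac12\{P_mP_n,X\}\bigr)$ has Liouville matrix
\[
a_{mn}\Bigl(P_m^T\otimes P_n-\tfrac12\, I\otimes P_mP_n-\tfrac12\,(P_mP_n)^T\otimes I\Bigr).
\]
Every factor is unitary, so this matrix has operator norm at most $|a_{mn}|(1+\tfrac12+\tfrac12)=2|a_{mn}|$. Summing over all pairs gives $\|\widehat{\mathcal D}\|_\infty\le 2\sum_{m,n}|a_{mn}|$. Equivalently, one can argue directly in Hilbert--Schmidt norm, using that multiplication by a unitary preserves $\|X\|_2$.

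The only step needing care is the exact identification of $\|\widehat{\mathcal H}\|_\infty$ with $\Delta(H)$ rather than the cruder bound $2\|H\|_\infty$. This rests on normality of $\widehat{\mathcal H}$, which follows from the two Kronecker summands commuting, together with $\mathrm{spec}(H^T)=\mathrm{spec}(H)$. If that identification caused any trouble, the cruder bound $2\|H\|_\infty$ would still suffice for the final $2M$ inequality.
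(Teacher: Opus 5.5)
Your proposal is correct and follows essentially the same route as the paper: a Liouville-space triangle inequality, exact diagonalization of the commutator part (the paper diagonalizes $\mathcal{H}$ in the operator basis $\{|j\rangle\langle k|\}$, which is the same fact as your Kronecker-sum spectrum argument), term-by-term unitarity bounds for the dissipator, and a counting step. The only cosmetic difference is in the crude Hamiltonian estimate. You bound $\Delta(H)\le 2\|H\|_\infty\le 2\sum_k|h_k|$ directly, whereas the paper bounds each per-term Liouville matrix by $\|\widehat{\mathcal{H}}_k\|_\infty\le 2$; both give $2|\mathcal{S}_H|$.
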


\begin{proof}
By \cref{eq:induced_equals_matrix}, for any superoperator $\Phi$ we have
\[
\|\Phi\|_{2\to2}=\|\widehat{\Phi}\|_{\infty},
\]
where $\widehat{\Phi}$ is the Liouville matrix defined by
$\vectorized{\Phi(X)}=\widehat{\Phi}\,\vectorized{X}$.
Using subadditivity of the Schatten-$\infty$ norm,
\begin{equation}
\label{eq:triangle_liouville}
\|\Lindblad\|_{2\to2}
=\|\widehat{\Lindblad}\|_{\infty}
\le \|\widehat{\mathcal{H}}\|_{\infty}+\|\widehat{\mathcal{D}}\|_{\infty}.
\end{equation}

\paragraph{Hamiltonian part.}
In an eigenbasis $H\ket{j}=E_j\ket{j}$,
\[
\mathcal{H}(\ketbra{j}{k})=-i(E_j-E_k)\ketbra{j}{k}.
\]
Since $\{\ketbra{j}{k}\}_{j,k}$ is orthonormal in the Hilbert--Schmidt inner product, $\mathcal{H}$ is diagonal in this orthonormal operator basis and hence
\begin{equation}
\label{eq:H_norm}
\|\widehat{\mathcal{H}}\|_{\infty}=\|\mathcal{H}\|_{2\to2}=\max_{j,k}|E_j-E_k|=\Delta(H).
\end{equation}

\paragraph{Dissipator part.}
Using \cref{eq:vec_AXB}, each term satisfies
\[
\vectorized{P_m X P_n}=(P_n^T\otimes P_m)\vectorized{X},\qquad
\vectorized{(P_nP_m)X}=(I\otimes P_nP_m)\vectorized{X},\qquad
\vectorized{X(P_nP_m)}=((P_nP_m)^T\otimes I)\vectorized{X}.
\]
Therefore the Liouville representation of $\mathcal{D}$ is
\begin{equation}
\label{eq:D_liouville}
\widehat{\mathcal{D}}
=
\sum_{m,n} a_{mn}\!\left(
P_n^{T}\otimes P_m
-\tfrac12\, I\otimes P_nP_m
-\tfrac12\, (P_nP_m)^{T}\otimes I
\right).
\end{equation}
Taking the Schatten-$\infty$ norm and using the triangle inequality gives
\begin{align}
\|\widehat{\mathcal{D}}\|_{\infty}
&\le
\sum_{m,n}|a_{mn}|\,
\left\|
P_n^{T}\otimes P_m
-\tfrac12\, I\otimes P_nP_m
-\tfrac12\, (P_nP_m)^{T}\otimes I
\right\|_{\infty} \nonumber\\
&\le
\sum_{m,n}|a_{mn}|\left(
\|P_n^{T}\otimes P_m\|_{\infty}
+\tfrac12\|I\otimes P_nP_m\|_{\infty}
+\tfrac12\|(P_nP_m)^{T}\otimes I\|_{\infty}
\right) \nonumber\\
&=
\sum_{m,n}|a_{mn}|\left(1+\tfrac12+\tfrac12\right)
=
2\sum_{m,n}|a_{mn}|.
\label{eq:D_norm_bound}
\end{align}
Here we used that Paulis are unitary, hence have Schatten-$\infty$ norm $1$, and
$\|A\otimes B\|_{\infty}=\|A\|_{\infty}\|B\|_{\infty}$.

Combining \cref{eq:triangle_liouville,eq:H_norm,eq:D_norm_bound} yields the first inequality in \cref{eq:phys_bound}.

\paragraph{Crude sparsity bound.}
For the Hamiltonian part, use the Liouville picture: for $\mathcal{H}_k(X):=-i[P_k,X]$ one has
\[
\widehat{\mathcal{H}}_k=-i\big(I\otimes P_k - P_k^{T}\otimes I\big),
\]
so $\|\widehat{\mathcal{H}}_k\|_{\infty}\le \|I\otimes P_k\|_{\infty}+\|P_k^T\otimes I\|_{\infty}=2$.
Hence, writing $\mathcal{H}=\sum_{P_k\in\HamiltStruct} h_k\,\mathcal{H}_k$ and using $|h_k|\le 1$,
\[
\|\mathcal{H}\|_{2\to2}=\|\widehat{\mathcal{H}}\|_{\infty}
\le \sum_{P_k\in\HamiltStruct}|h_k|\,\|\widehat{\mathcal{H}}_k\|_{\infty}
\le 2|\HamiltStruct|.
\]
Moreover, $\sum_{m,n}|a_{mn}|\le |\DiagDissStruct|^2$.
Substituting these into the first inequality in \cref{eq:phys_bound} gives
$\|\Lindblad\|_{2\to2}\le 2|\HamiltStruct|+2|\DiagDissStruct|^2=2M$.

\end{proof}

\section{Structure learning}\label{sec:structure_learning}

In this section we describe the Lindbladian structure learning algorithm -- the primitive to identify the operator content of both the Hamiltonian and the dissipator. First, we set up the problem. Express the Lindbladian in the Pauli basis:
\begin{equation} \label{eq:lindbladian_struct_learning_appendix}
        \Lindblad(\rho) = -i\underbrace{\sum_{P_k\in \HamiltStruct} h_k \comm{P_k}{\rho}}_{\mathrm{Hamiltonian}}+\underbrace{\sum_{P_k,P_j \in \DiagDissStruct} a_{kj}\!\left(P_k \rho P_j - \tfrac12\acomm{P_j P_k}{\rho}\right)}_{\mathrm{Dissipator}},
\end{equation}
 where $a\succeq 0$ is commonly called a Kossakowski matrix. We denote by $\eta_H := \min_{P_i \in \HamiltStruct} |h_i|$ and
$\eta_D := \min_{P_i,P_j \in \DiagDissStruct} |a_{ij}|$ the smallest nonzero Hamiltonian and dissipator
coefficients, respectively, and set $\eta := \min\{\eta_H,\eta_D\}$. In practice, one may treat $\eta$ as a
user-chosen resolution threshold satisfying
$\eta \le \min\!\left(\min_{h_{i}\neq 0}|h_i|,\ \min_{a_{ij}\neq 0}|a_{ij}|\right)$.
See \cref{remark:eta-heavy-support} for when $\eta$ is chosen to be larger than the true minimum nonzero magnitude.

WLOG, we rescale $\Lindblad$ such that the coefficients are normalized: $\max\Big\{\max_{P_k\in\mathcal{S}_H}|h_k|,\ \max_{\substack{P_m,P_n\in\mathcal{S}_D}}|a_{mn}|\Big\}=1.$

\begin{definition}[Hamiltonian structure]    \label{def:hamiltonian_structure}
    The Hamiltonian structure is the set $\HamiltStruct$ of all Pauli terms with nonzero coefficients in the Hamiltonian expansion:
    \begin{equation}
        \HamiltStruct := \{\, P_i \;\;:\;\; h_i \neq 0 \,\}.
    \end{equation}
\end{definition}

\begin{definition}[Dissipator structure]
    \label{def:dissipator_structure}
    The dissipator structure is the set $\DiagDissStruct$ of all Pauli terms with nonzero diagonal Kossakowski entries \footnote{For a positive semidefinite matrix $a$, one has $|a_{ij}|^2 \le a_{ii}a_{jj}$. Thus, if $a_{ii}=0$, it follows that $a_{ij}=a_{ji}=0$ for all $j$. Hence the dissipator support is fully captured by the nonzero diagonal entries.}:
    \begin{equation}
        \DiagDissStruct := \{\, P_i \;\;|\;\; a_{ii} \neq 0 \,\}.
    \end{equation}
\end{definition}

Let $M_H := |\HamiltStruct|$ and $M_D := |\DiagDissStruct|$ be the support sizes of the Hamiltonian and dissipator. The number of real parameters to be learned is then at most
\begin{equation} \label{eq:def_M}
    M \;=\; M_H + M_D^2,
\end{equation}
since each Hamiltonian term contributes one real coefficient, while the dissipator is specified by the $M_D\times M_D$ Hermitian principal sub-matrix of $a$.

\begin{problem}[Structure learning problem]
    \label{problem:structure_learning}
    Given black-box access to the  dynamics $e^{\mathcal{L}t}$ for arbitrary $t\ge0$, identify two sets of Pauli operators, $\widehat{\mathcal{S}}_H$ and $\widehat{\mathcal{S}}_D$, containing the true Hamiltonian and dissipator supports,
    \begin{equation}
        \HamiltStruct \subseteq \widehat{\mathcal{S}}_H,
        \qquad
        \DiagDissStruct \subseteq \widehat{\mathcal{S}}_D,
    \end{equation}
    with high probability, while ensuring that both sets have polynomial size,
    \begin{equation}
        |\widehat{\mathcal{S}}_H|,\, |\widehat{\mathcal{S}}_D| 
        = \mathrm{poly}(M,1/\eta).
    \end{equation}
    In words, the task is to recover, up to polynomial overhead, all Pauli terms with nonzero coefficients in the Hamiltonian and the dissipator.
\end{problem}

The central challenge of the \nameref{problem:structure_learning} is that even if the Lindbladian $\mathcal{L}$ is sparse in the Pauli basis, the corresponding channel $\mathcal{E}_t = e^{\mathcal{L}t}$ is, in general, dense in that basis. Therefore, the structure of $\mathcal{L}$ cannot be inferred straightforwardly from the structure of $\mathcal{E}_t$. To address that, we focus on the evolution near $t \approx 0$, where the channel can be approximated by its Taylor expansion 
\begin{equation} 
    \mathcal{E}_t \;\approx\;  \mathrm{id} + \mathcal{L}t + \mathcal{L}^2 \tfrac{t^2}{2}, 
\end{equation}

An analytically simpler object to work with is the $\chi$-matrix representation of the Lindbladian evolution channel:
\begin{equation}
    \mathcal{E}_t(\rho) \;=\; \sum_{i,j} \chi_{ij}(t)\, P_i \rho P_j.
\end{equation}

In this section we describe how to access and learn the Lindbladian structure from the time-derivatives of the diagonal entries $\{\chi_{ii}(t)\}$, which are commonly referred to as the Pauli error rates. In particular, this section proceeds with the following outline:

\begin{itemize}
    \item \textbf{Short-time derivative identities.}  
    We first show that the initial time-derivatives of the Pauli error rates $\{\chi_{ii}\}$ directly encode the Lindbladian structure. This is formalized in \cref{lemma:time_derivatives_pauli_rates}.
    
    \item \textbf{Estimation of Pauli error rates.}  
    Next, we establish how to efficiently estimate the diagonal entries $\{\chi_{ii}(t)\}$ of the $\chi$-matrix at arbitrary times without ancillas. This is achieved by the imported population-recovery procedure (\cref{theorem:population_recovery}, \cite{flammia2021pauli}).
    
    \item \textbf{Derivative estimation via Chebyshev interpolation.}  
    Given noisy estimates of $\chi_{ii}(t)$ at a set of Gauss–Chebyshev nodes, we fit a degree-$r$ Chebyshev interpolant and use its analytic derivatives at $t=0$ as estimators for $\chi_{ii}^{(1)}(0)$ and $\chi_{ii}^{(2)}(0)$.  
    Their total estimation error is quantified in \cref{theorem:chebyshev_derivative_total_error}, with required parameters summarized in \cref{cor:first_chebyshev_derivative_parameters,cor:second_chebyshev_derivative_parameters}.
\end{itemize}

We combine these ingredients into the Hamiltonian and Dissipator structure learning \cref{algorithm:dissipator_structure_learning,algorithm:hamiltonian_structure_learning}. We then prove their correctness and sample complexity in \cref{theorem:ancilla_free_structure_learning}.

\subsection{Pauli error rates derivative identities}\label{sec:pauli_rates_short_time_lindbladian}

The diagonal entries $\{\chi_{ii}(t)\}$ of the $\chi$-matrix, commonly referred to as the Pauli error rates, are central to our protocol for two reasons: (i) their time-derivatives at $t=0$ reveal the operator content of the Lindbladian, and (ii) they can be estimated efficiently (see \cref{sec:pauli_error_rates}).

To see why (i) holds, consider the Taylor expansion of the channel:
\begin{equation}
    \mathcal{E}_t(\rho) 
    \;=\; \sum_{k=0}^{\infty}\mathcal{L}^k(\rho) \,\frac{t^k}{k!}
    \;=\; \sum_{i,j} \chi_{ij}(t)\, P_i \rho P_j 
    \;=\; \sum_{k=0}^{\infty} \left( \sum_{i,j} \chi_{ij}^{(k)}\, P_i \rho P_j \right)\frac{t^k}{k!},
\end{equation}
where in the last equality we have expanded the coefficients as $\chi_{ij}(t)=\sum_{k=0}^\infty \chi_{ij}^{(k)} t^k/k!$, and defined $\chi_{ij}^{(k)}=\dv[k]{t}\chi_{ij}(0)$. By comparing terms, we see that $\mathcal{L}^k(\rho) \;=\; \sum_{i,j} \chi_{ij}^{(k)}\, P_i \rho P_j$, and in particular
\begin{equation}
    \mathcal{L}(\rho) \;=\; \sum_{i,j} \chi_{ij}^{(1)}\, P_i \rho P_j,
    \qquad
    \mathcal{L}^2(\rho) \;=\; \sum_{i,j} \chi_{ij}^{(2)}\, P_i \rho P_j.
\end{equation}
Comparing with the Lindblad form (\cref{eq:lindbladian_struct_learning_appendix}), we immediately see that the first derivatives of the diagonal coefficients correspond to the diagonal dissipator matrix elements: $\chi_{ii}^{(1)} \;=\; a_{ii}$. Thus, if we could estimate all nonzero $\chi_{ii}^{(1)}$ efficiently, we could immediately determine the dissipator structure.

The Hamiltonian structure is less immediate. Because $-i\comm{H}{\rho}$ only acts on one side of $\rho$, the Hamiltonian does not appear in the first derivatives of the diagonals. To ``see'' the Hamiltonian contributions, we need to go to the second order. The explicit form of the second time-derivative $\chi_{ii}^{(2)}$ is quite involved, as it includes quadratic terms in $a_{ij}$ and $h_i$. However, for the purpose of structure identification, the following result is sufficient:

\begin{lemma}[Time-derivatives of Pauli error rates of $e^{\mathcal{L}t}$]
    \label{lemma:time_derivatives_pauli_rates}
    Let $\mathcal{E}_t = e^{\mathcal{L}t}$ be the channel generated by an $n$-qubit Lindbladian $\mathcal{L}$, with $\chi$-matrix representation $\mathcal{E}_t(\rho) \;=\; \sum_{i,j} \chi_{ij}(t)\, P_i \rho P_j$. Then the first and second derivatives of the diagonal coefficients at $t=0$ satisfy
    \begin{align}
        \chi_{ii}^{(1)} &= a_{ii}, 
        && \text{if } a_{ii} > 0,\\[4pt]
        \chi_{ii}^{(1)} &= 0, \quad 
        \chi_{ii}^{(2)} \;\ge\; 2h_i^{2},
        && \text{if } a_{ii} = 0.
    \end{align}
\end{lemma}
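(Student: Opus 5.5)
The plan is to compute $\chi^{(1)}$ and $\chi^{(2)}$ directly from the identities $\mathcal L(\rho)=\sum_{i,j}\chi^{(1)}_{ij}P_i\rho P_j$ and $\mathcal L^2(\rho)=\sum_{i,j}\chi^{(2)}_{ij}P_i\rho P_j$ derived just above. These identities determine the $\chi$-coefficients uniquely, because $\{P_i\,\cdot\,P_j\}_{i,j}$ is a basis of superoperators. Throughout, $P_0=I$ is not among the traceless Paulis in $\HamiltStruct$ or $\DiagDissStruct$, and we consider $i\neq 0$.

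\textbf{First derivative.} I will rewrite $\mathcal L$ in the $P_i\rho P_j$ basis. The Hamiltonian part $-ih_k(P_k\rho-\rho P_k)$ contributes only to the entries $(k,0)$ and $(0,k)$. The term $a_{kj}P_k\rho P_j$ contributes exactly to entry $(k,j)$. The anticommutator term $-\tfrac12 a_{kj}(P_jP_k\rho+\rho P_jP_k)$ contributes only to entries of the form $(\ell,0)$ or $(0,\ell)$. Hence, for $i\ne0$, $\chi^{(1)}_{ii}=a_{ii}$ exactly. This covers both cases of the lemma: $\chi^{(1)}_{ii}=a_{ii}$, which vanishes when $a_{ii}=0$.

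\textbf{Second derivative when $a_{ii}=0$.} Since $a\succeq0$, the footnote to \cref{def:dissipator_structure} gives $a_{ij}=a_{ji}=0$ for all $j$. I will write $\mathcal L(\rho)=K\rho+\rho K^\dagger+\sum_{k,j}a_{kj}P_k\rho P_j$, with $K=-iH-\tfrac12\sum_{k,j}a_{kj}P_jP_k$. Then I expand $\mathcal L^2(\rho)$ and collect the coefficient of $P_i\rho P_i$. I will only track terms with left factor $\propto P_i$ and right factor $\propto P_i$:
\begin{enumerate}
\item From $K(\rho K^\dagger)+(K\rho)K^\dagger$ we get $2\,|K_i|^2$, where $K_i=\tfrac{1}{2^n}\tr(P_iK)$.
\item From $K\,(a_{kj}P_k\rho P_j)$ and the analogous right-multiplied term, we get left factor $KP_k$ and right factor $P_j$. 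Matching $P_j\propto P_i$ requires $a_{kj}$ with $j=i$, which vanishes. The symmetric term vanishes likewise.
\item From $\sum a_{kj}a_{k'j'}P_kP_{k'}\rho P_{j'}P_j$: these are products of dissipative jumps. Their $(i,i)$ coefficient is $\sum a_{kj}a_{k'j'}\omega\omega'$ over pairs with $P_kP_{k'}\propto P_i\propto P_{j'}P_j$. This equals the $(i,i)$ entry of the $\chi$-matrix of $\mathcal D_J\circ\mathcal D_J$, where $\mathcal D_J(\rho)=\sum a_{kj}P_k\rho P_j$. That map is a composition of completely positive maps, so its $\chi$-matrix is PSD and the diagonal entry is $\ge0$.
\item The terms $K^2\rho$ and $\rho (K^\dagger)^2$ contribute only to $(\ell,0)$ and $(0,\ell)$ entries.
\end{enumerate}
Hence $\chi^{(2)}_{ii}\ge 2|K_i|^2$.

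Finally I need $|K_i|\ge|h_i|$. Here $K_i=-ih_i-\tfrac12\sum_{k,j}a_{kj}\tfrac{1}{2^n}\tr(P_iP_jP_k)$. The matrix $G=\sum a_{kj}P_jP_k$ is Hermitian because $a$ is Hermitian, so its Pauli coefficient $g_i$ is real. Then $|K_i|^2=h_i^2+g_i^2/4\ge h_i^2$, which gives $\chi^{(2)}_{ii}\ge2h_i^2$.

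\textbf{Main obstacle.} The main obstacle is the careful bookkeeping of the phases $\omega$ in the Pauli products. I must make sure no cross term of the form $K\cdot\mathcal D_J$ can land on the $(i,i)$ entry; the vanishing of the whole row and column of $a$ at $i$ is what saves this. I also need to phrase positivity of the $\mathcal D_J^2$ contribution cleanly, either via Choi positivity or by writing $a=\sum_\alpha v_\alpha v_\alpha^\dagger$ so that $\mathcal D_J^2$ is a sum of Kraus-type terms $L_\alpha L_\beta\rho (L_\alpha L_\beta)^\dagger$.
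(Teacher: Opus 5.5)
Your proof is correct. The first-derivative part matches the paper, but your second-order argument takes a genuinely different, more structural route.

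The paper splits $\mathcal L^2(\rho)$ into three pieces: $-[H,[H,\rho]]$, the mixed terms $-i([H,\mathcal D(\rho)]+\mathcal D([H,\rho]))$, and $\mathcal D^2=\Phi^2-\tfrac12 J\Phi-\tfrac12\Phi J+\tfrac14 J^2$. It then treats each piece by explicit Pauli-phase bookkeeping:
\begin{itemize}
\item the mixed terms are shown to vanish or cancel in pairs;
\item the $J^2$ contribution is written as $2S_k^2$, with $S_k$ shown to be real by a reindexing argument;
\item the $\Phi^2$ contribution is written as $\omega^{(k)\dagger} b\,\omega^{(k)}$, with $b=a\odot(\pi_k^\top a\pi_k)\succeq 0$ by the Schur product theorem.
\end{itemize}

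Your decomposition $\mathcal L(\rho)=K\rho+\rho K^\dagger+\mathcal D_J(\rho)$ folds the Hamiltonian term, the $J^2$ term, and the surviving $H$--$G$ mixed terms into the single piece $2K\rho K^\dagger$. Its diagonal entry is $2|K_i|^2=2h_i^2+\tfrac12 g_i^2$, where your $g_i$ coincides with the paper's $S_i$. This makes the paper's pairwise cancellation automatic, since the cross term $\tfrac{i}{2}(h_ig_i-g_ih_i)$ vanishes because $h_i$ and $g_i$ are real. You also replace the Schur-product computation with complete positivity of $\mathcal D_J$ plus the Choi--$\chi$ correspondence.

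Your route is shorter and avoids almost all phase tracking. It also yields the exact identity $\chi^{(2)}_{ii}=2h_i^2+\tfrac12 g_i^2+\bigl[\chi(\mathcal D_J\circ\mathcal D_J)\bigr]_{ii}$ whenever $a_{ii}=0$. The paper's route is more hands-on and expresses each spurious second-order contribution directly in terms of the Kossakowski entries.

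One bookkeeping point: besides $K\mathcal D_J(\rho)$ and $\mathcal D_J(\rho)K^\dagger$, the expansion of $\mathcal L^2$ also contains $\mathcal D_J(K\rho)$ and $\mathcal D_J(\rho K^\dagger)$. List them explicitly. They vanish by the same argument: a bare $P_j$ (resp.\ $P_k$) sits on one side of $\rho$ and forces a factor $a_{ki}$ (resp.\ $a_{ij}$), which is zero.
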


\begin{proof}
    Expanding the channel,
\begin{equation}
    \label{eq:lindblad-to-channel-taylor-expansion}
    \mathcal{E}_t(\rho) 
    = \sum_{r=0}^{\infty}\frac{\mathcal{L}^r(\rho)\,t^r}{r!}
    = \sum_{i,j} \chi_{ij}(t)\, P_i \rho P_j,
\end{equation}
with $\mathcal{L}^0=\operatorname{id}$, the entries admit $\chi_{ij}(t)=\sum_{r=0}^{\infty}\chi_{ij}^{(r)} t^r/r!$. 

Throughout this proof we repeatedly match coefficients between the left- and right-hand sides of
\eqref{eq:lindblad-to-channel-taylor-expansion}: we match powers of $t$ using uniqueness of the Taylor
expansion, and within each order we match the coefficients of $P_i\rho P_j$ using that the superoperators
$\{\rho\mapsto P_i\rho P_j\}$ are linearly independent across $\{i,j\}$.

We now consider the term $P_k \rho P_k$ and determine the first two derivatives: $\chi_{kk}^{(1)}$ and $\chi_{kk}^{(2)}$. Clearly, $\chi_{kk}^{(0)}=\delta_{k0}$ since $\mathcal{E}_0=\mathrm{id}$. The first derivative $\chi_{kk}^{(1)}$ can be read off directly from the Lindbladian by matching the order-$t$ terms in \eqref{eq:lindblad-to-channel-taylor-expansion},
\begin{equation}
    \mathcal{L}(\rho) = -i[H,\rho] + \mathcal{D}(\rho)
    = -i\sum_i h_i \big(P_i \rho - \rho P_i\big)
    + \sum_{i,j} a_{ij}\left(P_i \rho P_j - \tfrac{1}{2}\{P_j P_i,\rho\}\right).
\end{equation}
Notice that the Hamiltonian terms $P_k \rho$ and $\rho P_k$ act on only one side of $\rho$, so they contribute only to off-diagonal $\chi$-entries $\chi_{0k}^{(1)}$ and $\chi_{k0}^{(1)}$ and do not affect $\chi_{kk}^{(1)}$. The dissipator, however, does generate diagonal contributions of the form $a_{kk}\,P_k \rho P_k$, which feed directly into $\chi_{kk}$. Hence, as stated in \cref{lemma:time_derivatives_pauli_rates}:
\begin{equation}
    \chi_{kk}^{(1)} = a_{kk} \quad (k \neq 0), 
    \qquad 
    \chi_{00}^{(1)} = -\sum_{i\ne0} a_{ii}.
\end{equation}
To ``see'' the Hamiltonian, we need to go to the second order in $t$:
\begin{equation}
    \mathcal{L}^2(\rho) = -\underbrace{\comm{H}{\comm{H}{\rho}}}_{\scalebox{0.7}{\boxed{1}}\;:\;\text{Hamiltonian}}
    \;-\; i\,\underbrace{\big(\comm{H}{\mathcal{D}(\rho)} + \mathcal{D}(\comm{H}{\rho})\big)}_{\scalebox{0.7}{\boxed{2}}\;:\;\text{Mixed terms}}
    \;+\; \underbrace{\mathcal{D}(\mathcal{D}(\rho))}_{\scalebox{0.7}{\boxed{3}}\;:\;\text{Dissipator}}.
    \label{eq:second_order_lindbladian}
\end{equation}
This decomposition separates the Hamiltonian, mixed, and dissipative contributions. In principle, all three can generate diagonal $\chi_{kk}(t)$ elements. For the Hamiltonian contribution, we compute
\begin{align}
    \boxed{1}\;:\; -\comm{H}{\comm{H}{\rho}}
    &= -\sum_i h_i \left(P_i \comm{H}{\rho} - \comm{H}{\rho} P_i\right) \\
    &= -\sum_{i,i'} h_i h_{i'}\left(P_i P_{i'} \rho - P_i \rho P_{i'} - P_{i'} \rho P_i + \rho P_{i'} P_i\right).
\end{align}
Among these, diagonal terms of the form $2h_i^2\, P_i \rho P_i$ appear explicitly. Hence, at order $t^2$, the Hamiltonian contributes
\begin{equation}
    \chi_{kk}^{(2)} \leftarrow 2h_k^2 \quad (k \neq 0), 
    \qquad 
    \chi_{00}^{(2)} \leftarrow -2\sum_i h_i^2.
\end{equation}
Thus the Hamiltonian structure, which is invisible at first order, becomes, at least in principle, detectable from the second-order expansion of the  error rates. Unfortunately, $\chi_{kk}^{(2)}$ are ``obscured'' by the mixed terms and second-order dissipator contributions. 
The key observation is that if $a_{kk} = 0$, i.e.\ $P_k$ is not part of the dissipator structure, then any spurious contributions either vanish or only increase the magnitude of $\chi_{kk}^{(2)}$. The crucial point is that whenever $a_{kk} = 0$, we also have $a_{ik} = a_{ki} = 0 \: \forall i$, since $a$ is positive semi-definite. In what follows, we only consider the $\chi_{kk}^{(2)}$ terms corresponding to $a_{kk} = 0$.

Let us first look at the mixed terms:
\begin{align}
    \boxed{2.1} \quad -i\,\comm{H}{\mathcal{D}(\rho)} &= -i \sum_{i} h_i \Big(P_i \mathcal{D}(\rho) - \mathcal{D}(\rho) P_i\Big) \\
    &= -i \sum_{i,m,n} h_i\,a_{mn}
    \Big(
     P_i P_m \rho P_n
   - P_m \rho P_n P_i
   - \tfrac12 P_i P_n P_m \rho \\
   &\hspace{2.5cm}-\tfrac12 P_i \rho P_n P_m
   + \tfrac12 P_n P_m \rho P_i
   + \tfrac12 \rho P_n P_m P_i
    \Big) \\[0.5cm]
    \boxed{2.2} \quad -i\,\mathcal{D}\left(\comm{H}{\rho}\right)
    &= -i \sum_{m,n} a_{mn} \left( P_m \comm{H}{\rho} P_n
       - \tfrac12 \acomm{P_n P_m}{\comm{H}{\rho}} \right) \\
    &= -i \sum_{i,m,n} h_i\,a_{mn}
    \Big(
      P_m P_i \rho P_n
      - P_m \rho P_i P_n
      - \tfrac12 P_n P_m P_i \rho \\
   &\hspace{2.5cm}+ \tfrac12 P_n P_m \rho P_i
      - \tfrac12 P_i \rho P_n P_m
      + \tfrac12 \rho P_i P_n P_m
    \Big)
\end{align}
We can now simplify. Terms where $P_n$ or $P_m$ act alone on one side of $\rho$ are irrelevant, since their coefficient must be $h_i a_{kn}$ (or $h_i a_{mk}$), which vanishes by the assumption $a_{mk} = a_{kn} = 0 \: \forall m,n$. Moreover, pairs of terms such as $\tfrac12 P_n P_m \rho P_i - \tfrac12 P_i \rho P_n P_m$ cancel whenever $P_i = P_k$ and $P_n P_m \propto P_k$. Altogether, this shows that when $a_{kk} = 0$, the mixed terms $\scalebox{0.9}{\boxed{2.1}}$ and $\scalebox{0.9}{\boxed{2.2}}$ do not contribute to $\chi_{kk}^{(2)}$.

Now consider the second-order dissipator terms $\scalebox{0.9}{\boxed{3}}$. It will be convenient to write the dissipator as 
\begin{equation}
    \mathcal{D}(\rho) = \Phi(\rho) - \tfrac12 J(\rho) \qq{with} 
    \Phi(\rho) = \sum_{i,j} a_{ij} P_i \rho P_j \qq{and} 
    J(\rho) = \sum_{i,j} a_{ij} \acomm{P_j P_i}{\rho},
\end{equation}
so that $\mathcal{D}^2(\rho) = \Phi^2(\rho) - \tfrac12 J(\Phi(\rho)) - \tfrac12 \Phi(J(\rho)) + \tfrac14 J^2(\rho)$. Again, we see that terms like $\tfrac12 J(\Phi(\rho))$ and $\tfrac12 \Phi(J(\rho))$ are irrelevant because they produce contributions where Pauli operators act only on one side of $\rho$. For instance,
\begin{equation}
    J(\Phi(\rho)) \;=\; \sum_{i,j,m,n} a_{ij}a_{mn}\,\big(P_m P_n P_i \rho P_j + P_i \rho P_j P_m P_n\big),
\end{equation}
and any diagonal contribution would necessarily come with a coefficient such as $a_{ik}$ or $a_{kj}$. But $a_{ik}=a_{kj}=0$ for all $i,j$, so these terms vanish. Therefore, $J(\Phi(\rho))$ never yields a $P_k \rho P_k$ term and can be ignored when computing $\chi_{kk}^{(2)}$.

Now let's consider the $J^2$ term:
\begin{equation}
    J^2(\rho) \;=\; \Big\{\sum_{i,j} a_{ij} P_j P_i,\ \Big\{\sum_{m,n} a_{mn} P_{n} P_{m},\ \rho\Big\}\Big\}.
\end{equation}
The relevant contributions to $\chi_{kk}^{(2)}$ are those where $\rho$ is sandwiched between $P_j P_i$ and $P_{n} P_{m}$, i.e.
\begin{equation}
    \sum_{i,j,m,n} a_{ij}a_{mn}\, \big(P_j P_i \,\rho\, P_{n} P_{m} + P_{n} P_{m} \,\rho\, P_j P_i\big).
\end{equation}
Now, recall that products of Paulis close up to a phase:
\begin{equation}
    P_j P_i = \omega_{ji} P_{j\oplus i}, \qquad \omega_{ji}\in\{\pm 1,\pm i\}, \qquad \omega_{ji}=\omega_{ij}^*,
\end{equation}
where $j\oplus i$ denotes the Pauli label of the product modulo phases (see \cref{sec:appendix_notation_prelim} for details on these phase factors). 
For a diagonal term $P_k \rho P_k$, we need simultaneously
\begin{equation}
    P_j P_i = \omega_{ji} P_k, \qquad P_{n} P_{m} = \omega_{nm} P_k,
\end{equation}
which enforces $j=i\oplus k$ and $n = m\oplus k$. Hence the relevant coefficient is
\begin{equation}
    2\cdot\sum_{i,m} a_{i,\,i\oplus k}\,a_{m,\,m\oplus k}\, \omega_{i\oplus k,i}\,\omega_{m\oplus k,m} \;\; P_k \rho P_k = 2\cdot\Big(\sum_i a_{i,\,i\oplus k}\,\omega_{i\oplus k,i}\Big)^2\,P_k \rho P_k
    \label{eq:J_squared_term}
\end{equation}
The quantity $S_k=\sum_i a_{i,i\oplus k}\,\omega_{i\oplus k,i}$ is real because $S_k^*=\sum_i a_{i,i\oplus k}^*\,\omega_{i\oplus k,i}^* = \sum_i a_{i\oplus k,i}\,\omega_{i,i\oplus k}$. We can permute the sum index $i \oplus k \mapsto i'$, which leaves the sum invariant since it runs over all $i$: $S_k^* = \sum_{i'} a_{i',i'\oplus k}\,\omega_{i'\oplus k,i'} = S_k$. Consequently, \cref{eq:J_squared_term} is nonnegative, since it is the square of a real quantity. As a result, the $J^2$ contribution to $\chi_{kk}^{(2)}$ is always nonnegative.

Finally, consider the $\Phi^2(\rho)$ term:
\begin{equation}
    \Phi^2(\rho) \;=\; \sum_{i,j,m,n} a_{ij}a_{mn} P_i P_m \rho P_n P_j
\end{equation}
For a diagonal term $P_k \rho P_k$, we need simultaneously
\begin{equation}
    P_i P_m = \omega_{im} P_k, \qquad P_{n} P_{j} = \omega_{nj} P_k,
\end{equation}
which enforces $m=i\oplus k$ and $n = j\oplus k$. Hence, the relevant coefficient is
\begin{equation}
    \sum_{i,j} a_{i,j}\,a_{i\oplus k,\,j\oplus k}\, \omega_{i,i\oplus k}\,\omega_{j\oplus k,j} \;\; P_k \rho P_k
    \label{eq:Phi_squared_term}
\end{equation}
Define the matrix $b=[b_{ij}]$ by
\begin{equation}
    b_{ij} \;:=\; a_{i,j}\,a_{i\oplus k,\,j\oplus k} \;=\; \big(a\odot (\pi_k^\top a \pi_k)\big)_{ij},
\end{equation}
where $\pi$ is the permutation matrix that maps $i \mapsto \pi_k(i) = i\oplus k$ and $\odot$ is a Hadamard product of matrices. Then, the permuted matrix $\pi_k^\top a \pi_k$ is positive semidefinite, and by the Schur product theorem, which states that the Hadamard product of two positive definite matrices is also a positive definite matrix, we have that $b\succeq 0$. Now define the phase vector $\omega^{(k)}\in\mathbb{C}^{4^n}$ with components $\omega^{(k)}_j:=\omega_{j\oplus k,j}$; note $\omega^{(k)*}_j=\omega_{j,j \oplus k}$. Hence,
\begin{equation}
    \sum_{i,j} a_{ij}\,a_{i\oplus k,\,j\oplus k}\,\omega_{i,i\oplus k}\,\omega_{j\oplus k,j}
    \;=\; \sum_{i,j} \omega^{(k)*}_i\,b_{ij}\, \omega^{(k)}_j\;=\; \omega^{(k)\dagger}b\,\omega^{(k)} \ge 0,
\end{equation}

because $b\succeq 0$. Therefore, the $\Phi^2$ contribution to the $P_k\rho P_k$ coefficient is nonnegative. 

As a result, we have that $\chi_{kk}^{(2)} \ge 2h_k^2$ whenever $a_{kk}=0$, thus proving \cref{lemma:time_derivatives_pauli_rates}.
\end{proof}

\subsection{Estimating Pauli error rates}
\label{sec:pauli_error_rates}

\cref{lemma:time_derivatives_pauli_rates} establishes that the operator content of the Lindbladian is encoded in the time derivatives of $\{\chi_{ii}(t)\}$. In this section, we therefore focus on estimating the Pauli error rates $\{\chi_{ii}\}$ as a precursor to estimating the derivatives $\{\chi_{ii}^{(1)}(t=0), \chi_{ii}^{(2)}(t=0)\}$. 

The Pauli error rates of any quantum channel form a valid probability distribution: $\chi_{ii}\ge 0$ and $\sum_i \chi_{ii} = 1$. Our goal is to learn the distribution ${\chi_{ii}}$ to accuracy $\varepsilon$ in $\ell_\infty$ distance. In the entanglement-assisted setting, this task is straightforward: by measuring the Choi state of $\mathcal{E}$ in the Bell basis, one obtains i.i.d.\ samples from ${\chi_{ii}}$ directly, and standard distribution-learning results imply a sample complexity of $\Theta(\varepsilon^{-2})$, independent of system size.

By contrast, the entanglement-free setting is substantially more challenging. Without ancillas, one cannot directly sample from ${\chi_{ii}}$. (Indeed, if system-only operations allowed sampling $k$ with probability $\chi_{ii}$, then Pauli eigenvalues could be estimated with $\mathcal{O}(1/\varepsilon^2)$ measurements independent of $n$, contradicting known entanglement-free lower bounds of $\Omega(2^n/\varepsilon^2)$~\citep{chen2024tight,chen2022quantum}.)
Nevertheless, \citet{flammia2021pauli} showed that an ancilla-free approach is still possible by reducing Pauli error learning to the classical problem of population recovery.

\begin{theorem}[Learning Pauli error rates via population recovery~\cite{flammia2021pauli}]
\label{theorem:population_recovery}
Let $\mathcal{E}$ be an $n$-qubit channel with Pauli error rates $\{\chi_{ii}\}$ in the Pauli basis.
For any $0<\varepsilon,\delta<1$, there exists an ancilla-free procedure that uses
\begin{equation}
\label{eq:poprec_samples}
m \;=\; \mathcal{O}\!\left(\varepsilon^{-2}\log\frac{n}{\varepsilon\delta}\right)
\end{equation}
applications of $\mathcal{E}$ and prepares/measures only product single-qubit Pauli eigenstates.
With probability at least $1-\delta$, it outputs a hypothesis $\widehat\chi$ such that
\begin{equation}
\label{eq:poprec_accuracy}
\max_i |\widehat\chi_{ii}-\chi_{ii}|
\;\le\; \varepsilon.
\end{equation}

The hypothesis $\widehat\chi$ is supported on a set $\widehat S\subseteq\{0,\dots,4^n-1\}$
of size $|\widehat S|\;\le\;\frac{4}{\varepsilon}$ (i.e., $\widehat\chi_{ii}=0$ for $i\notin \widehat S$)

The classical post-processing runs in time $\mathcal{O}\!\left(mn/\varepsilon\right) = \widetilde{O}(n/\varepsilon^3)$, where $\widetilde{O}(\cdot)$ omits the $\operatorname{polylog}()\,$ factors in $(n,1/\varepsilon,1/\delta)$. 
\end{theorem}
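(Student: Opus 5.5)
This is the imported population-recovery result of \citet{flammia2021pauli}. The plan is to reconstruct their reduction rather than treat it as a black box.

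\textbf{Step 1: reduce to Pauli twirling.} Write $\mathcal{E}(\rho)=\sum_{i,j}\chi_{ij}P_i\rho P_j$. The statistics we need depend only on the diagonal $\{\chi_{ii}\}$. Conjugating by a uniformly random Pauli before and after the channel turns $\mathcal{E}$ into the Pauli channel $\rho\mapsto\sum_i\chi_{ii}P_i\rho P_i$. Since our inputs are product Pauli eigenstates and our measurements are single-qubit Pauli bases, the twirl can be absorbed classically. Concretely, we relabel eigenstate signs and outcome bits, and off-diagonal terms average out in expectation. So we may treat $\mathcal{E}$ as this Pauli channel.

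\textbf{Step 2: one round of experiment.} Pick a uniformly random basis string $b\in\{X,Y,Z\}^n$ and a random sign pattern. Prepare the corresponding product eigenstate, apply the channel, and measure each qubit in the same basis $b_q$. Qubit $q$ flips exactly when the sampled Pauli $P_i$ anticommutes with $b_q$ on that site. Hence each round yields a sample of the following form: draw $i\sim\chi$, then reveal, for each qubit $q$ independently, one random "coordinate view" of the local Pauli $i_q\in\{I,X,Y,Z\}$, namely whether it anticommutes with a random basis. This is a noisy, lossy version of a sample from the population $\chi$ over strings in $\{I,X,Y,Z\}^n$.

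\textbf{Step 3: invoke population recovery.} Recovering the heavy elements of $\chi$ to $\ell_\infty$ error $\varepsilon$ from such samples is an instance of (lossy/noisy) population recovery over a 4-letter alphabet. Each letter $I,X,Y,Z$ is uniquely identified by its anticommutation pattern with the three bases, and each is observed with constant probability. I would run the standard prefix-growing recovery: extend candidate prefixes qubit by qubit, estimate their marginal weights from the samples, and prune every prefix with estimated weight below $\varepsilon/2$. At most $O(1/\varepsilon)$ prefixes survive at each level, since the weights sum to one. This pruning is exactly what gives $|\widehat S|\le 4/\varepsilon$.

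\textbf{Step 4: accuracy and sample count.} Each marginal is estimated with Hoeffding bounds on $O(\varepsilon^{-2}\log(1/\delta'))$ samples. A union bound over the $n$ levels and $O(1/\varepsilon)$ candidates per level sets $\delta'=\delta\varepsilon/n$, which gives $m=O(\varepsilon^{-2}\log\frac{n}{\varepsilon\delta})$. Every level reuses the same samples, with per-sample cost $O(n)$ against each of $O(1/\varepsilon)$ candidates, for runtime $O(mn/\varepsilon)$.

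\textbf{Main obstacle.} The main obstacle is Step 3. Inverting the per-site noise (the anticommutation views) without an exponential blowup needs the Flammia–O'Donnell analysis: the local "noise matrix" is invertible with constant condition number, so marginal estimates cost only constant-factor overhead per site. Since this is exactly their theorem, I would import that bound rather than rederive it.
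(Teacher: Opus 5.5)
Your outline matches the Flammia--O'Donnell algorithm. Its ingredients are random-sign product eigenstates (which absorb the Pauli twirl exactly), same-basis measurement so that qubit $q$ flips iff the sampled Pauli anticommutes with $b_q$, prefix growing with pruning, and a union bound over $O(n/\varepsilon)$ candidates. The paper itself gives nothing beyond the citation and the twirl-then-population-recovery remark.

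\textbf{Gap.} The problem is the mechanism you name as the crux. ``Constant condition number, hence constant-factor overhead per site'' does not give the stated bound. The natural unbiased estimator of a length-$j$ prefix weight is a product of $j$ per-site inverse estimators. A per-site magnitude $C>1$ therefore gives a range of order $C^{j}$. The Hoeffding step in your Step 4 would then cost $C^{2j}\varepsilon^{-2}$ samples, which is exponential once $j\approx n$. As written, Steps 3 and 4 do not compose, and ``importing that bound'' imports something that is not the reason the theorem holds.

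\textbf{The missing fact.} The per-site unbiased estimator of $\mathbf{1}[e_q=s_q]$ can be chosen with sup-norm exactly $1$. Set $f_s(b,o)=1$ if the observed flip bit $o$ equals the anticommutation bit of $s$ with the basis $b$, and $f_s(b,o)=-\tfrac12$ otherwise. The anticommutation patterns of any two distinct letters in $\{I,X,Y,Z\}$ agree on exactly one of the three bases. So with $b$ uniform you get $\mathbb{E}_b\, f_s(b,o(e,b))=\mathbf{1}[e=s]$.

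Hence $\prod_{q\le j} f_{s_q}(b_q,o_q)\in[-1,1]$ is an unbiased estimator of the prefix marginal whose range does not depend on $j$. Hoeffding at accuracy $\varepsilon/4$, with pruning threshold $\varepsilon/2$, then gives both $m=O(\varepsilon^{-2}\log\frac{n}{\varepsilon\delta})$ and $|\widehat S|\le 4/\varepsilon$.

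This property does not follow from invertibility, so it has to be stated and checked. For example, with independent heralded measurement failures at rate $\mu$ per qubit, a sup-norm-one unbiased estimator exists iff $\mu\le 1/4$. That is exactly the failure tolerance quoted in the cited work.

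\textbf{Runtime.} The $O(mn/\varepsilon)$ runtime also requires you to cache the running product for each (sample, surviving prefix) pair, so that each level costs $O(m/\varepsilon)$. Recomputing the length-$j$ products at every level gives $O(mn^2/\varepsilon)$ instead.
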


\begin{remark}[Operational form]\label{rem:poprec_operational}
One may view the procedure in \cref{theorem:population_recovery} as first applying Pauli twirling so that the effective channel is a Pauli channel
(i.e. only the diagonal rates $\{\chi_{ii}\}$ of the channel survive), then collecting $mn$ single-qubit Pauli-basis measurement outcomes from product inputs and performing a population-recovery
post-processing step; see~\cite{flammia2021pauli} for details.
\end{remark}

\subsection{Estimating time-derivatives of Pauli error rates}
\label{sec:pauli_error_derivative_estimation}

Having established how to estimate the Pauli error rates $\{\chi_{ii}(t)\}$, we now turn to estimating their first and second time-derivatives at $t=0$, which are required to apply \cref{lemma:time_derivatives_pauli_rates} to structure learning.  
Our approach is based on Chebyshev interpolation; see \cref{sec:chebyshev_interpolation}. Specifically, we evaluate $\widehat{\chi}_{ii}(t)$ at a sequence of nodes $\{t_m\}_{m=0}^r \subset [0,\tau_{\max}]$, fit a degree-$r$ Chebyshev interpolant, and use its closed-form derivatives at $t=0$ as estimators for $\chi_{ii}^{(1)}(0)$ and $\chi_{ii}^{(2)}(0)$. In \cref{sec:chebyshev_interpolation} we show that these estimators achieve provably small bias and variance when the underlying function is sufficiently smooth (see \cref{sec:deriv-bounds-pauli-error-rates}).

The following lemmas formalize the sample complexities associated with estimating the derivatives of the Pauli error rates.

\begin{lemma}[First derivative estimation of Pauli error rates]
    \label{lemma:first_deriv_estimation_pauli_error_rates}
    Let $\mathcal{E}_t = e^{\mathcal{L}t}$ be an $n$-qubit channel generated by a Lindbladian $\mathcal{L}$ with at most $M$ nonzero Pauli coefficients. 
    Choose $\tau_{\max} = \Theta(1/M)$ and $r = \Theta(\log(M/\epsilon_1))$, and obtain estimates $\{\widehat{\chi}_{ii}(t_m)\}$ of the Pauli error rates of $\mathcal{E}_t$ at the $r{+}1$ Gauss–Chebyshev nodes $\{t_m\}_{m=0}^r \subset [0,\tau_{\max}]$ using a total of
    \begin{equation}
        m = \mathcal{O}\left(
        \frac{M^2}{\epsilon_1^2}\;
        \mathrm{polylog}\Big(n, \tfrac{M}{\epsilon_1}, \tfrac{1}{\delta}\Big)
        \right)
    \end{equation}
    applications of the channel $e^{\mathcal{L}t}$. 
    Then, from these estimates, one can construct a Chebyshev-interpolated estimator $\{\widehat{\chi}_{ii}^{(1)}\}$ for the first time-derivatives of all Pauli error rates such that, with probability at least $1-\delta$,
    \begin{equation}
        \big|\widehat{\chi}_{ii}^{(1)} - \chi_{ii}^{(1)}\big| \le \epsilon_1,\qquad \forall\, i.
    \end{equation}
\end{lemma}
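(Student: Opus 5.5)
The plan is to combine three earlier ingredients: the uniform derivative bound of \cref{cor:chi_derivative_bound_M_sparse}, the parameter recipe of \cref{cor:first_chebyshev_derivative_parameters}, and the population-recovery guarantee of \cref{theorem:population_recovery}, finishing with a union bound over the $r+1$ nodes.

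First, for every $i$ the function $f_i(t):=\chi_{ii}(t)$ satisfies $\|f_i^{(k)}\|_\infty\le (2M)^k$ for all $k\ge 0$ by \cref{cor:chi_derivative_bound_M_sparse}. This is exactly the hypothesis of \cref{cor:first_chebyshev_derivative_parameters} with $B=1$ and $\Lambda=2M$. That corollary then prescribes
\[
\tau_{\max}=\tfrac{1}{4M},\qquad r=\bigl\lceil\log(36M/\epsilon_1)\bigr\rceil=\Theta(\log(M/\epsilon_1)),\qquad \varepsilon_s=\tfrac{\epsilon_1}{20Mr^3}.
\]
With these choices, if every node value satisfies $|\widehat\chi_{ii}(t_m)-\chi_{ii}(t_m)|\le\varepsilon_s$, the estimator $\widehat\chi^{(1)}_{ii}:=\widehat p_i^{\,\prime}(0)$ is $\epsilon_1$-accurate. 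Smoothness of $\chi_{ii}$ is not an issue, since $\chi_{ii}(t)$ is analytic in $t$.

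Second, at each node $t_m$ I run population recovery (\cref{theorem:population_recovery}) on $\mathcal E_{t_m}$ with accuracy $\varepsilon_s$ and failure probability $\delta/(r+1)$. This costs $\mathcal O(\varepsilon_s^{-2}\log\frac{n(r+1)}{\varepsilon_s\delta})$ queries and yields a uniform-in-$i$ guarantee $\max_i|\widehat\chi_{ii}(t_m)-\chi_{ii}(t_m)|\le\varepsilon_s$. A union bound over the $r+1$ nodes gives simultaneous success with probability at least $1-\delta$.

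The interpolation must hold uniformly over all $4^n$ indices, which I handle by noting that the estimator is linear in the node data via the weights $\alpha^{(1)}_m$. It is therefore applied to every $i$ at once; indices outside every returned support simply receive $\widehat\chi_{ii}(t_m)=0$, which still respects the $\varepsilon_s$ bound. So correctness holds for all $i$ simultaneously without any extra union bound over $i$. Classically, one only needs to compute $\widehat\chi^{(1)}_{ii}$ for the at most $(r+1)\cdot4/\varepsilon_s$ indices appearing in some returned support; for all other $i$ the estimator is identically $0$.

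The total query count is $(r+1)\,\mathcal O(\varepsilon_s^{-2}\log\frac{nr}{\varepsilon_s\delta})=\mathcal O\bigl(M^2r^7\epsilon_1^{-2}\log\frac{nMr}{\epsilon_1\delta}\bigr)$. Since $r=\Theta(\log(M/\epsilon_1))$, this is $\mathcal O\bigl(\tfrac{M^2}{\epsilon_1^2}\mathrm{polylog}(n,M/\epsilon_1,1/\delta)\bigr)$ as claimed.

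The main point needing care is bookkeeping rather than a real obstacle. One must confirm that the uniform $\ell_\infty$ guarantee of population recovery, including the implicit zero estimates, feeds directly into the noise term of \cref{theorem:chebyshev_derivative_total_error}. One must also check that the $r^3$ factor in $\varepsilon_s$ only contributes polylogarithmic overhead.
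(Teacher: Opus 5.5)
Your proposal is correct and matches the paper's proof step for step. The $(2M)^k$ bound from \cref{cor:chi_derivative_bound_M_sparse} feeds \cref{cor:first_chebyshev_derivative_parameters} with $B=1$, $\Lambda=2M$, which gives the same $\tau_{\max}=1/(4M)$, $r$, and $\varepsilon_s$; population recovery is then run at each node with failure probability $\delta/(r+1)$, a union bound over nodes finishes the argument, and your remark that the uniform $\ell_\infty$ guarantee (including the implicit zero estimates) removes any need for a union bound over the $4^n$ indices is a useful clarification the paper leaves implicit.
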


\begin{proof}
    
By the Pauli-rate derivative bound \cref{cor:chi_derivative_bound_M_sparse}, the function
$f(t)=\chi_{ii}(t)$ satisfies
\[
\|f^{(k)}\|_\infty \le (2M)^k \qquad (k\ge 1),
\]
so \cref{cor:first_chebyshev_derivative_parameters} applies with $B=1$ and $\Lambda=2M$.
With the resulting choice of parameters,
\begin{equation}
\label{eq:parameters_first_derivative}
\tau_{\max}=\frac{1}{4M},
\qquad
r=\Big\lceil \log\Big(\frac{36M}{\epsilon_1}\Big)\Big\rceil,
\qquad
\varepsilon_s=\frac{\epsilon_1}{20Mr^{3}},
\end{equation}
the Chebyshev estimator $\widehat{\chi}^{(1)}_{ii}$ satisfies
$|\widehat{\chi}^{(1)}_{ii}-\chi^{(1)}_{ii}|\le \epsilon_1$ provided
$|\widehat\chi_{ii}(t_m)-\chi_{ii}(t_m)|\le \varepsilon_s$ uniformly over all nodes $t_m$.

Such estimates are guaranteed by the population-recovery procedure of \cref{theorem:population_recovery}, except with a failure probability of at most $\delta_s$ per node. Setting the per-node failure probability to $\delta_s = \delta/(r+1)$ and applying the union bound ensures the total failure probability across all nodes is at most $\delta$. Substituting the parameters from \eqref{eq:parameters_first_derivative}, we get that each time node requires
\begin{equation}
    \mathcal{O}\left(\varepsilon_s^{-2}\,\log\frac{n}{\varepsilon_s \delta_s}\right)
\end{equation}
applications of $e^{\mathcal{L}t}$, hence the total number of channel applications across all $r{+}1$ nodes is:
\begin{equation}
    m = \mathcal{O}\left(
    \frac{M^2}{\epsilon_1^2}\;
    \mathrm{polylog}\Big(n,\tfrac{M}{\epsilon_1}, \tfrac{1}{\delta}\Big)
    \right)
\end{equation}
concluding the proof of the lemma. 
\end{proof}

\begin{lemma}[Second derivative estimation of Pauli error rates]
    \label{lemma:second_deriv_estimation_pauli_error_rates}
    Let $\mathcal{E}_t = e^{\mathcal{L}t}$ be an $n$-qubit channel generated by a Lindbladian $\mathcal{L}$ with at most $M$ nonzero Pauli coefficients. 
    Choose $\tau_{\max} = \Theta(1/M)$ and $r = \Theta(\log(M^2/\epsilon_2))$, and obtain estimates $\{\widehat{\chi}_{ii}(t_m)\}$ of the Pauli error rates of $\mathcal{E}_t$ at the $r{+}1$ Gauss–Chebyshev nodes $\{t_m\}_{m=0}^r \subset [0,\tau_{\max}]$ using a total of
    \begin{equation}
        m = \mathcal{O}\left(
        \frac{M^4}{\epsilon_2^2}\;
        \mathrm{polylog}\Big(n, \tfrac{M^2}{\epsilon_2}, \tfrac{1}{\delta}\Big)
        \right)
    \end{equation}
    applications of the channel $e^{\mathcal{L}t}$. 
    Then, from these estimates, one can construct a Chebyshev-interpolated estimator $\{\widehat{\chi}_{ii}^{(2)}\}$ for the second time-derivative of all Pauli error rates such that, with probability at least $1-\delta$:
    \begin{equation}
        \big|\widehat{\chi}_{ii}^{(2)} - \chi_{ii}^{(2)}\big| \le \epsilon_2,
        \qquad \forall\, i.
    \end{equation}
\end{lemma}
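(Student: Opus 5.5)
The proof will mirror that of \cref{lemma:first_deriv_estimation_pauli_error_rates}, with \cref{cor:second_chebyshev_derivative_parameters} replacing the first-derivative corollary. By \cref{cor:chi_derivative_bound_M_sparse}, each Pauli error rate $f(t)=\chi_{ii}(t)$ satisfies $\|f^{(k)}\|_\infty\le (2M)^k$ for all $k\ge 1$. So the hypothesis of \cref{cor:second_chebyshev_derivative_parameters} holds with $B=1$ and $\Lambda=2M$. Substituting these values, I would fix
\begin{equation*}
\tau_{\max}=\frac{1}{4M},\qquad r=\Big\lceil \log\Big(\frac{1280\,M^{2}}{\epsilon_2}\Big)\Big\rceil,\qquad \varepsilon_s=\frac{\epsilon_2}{32\,M^{2}\,r^{5}}.
\end{equation*}
Then the Chebyshev estimator $\widehat{\chi}^{(2)}_{ii}$ from \cref{eq:chebyshev_derivative_estimator_second} is $\epsilon_2$-accurate whenever every nodal estimate satisfies $|\widehat\chi_{ii}(t_m)-\chi_{ii}(t_m)|\le\varepsilon_s$. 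These choices give $\tau_{\max}=\Theta(1/M)$ and $r=\Theta(\log(M^2/\epsilon_2))$, as stated.

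Next I will obtain the nodal estimates. At each Gauss--Chebyshev node $t_m$, I apply the population-recovery procedure of \cref{theorem:population_recovery} to $\mathcal{E}_{t_m}$ with accuracy $\varepsilon_s$ and failure probability $\delta_s=\delta/(r+1)$. A union bound over the $r+1$ nodes then gives, with probability at least $1-\delta$, the uniform guarantee $|\widehat\chi_{ii}(t_m)-\chi_{ii}(t_m)|\le\varepsilon_s$ simultaneously for all $i$ and all $m$. The accuracy guarantee of \cref{theorem:population_recovery} is in $\ell_\infty$ over all $4^n$ indices, including the implicit zeros outside $\widehat S$. The Chebyshev estimator is a fixed linear combination of the nodal values, so the bound of \cref{theorem:chebyshev_derivative_total_error} applies to every $i$ at once. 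This holds even for indices not returned at some nodes, for which we set $\widehat\chi_{ii}(t_m)=0$.

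Finally, I will count queries. Each node costs $\mathcal{O}(\varepsilon_s^{-2}\log\frac{n}{\varepsilon_s\delta_s})$ channel applications. Since $\varepsilon_s^{-2}=\mathcal{O}(M^4 r^{10}/\epsilon_2^2)$ and $r$ is logarithmic, multiplying by the $r+1$ nodes gives $m=\mathcal{O}\big(\frac{M^4}{\epsilon_2^2}\,\mathrm{polylog}(n,\frac{M^2}{\epsilon_2},\frac1\delta)\big)$. The only potential subtlety is this uniform-over-all-$i$ point: the structure learning must not require a separate union bound over exponentially many Pauli indices. The $\ell_\infty$ guarantee of population recovery removes that need. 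Apart from this, the argument is a direct parameter substitution, and I expect no real obstacle.
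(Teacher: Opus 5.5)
Your proposal is correct and follows the paper's proof essentially verbatim. Like the paper, you use \cref{cor:chi_derivative_bound_M_sparse} to apply \cref{cor:second_chebyshev_derivative_parameters} with $B=1$ and $\Lambda=2M$, giving the identical schedule $\tau_{\max}=1/(4M)$, $r=\lceil\log(1280M^2/\epsilon_2)\rceil$, $\varepsilon_s=\epsilon_2/(32M^2r^5)$, then run population recovery at each node with $\delta_s=\delta/(r+1)$ and take a union bound. Your explicit remark that the $\ell_\infty$ guarantee of \cref{theorem:population_recovery} (including its implicit zeros) covers all $4^n$ indices at once is a point the paper relies on but leaves implicit.
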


\begin{proof}
    
By the Pauli-rate derivative bound \cref{cor:chi_derivative_bound_M_sparse}, the function
$f(t)=\chi_{ii}(t)$ satisfies
\[
\|f^{(k)}\|_\infty \le (2M)^k \qquad (k\ge 1),
\]
so \cref{cor:second_chebyshev_derivative_parameters} applies with $B=1$ and $\Lambda=2M$.
With the resulting choice of parameters,
\begin{equation}
\label{eq:parameters_second_derivative}
\tau_{\max}=\frac{1}{4M},
\qquad
r=\Big\lceil \log\Big(\frac{1280M^2}{\epsilon_2}\Big)\Big\rceil,
\qquad
\varepsilon_s=\frac{\epsilon_2}{32M^2r^{5}},
\end{equation}
the Chebyshev estimator $\widehat{\chi}^{(2)}_{ii}$ satisfies
$|\widehat{\chi}^{(2)}_{ii}-\chi^{(2)}_{ii}|\le \epsilon_2$ provided
$|\widehat\chi_{ii}(t_m)-\chi_{ii}(t_m)|\le \varepsilon_s$ uniformly over all nodes $t_m$.

Following the proof of \cref{lemma:first_deriv_estimation_pauli_error_rates}, set the per-node failure probability to $\delta_s = \delta/(r+1)$. Substituting the parameters from \eqref{eq:parameters_second_derivative} gives the total number of channel evaluations across all nodes:
\begin{equation}
    m = \mathcal{O}\left(
    \frac{M^4}{\epsilon_2^2}\;
    \mathrm{polylog}\Big(n,\tfrac{M^2}{\epsilon_2},\tfrac{1}{\delta}\Big)
    \right).
\end{equation}
completing the proof of \cref{lemma:second_deriv_estimation_pauli_error_rates}.
\end{proof}

\begin{remark}[First-derivative accuracy under second-derivative parameters]
\label{rem:pauli-rates-first-deriv-estimation-from-second-deriv-parameters}
Assume $M\ge 1$ and use the Chebyshev parameters $(\tau_{\max},r,\varepsilon_s)$ from
\cref{eq:parameters_second_derivative}. Then these parameters also meet the sufficient conditions of
\cref{cor:first_chebyshev_derivative_parameters} with target accuracy $\varepsilon=\epsilon_2/2$, and hence
\begin{equation*}
\bigl|\widehat{\chi}^{(1)}_{ii}-\chi^{(1)}_{ii}\bigr|\le \frac{\epsilon_2}{2}
\qquad \forall\, i.
\end{equation*}
\end{remark}

\begin{proof}
With $\tau_{\max}=1/(4M)=1/(2\Lambda)$ fixed, it suffices to check the $r$- and $\varepsilon_s$-conditions for
$\varepsilon=\epsilon_2/2$:
\[
r \ge \log\!\Big(\frac{72M}{\epsilon_2}\Big),\qquad
\varepsilon_s \le \frac{\epsilon_2}{40Mr^3}.
\]
The second-derivative choice has $r=\lceil \log(\tfrac{1280M^2}{\epsilon_2})\rceil\ge \log(\tfrac{72M}{\epsilon_2})$ for $M\ge 1$,
and $\varepsilon_s=\tfrac{\epsilon_2}{32M^2r^5}\le \tfrac{\epsilon_2}{40Mr^3}$ since $M\ge 1$ and $r\ge 2$.
Therefore \cref{cor:first_chebyshev_derivative_parameters} guarantees the claim.
\end{proof}

\begin{remark}[Replacing the crude sparsity scale $2M$ by a tighter derivative bound]
\label{rem:chi_derivative_estimation_tighter_bound}
In \cref{lemma:first_deriv_estimation_pauli_error_rates,lemma:second_deriv_estimation_pauli_error_rates} we set the smoothness scale to $\Lambda=2M$ via the crude bound $(2M)^k$ on $\|\chi_{ij}^{(k)}\|_\infty$.
More generally, one may take any $\Lambda$ satisfying $\Lambda \ge \|\mathcal{L}\|_{2\to2}$; in particular,
by \cref{lemma:operator_norm_bounds} one can choose
\[
\Lambda := \Delta(H) + 2\sum_{m,n}|a_{mn}|,
\]
which yields the tighter uniform derivative bound $|\chi_{ij}^{(k)}(t)| \le \Lambda^k$ and improves the resulting Chebyshev regression parameters.
\end{remark}

\subsection{Structure learning algorithm}\label{sec:structure_learning_algorithm}

Combining the results of the previous subsections, we now present two separate ancilla-free procedures that reconstruct the dissipator and  Hamiltonian structures of a sparse Lindbladian from short-time dynamics. The correctness and resources required by \cref{algorithm:dissipator_structure_learning,algorithm:hamiltonian_structure_learning} are proved in \cref{theorem:ancilla_free_structure_learning}. 

\begin{boxedalgorithm}[H]{Dissipator Structure Learning \label{algorithm:dissipator_structure_learning}}
    \DontPrintSemicolon
    \SetKwInOut{Input}{Inputs}
    \SetKwInOut{Output}{Output}
    \SetKwBlock{Schedule}{Schedule}{}
    
    \Input{\,Sparsity $M$; threshold $\eta$; failure probability $\delta$}
    \Output{\,A structure estimator $\widehat{\mathcal{S}}_D$ such that $\widehat{\mathcal{S}}_D = \mathcal{S}_D$ with probability $\ge 1-\delta$}
    
    \Schedule{
      $\tau_{\max} \gets \Theta(1/M)$ \tcp*{max evolution time}  
      $r \gets \Theta(\log(M/\eta))$ \tcp*{Chebyshev degree}  
      $\{t_m\}_{m=0}^{r} \gets$ Chebyshev–Gauss nodes on $[0,\tau_{\max}]$ \tcp*{sampling schedule}  
      $\varepsilon_s \gets \Theta(\eta/(Mr^{3}))$ \tcp*{accuracy per-node s.t. 1st deriv. accuracy is $\eta/2$.}  
      $\delta_s \gets \Theta(\delta/r)$ \tcp*{per-node failure prob.}  
      $\{\alpha^{(1)}_m\} \gets$ precomputed weights \tcp*{Chebyshev estimator}  
    }
    
    $\widehat{\mathcal{S}}, \widehat{\mathcal{S}}_D \gets \emptyset$ \;
    
    \For{$m \gets 0$ \KwTo $r$}{
        $\{\widehat\chi_{ii}(t_m)\} \gets \texttt{PopulationRecovery}\left(\mathcal{E}_{t_m},\, \varepsilon_s,\, \delta_s\right)$\;
        $S \gets S \cup \{\widehat\chi_{ii}(t_m)\}$ \tcp*{update candidate set}
    }
    
    \For{$i: \widehat\chi_{ii}(\cdot)\in S$}{
        $\widehat\chi_{ii}^{(1)} \gets \sum_{m=0}^{r} \alpha^{(1)}_m\,\widehat\chi_{ii}(t_m)$ \tcp*{1st derivative estimator} 
        \If{$\widehat\chi_{ii}^{(1)} > \tfrac{\eta}{2}$}{
            $\widehat{\mathcal{S}}_D \gets \widehat{\mathcal{S}}_D \cup \{P_i\}$ \tcp*{dissipator candidate after structure pruning}
        }
    }
    
    \Return $\widehat{\mathcal{S}}_D$\;
\end{boxedalgorithm}

\begin{boxedalgorithm}[H]{Hamiltonian Structure Learning \label{algorithm:hamiltonian_structure_learning}}
    \DontPrintSemicolon
    \SetKwInOut{Input}{Inputs}
    \SetKwInOut{Output}{Output}
    \SetKwBlock{Schedule}{Schedule}{}
    
    \Input{\,Sparsity $M$; threshold $\eta$; failure probability $\delta$}
    \Output{\,A superset $\widehat{\mathcal{S}}_H$ such that $\mathcal{S}_H \subseteq \widehat{\mathcal{S}}_H$ with probability $\ge 1-\delta$}
    
    \Schedule{
      $\tau_{\max} \gets \Theta(1/M)$ \tcp*{max evolution time}  
      $r \gets \Theta(\log(M^2/\eta^{2}))$ \tcp*{Chebyshev degree}  
      $\{t_m\}_{m=0}^{r} \gets$ Chebyshev–Gauss nodes on $[0,\tau_{\max}]$ \tcp*{sampling schedule}  
      $\varepsilon_s \gets \Theta(\eta^{2}/(M^{2}r^{5}))$ \tcp*{accuracy per-node s.t. 2nd deriv. accuracy is $\eta^2$.} 
      $\delta_s \gets \Theta(\delta/r)$ \tcp*{per-node failure prob.}  
      $\{\alpha^{(1)}_m\}, \{\alpha^{(2)}_m\} \gets$ precomputed weights \tcp*{Chebyshev estimator}  
    }
    
    $\widehat{\mathcal{S}},\widehat{\mathcal{S}}_H \gets \emptyset$ \;
    
    \For{$m \gets 0$ \KwTo $r$}{
        $\{\widehat\chi_{ii}(t_m)\} \gets \texttt{PopulationRecovery}\left(\mathcal{E}_{t_m},\, \varepsilon_s,\, \delta_s\right)$\;
        $S \gets S \cup \{\widehat\chi_{ii}(t_m)\}$ \tcp*{update candidate set}
    }
    
    \For{$i: \widehat\chi_{ii}(\cdot)\in S$}{
        $\widehat\chi_{ii}^{(1)} \gets \sum_{m=0}^{r} \alpha^{(1)}_m\,\widehat\chi_{ii}(t_m)$ \tcp*{1st derivative estimator} 
        $\widehat\chi_{ii}^{(2)} \gets \sum_{m=0}^{r} \alpha^{(2)}_m\,\widehat\chi_{ii}(t_m)$ \tcp*{2nd derivative estimator}  
        \If{$\widehat\chi_{ii}^{(2)} > \eta^2$  \text{   or   }  $\widehat\chi_{ii}^{(1)} > \tfrac{\eta}{2}$}{
            $\widehat{\mathcal{S}}_H \gets \widehat{\mathcal{S}}_H \cup \{P_i\}$ \tcp*{Hamiltonian candidate after structure pruning}
        }
    }
    
    \Return $\widehat{\mathcal{S}}_H$\;
\end{boxedalgorithm}

\begin{theorem}[Ancilla-free structure learning]
\label{theorem:ancilla_free_structure_learning}
Let $\mathcal{E}_t=e^{t\mathcal{L}}$ be the channel generated by an $n$-qubit Lindbladian $\mathcal{L}$ whose Pauli expansion
is $M$-sparse with the smallest coefficient magnitude $\ge \eta$ (cf.~\cref{eq:lindbladian_struct_learning_appendix,eq:def_M}), and with Hamiltonian and dissipator structures $\mathcal{S}_H$ and $\mathcal{S}_D$
as in \cref{def:hamiltonian_structure,def:dissipator_structure}. Then there is an ancilla-free procedure
(\cref{algorithm:dissipator_structure_learning,algorithm:hamiltonian_structure_learning}) which, given parameters
$0<\eta,\delta<1$, outputs candidate sets $\widehat{\mathcal{S}}_D$ and $\widehat{\mathcal{S}}_H$ such that the
following guarantees hold:

\begin{enumerate}
\item \textit{(Coverage and size.)} With probability $\ge1-\delta$, the output sets satisfy:
$\widehat{\mathcal{S}}_D =\mathcal{S}_D$ and $\mathcal{S}_H \subseteq \widehat{\mathcal{S}}_H,$ and 
\[
|\widehat{\mathcal{S}}_D| = |\mathcal{S}_D|,
\qquad
|\widehat{\mathcal{S}}_H| = \mathcal{O}\!\left(\frac{M^2}{\eta^2}\;\mathrm{polylog}\Big(\tfrac{M}{\eta}\Big)\right).
\]

\item \textit{(Number of experiments.)} The total number of channel uses is
\[
N_{\exp}
=
\mathcal{O}\!\left(
\frac{M^4}{\eta^4}\;
\mathrm{polylog}\Big(n,\tfrac{M}{\eta},\tfrac{1}{\delta}\Big)
\right).
\]
and is dominated by identifying the Hamiltonian set $\widehat{\mathcal{S}}_H$. Identifying the $\widehat{\mathcal{S}}_D$ uses nearly quadratically fewer queries. 

\item \textit{(Time resolution.)} It only ever applies $e^{\Lindblad t}$ for $t \ge t_{\mathrm{res}}$, where
\[
t_{\mathrm{res}}=\Omega\Bigl(M^{-1}\,\mathrm{polylog}(M/\eta)^{-1}\Bigr) .
\]

\item \textit{(Classical overhead.)} The total classical post-processing time is
\[
\widetilde{\mathcal{O}}\!\left(\frac{n\,M^6}{\eta^6}\right),
\]
where the $\widetilde{\mathcal{O}}(\cdot)$ hides polylogarithmic factors in $(M,n,1/\eta,1/\delta)$.

\item \textit{(Total evolution time.)} The procedure only queries $\mathcal{E}_t$ at times $t\in(0,\tau_{\max})$
with $\tau_{\max}=\Theta(1/M)$, and therefore the total evolution time satisfies
\[
t_{\mathrm{tot}}:=\sum_{\text{calls}} t \;\le\; N_{\exp}\,\tau_{\max}
=
\mathcal{O}\!\left(
\frac{M^3}{\eta^4}\;
\mathrm{polylog}\Big(n,\tfrac{M}{\eta},\tfrac{1}{\delta}\Big)
\right).
\]
and is also dominated by identifying the Hamiltonian set $\widehat{\mathcal{S}}_H$. Identifying the $\widehat{\mathcal{S}}_D$ takes total time $\widetilde{O}(M/\eta^2)$.
\end{enumerate}

\noindent
The procedure prepares only \emph{product} $n$-qubit input states in which each qubit is a Pauli eigenstate, and measures each output qubit
in a single-qubit Pauli eigenbasis.
\end{theorem}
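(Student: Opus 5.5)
The plan is to run the two algorithms on a shared high-probability event and then read off each guarantee from the lemmas of \cref{sec:pauli_error_derivative_estimation}. For the Hamiltonian pass (\cref{algorithm:hamiltonian_structure_learning}) we set $\epsilon_2=\eta^2/2$ in \cref{lemma:second_deriv_estimation_pauli_error_rates}. Then with probability $\ge 1-\delta$ every rate satisfies $|\widehat\chi^{(2)}_{ii}-\chi^{(2)}_{ii}|\le \eta^2/2$. By \cref{rem:pauli-rates-first-deriv-estimation-from-second-deriv-parameters}, the same data also give $|\widehat\chi^{(1)}_{ii}-\chi^{(1)}_{ii}|\le \eta^2/4\le\eta/4$. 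For the dissipator pass (\cref{algorithm:dissipator_structure_learning}) we take $\epsilon_1=\eta/4$ in \cref{lemma:first_deriv_estimation_pauli_error_rates}. A union bound with $\delta/2$ per pass keeps the total failure probability at most $\delta$.

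\emph{Coverage.} By \cref{lemma:time_derivatives_pauli_rates}, $\chi^{(1)}_{ii}=a_{ii}$. If $P_i\in\mathcal S_D$, then $a_{ii}\ge\eta$ and so $\widehat\chi^{(1)}_{ii}\ge 3\eta/4>\eta/2$. If $P_i\notin\mathcal S_D$, then $a_{ii}=0$ and $\widehat\chi^{(1)}_{ii}\le\eta/4$. This gives $\widehat{\mathcal S}_D=\mathcal S_D$.

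One subtlety is that population recovery reports only $\mathcal O(1/\varepsilon_s)$ nonzero rates per node, and unreported rates are set to zero. We still need every $P_i\in\mathcal S_D$ to appear in $S$ at some node. This holds because at the node $t_m$ closest to $\tau_{\max}$ one has $\chi_{ii}(t_m)\gtrsim a_{ii}t_m-\mathcal O(M^2t_m^2)$. Alternatively, the uniform $\varepsilon_s$-accuracy guarantee of \cref{theorem:population_recovery} already forces any rate with true value $>\varepsilon_s$ to be reported. Rates that are never reported have $|\chi_{ii}(t_m)|\le\varepsilon_s$ at all nodes, so the zero estimate is itself within $\varepsilon_s$. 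The Chebyshev bound of \cref{theorem:chebyshev_derivative_total_error} then still applies, and such $P_i$ are correctly excluded.

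For the Hamiltonian set, take $P_i\in\mathcal S_H$. If $a_{ii}>0$, the $\widehat\chi^{(1)}$ test fires. If $a_{ii}=0$, \cref{lemma:time_derivatives_pauli_rates} gives $\chi^{(2)}_{ii}\ge 2h_i^2\ge2\eta^2$, hence $\widehat\chi^{(2)}_{ii}\ge 3\eta^2/2>\eta^2$. Either way $P_i\in\widehat{\mathcal S}_H$.

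\emph{Size of $\widehat{\mathcal S}_H$.} Every output element must belong to $S$. Each node contributes at most $4/\varepsilon_s$ labels by \cref{theorem:population_recovery}, so $|\widehat{\mathcal S}_H|\le (r+1)\cdot 4/\varepsilon_s$. With $\varepsilon_s=\Theta(\eta^2/(M^2r^5))$ and $r=\Theta(\log(M/\eta))$, this gives $\mathcal O(M^2\eta^{-2}\,\mathrm{polylog}(M/\eta))$.

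\emph{Remaining items.} The number of experiments is the sum of the two lemmas' bounds with $\epsilon_2=\Theta(\eta^2)$ and $\epsilon_1=\Theta(\eta)$. That is $\widetilde{\mathcal O}(M^4/\eta^4)$ for the Hamiltonian pass and $\widetilde{\mathcal O}(M^2/\eta^2)$ for the dissipator pass, so the Hamiltonian pass dominates.

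Time resolution comes from \cref{lemma:time_resolution_chebyshev_nodes}. Node gaps are at least $2\tau_{\max}/(r+1)^2=\Omega(M^{-1}\log^{-2}(M/\eta))$. The smallest node itself is $t_r=\tfrac{\tau_{\max}}{2}(1-\cos\tfrac{\pi}{2(r+1)})=\Omega(\tau_{\max}/r^2)$, which gives the same bound.

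Classical cost is $r+1$ population-recovery calls, each costing $\widetilde{\mathcal O}(n/\varepsilon_s^3)=\widetilde{\mathcal O}(nM^6/\eta^6)$. The subsequent linear-weight evaluations over $|S|$ labels are lower order. Total evolution time is at most $N_{\exp}\tau_{\max}$ with $\tau_{\max}=1/(4M)$, which gives $\widetilde{\mathcal O}(M^3/\eta^4)$ overall and $\widetilde{\mathcal O}(M/\eta^2)$ for the dissipator pass. The state-preparation and measurement claims are inherited from \cref{theorem:population_recovery} (see \cref{rem:poprec_operational}).

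The main obstacle I expect is the coverage subtlety above: making sure that the truncation in population recovery, which zeroes unreported rates, is compatible with the per-node $\varepsilon_s$ error bound the Chebyshev analysis needs. I would settle it by observing that the $\ell_\infty$ guarantee in \cref{theorem:population_recovery} is stated for the truncated hypothesis itself, so no separate argument is required.
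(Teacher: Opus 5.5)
Your proposal is correct and follows the paper's proof essentially step for step: the same lemmas, a union bound over the two passes, coverage via \cref{lemma:time_derivatives_pauli_rates}, the $(r+1)\cdot 4/\varepsilon_s$ count, and the node-spacing and population-recovery runtime bounds. You add two small improvements: your margins $\epsilon_1=\eta/4$, $\epsilon_2=\eta^2/2$ remove the boundary ties that the paper's $\epsilon_1=\eta/2$, $\epsilon_2=\eta^2$ leave against the strict thresholds in the algorithms, and you bound the smallest node $t_r$ and not only the gaps. Drop your first heuristic for the truncation issue, since at $t_m\approx\tau_{\max}=1/(4M)$ the $\mathcal O(M^2t_m^2)$ remainder is $\Theta(1)$ and swamps $a_{ii}t_m$. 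Your second argument is correct and suffices: the uniform $\varepsilon_s$ guarantee holds for the truncated hypothesis, so the Chebyshev bound covers every label, and each $P_i\in\mathcal S_D$, having $\widehat\chi^{(1)}_{ii}\ge 3\eta/4>0$, must appear in $S$.
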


\begin{proof}
Both \cref{algorithm:dissipator_structure_learning,algorithm:hamiltonian_structure_learning} follow the same template.
First, they invoke the ancilla-free population-recovery procedure
(\cref{theorem:population_recovery}) to estimate the Pauli error rates $\{\chi_{ii}(t)\}$
of $\mathcal{E}_t=e^{t\Lindblad}$ at a set of Gauss--Chebyshev nodes $\{t_m\}_{m=0}^r\subset(0,\tau_{\max}]$.
Second, they apply Chebyshev interpolation to these nodewise estimates to obtain estimators for the short-time derivatives
$\{\chi_{ii}^{(1)}(0)\}$ and $\{\chi_{ii}^{(2)}(0)\}$, with accuracy guarantees provided by
\cref{lemma:first_deriv_estimation_pauli_error_rates,lemma:second_deriv_estimation_pauli_error_rates}.
Finally, they use the structural identities linking Lindbladian coefficients to Pauli-rate derivatives
(\cref{lemma:time_derivatives_pauli_rates}) to prune the initial candidate pools and output the supersets
$\widehat{\mathcal{S}}_D$ and $\widehat{\mathcal{S}}_H$.

\textbf{Correctness.}
The Pauli error rates derivative guarantees are conditioned on the event that the population-recovery subroutine succeeds at every Gauss--Chebyshev node of the algoritms. In \cref{algorithm:hamiltonian_structure_learning} the schedule parameters $(\tau_{\max},r,\varepsilon_s,\delta_s)$ are chosen as in
\cref{lemma:second_deriv_estimation_pauli_error_rates}, so with probability at least $1-\delta/2$ the Chebyshev estimator satisfies
\begin{equation}
\label{eq:good_event_second}
\bigl|\widehat{\chi}^{(2)}_{ii}-\chi^{(2)}_{ii}\bigr|\le \eta^{2},
\qquad \forall\, i.
\end{equation}

\smallskip
In \cref{algorithm:dissipator_structure_learning} the schedule parameters are chosen as in
\cref{lemma:first_deriv_estimation_pauli_error_rates}, so with probability at least $1-\delta/2$,
\begin{equation}
\label{eq:good_event_first}
\bigl|\widehat{\chi}^{(1)}_{ii}-\chi^{(1)}_{ii}\bigr|\le \frac{\eta}{2},
\qquad \forall\, i.
\end{equation}
The same bound \eqref{eq:good_event_first} also holds for \cref{algorithm:hamiltonian_structure_learning}
(by \cref{rem:pauli-rates-first-deriv-estimation-from-second-deriv-parameters} under its parameter choice),
again with probability $\ge 1-\delta/2$. 

By a union bound, \eqref{eq:good_event_second} and \eqref{eq:good_event_first} hold for both \cref{algorithm:dissipator_structure_learning,algorithm:hamiltonian_structure_learning}  with probability at least $1-\delta$.
Conditioning on this event, we now prove coverage for $\widehat{\mathcal{S}}_D$ and $\widehat{\mathcal{S}}_H$.

\smallskip
\noindent\emph{Dissipator coverage.}
Fix $P_k$. If $P_k\in\mathcal{S}_D$, then by \cref{lemma:time_derivatives_pauli_rates} we have
$\chi^{(1)}_{kk}=a_{kk}\ge \eta$.
Combined with \eqref{eq:good_event_first}, this implies $ \widehat{\chi}^{(1)}_{kk}\ge \chi^{(1)}_{kk}-\eta/2 \ge \eta/2$, 
so the selection rule in \cref{algorithm:dissipator_structure_learning} includes $P_k$. Conversely, if $P_k\notin\mathcal{S}_D$, then $\chi^{(1)}_{kk}=a_{kk}=0$, and \eqref{eq:good_event_first} gives
$ \widehat{\chi}^{(1)}_{kk}\le \eta/2,$
so the selection rule excludes $P_k$.
Therefore, $\widehat{\mathcal{S}}_D=\mathcal{S}_D$.

\smallskip
\noindent\emph{Hamiltonian coverage.}
Fix $P_k\in\mathcal{S}_H$.
If $P_k\in\mathcal{S}_H\setminus \mathcal{S}_D$, then $a_{kk}=0$ and $|h_k|\ge \eta$.
By \cref{lemma:time_derivatives_pauli_rates},
$\chi^{(2)}_{kk}\ge 2h_k^2\ge 2\eta^2$.
Combining with \eqref{eq:good_event_second} yields
$\widehat{\chi}^{(2)}_{kk}\ge 2\eta^2-\eta^2=\eta^2$,
so the second-derivative selection rule of \cref{algorithm:hamiltonian_structure_learning} includes $P_k$.
If instead $P_k\in\mathcal{S}_H\cap\mathcal{S}_D$, then $a_{kk}\ge \eta$ implies
$\chi^{(1)}_{kk}=a_{kk}\ge \eta$ by \cref{lemma:time_derivatives_pauli_rates}.
Combining with \eqref{eq:good_event_first} gives
$\widehat{\chi}^{(1)}_{kk}\ge \eta-\eta/2=\eta/2$,
so $P_k$ is included by the first-derivative selection rule.
Therefore $\mathcal{S}_H\subseteq \widehat{\mathcal{S}}_H$.

\medskip
\textbf{Sample complexity.}
The total number of channel uses needed to estimate all values $\{\widehat\chi_{ii}(t_m)\}$ at the Gauss--Chebyshev nodes
to the uniform accuracy $\varepsilon_s$ is determined by the derivative-accuracy requirement. For dissipator structure learning, only first-derivative estimates are needed. Taking $\epsilon_1\le \eta/2$ in
\cref{lemma:first_deriv_estimation_pauli_error_rates} (so that \eqref{eq:good_event_first} holds) yields
\begin{equation}
    m_D = \mathcal{O}\left(
    \frac{M^2}{\eta^2}\;
    \mathrm{polylog}\Big(n,\tfrac{M}{\eta},\tfrac{1}{\delta}\Big)
    \right).
\end{equation}

For Hamiltonian structure learning, second-derivative estimates are required (and under the same schedule the
first-derivative accuracy condition also holds). Taking $\epsilon_2\le \eta^2$ in
\cref{lemma:second_deriv_estimation_pauli_error_rates} (so that \eqref{eq:good_event_second} and \eqref{eq:good_event_first} hold) gives
\begin{equation}
    m_H = \mathcal{O}\left(
    \frac{M^4}{\eta^4}\;
    \mathrm{polylog}\Big(n,\tfrac{M}{\eta},\tfrac{1}{\delta}\Big)
    \right).
\end{equation}

\medskip
\textbf{Superset sizes.} Since $ \widehat{\mathcal{S}}_D =\mathcal{S}_D$, clearly $ |\widehat{\mathcal{S}}_D|=|\mathcal{S}_D|$. 
The size of the recovered Hamiltonian superset is determined by the maximal number of nonzero terms returned by the population-recovery routine (\cref{theorem:population_recovery}) at each sampling node.
Since each call outputs at most $\mathcal{O}(1/\varepsilon_s)$ entries, and the algorithms aggregate results over $r{+}1$ nodes, the total number of distinct terms is upper bounded by $\mathcal{O}(r/\varepsilon_s)$.
For the Hamiltonian step (\cref{eq:parameters_second_derivative}), this gives
\begin{equation}
    |\widehat{\mathcal{S}}_H| = \mathcal{O}\left(\frac{M^{2} r^{6}}{\eta^{2}}\right)
    = \mathcal{O}\left(\frac{M^{2}}{\eta^{2}}\,\mathrm{polylog}\Big(\tfrac{M}{\eta}\Big)\right).
\end{equation}

\medskip
\textbf{Time resolution.}
Per \cref{lemma:time_resolution_chebyshev_nodes}, the smallest time resolution (as dictated by the Gauss-Chebyshev nodes) satisfies:
\begin{equation*}
        t_{\mathrm{res}} \coloneq \min_{0\le m<r}\bigl|t_{m+1}-t_m\bigr|
        \;=\; \Omega\left(\tau_{\max}/r^2\right) \;=\;
        \Omega\left(M^{-1}\,\mathrm{polylog}(M/\eta)^{-1}\right).
\end{equation*}
since $\tau_{\max}=\Theta(1/M)$ and $r=\mathrm{polylog}(M/\eta)$ in both schedules. 

\medskip
\textbf{Classical post-processing.}
The classical post-processing time is dictated by the population recovery algorithm. By \cref{theorem:population_recovery}, the classical cost per node is $\widetilde{O}(n/\varepsilon_s^3)$, hence the total classical cost is upper bounded by $O(nr/\varepsilon_s^3)$ and is dominated by the Hamiltonian structure learning. Plugging $\varepsilon_s$ from \cref{eq:parameters_second_derivative} corresponding to $\epsilon_2 = \eta^2$, the total classical cost is:
\begin{equation*}
    O(nr/\varepsilon_s^3) = \widetilde{O}(nM^6/\eta^6)
\end{equation*}

This completes the proof of \cref{theorem:ancilla_free_structure_learning} (the total evolution time is evident from the theorem statement).
\end{proof}

\begin{remark}[False positives]
Evidently, the Hamiltonian candidate set $\widehat{\mathcal S}_H$ may include dissipator terms that are not
present in the true Hamiltonian support. Moreover, $\widehat{\mathcal S}_H$ can contain additional spurious
Pauli strings arising from higher-order effects in the short-time expansion---most notably, second-order
dissipative couplings that can induce $\widehat{\chi}^{(2)}_{ii}>\eta^2$ even when the corresponding Pauli
does not appear in the Hamiltonian. Such false positives are benign: in the coefficient-learning stage they
simply enter the linear system as extra variables and are assigned (with high probability) near-zero
estimated coefficients, effectively pruning them.
\end{remark}

\begin{remark}[Obtaining $\eta$-heavy support] \label{remark:eta-heavy-support}
If $\eta$ is chosen larger than the true minimum nonzero magnitude, the structure-learning routine
recovers the support of the $\eta$-heavy part of the dissipator. Under the same scheduling, it also
identifies (almost all of) the $\eta$-heavy Hamiltonian support, returning a superset
$\widehat{\mathcal S}_H$ that contains every $P_j\in\mathcal S_H$ with $|h_j|\ge \eta$, except possibly in
the narrow regime where the associated diagonal dissipator coefficient satisfies
$0<a_{jj}<\Theta(\eta^2)$ (as implied by \cref{lemma:time_derivatives_pauli_rates}). We leave a finer
analysis of this corner case for future work.

This $\eta$-heavy guarantee is sufficient when one only seeks the dominant terms. In contrast, our
coefficient-learning guarantees (\cref{algorithm:coefficient_learning}, \cref{sec:coefficient_learning})
assume that the returned supports are supersets of the true ones, i.e.,
$\mathcal S_D \subseteq \widehat{\mathcal S}_D$ and $\mathcal S_H \subseteq \widehat{\mathcal S}_H$.
\end{remark}

\begin{remark}[Reduced resource requirements via tighter operator norm bound]
\label{rem:theorem7_replace_M_by_Lambda}
All resource bounds in \cref{theorem:ancilla_free_structure_learning} are stated in terms of the crude smoothness scale
$\Lambda=2M$, coming from the uniform derivative bound $|\chi^{(k)}_{ij}(t)|\le (2M)^k$ used in
\cref{lemma:first_deriv_estimation_pauli_error_rates,lemma:second_deriv_estimation_pauli_error_rates}.
More generally, the same analysis goes through with any $\Lambda$ satisfying $\Lambda\ge \|\mathcal{L}\|_{2\to 2}$.
Consequently, every occurrence of $M$ in the resource guarantees of \cref{theorem:ancilla_free_structure_learning}
(e.g.\ $N_{\exp}$, $t_{\mathrm{res}}$, classical cost, $t_{\mathrm{tot}}$, and the $\tau_{\max}=\Theta(1/M)$)
can be replaced by $\Lambda/2$ .

In particular, by \cref{lemma:operator_norm_bounds} one may take
\[
\Lambda := \Delta(H) + 2\sum_{m,n}|a_{mn}|,
\]
which can be significantly smaller than $2M$ and yields uniformly improved resource bounds.
\end{remark}

\section{Coefficient learning}\label{sec:coefficient_learning}

Recall the adjoint Lindbladian in the Pauli basis:
\begin{equation} \label{eq:lindbladian_adjoint}
        \Lindblad^\dagger(O)
        =
        i\sum_{P_k\in \HamiltStruct} h_k \comm{P_k}{O}
        +
        \sum_{P_k,P_j \in \DiagDissStruct} a_{kj}\left(P_j O P_k - \tfrac12\acomm{P_j P_k}{O}\right).
\end{equation}
In \cref{sec:structure_learning} we identified candidate supports for the Hamiltonian and dissipator, yielding supersets
$\widehat{\mathcal{S}}_H$ and $\widehat{\mathcal{S}}_D$ that contain the true supports $\mathcal{S}_H$ and $\mathcal{S}_D$ with high probability.
This appendix addresses the remaining estimation step: given these candidate supports, learn the corresponding numerical coefficients.

\begin{problem}[Coefficient learning problem]
    \label{problem:coefficient_learning}
    Given black-box access to the dynamics $e^{\mathcal{L}t}$ for arbitrary $t\ge0$ and given supersets
    $\widehat{\mathcal{S}}_H$ and $\widehat{\mathcal{S}}_D$ such that
    \begin{equation}
        \HamiltStruct \subseteq \widehat{\mathcal{S}}_H,
        \qquad
        \DiagDissStruct \subseteq \widehat{\mathcal{S}}_D,
    \end{equation}
    output estimators
    $\{\widehat h_i\}_{P_i\in\widehat{\mathcal{S}}_H}$ and
    $\{\widehat a_{ij}\}_{(P_i,P_j)\in\widehat{\mathcal{S}}_D}$
    such that, for all coefficients,
    \begin{equation}
        |\widehat h_i-h_i|\le\varepsilon,
        \qquad
        |\widehat a_{ij}-a_{ij}|\le\varepsilon,
    \end{equation}
    with high probability.
\end{problem}

The key observation is that short-time derivatives provide \emph{linear} access to the Lindbladian coefficients through a common bilinear probe. For any Hermitian observable $O$ and Hermitian input operator $X$ (not necessarily a state), define
\begin{equation}
    f_{O,X}(t)\coloneq \tr\bigl(O\,e^{t\Lindblad}(X)\bigr).
\end{equation}

Differentiating at $t=0$ and using the definition of the adjoint gives
\begin{equation}\label{eq:unified-derivative-probe}
\left.\dv{}{t}\,f_{O,X}(t)\right|_{t=0}
=
\tr\bigl(\Lindblad^\dagger(O)\,X\bigr),
\end{equation}
which becomes an explicit linear functional of the unknown parameters $(h_i,a_{ij})$ upon substituting \cref{eq:lindbladian_adjoint}. Throughout we restrict to \emph{Pauli probes}, where $O\in\mathcal P_n$ and $X$ is chosen from one of the following two experimentally convenient families:

\smallskip
\noindent\textbf{(i) Pauli-eigenstate probes.} Taking $X=\rho$ to be a Pauli-eigenstate input recovers
\begin{equation}\label{eq:observable-derivative-at-zero-gives-coefficients}
\left.\dv{}{t}\,\expval{O(t)}\right|_{t=0}
=
\tr\bigl(\rho\,\Lindblad^\dagger(O)\bigr).
\end{equation}

\smallskip
\noindent\textbf{(ii) Pauli-input probes.} Taking $X=2^{-n}Q$ for $Q\in\mathcal P_n$ yields
\begin{equation}\label{eq:pauli-input-derivative-at-zero-gives-coefficients}
\left.\dv{}{t}\,\tr\bigl(O\,e^{t\Lindblad}(2^{-n}Q)\bigr)\right|_{t=0}
=
2^{-n}\tr\bigl(\Lindblad^\dagger(O)\,Q\bigr).
\end{equation}
Although $2^{-n}Q$ is not a density operator, the scalar $f_{O,2^{-n}Q}(t)$ is still experimentally accessible: since
$\frac{Q}{2^n}=\frac{I+Q}{2^n}-\frac{I-Q}{2^n}$, one can estimate $f_{O,2^{-n}Q}(t)$ as the difference of two expectation values obtained by preparing the Pauli eigenstates $\rho_\pm=(I\pm Q)/2^n$, evolving under $e^{t\Lindblad}$, and measuring $O$.

\smallskip
In both cases, coefficient learning reduces to estimating a collection of derivatives of the form \eqref{eq:unified-derivative-probe} and solving the resulting linear system; which probe family is used will be clear from context.

The appendix proceeds in five steps:
\begin{itemize}
    \item \textbf{Linear system from short-time derivatives.}
    In \cref{sec:linear_relation_expectations}, we derive a real-valued linear system $\vb d=C\vb x$, where $\vb d$ is the vector of measured derivatives, $\vb x$ collects unknown $(h_i,a_{ij})$ and each row of the design matrix $C$ is determined by a Pauli probe (either \eqref{eq:observable-derivative-at-zero-gives-coefficients} or \eqref{eq:pauli-input-derivative-at-zero-gives-coefficients}).

\item \textbf{Classical construction of the design matrix.}
To efficiently choose Pauli probes that make the matrix $C$ invertible, we introduce \emph{Lindbladian patchwise Pauli tomography} (\cref{lemma:injectivity_of_patch_wise_tomography}). We show that the family of patches induced by the candidate supports---namely, $\{\supp{P}:P\in\widehat{\mathcal S}_H\}$ and $\{\supp{P_i}\cup\supp{P_j}:(P_i,P_j)\in\widehat{\mathcal S}_D\times\widehat{\mathcal S}_D\}$---defines an injective coefficient-to-data map. Concretely, the patch-restricted Pauli data
$2^{-n}\tr\bigl(Q\,\Lindblad^\dagger(O)\bigr)$, collected over Pauli observables $O,Q$ supported within these patches, uniquely determine all Lindbladian coefficients $(h_i,a_{ij})$ (\cref{lemma:injectivity_of_patch_wise_tomography}). As a consequence, if all candidate Pauli terms are at most $k$-local, one can distill a set of $\widehat M:=|\widehat{\mathcal S}_H|+|\widehat{\mathcal S}_D|^2$ Pauli probes that yield an invertible matrix $C$, with classical preprocessing time $\mathcal O(16^k\,\widehat M)$ (\cref{cor:classical_cost_patchwise_tomography}).

    \item \textbf{Derivative estimation by Chebyshev interpolation.}
    Once the design matrix $C$ is constructed, for each Pauli probe, we estimate the required derivative from short-time evolution under $e^{\Lindblad t}$ using Gauss--Chebyshev interpolation (\cref{sec:chebyshev_interpolation}), with smoothness controlled by the dual interaction graph degree $\mathfrak d$ (\cref{thm:observable_derivative_bounds,lemma:pauli_expectation_derivative_estimation}). Performing probe-by-probe derivative estimation to accuracy $\varepsilon_d$ in the style of \cref{lemma:pauli_expectation_derivative_estimation} requires $\widetilde{\mathcal{O}}(\widehat{M} \,\mathfrak d^2/\varepsilon_d^2)$ quantum channel queries (\cref{cor:all_probe_derivatives_sample_complexity}). 

    \item \textbf{Parallelization via shadow process tomography.}
    If the selected probes Paulis are at most $k$-local, the derivative estimation can be parallelized and achieve smaller query complexity for sufficiently small $k$. 
    At each Gauss--Chebyshev time node, we estimate all required Pauli expectation values in parallel using the shadow process tomography protocol of \citet{stilck2024efficient} (\cref{lemma:shadow_process_tomography}) and then  combine the node values into derivative estimates. The quantum channel query complexity of this method is $\widetilde{\mathcal{O}}(9^k \,\mathfrak d^2/\varepsilon_d^2)$ (\cref{lemma:parallel_patchwise_derivative_estimation}).

    \item \textbf{Solving the resulting linear system}
    After the design matrix $C$ is constructed and all probe derivatives $\vb d$ have been measured, we classically solve the resulting square linear system $\vb d=C\vb x$. The resulting accuracy in estimating the coefficients $\vb x$ is bounded using conditioning factor $\nu=\|C^{-1}\|_{\infty\to\infty}$ (\cref{sec:stability_least_squares} and \cref{sec:numerics_probe_selection_conditioning_factor}).
\end{itemize}

Together, these steps yield an ancilla-free \cref{algorithm:coefficient_learning} that solves the \nameref{problem:coefficient_learning} by reconstructing all coefficients $(h_i,a_{ij})$ to additive accuracy~$\varepsilon$ with high probability. We prove the algorithm guarantees and resource bounds in \cref{theorem:ancilla_free_coefficient_learning}.

\subsection{Linear relation between Lindbladian coefficients and Pauli expectation values}
\label{sec:linear_relation_expectations}

For any observable $O$ and Hermitian input operator $X$ (not necessarily a state), the short-time derivative of the
bilinear probe $f_{O,X}(t)\coloneq \tr\bigl(O\,e^{\mathcal L t}(X)\bigr)$ at $t=0$ is determined by the adjoint
Lindbladian acting on $O$:
\begin{equation} \label{eq:derivative-equals-trace-adjoint-lindblad}
    \left.\dv{}{t}\,\tr\bigl(O\,e^{\mathcal L t}(X)\bigr)\right|_{t=0}
    =
    \tr\bigl(\mathcal L^\dagger(O)\,X\bigr).
\end{equation}
Throughout the manuscript, we restrict to Pauli probes: the observable $O$ is taken to be a Pauli operator and $X$
is chosen from the probe families introduced in
\cref{eq:observable-derivative-at-zero-gives-coefficients,eq:pauli-input-derivative-at-zero-gives-coefficients}
(either $X=\rho$ a Pauli eigenstate, or $X=2^{-n}Q$ for $Q\in\mathcal P_n$). Substituting the Pauli-basis
expansion of $\mathcal{L}^\dagger$ and restricting to the candidate supports $\widehat{\mathcal{S}}_H$ and
$\widehat{\mathcal{S}}_D$, we obtain a linear relation between the measurable quantity
$\left.\dv{}{t}\,\tr\bigl(O\,e^{\mathcal L t}(X)\bigr)\right|_{t=0}$ and the unknown coefficients $\{h_i\}$ and
$\{a_{ij}\}$:
\begin{equation}
    \left.\dv{}{t}\,\tr\bigl(O\,e^{\mathcal L t}(X)\bigr)\right|_{t=0}
    =
    \sum_{i \in \widehat{\mathcal{S}}_H} h_i\,\tr{i[P_i,O]\,X}
    +
    \sum_{i,j \in \widehat{\mathcal{S}}_D} a_{ij}
    \Bigl(
        \tr{P_j O P_i\,X}
        -
        \tfrac{1}{2}\tr{\acomm{P_j P_i}{O}\,X}
    \Bigr).
    \label{eq:coefficient_learning_linear_relation}
\end{equation}
The short-time derivative of a Pauli probe is therefore a linear function of all Hamiltonian and dissipative
coefficients, and the prefactors $\tr{i[P_i,O]\,X}$ and
$\tr{P_j O P_i\,X}-\frac12\tr{\acomm{P_j P_i}{O}\,X}$
depend only on the probe $(X,O)$ and can be computed efficiently.

Evaluating \cref{eq:coefficient_learning_linear_relation} for a collection of probe settings $(X,O)$ yields a
system of linear equations
\begin{equation}
    \vb{d} = C\,\vb{x},
\end{equation}
where $\vb{d}$ is the vector of measured derivatives, $\vb{x}$ collects all unknown coefficients $(h_i,a_{ij})$,
and each row of the design matrix $C$ is determined by the corresponding probe.
If the chosen probes yield linearly independent rows, the coefficients are uniquely determined.
When $C$ has full column rank, the coefficients can be recovered by least squares
\begin{equation}
    \vb{x} = (C^\top C)^{-1}C^\top \vb{d}.
\end{equation}
Thus, the coefficient-learning problem reduces to a classical linear inversion task, with design matrix $C$ fixed by
Pauli algebra. In \cref{algorithm:coefficient_learning}, we construct a square invertible $C$. In practice, one
could also use an overdetermined system by including additional probe settings. Such oversampling can improve
stability in favorable instances, but it can also leave the conditioning unchanged or worse depending on the added
rows. We do not analyze this effect.

We now make the linear structure explicit and derive the real-valued system used for coefficient estimation.
Define the shorthand coefficients
\begin{equation}
    c_{H,i}^{X,O} := \tr{i[P_i,O]\,X},
    \qquad
    c_{D,ij}^{X,O} := \tr{P_j O P_i\,X} - \tfrac12 \tr{\acomm{P_j P_i}{O}\,X},
\end{equation}
so that the derivative for each probe $(X,O)$ can be written as
\begin{equation}
    d(X,O)
    =
    \sum_{i \in \widehat{\mathcal{S}}_H} h_i\,c_{H,i}^{X,O}
    +
    \sum_{i,j \in \widehat{\mathcal{S}}_D} a_{ij}\,c_{D,ij}^{X,O}.
\end{equation}
For Pauli observables and Pauli probes $X$ from
\cref{eq:observable-derivative-at-zero-gives-coefficients,eq:pauli-input-derivative-at-zero-gives-coefficients},
all coefficients $c_{H,i}^{X,O}$ and $c_{D,ij}^{X,O}$ are computable directly from Pauli commutation relations.
To obtain a real-valued linear system, we separate diagonal and off-diagonal dissipator terms,
\begin{align}
    \sum_{i,j \in \widehat{\mathcal{S}}_D} a_{ij}\,c_{D,ij}^{X,O}
    &=
    \sum_{i \in \widehat{\mathcal{S}}_D} a_{ii}\,c_{D,ii}^{X,O}
    +
    \sum_{\substack{i,j \in \widehat{\mathcal{S}}_D \\ j<i}}
    \Bigl(
        2\Re\{a_{ij}\}\,\Re\{c_{D,ij}^{X,O}\}
        -
        2\Im\{a_{ij}\}\,\Im\{c_{D,ij}^{X,O}\}
    \Bigr),
\end{align}
where we used the Hermiticity of both $a$ and $c_D^{X,O}$.
All unknown parameters are grouped into real vectors
\begin{equation}
    \vb{h},\quad
    \vb{a}_{\mathrm{diag}},\quad
    \vb{a}_{\Re},\quad
    \vb{a}_{\Im},
\end{equation}
with corresponding design vectors
\begin{equation}
    \vb{c}_H^{X,O},\quad
    \vb{c}_{D,\mathrm{diag}}^{X,O},\quad
    \vb{c}_{D,\Re}^{X,O},\quad
    \vb{c}_{D,\Im}^{X,O}.
\end{equation}
Combining these contributions yields
\begin{equation}
    d(X,O)
    =
    \vb{c}_H^{X,O}\vdot\vb{h}
    +
    \vb{c}_{D,\mathrm{diag}}^{X,O}\vdot\vb{a}_{\mathrm{diag}}
    +
    \vb{c}_{D,\Re}^{X,O}\vdot\vb{a}_{\Re}
    +
    \vb{c}_{D,\Im}^{X,O}\vdot\vb{a}_{\Im}.
\end{equation}
Finally, we concatenate all parameters into a single unknown vector $\vb{x}$ and define the corresponding design
row $\vb{c}^{\,X,O}$,
\begin{equation}
    \vb{x}
    :=
    \vb{h} \oplus \vb{a}_{\mathrm{diag}} \oplus \vb{a}_{\Re} \oplus \vb{a}_{\Im},
    \qquad
    \vb{c}^{\,X,O}
    :=
    \vb{c}_H^{X,O} \oplus \vb{c}_{D,\mathrm{diag}}^{X,O}
    \oplus \vb{c}_{D,\Re}^{X,O} \oplus \vb{c}_{D,\Im}^{X,O},
\end{equation}
leading to the compact linear system
\begin{equation}
    d(X,O) = \vb{c}^{\,X,O}\vdot\vb{x},
    \qquad
    \vb{d} = C\vb{x}.
\end{equation}
The block structure of $C$ is illustrated in \cref{fig:linear_system}.
\begin{figure}[ht]
    \centering
    \begin{tikzpicture}[>=latex, font=\small, node distance=7mm,
  every node/.style={align=center},
  box/.style={draw, rounded corners, inner sep=3pt, minimum height=0.7cm},
  blk/.style={draw, thick, rounded corners, inner sep=3pt}
]

\node[box, minimum width=1.2cm] (d1) {$d_1$};
\node[box, below=4mm of d1, minimum width=1.2cm] (d2) {$d_2$};
\node[below=4mm of d2] (vdotsd) {$\vdots$};
\node[box, below=4mm of vdotsd, minimum width=1.2cm] (dM) {$d_M$};

\node[blk, fit=(d1) (dM)] (Dblk) {};

\node[box, right=12mm of d1, minimum width=2.0cm] (cH1) {$\vb*{c}_H^{X_1,O_1}$};
\node[box, right=4mm of cH1, minimum width=2.0cm] (cDdiag1) {$\vb*{c}_{D,\mathrm{diag}}^{X_1,O_1}$};
\node[box, right=4mm of cDdiag1, minimum width=2.0cm] (cDre1) {$\vb*{c}_{D,\Re}^{X_1,O_1}$};
\node[box, right=4mm of cDre1, minimum width=2.0cm] (cDim1) {$\vb*{c}_{D,\Im}^{X_1,O_1}$};

\node[box, below=4mm of cH1, minimum width=2.0cm] (cH2) {$\vb*{c}_H^{X_2,O_2}$};
\node[box, right=4mm of cH2, minimum width=2.0cm] (cDdiag2) {$\vb*{c}_{D,\mathrm{diag}}^{X_2,O_2}$};
\node[box, right=4mm of cDdiag2, minimum width=2.0cm] (cDre2) {$\vb*{c}_{D,\Re}^{X_2,O_2}$};
\node[box, right=4mm of cDre2, minimum width=2.0cm] (cDim2) {$\vb*{c}_{D,\Im}^{X_2,O_2}$};

\node[below=4mm of cH2] (vdotsC1) {$\vdots$};
\node[below=4mm of cDdiag2] (vdotsC2) {$\vdots$};
\node[below=4mm of cDre2] (vdotsC3) {$\vdots$};
\node[below=4mm of cDim2] (vdotsC4) {$\vdots$};

\node[box, below=4mm of vdotsC1, minimum width=2.0cm] (cHM) {$\vb*{c}_H^{X_M,O_M}$};
\node[box, right=4mm of cHM, minimum width=2.0cm] (cDdiagM) {$\vb*{c}_{D,\mathrm{diag}}^{X_M,O_M}$};
\node[box, right=4mm of cDdiagM, minimum width=2.0cm] (cDreM) {$\vb*{c}_{D,\Re}^{X_M,O_M}$};
\node[box, right=4mm of cDreM, minimum width=2.0cm] (cDimM) {$\vb*{c}_{D,\Im}^{X_M,O_M}$};

\node[blk, fit=(cH1) (cDimM)] (Cblk) {};

\node[box, right=6mm of cDim1, minimum width=1.5cm] (xh) {$\vb*{h}$};
\node[box, below=4mm of xh, minimum width=1.5cm] (xadiag) {$\vb*{a}_{\mathrm{diag}}$};
\node[box, below=4.5mm of xadiag, minimum width=1.5cm] (xare) {$\vb*{a}_{\Re}$};
\node[box, below=4.5mm of xare, minimum width=1.5cm] (xaim) {$\vb*{a}_{\Im}$};
\node[blk, fit=(xh) (xaim)] (Xblk) {};

\node at ($(Dblk.east)!0.5!(Cblk.west)$) {$=$};
\node at ($(Cblk.east)!0.5!(Xblk.west)$) {$\cdot$};

\end{tikzpicture}
    \caption{Block structure of the linear system $\vb*{d} = C \vb*{x}$.}
    \label{fig:linear_system}
\end{figure}
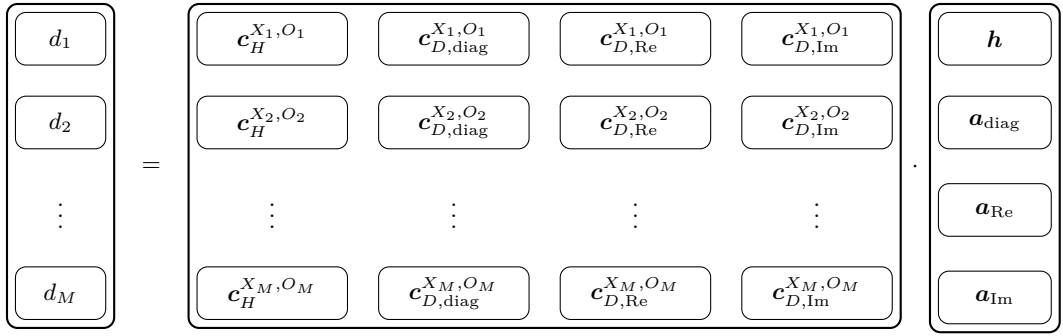
If $C$ has full column rank, the coefficients can be reconstructed by least squares,
\begin{equation}
    \vb{x} = (C^\top C)^{-1} C^\top \vb{d}.
\end{equation}

\subsection{Lindbladian patchwise Pauli tomography}\label{sec:lindbladian_patchwise_tomography}

A central step in coefficient learning is to \emph{classically} choose a set of Pauli probes whose resulting design matrix $C$ is invertible. 
Given that all candidate Pauli terms are at most $k$-local, a naive approach is to keep adding all $\Theta(k)$-local Pauli probe settings on $n$ qubits until $C$ has full rank, which can incur classical overhead scaling as $\mathcal O(n^k)$. 
Instead, we introduce \emph{Lindbladian patchwise Pauli tomography}, which exploits the candidate structure to reduce this preprocessing to a cost that depends only on the sizes and locality of the Pauli terms in the candidate sets $\widehat{\mathcal{S}}_H, \widehat{\mathcal{S}}_D$ (see \cref{cor:classical_cost_patchwise_tomography}).

We now formalize the identifiability statement underlying patchwise Pauli tomography in \cref{lemma:injectivity_of_patch_wise_tomography}.

\begin{lemma}[Injectivity of patchwise Pauli tomography]
    \label{lemma:injectivity_of_patch_wise_tomography}
    Consider a Lindbladian with adjoint action expressed in the Pauli basis as
    \begin{equation}
        \mathcal L^\dagger(O)
        =
        i\sum_{P_i\in \mathcal S_H} h_i [P_i,O]
        \;+\;
        \sum_{P_i,P_j\in \mathcal S_D} a_{ij}
        \Bigl(
            P_j O P_i
            -
            \tfrac12\{P_jP_i,O\}
        \Bigr),
    \end{equation}
    where $\mathcal S_H,\mathcal S_D\subset \mathcal P_n \setminus \{I\}$ are known index sets. Define $ \mathcal T$ to be the family of Pauli support sets (patches):
    \begin{equation} \label{eq:family-of-all-patches}
        \mathcal T
        :=
        \bigl\{\mathrm{supp}(P_i): P_i\in \mathcal S_H\bigr\}
        \;\cup\;
        \bigl\{
            \mathrm{supp}(P_i)\cup \mathrm{supp}(P_j)
            :
            P_i,P_j\in \mathcal S_D
        \bigr\}.
    \end{equation}
    where $\supp{P}$ denotes the set of sites on which the Pauli $P$ acts non-trivially.
    
    For each patch $T\in\mathcal T$, define the corresponding set of Hermitian Paulis: $\mathcal P_T := \{P\in\mathcal P_n : \mathrm{supp}(P)\subseteq T\}$ and suppose that the full set of linear data
    \begin{equation}\label{eq:tomographically-complete-data-set}
        m_T(O,Q)
        :=
        \frac{1}{2^n}\mathrm{tr} \Bigl(
            Q\,\mathcal L^\dagger(O)
        \Bigr),
        \qquad
        \forall\, O,Q\in \mathcal P_T,
    \end{equation}
    is available. Then the mapping from the coefficients
    $\{h_i\}_{P_i\in \mathcal S_H}$ and $\{a_{ij}\}_{P_i,P_j\in \mathcal S_D}$
    to the collection of data
    $\{m_T(O,Q)\}_{T\in\mathcal T,\;O,Q\in\mathcal P_T}$
    is injective.
\end{lemma}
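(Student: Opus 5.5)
The plan is to use linearity to reduce injectivity to a statement about a single generator, and then to peel off coefficients by downward induction on patch size.

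\textbf{Reduction.} The map from coefficients to data is linear. It therefore suffices to show the following: if a generator $\mathcal L^\dagger$ of the stated form, with real $h$ and Hermitian $a$ (not necessarily PSD, since we take differences), gives $m_T(O,Q)=0$ for all $T\in\mathcal T$ and $O,Q\in\mathcal P_T$, then every $h_i$ and $a_{ij}$ vanishes. For a patch $T$, the data $\{m_T(O,Q)\}_{O,Q\in\mathcal P_T}$ are exactly the Pauli-basis matrix elements of the \emph{reduced generator}
\[
\mathcal L_T^\dagger(O_T):=2^{-(n-|T|)}\,\mathrm{tr}_{T^c}\bigl(\mathcal L^\dagger(O_T\otimes I_{T^c})\bigr),
\]
which acts on operators on $T$. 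So the data on $T$ determine $\mathcal L_T^\dagger$ completely.

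\textbf{Computing $\mathcal L_T^\dagger$.} Write each Pauli as $P=A\otimes B$, with $A$ acting on $T$ and $B$ on $T^c$.
\begin{itemize}
\item A Hamiltonian term gives $[P,O_T\otimes I]=[A,O_T]\otimes B$. It survives the partial trace iff $B=I$, i.e.\ iff $\mathrm{supp}(P)\subseteq T$, and then it appears unchanged.
\item A dissipative pair $(P_i,P_j)$ gives $(A_jO_TA_i)\otimes B_jB_i$ together with the anticommutator pieces. It survives iff $B_i=B_j$, and then it contributes the restricted pair $(A_i,A_j)$ with the same coefficient $a_{ij}$.
\end{itemize}
Thus $\mathcal L_T^\dagger$ has the same Pauli form on $T$, with effective coefficients
\[
h^T_A=h_A,\qquad a^T_{A_iA_j}=\sum_{B}a_{A_i\otimes B,\;A_j\otimes B}.
\]
Write the Pauli-basis expansion of the reduced generator on $T$ as $\sum_{\alpha,\beta}c^T_{\alpha\beta}A_\alpha(\cdot)A_\beta$, allowing the identity index. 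This expansion is unique, by linear independence of $\rho\mapsto A_\alpha\rho A_\beta$ (the argument used in \cref{lemma:time_derivatives_pauli_rates}). So all the $c^T_{\alpha\beta}$ are determined by, here equal to zero from, the data.

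\textbf{Downward induction on $|\mathrm{supp}(P_i)\cup\mathrm{supp}(P_j)|$.} Take a pair $(P_i,P_j)$ with $T:=\mathrm{supp}(P_i)\cup\mathrm{supp}(P_j)\in\mathcal T$. Any other pair $(A_i\otimes B,A_j\otimes B)$ with $B\neq I$ that collides with it in $a^T_{A_iA_j}$ has union support $T\cup\mathrm{supp}(B)\supsetneq T$. Hence it has strictly larger patch size, and its coefficient is already known to vanish by the induction hypothesis (at the maximal size there are no collisions at all).

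When $A_i,A_j\neq I$, the entry $c^T_{ij}$ equals $a^T_{ij}$. This gives $a_{ij}=0$, and in particular handles all diagonal $a_{ii}$.

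Hamiltonian terms use the patch $T=\mathrm{supp}(P_k)$. The relevant entry $c^T_{k0}$ (or $c^T_{0k}$) mixes $-ih_k$ with contributions $a^T_{k0}$ and $a^T_{0k}$. These come only from pairs $(P_k\otimes B,\,I\otimes B)$ with $B\neq I$, whose union support strictly contains $T$, and they have already been shown to vanish. So $h_k$ is isolated and equals zero.

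\textbf{Main obstacle.} The delicate step is this collapse bookkeeping. First, restricted pairs with an identity factor on $T$ feed into the same $c_{0k}$ and $c_{k0}$ entries as the Hamiltonian. Second, pairs $(P_i,P_j)$ where one Pauli's support is contained in the other's must be treated in the correct patch. I would order the induction by patch size and, within a patch, extract dissipator entries before Hamiltonian ones. At each step I would check that every colliding term has a strictly larger union support, which is the property that makes the triangular elimination go through.
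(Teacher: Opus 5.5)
Your proposal is correct and follows essentially the same route as the paper. Your partial-trace reduced generator is the paper's $\Pi_T\circ\Delta\mathcal L^\dagger\circ\Pi_T$. Your downward induction on $|\mathrm{supp}(P_i)\cup\mathrm{supp}(P_j)|$ is the paper's peeling of inclusion-maximal patches, since every collision comes from a pair sharing a nontrivial factor $B$ on $T^c$. The Hamiltonian coefficients are then read off the $E_{P_k,I}$ entry.

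One small imprecision: that entry also receives anticommutator contributions $-\tfrac12 a^T_{A_iA_j}\omega_{ji}$ from every restricted pair with $A_jA_i\propto A_k$, not only from $(P_k\otimes B, I\otimes B)$. These are dissipator coefficients you have already shown to vanish, so the conclusion is unaffected. The paper avoids the issue in the same way by clearing all $\Delta a_{ij}$ before touching any $\Delta h_k$.
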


Equivalently, \cref{lemma:injectivity_of_patch_wise_tomography} is saying that there cannot exist two distinct Lindbladians supported on the same Pauli structure that give rise to identical patchwise Pauli tomography data. Hence, the coefficients can be uniquely reconstructed given the data.

\begin{proof}
The proof is by contradiction. Suppose that the claim of \cref{lemma:injectivity_of_patch_wise_tomography} is false. Then there exist two distinct Lindbladians $\mathcal L_1$ and $\mathcal L_2$ of the above form, with coefficient sets
$\{h_i^{(1)}\},\{a_{ij}^{(1)}\}$ and $\{h_i^{(2)}\},\{a_{ij}^{(2)}\}$, such that at least one coefficient differs, $h_i^{(1)} \neq h_i^{(2)}$ or $a_{ij}^{(1)} \neq a_{ij}^{(2)}$, while the induced patchwise data coincide,
\begin{equation}
    m_T^{(1)}(O,Q)
    =
    m_T^{(2)}(O,Q),
    \qquad
    \forall\, T\in\mathcal T,\;
    \forall\, O,Q\in\mathcal P_T.
\end{equation}
Define the difference Lindbladian $\Delta\mathcal L := \mathcal L_1 - \mathcal L_2$. The above assumptions are then equivalent to the simultaneous validity of the two conditions
\begin{equation} \label{eq:patchwise-tomography-contradicting-assumptions}
    \Delta\mathcal L \neq 0
    \qq{and} 
    \frac{1}{2^n}\mathrm{tr}\Bigl(
        Q\,\Delta\mathcal L^\dagger(O)
    \Bigr)
    =
    0,
    \qquad
    \forall\, T\in\mathcal T,\;
    \forall\, O,Q\in\mathcal P_T.
\end{equation}
The remainder of the proof shows that these two conditions cannot hold simultaneously. In particular, the second condition implies $\Delta\mathcal L = 0$, yielding a contradiction.

\medskip
\textbf{Preliminaries.}
Let $T\subseteq\{1,\dots,n\}$ be a subset of qubits. We call $T$ a patch. We denote by $\mathcal B(\mathcal H)_T := \mathrm{span}(\mathcal P_T)$ the space of $n$-qubit operators supported entirely within $T$, where $\mathcal P_T := \{P\in\mathcal P_n : \mathrm{supp}(P)\subseteq T\}$. The unrestricted space of $n$-qubit operators is denoted by $\mathcal B(\mathcal H)$.

Let $\Pi_T : \mathcal B(\mathcal H)\to \mathcal B(\mathcal H)_T$ denote the corresponding orthogonal projector, given by
\begin{equation}
    \Pi_T(X)
    :=
    \sum_{Q\in\mathcal P_T} \langle Q,X\rangle\, Q,
\end{equation}
where $\langle Q,X\rangle := 2^{-n}\mathrm{tr}(Q^\dagger X)$. 
If $X=\sum_{P_i\in\mathcal P_n} x_i P_i$ is the Pauli expansion of an arbitrary operator, then $\Pi_T(X) = \sum_{\mathrm{supp}(P_i)\subseteq T} x_i P_i$. Thus, $\Pi_T$ removes all Pauli components of $X$ that have nontrivial support outside of $T$.

Given a linear map $ \Delta\mathcal L^\dagger : \mathcal B(\mathcal H)\to \mathcal B(\mathcal H)$, we define its restriction to the patch $T$ by
\begin{equation}
    \Delta\mathcal L_T^\dagger
    :=
    \Pi_T\circ \Delta\mathcal L^\dagger\circ \Pi_T .
\end{equation}
When restricted to the domain $B(\mathcal H)_T$, this map captures the action of $\Delta\mathcal L^\dagger$ visible within the patch $T$. Therefore, it can be expanded in a basis of $\mathrm{End}(\mathcal B(\mathcal H)_T)$. We use the elementary basis maps
\begin{equation}
    E_{P_n,P_m}(X) := P_n X P_m,
    \qquad
    P_n,P_m\in\mathcal P_T .
\end{equation}
Accordingly, there exist unique coefficients $c^{(T)}_{mn}$ such that:
\begin{equation}
    \label{eq:endomorphism_basis_representation}
    \Delta\mathcal L_T^\dagger
    =
    \sum_{P_m,P_n\in\mathcal P_T}
    c^{(T)}_{mn}\, E_{P_n,P_m}.
\end{equation}

Similarly, on the full space $\mathcal B(\mathcal H)$, $P_i,P_j\in\mathcal P_n$, the difference adjoint Lindbladian can be expressed as
\begin{equation}
    \Delta\mathcal L^\dagger(X)
    =
    i\sum_{P_i\in \mathcal S_H} \Delta h_i\bigl(E_{P_i, I}(X)-E_{ I,P_i}(X)\bigr)
    +
    \sum_{P_i,P_j\in \mathcal S_D} \Delta a_{ij}
    \Bigl(
        E_{P_j,P_i}(X)
        -
        \tfrac12 E_{P_jP_i, I}(X)
        -
        \tfrac12 E_{ I,P_jP_i}(X)
    \Bigr).
\end{equation}
Projecting onto $T$ gives an explicit representation:
\begin{align}
    \label{eq:projected_form_on_a_patch}
    \Delta\mathcal L_T^\dagger(X)
    &=
    i\sum_{P_i\in \mathcal S_H} \Delta h_i\Bigl(
        (\Pi_T\circ E_{P_i,I})(X_T)
        -
        (\Pi_T\circ E_{I,P_i})(X_T)
    \Bigr)
    \nonumber\\
    &\quad+
    \sum_{P_i,P_j\in \mathcal S_D} \Delta a_{ij}
    \Bigl(
        (\Pi_T\circ E_{P_j,P_i})(X_T)
        -
        \tfrac12(\Pi_T\circ E_{P_jP_i,I})(X_T)
        -
        \tfrac12(\Pi_T\circ E_{I,P_jP_i})(X_T)
    \Bigr),
\end{align}
where $X_T:=\Pi_T(X)$ and $(\Pi_T\circ E_{P_j,P_i})(X_T)=\Pi_T(P_j X_T P_i)$.

\medskip
\textbf{Matching coefficients.}
We first show that all of the dissipative differences $\{\Delta a_{ij}\}_{P_i,P_j\in\mathcal S_D}$ vanish.  Let
\begin{equation}
\mathcal T_D
:=
\bigl\{\supp{P_i}\cup\supp{P_j}:\;P_i,P_j\in\mathcal S_D\bigr\}
\end{equation}
be the family of dissipator patches. The argument proceeds by an iterative ``peeling'' over these patches: at each step we pick a patch that is maximal among those not yet handled, prove that \emph{all} coefficient differences whose union-support equals that patch are zero, and then remove that patch and continue.

Formally, let $\mathcal T_D^{(0)}:=\mathcal T_D$. At stage $\ell$, choose any patch
$T\in\mathcal T_D^{(\ell)}$ that is maximal under inclusion (i.e., no $T'\in\mathcal T_D^{(\ell)}$ satisfies
$T\subsetneq T'$). Let
\[
\mathcal I^{(\ell)}(T)
:=\{(u,v):\ P_u,P_v\in\mathcal S_D,\ \supp{P_u}\cup\supp{P_v}=T\}
\]
denote the set of \emph{active pairs} realizing $T$. We will show that $\Delta a_{uv}=0$ for every
$(u,v)\in \mathcal I^{(\ell)}(T)$. We then set $\mathcal T_D^{(\ell+1)} := \mathcal T_D^{(\ell)}\setminus\{T\}$
and proceed to the next stage, repeating until $\mathcal T_D^{(\ell)}$ is empty. Throughout the peeling we maintain the invariant that, at the beginning of each stage $\ell$,
\[
\Delta a_{uv}=0\qquad\text{whenever}\qquad \supp{P_u}\cup\supp{P_v}\notin \mathcal T_D^{(\ell)}.
\]
This invariant holds trivially at $\ell=0$. The argument below shows that the chosen maximal patch $T$
has $\Delta a_{uv}=0$ for all $(u,v)\in \mathcal I^{(\ell)}(T)$, so removing $T$ preserves the invariant
when passing from $\mathcal T_D^{(\ell)}$ to $\mathcal T_D^{(\ell+1)}$.

Consider any active pair realizing this patch, for example, $(i,j) \in \mathcal I^{(\ell)}(T)$. We show that the corresponding coefficient $\Delta a_{ij}$ must vanish. By the second assumption in \eqref{eq:patchwise-tomography-contradicting-assumptions}, all patchwise Pauli data on $T$ is zero:
\begin{equation}
    \frac{1}{2^n}\mathrm{tr}\Bigl(
        Q\,\Delta\mathcal L^\dagger(O)
    \Bigr)
    =
    0,
    \qquad
    \forall\, O,Q\in\mathcal P_{T}
\end{equation}
Hence, for any
$O\in\mathcal P_{T}$,
\begin{equation}
    \Delta\mathcal L_{T}^\dagger(O)
    =
    \sum_{Q\in\mathcal P_{T}}
    \langle Q,\Delta\mathcal L^\dagger(O)\rangle\,Q
    =
    0,
\end{equation}
which implies $\Delta\mathcal L_{T}^\dagger=0$ as a map on
$\mathcal B(\mathcal H)_{T}$. Consequently, all coefficients
$c^{(T)}_{mn}$ in \cref{eq:endomorphism_basis_representation} must vanish.

We now match these coefficients to the Lindbladian parameters. Consider the
coefficient  $c^{(T)}_{ij}$ in front of $E_{P_j,P_i}$ in \cref{eq:endomorphism_basis_representation}. From the projected form of the Lindbladian in \cref{eq:projected_form_on_a_patch}, by linear independence of maps $E_{P_v,P_u}$, we get
\begin{equation} \label{eq:matching-patch-and-lindblad-coeff}
    c^{(T)}_{ij}
    =
    \sum_{(P_u,P_v)\sim (P_i,P_j)}
    \Delta a_{uv},
\end{equation}
where the sum is over $P_u,P_v \in \mathcal S_D$ and the condition $(P_u,P_v)\sim (P_i,P_j)$ means that $\Pi_{T}\circ E_{P_v,P_u} = E_{P_j,P_i}$ as maps on $\mathcal B(\mathcal H)_{T}$.

In the above sum, the pair $(P_i,P_j)$ contributes the term $\Delta a_{ij} E_{P_j,P_i}$ directly. The only other way to obtain the same
projected basis element within the sum in \eqref{eq:matching-patch-and-lindblad-coeff} after applying $\Pi_{T}$ is if $P_u$ and $P_v$
coincide with $P_i$ and $P_j$ on the patch $T$ and share a common nontrivial Pauli
factor outside $T$. That is, if $P_u = \omega_u P_i \, (I_{T}\otimes R)$ and $P_v = \omega_v P_j \, (I_{T}\otimes R)$ for some nontrivial Pauli $R$ supported on the complement of $T$ and some phases $\omega_u,\omega_v\in\{\pm1,\pm i\}$. However, in this case, $T_{uv}\in\mathcal T_D$ strictly contains $T$:
\begin{equation} \label{eq:patch-dissipator-support-contradiction}
    T_{uv} = \supp{P_u}\cup \supp{P_v}
    =
    T\cup \supp{R}
    \supsetneq
    T.
\end{equation}
Since $T$ is maximal in $\mathcal T_D^{(\ell)}$, this implies that $T_{uv}\notin \mathcal T_D^{(\ell)}$ and, therefore, by the maintained invariant has $\Delta a_{uv} =0$. It follows that the right-hand sum in \eqref{eq:matching-patch-and-lindblad-coeff} reduces to a single term. Since $\Delta\mathcal L_{T}^\dagger=0$, we obtain $c^{(T)}_{ij} = \Delta a_{ij} = 0$.

The above argument applies verbatim to any pair $(u',v') \in \mathcal I^{(\ell)}(T)$ upon considering $c^{(T)}_{u'v'}$ of $E_{P_v',P_u'}$. Hence, all dissipative coefficient differences associated with the maximal patch in $T\in\mathcal T_D^{(\ell)}$ are zero, and proceeding to $\mathcal T_D^{(\ell+1)} := \mathcal T_D^{(\ell)}\setminus\{T\}$ maintains the invariant. 

Iterating this peeling until $\mathcal T_D^{(\ell)}=\emptyset$ yields
\[
\Delta a_{ij}=0 \qquad \forall\, P_i,P_j\in\mathcal S_D,
\]
and hence the dissipative part of $\Delta\mathcal L^\dagger$ vanishes. Since every dissipator coefficient $\Delta a_{uv}$ is indexed by a pair $(u,v)$ and hence has a patch $\supp{P_u}\cup\supp{P_v}\in\mathcal T_D$, restricting the peeling argument to $\mathcal T_D\subseteq\mathcal T$ is sufficient.

Since we have already established $\Delta a_{uv}=0$ for all $P_u,P_v\in \mathcal S_D$, the difference map $\Delta\mathcal L^\dagger$ is purely Hamiltonian,
\begin{equation} \label{eq:hamiltonian-difference-map}
    \Delta\mathcal L^\dagger(O)= i\sum_{P_j\in \mathcal S_H} \Delta h_j [P_j,O].
\end{equation}

It remains to show that the Hamiltonian coefficients vanish as well. Let
\begin{equation}
\mathcal T_H
:=
\bigl\{\supp{P_i}\:\;P_i\in\mathcal S_H\bigr\}
\end{equation}
be the family of Hamiltonian patches. Consider any patch $T \in \mathcal T_H$ and let:
\[
\mathcal I_H(T)
:=\{j:\ P_j\in\mathcal S_H,\ \supp{P_j}=T\}
\]
denote the set of Pauli indices realizing $T$. Consider any index $i\in \mathcal I_H(T)$. As above, vanishing patchwise data on $T$ implies $\Delta\mathcal L_{T}^\dagger=0$, and hence all coefficients
$c^{(T)}_{mn}$ in \cref{eq:endomorphism_basis_representation} are zero. We now match these coefficients to the Hamiltonian parameters. Projecting \eqref{eq:hamiltonian-difference-map} into the patch $T$ and matching the coefficients in front of $E_{P_i,I}$ with \eqref{eq:endomorphism_basis_representation} yields:
\begin{equation}
    c^{(T)}_{0i} = i\,\Delta h_i
\end{equation}
Since $\Delta\mathcal L_{T}^\dagger=0$, and thus $c^{(T)}_{0i} =0$, this implies $\Delta h_i = 0$.

The above argument applies verbatim to any patch $T \in \mathcal T_H$ and any corresponding Pauli index $u' \in \mathcal I_H(T)$, therefore, concluding that: 
\begin{equation}
    \Delta h_i = 0
    \qquad
    \forall\, P_i\in \mathcal S_H .
\end{equation}

Combined with $\Delta a_{ij}=0$, this yields $\Delta\mathcal L^\dagger=0$. This contradicts the assumption that $\Delta\mathcal L^\dagger\neq 0$. We conclude that no two distinct Lindbladians supported on the fixed index sets $\mathcal S_H$ and $\mathcal S_D$ can produce the same patchwise Pauli tomography data. Equivalently, the coefficient-to-data map is injective. This completes the proof.
\end{proof}

In the coefficient–learning stage we are given candidate structure supersets
$\widehat{\mathcal{S}}_H$ and $\widehat{\mathcal{S}}_D$.
We define the Hamiltonian locality $k_H$ and dissipator locality $k_D$ as
\begin{equation} \label{eq:extracted-localities}
    k_H := \max_{P\in\widehat{\mathcal{S}}_H} |\supp{P}|,
    \qquad
    k_D := \max_{(P_i,P_j)\in\widehat{\mathcal{S}}_D} |\supp{P_i}\cup\supp{P_j}| .
\end{equation}
These quantities correspond to the largest patch sizes that can appear in
\cref{lemma:injectivity_of_patch_wise_tomography}.
They are known at this stage, since $\widehat{\mathcal{S}}_H$ and $\widehat{\mathcal{S}}_D$ are outputs of the structure learning procedure.

\begin{corollary}[Classical cost of constructing an invertible design matrix]
\label{cor:classical_cost_patchwise_tomography}
Let $\widehat{\mathcal S}_H,\widehat{\mathcal S}_D$ be candidate Hamiltonian and dissipator Pauli supports, and let $k_H$ and $k_D$ be Hamiltonian and dissipator localities from \cref{eq:extracted-localities}. Let
\[
\widehat{\mathcal T}_H := \{\supp{P}:P\in\widehat{\mathcal S}_H\},
\qquad
\widehat{\mathcal T}_D := \{\supp{P_i}\cup\supp{P_j}:(P_i,P_j)\in\widehat{\mathcal S}_D\times\widehat{\mathcal S}_D\},
\]
and $\widehat{\mathcal T}:=\widehat{\mathcal T}_H\cup\widehat{\mathcal T}_D$.
Then the full patchwise Pauli dataset
\[
\Bigl\{\,m_T(O,Q)=2^{-n}\tr\bigl(Q\,\mathcal L^\dagger(O)\bigr):\;
T\in\widehat{\mathcal T},\ \forall\,O,Q\in\mathcal P_T\,\Bigr\}
\]
has an associated design matrix of full column rank (hence uniquely identifies all Lindbladian coefficients), and contains at most
\[
16^{k_H}\,|\widehat{\mathcal S}_H|\;+\;16^{k_D}\,|\widehat{\mathcal S}_D|^2
\]
Pauli probes. Consequently, enumerating this probe set and forming the corresponding design matrix
can be done in classical time
$\mathcal O\!\bigl(16^{k_H}|\widehat{\mathcal S}_H|+16^{k_D}|\widehat{\mathcal S}_D|^2\bigr)$.
\end{corollary}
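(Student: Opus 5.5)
The full-rank claim is a direct consequence of \cref{lemma:injectivity_of_patch_wise_tomography}, applied with $\mathcal S_H,\mathcal S_D$ replaced by the candidate supersets $\widehat{\mathcal S}_H,\widehat{\mathcal S}_D$. The lemma only requires that the index sets be known, and these are outputs of structure learning. It then gives injectivity of the linear map $\vb x\mapsto\{m_T(O,Q)\}_{T\in\widehat{\mathcal T},\,O,Q\in\mathcal P_T}$ on the coefficient space indexed by the candidate sets.

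One point needs care. The lemma is stated for the complex coefficients $(h_i,a_{ij})$, whereas the design matrix acts on the real vector $\vb x=\vb h\oplus\vb a_{\mathrm{diag}}\oplus\vb a_{\Re}\oplus\vb a_{\Im}$ from \cref{sec:linear_relation_expectations}. I would note that the parametrization of Hermitian $a$ by these real vectors is a real-linear bijection. Each datum $m_T(O,Q)$ is real-linear in $\vb x$: it is real for Hermitian $\mathcal L^\dagger(O)$, and otherwise one can use real and imaginary parts. So a nonzero kernel vector of $C$ would produce two distinct Lindbladians on the same index sets with identical patchwise data, which contradicts the lemma. Hence $C$ has trivial kernel, i.e.\ full column rank.

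For the probe count, I would bound $|\widehat{\mathcal T}_H|\le|\widehat{\mathcal S}_H|$ and $|\widehat{\mathcal T}_D|\le|\widehat{\mathcal S}_D|^2$, since each patch is the image of at least one term or pair. For $T\in\widehat{\mathcal T}_H$ we have $|T|\le k_H$, so $|\mathcal P_T|\le 4^{k_H}$ and there are at most $16^{k_H}$ pairs $(O,Q)$. Likewise each dissipator patch contributes at most $16^{k_D}$ pairs. Summing over patches (overcounting where patches are shared or nested is harmless) gives the stated bound $16^{k_H}|\widehat{\mathcal S}_H|+16^{k_D}|\widehat{\mathcal S}_D|^2$.

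For the classical time, enumerating patches costs one pass over $\widehat{\mathcal S}_H$ and $\widehat{\mathcal S}_D\times\widehat{\mathcal S}_D$. Enumerating $\mathcal P_T\times\mathcal P_T$ then costs $16^{|T|}$ per patch. The main obstacle, and the step I would handle most carefully, is ensuring that each design row costs $O(1)$ amortized rather than $O(\widehat M)$. For a fixed probe $(Q,O)$, only terms with support inside $T$ can produce nonzero entries under the $\Pi_T$-type matching. Moreover, the entries $c_{H,i}$ and $c_{D,ij}$ are single Pauli-algebra evaluations (commutation phases on $k$ qubits, treated as constant cost). I would therefore store each row sparsely and attach each row's generation cost to the patch-generating term or pair, so that the total stays within $\mathcal O(16^{k_H}|\widehat{\mathcal S}_H|+16^{k_D}|\widehat{\mathcal S}_D|^2)$. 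If this accounting fails, the fallback is to state the cost up to a factor counting the terms compatible with each patch.
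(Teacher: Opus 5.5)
Your full-rank argument and your probe count coincide with the paper's proof. The paper invokes \cref{lemma:injectivity_of_patch_wise_tomography} on the candidate sets and counts at most $16^{|T|}$ pairs per patch, using $|\widehat{\mathcal T}_H|\le|\widehat{\mathcal S}_H|$ and $|\widehat{\mathcal T}_D|\le|\widehat{\mathcal S}_D|^2$. Your remark that the real parametrization $\vb{x}$ is a real-linear bijection onto (real $h$, Hermitian $a$) is correct and fills a detail the paper leaves implicit: a kernel vector of $C$ would then contradict the lemma. The paper stops at this point. It identifies the classical time with the number of enumerated probes and never argues about the cost of writing out the entries of a row.

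The step you add to make each explicitly formed row cost $O(1)$ amortized does not work, because its premise is false. For a probe on patch $T$, terms not supported in $T$ do produce nonzero entries.

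\emph{Why.} The entry $c_{D,ij}$ involves $\tr(P_jOP_iQ)$ and $\tr(P_jP_iOQ)$, so it can be nonzero whenever $P_jP_i\propto QO$. This only requires $P_i$ and $P_j$ to agree outside $T$. That is exactly the case $P_u=\omega_uP_i(I_T\otimes R)$, $P_v=\omega_vP_j(I_T\otimes R)$ singled out in the proof of the lemma. For each $i$ the condition fixes $j$, so one row can carry up to $|\widehat{\mathcal S}_D|$ nonzero dissipator entries. The simplest instance is a diagonal column. In row $(O,Q=O)$,
\[
c_{D,uu}=2^{-n}\bigl(\tr(P_uOP_uO)-\tr(O^2)\bigr)=-2
\]
whenever $P_u$ anticommutes with $O$, regardless of $\supp{P_u}$.

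\emph{Counterexample.} Take $\widehat{\mathcal S}_D=\{X_1Z_c\}_{c=2}^{n}$ and, say, $\widehat{\mathcal S}_H=\widehat{\mathcal S}_D$. Then $k_D=3$ and there are $\Theta(n^2)$ patches of the form $\{1,c,c'\}$. Each of the $n-1$ diagonal columns has entry $-2$ in the $32$ rows $(O,O)$ with $O_1\in\{Y,Z\}$, in each of the $\Theta(n^2)$ three-qubit patches avoiding $c$. So the design matrix has $\Theta(n^3)$ nonzero entries, while the claimed bound is $\mathcal O(16^{3}n^2)$.

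Hence charging each row to its patch-generating pair cannot work. In general, no explicit construction of the matrix, even a sparse one, fits in the stated time. Two things are provable:
\begin{enumerate}
\item The enumeration bound, with rows left as implicit formulas. This is all the paper's own argument establishes.
\item The explicit construction with an extra factor of order $|\widehat{\mathcal S}_D|$. For each row, look up the single Hamiltonian column $P_i\propto QO$, then scan $i\in\widehat{\mathcal S}_D$ and check whether the Pauli label of $QOP_i$ lies in $\widehat{\mathcal S}_D$. This is your fallback.
\end{enumerate}
None of this affects \cref{theorem:ancilla_free_coefficient_learning}. There the rank-revealing elimination cost $\mathcal O(|\mathcal R_{\mathrm{all}}|\widehat M^2)$ dominates either way.
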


\begin{proof}
By Lemma~\ref{lemma:injectivity_of_patch_wise_tomography}, the full patchwise dataset over
$T\in\widehat{\mathcal T}$ and $O,Q\in\mathcal P_T$ uniquely determines the Lindbladian coefficients
supported on $\widehat{\mathcal S}_H$ and $\widehat{\mathcal S}_D$; equivalently, the design matrix
indexed by triples $(T,O,Q)$ has full column rank.
For each Hamiltonian patch $T\in\widehat{\mathcal T}_H$ we have $|T|\le k_H$ and thus
$|\mathcal P_T|=4^{|T|}\le 4^{k_H}$, so enumerating all pairs $(O,Q)\in\mathcal P_T\times\mathcal P_T$
produces at most $16^{k_H}$ probes per such patch (and by the proof of \cref{lemma:injectivity_of_patch_wise_tomography} identifies all Hamiltonian coefficients) . Since $|\widehat{\mathcal T}_H|\le|\widehat{\mathcal S}_H|$,
this contributes at most $16^{k_H}|\widehat{\mathcal S}_H|$ probes.
Similarly, for each dissipator patch $T\in\widehat{\mathcal T}_D$ we have $|T|\le k_D$, so there are at most
$16^{k_D}$ probe pairs per patch, and $|\widehat{\mathcal T}_D|\le|\widehat{\mathcal S}_D|^2$, contributing at most
$16^{k_D}|\widehat{\mathcal S}_D|^2$ probes. Summing the two contributions gives the claimed bound.
\end{proof}

\begin{remark}[Relationship between Lindbladian patches and components]
The family of patches $\mathcal T$ defined from $\mathcal S_H$ and $\mathcal S_D$ via \eqref{eq:family-of-all-patches}
upper bounds the supports of the Pauli components of $\mathcal L^\dagger$
(from \cref{def:pauli_lindbladian_components}): $\supp{\mathcal A_k^{H}}=\supp{P_k}\in\mathcal T$ and
$\supp{\mathcal A_{ij}^{D}}=\supp{P_i}\cup\supp{P_j}\in\mathcal T$.
\end{remark}

\subsection{Estimating time-derivatives of Pauli expectation values}
\label{sec:pauli_derivative_estimation_main}
Having established that the short-time derivatives of Pauli expectation values determine the Lindbladian coefficients through a linear system and that an appropriate choice of Pauli probes yields a full-rank design matrix $C$, it remains to estimate the required derivatives from experimental data.  A simple approach is to treat the probes independently: for each probe, we estimate the derivative $f'(0)$ of a Pauli expectation value of the form
\[
f(t)\;:=\;\tr\bigl(O\,\mathcal E_t(\rho)\bigr),\qquad \mathcal E_t=e^{t\mathcal L},
\]
where $O\in\mathcal P_n$ is the measured Pauli observable and $\rho$ is a Pauli-eigenstate input. We estimate $f'(0)$ by sampling $f(t)$ at a small set of short times, fitting a Gauss--Chebyshev interpolant, and analytically differentiating the interpolant at $t=0$.  
For probes defined via Pauli inputs $X=2^{-n}Q$ (i.e., quantities of the form $\tr(O\,\mathcal E_t(2^{-n}Q))$), one can reduce their estimation to the same primitive $\tr(O\,\mathcal E_t(\rho))$ using Pauli-eigenstate preparations with only constant overhead (see the discussion following \cref{eq:pauli-input-derivative-at-zero-gives-coefficients}).

\begin{lemma}[Estimation of Pauli–expectation derivatives]
    \label{lemma:pauli_expectation_derivative_estimation}
    Let $\mathcal{E}_t=e^{\mathcal{L}t}$ be an $n$-qubit channel generated by a Lindbladian $\mathcal{L}$.
    Let $\mathfrak d$ be the maximum degree of the dual interaction graph of $\mathcal{L}$ as in \cref{def:dual_interaction_graph}. Let
    $O$ be a Pauli observable with neighborhood size $\mathfrak d_O\le \mathfrak d$ (cf.~\cref{eq:def_neighborhood_O_components}).
    Define $f(t)=\tr{O\,\mathcal{E}_t(\rho)}$.
    Choose $\tau_{\max}=\Theta(1/\mathfrak d)$ and
    $r=\Theta(\log(\mathfrak d/\varepsilon))$, and obtain estimates
    $\{\widehat f(t_m)\}$ at the $r{+}1$ Gauss–Chebyshev nodes
    $\{t_m\}_{m=0}^r\subset[0,\tau_{\max}]$ using a total of
    \begin{equation}
        m \;=\;
        \mathcal{O}\left(
        \frac{\mathfrak d^{2}}{\varepsilon^{2}}\;
        \mathrm{polylog}\Big(\mathfrak d,\tfrac{1}{\varepsilon},\tfrac{1}{\delta}\Big)
        \right)
    \end{equation}
    applications of the channel $e^{\mathcal{L}t}$.
    Then one can construct an estimator $\widehat f^{\,\prime}(0)$ such that,
    with probability at least $1-\delta$,
    \begin{equation}
        \big|\widehat f^{\,\prime}(0)-f^{\prime}(0)\big|
        \;\le\;
        \varepsilon .
    \end{equation}
\end{lemma}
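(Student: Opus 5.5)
The plan is to combine the derivative bound of \cref{thm:observable_derivative_bounds} with the factorial-regime Chebyshev recipe of \cref{cor:first_chebyshev_derivative_parameters_factorial}, and then control the nodewise sampling error with a Hoeffding bound and a union bound over the nodes. Specifically, \cref{thm:observable_derivative_bounds} gives $\|f^{(k)}\|_\infty\le (2\mathfrak d)^k k!$ for every $k\ge 0$ and every $t\ge 0$. (If the probe uses the Pauli input $2^{-n}Q$ instead of a density matrix, \cref{remark:derivative_bounds_Pauli_input_state} gives the same bound.) Hence \cref{cor:first_chebyshev_derivative_parameters_factorial} applies with $B=1$ and $\Lambda=2\mathfrak d$. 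This fixes the schedule
\[
\tau_{\max}=\frac{1}{4\mathfrak d},\qquad r=\max\Bigl\{16,\bigl\lceil 4\log\bigl(\tfrac{16\mathfrak d}{\varepsilon}\bigr)\bigr\rceil\Bigr\},\qquad \varepsilon_s=\frac{\varepsilon}{20\,\mathfrak d\,r^{3}},
\]
which matches $\tau_{\max}=\Theta(1/\mathfrak d)$ and $r=\Theta(\log(\mathfrak d/\varepsilon))$. The corollary then guarantees $|\widehat f^{\,\prime}(0)-f'(0)|\le\varepsilon$ deterministically, provided the nodewise estimates satisfy $|\widehat f(t_m)-f(t_m)|\le\varepsilon_s$ for all $m=0,\dots,r$.

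The remaining step is statistical. At each node $t_m$ I would prepare $\rho$, apply $\mathcal E_{t_m}$, and measure the Pauli $O$ in its eigenbasis; this is a product Pauli-basis measurement, so the setting stays ancilla-free. Each shot returns a $\pm1$ outcome whose mean is $f(t_m)$. By Hoeffding's inequality, $N=\lceil 2\varepsilon_s^{-2}\log(2(r+1)/\delta)\rceil$ shots give $|\widehat f(t_m)-f(t_m)|\le\varepsilon_s$ with failure probability at most $\delta/(r+1)$. A union bound over the $r+1$ nodes makes all nodes good simultaneously with probability at least $1-\delta$.

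For a Pauli-input probe $X=2^{-n}Q$, I would estimate $f$ as the difference of the two expectation values obtained from $\rho_\pm=(I\pm Q)/2^n$. The difference of two means of $\pm1$ variables is still bounded, so this costs only a constant factor in shots.

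The total query count is then $(r+1)N=\mathcal O\bigl(r\,\mathfrak d^2 r^6\varepsilon^{-2}\log(r/\delta)\bigr)$, which is $\mathcal O\bigl(\mathfrak d^2\varepsilon^{-2}\,\mathrm{polylog}(\mathfrak d,1/\varepsilon,1/\delta)\bigr)$ since $r$ is logarithmic in $\mathfrak d/\varepsilon$.

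I do not expect a real obstacle. The one point needing care is that the $k!$ growth in the derivative bound forces the factorial-regime corollary rather than \cref{cor:first_chebyshev_derivative_parameters}; the $(r+1)^3$ factor it introduces is absorbed by taking $r$ a constant multiple of the logarithm, as that corollary already verifies. The other point is that the derivative bound must hold uniformly on $[0,\tau_{\max}]$, which \cref{thm:observable_derivative_bounds} provides for all $t\ge0$.
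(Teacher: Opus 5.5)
Your proposal is correct and follows the paper's proof step for step. It uses the same derivative bound from \cref{thm:observable_derivative_bounds} and the same factorial-regime schedule from \cref{cor:first_chebyshev_derivative_parameters_factorial} with $B=1$, $\Lambda=2\mathfrak d$, applies Hoeffding at each node with $\delta_s=\delta/(r+1)$, and reaches the same $\mathcal O(\mathfrak d^2 r^7\varepsilon^{-2}\log(r/\delta))$ total. Your side remark on Pauli-input probes is not needed for this lemma, and strictly $2^{-n}Q=\tfrac12(\rho_+-\rho_-)$, so the difference of the two expectations should be halved; either way the cost is only a constant factor.
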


\begin{proof}
    We estimate derivatives at $t=0$ using Chebyshev interpolation applied to
\begin{equation}
    f(t)=\langle O(t)\rangle=\tr{O\,e^{\mathcal{L}t}(\rho)} .
\end{equation}
If $f(t)$ were available exactly, the only source of error would be interpolation bias.
In practice, $f(t)$ is estimated from $m_s$ independent measurements. For a fixed time $t$, perform $m_s$ independent measurements of the Pauli observable $O$ on independently prepared copies of
$\mathcal{E}_t(\rho)$. Let $O_i(t)\in\{-1,1\}$ denote the outcome of the $i$th measurement (so that
$\Pr[O_i(t)=\pm 1]=\tfrac{1\pm \langle O(t)\rangle}{2}$). Then $\mathbb{E}[O_i(t)]=\langle O(t)\rangle$, and we estimate it by the sample mean
\begin{equation}
    \langle O(t)\rangle_{m_s}=\frac{1}{m_s}\sum_{i=1}^{m_s} O_i(t),
\end{equation}
and Hoeffding’s inequality implies
\begin{equation}
    \mathrm{P}\Big[\,\big|\langle O(t)\rangle_{m_s}-\langle O(t)\rangle\big|\ge\varepsilon_s\,\Big]
    \;\le\;
    2\,e^{-m_s\varepsilon_s^{2}/2}.
\end{equation}
Thus $\big|\langle O(t)\rangle_{m_s}-\langle O(t)\rangle\big|\le\varepsilon_s$
with probability at least $1-\delta_s$ whenever
\begin{equation} \label{eq:sample-mean-estimator-sample-complexity}
    m_s \;\ge\; \frac{2}{\varepsilon_s^{2}}\log\frac{2}{\delta_s}.
\end{equation}

To apply the Chebyshev error bounds, it remains to control higher-order derivatives of $\langle O(t)\rangle$.
By \cref{thm:observable_derivative_bounds}, the derivatives satisfy the uniform bound
\begin{equation} \label{eq:patch-derivative-bound-via-dual-graph-degree}
    \Big|\dv[k]{t}\langle O(t)\rangle\Big|
    \;\le\;
    (2\mathfrak d)^k\,(k!) ,
    \qquad
    k\ge0 ,
\end{equation}
uniformly in $t$ and independent of the initial state $\rho$.
Thus the assumptions of \cref{cor:first_chebyshev_derivative_parameters_factorial} hold with
$B=1$ and $\Lambda=2\mathfrak d$.

Applying \cref{cor:first_chebyshev_derivative_parameters_factorial} yields the parameter choices
\begin{equation}
    \label{eq:cheb_params_pauli_derivative}
    \tau_{\max} \;=\; \frac{1}{4\mathfrak d},
    \qquad
    r \;=\; \max\left\{16,\;\Big\lceil 4\log\Big(\tfrac{16\,\mathfrak d}{\varepsilon}\Big)\Big\rceil\right\},
    \qquad
    \varepsilon_s \;=\; \frac{\varepsilon}{20\,\mathfrak d\,r^{3}}.
\end{equation}
With $\{t_m\}_{m=0}^{r}$ chosen as Gauss–Chebyshev nodes on $[0,\tau_{\max}]$, \cref{cor:first_chebyshev_derivative_parameters_factorial} guarantees that the Chebyshev-based estimator constructed from noisy samples $\{\widehat f(t_m)\}$ obeys $\big|\widehat f^{\,\prime}(0)-f^{\prime}(0)\big| \;\le\; \varepsilon$, provided that the node-wise noise is uniformly bounded by
\begin{equation}
    \big|\widehat f(t_m)-f(t_m)\big| \;\le\; \varepsilon_s,
    \qquad \forall\, m\in\{0,\dots,r\}.
\end{equation}
By \cref{eq:sample-mean-estimator-sample-complexity}, achieving $\big|\widehat f(t_m)-f(t_m)\big| \le \varepsilon_s$ with failure probability at most $\delta_s$ requires
\begin{equation}
    m_s \;=\; \mathcal{O}\left(\varepsilon_s^{-2}\,\log\frac{1}{\delta_s}\right)
    \;=\; \mathcal{O}\left(\frac{\mathfrak d^{2}\,r^{6}}{\varepsilon^{2}}\,
        \log\frac{1}{\delta_s}\right),
\end{equation}
using $\varepsilon_s$ from \eqref{eq:cheb_params_pauli_derivative}. Setting $\delta_s=\delta/(r{+}1)$ and applying a union bound over all $r{+}1$ nodes ensures the joint failure probability is at most $\delta$. Summing over nodes, the total number of channel uses is
\begin{equation}
    m \;=\; (r{+}1)\,m_s
    \;=\; \mathcal{O}\left(
        \frac{\mathfrak d^{2}}{\varepsilon^{2}}\;
        r^{7}\;
        \log\frac{r}{\delta}
    \right)
    \;=\; \mathcal{O}\left(
        \frac{\mathfrak d^{2}}{\varepsilon^{2}}\;
        \mathrm{polylog}\Big(\mathfrak d,\tfrac{1}{\varepsilon},\tfrac{1}{\delta}\Big)
    \right),
\end{equation}
since $r=\Theta\big(\log(\mathfrak d/\varepsilon)\big)$. Combining this sample bound with the schedule parameters from \cref{eq:cheb_params_pauli_derivative} concludes the proof of \cref{lemma:pauli_expectation_derivative_estimation}.
\end{proof}

\begin{remark}[Sparsity-only schedule]
\label{rem:pauli_expectation_derivative_estimation_M_sparse}
If one only knows that $\mathcal{L}$ is $M$-sparse, then \cref{remark:observable_derivative_bounds_M_sparse} gives
$\bigl|\frac{d^k}{dt^k}\langle O(t)\rangle\bigr|\le (2M)^k$ for all $k\ge 0$.
Thus, the same Chebyshev argument as in \cref{lemma:pauli_expectation_derivative_estimation} goes through with
$\mathfrak d$ replaced by $M$ (equivalently $\Lambda=2M$), i.e. it suffices to take
$\tau_{\max}=\Theta(1/M)$ and $r=\Theta(\log(M/\varepsilon))$ and use
\[
m=\mathcal{O}\!\left(\frac{M^{2}}{\varepsilon^{2}}\,
\mathrm{polylog}\Big(M,\tfrac{1}{\varepsilon},\tfrac{1}{\delta}\Big)\right)
\]
total channel uses to achieve $|\widehat f^{\,\prime}(0)-f'(0)|\le \varepsilon$ with probability $\ge 1-\delta$.
(One may alternatively apply \cref{cor:first_chebyshev_derivative_parameters} directly for slightly tighter constants in scheduling.)
\end{remark}

A naive algorithm for derivative estimation of all probes follows directly:
\begin{corollary}[Estimating all probe derivatives (probe-by-probe)]
\label{cor:all_probe_derivatives_sample_complexity}
Let $C\in\mathbb R^{\widehat M\times \widehat M}$ be a square full-rank design matrix built from $\widehat M$ Pauli probes, and let $\mathfrak d$ be the dual-interaction-graph degree from \cref{lemma:pauli_expectation_derivative_estimation}. Then for any $\varepsilon,\delta\in(0,1)$ one can estimate all $\widehat M$ derivatives corresponding to rows of the design matrix to accuracy $\varepsilon$ with joint success probability at least $1-\delta$ using a total of
\[
m
=\mathcal{O}\!\left(
\widehat M\cdot
\frac{\mathfrak d^{2}}{\varepsilon^{2}}\;
\mathrm{polylog}\!\Big(\mathfrak d,\tfrac{1}{\varepsilon},\tfrac{\widehat M}{\delta}\Big)
\right)
\]
applications of $e^{t\mathcal L}$.
\end{corollary}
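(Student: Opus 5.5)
The plan is a direct reduction to \cref{lemma:pauli_expectation_derivative_estimation} followed by a union bound over the $\widehat M$ rows of $C$. First, each row of $C$ corresponds to a Pauli probe $(X,O)$ with $O\in\mathcal P_n$. The input $X$ is either a Pauli eigenstate $\rho$ or a scaled Pauli $2^{-n}Q$.

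\emph{Pauli-eigenstate probes.} The quantity to estimate is $f'(0)$ with $f(t)=\tr(O\,\mathcal E_t(\rho))$. This is exactly the setting of \cref{lemma:pauli_expectation_derivative_estimation}.

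\emph{Pauli-input probes.} We use $2^{-n}Q=\rho_+-\rho_-$ with $\rho_\pm=(I\pm Q)/2^n$. Then $f_{O,2^{-n}Q}=f_{O,\rho_+}-f_{O,\rho_-}$. We estimate each of the two derivatives to accuracy $\varepsilon/2$ and subtract, which costs only a constant factor. Alternatively, we run the Chebyshev estimator directly on the difference data, using \cref{remark:derivative_bounds_Pauli_input_state} for the smoothness bound $(2\mathfrak d)^k k!$. Either way the per-probe cost is that of the lemma, up to constants.

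To apply the lemma with the common degree $\mathfrak d$, we need $\mathfrak d_O\le\mathfrak d$ for every probed observable. Two arguments support this.
\begin{itemize}
\item Probes from patchwise tomography have $\supp{O}\subseteq T$, where $T$ is the support of some candidate component. This makes $\mathcal N(O)$ at most the neighborhood of that component.
\item More robustly, the proof of \cref{thm:observable_derivative_bounds} only uses $\mathfrak d_O+\ell\mathfrak d\le(\ell+1)\max(\mathfrak d_O,\mathfrak d)$. So one can simply replace $\mathfrak d$ by $\max(\mathfrak d,\max_O \mathfrak d_O)$, which the statement's $\mathfrak d$ is understood to cover.
\end{itemize}
I expect this bookkeeping point to be the only subtle step. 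I would handle it by stating that $\mathfrak d$ upper bounds $\mathfrak d_O$ for all probes, as assumed in \cref{lemma:pauli_expectation_derivative_estimation}.

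Next, the confidence levels. Each probe's estimator is run with failure probability $\delta/\widehat M$ (or $\delta/(2\widehat M)$ for the two-term Pauli-input probes). By \cref{lemma:pauli_expectation_derivative_estimation}, each probe then uses $\mathcal O\big(\tfrac{\mathfrak d^2}{\varepsilon^2}\,\mathrm{polylog}(\mathfrak d,\tfrac1\varepsilon,\tfrac{\widehat M}{\delta})\big)$ channel applications and is $\varepsilon$-accurate except with probability $\delta/\widehat M$. A union bound over the $\widehat M$ rows gives joint success probability at least $1-\delta$.

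Finally, summing the per-probe costs over the $\widehat M$ rows gives the claimed total
\[
m=\mathcal O\!\Big(\widehat M\,\tfrac{\mathfrak d^2}{\varepsilon^2}\,\mathrm{polylog}\big(\mathfrak d,\tfrac1\varepsilon,\tfrac{\widehat M}{\delta}\big)\Big).
\]
The only place the $\widehat M$-dependence enters the logarithm is through $\log(1/\delta_s)$, with $\delta_s=\delta/(\widehat M(r+1))$.
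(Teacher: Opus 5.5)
Your proposal is correct and matches the paper's argument; the paper gives no written proof and says the corollary ``follows directly'' from applying the single-probe Chebyshev lemma to each of the $\widehat M$ rows with failure probability $\delta/\widehat M$, taking a union bound, and summing the per-probe costs (your treatment of $\mathfrak d_O\le\mathfrak d$ and of Pauli-input probes is, if anything, more careful than the paper's). One harmless slip, which the paper makes too: $\rho_+-\rho_-=2^{1-n}Q$, so $2^{-n}Q=\tfrac12(\rho_+-\rho_-)$, and this only changes constants.
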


\subsection{Parallel estimation of many $k$-local derivatives}

When the selected Pauli probes are sufficiently local, the derivative estimation can be \emph{parallelized} to reduce the query complexity.
Concretely, let $k_H$ and $k_D$ denote the Hamiltonian and dissipator localities extracted from the candidate sets $\widehat{\mathcal S}_H$ and $\widehat{\mathcal S}_D$ as in \cref{eq:extracted-localities}, and set $k:=\max\{k_H,k_D\}$.

In this section we leverage the shadow process tomography protocol of \citet{stilck2024efficient} to estimate, at each Gauss--Chebyshev time node, \emph{all} Pauli expectation values needed for the patchwise derivative data of \cref{lemma:injectivity_of_patch_wise_tomography} in parallel, and then combine the node-wise estimates into derivative estimates.
This yields overall parallel query complexity
$\widetilde{\mathcal O}\!\left(9^{k}\,\mathfrak d^{2}/\varepsilon_d^{2}\right)$
(see \cref{lemma:parallel_patchwise_derivative_estimation}).

First, we import the following lemma:
\begin{lemma}[Shadow process tomography, \cite{stilck2024efficient}]
\label{lemma:shadow_process_tomography}
    Let $\Phi$ be a quantum channel on $n$ qubits.  
    Let $\{P^{(m)}\}_{m=1}^{K_1}$ and $\{Q^{(l)}\}_{l=1}^{K_2}$ be Pauli strings with
    $|\supp{P^{(m)}}|\le \omega_P$ and $|\supp{Q^{(l)}}| \le \omega_Q$.
    Then there exist estimators $\hat e_{m,l}$ such that
    \begin{equation}
    \bigl|2^{-n}\tr(P^{(m)}\Phi(Q^{(l)}))-\hat e_{m,l}\bigr|\le \varepsilon
    \end{equation}
    for all $m,l$ simultaneously, with probability at least $1-\delta$, using
    \begin{equation}
    \mathcal O\left(3^{\omega_P+\omega_Q}\varepsilon^{-2}\log(K_1K_2\delta^{-1})\right)
    \end{equation}
    samples.
    The protocol requires only random product Pauli state preparations and Pauli measurements.
\end{lemma}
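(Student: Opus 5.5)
The plan is to reduce the statement to classical shadows of the Choi-type object by expanding the input Pauli as a signed sum of product Pauli eigenstates. Then I would estimate every product $P^{(m)}\otimes Q^{(l)}$-type correlator simultaneously from one shared pool of randomized single-qubit experiments, using a median-of-means bound.

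\textbf{Input randomization.} Write $2^{-n}Q = \mathbb E_{\rho}[\,c_Q(\rho)\,\rho\,]$, where $\rho=\bigotimes_{j=1}^n \rho_j$ is drawn uniformly from the $6^n$ single-qubit Pauli eigenstates. For each qubit, $\mathbb E[3\,\mathrm{tr}(\sigma\rho_j)\rho_j]$ reproduces $\sigma/2$ for $\sigma\in\{X,Y,Z\}$ and $\mathbb E[\rho_j]=I/2$. Hence on $\supp{Q}$ one can take $c_Q(\rho)=\prod_{j\in\supp{Q}}3\,\mathrm{tr}(Q_j\rho_j)$, which gives $|c_Q|\le 3^{|\supp{Q}|}\le 3^{\omega_Q}$. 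Crucially, qubits outside $\supp{Q}$ only need to be maximally mixed on average, so one random product state serves all $Q^{(l)}$ simultaneously.

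\textbf{Output shadows.} Apply $\Phi$ to the random input, then measure each qubit in a uniformly random Pauli basis, obtaining outcomes $b_j\in\{\pm1\}$ in bases $B_j$. This is the standard classical-shadow estimator: $\hat o_P=\prod_{j\in\supp{P}}3\,\delta_{B_j,P_j}b_j$ is unbiased for $\mathrm{tr}(P\,\sigma)$ on any state $\sigma$, with $|\hat o_P|\le 3^{\omega_P}$.

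\textbf{Combined estimator.} The single-shot estimator is $X_{m,l}=c_{Q^{(l)}}(\rho)\,\hat o_{P^{(m)}}$. By independence of the input and measurement randomness and linearity of $\Phi$, $\mathbb E X_{m,l}=\mathrm{tr}\bigl(P^{(m)}\Phi(2^{-n}Q^{(l)})\bigr)$, which is the target quantity. Since $X_{m,l}$ is bounded by $3^{\omega_P+\omega_Q}$, its second moment is at most $3^{\omega_P+\omega_Q}$ times the mean-square of a $\pm1$-type product. It is cleaner to bound it directly: $\mathbb E X^2 = \prod 3$ over qubits in both supports, again $\le 3^{\omega_P+\omega_Q}$.

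\textbf{Concentration and union bound.} This last step is the only delicate point. With $N$ shots, apply median-of-means (or Hoeffding on the bounded variable, which costs $9^{\omega}$ instead of $3^{\omega}$) to each pair $(m,l)$, then take a union bound over the $K_1K_2$ pairs. With variance bounded by $3^{\omega_P+\omega_Q}$, median-of-means with $\log(K_1K_2/\delta)$ blocks gives error $\varepsilon$ from $N=\mathcal O(3^{\omega_P+\omega_Q}\varepsilon^{-2}\log(K_1K_2/\delta))$ samples. The main care is in the variance computation: cross terms for $j$ in only one support contribute factor $1$, not $3$, because $\mathbb E[(3\,\mathrm{tr}(\sigma\rho_j))^2]=3$ and $\mathbb E[(3\,\delta b)^2]=3$. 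This yields exactly the $3^{\omega_P+\omega_Q}$ scaling. All operations are random product Pauli preparations and Pauli measurements, as required.
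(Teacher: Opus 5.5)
Your proposal is correct and follows the same route as the paper. The paper imports the shadow process tomography protocol of \citet{stilck2024efficient}: random product Pauli-eigenstate inputs, random Pauli-basis measurements, an unbiased product estimator with second moment bounded by $3^{w(P)+w(Q)}$, and median-of-means with a union bound over the $K_1K_2$ pairs. You simply write the estimator out explicitly. One wording slip in your closing remark: a qubit in exactly one support contributes a factor $3$ to $\mathbb E X^2$, not $1$; only qubits in neither support contribute $1$. The bound nonetheless factorizes as $\mathbb E X^2 = 3^{w(Q)}\,3^{w(P)}$, because $c_Q^2$ depends only on the input bases and $\hat o_P^2$ only on the independent measurement bases, so your final scaling is right.
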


\begin{proof}
This result was proven in Supplementary Note~8 of \citet{stilck2024efficient}. The authors introduce a shadow process tomography protocol in which each experimental run prepares a random product Pauli eigenstate, applies $\Phi$, and measures in a random Pauli basis. For each Pauli pair $(P,Q)$ they define a classical random variable $X_{P,Q}$ from the measurement record. They show that $X_{P,Q}$ is an unbiased estimator, $\mathbb E[X_{P,Q}]=2^{-n}\tr(P\Phi(Q))$, and that its second moment satisfies $\mathbb E[X_{P,Q}^2]\le 3^{\omega(P)+\omega(Q)}$. Using a median-of-means estimator together with a union bound over all $K_1K_2$ Pauli pairs yields simultaneous error $\varepsilon$ with failure probability at most $\delta$ and the stated sample complexity.
\end{proof}

\begin{remark}[Classical cost of shadow process tomography] \label{rem:classical-cost-shadow-process-tomography}
The shadow process tomography protocol outputs, for each sample, a classical record from which one can compute the per-pair estimators $\hat e_{m,l}$. A direct implementation that updates all $K_1K_2$ pairs per sample costs $\mathcal O\!\bigl(K_1K_2(\omega_P+\omega_Q)\bigr)$ classical time per sample (using sparse Pauli representations). Combining with the sample complexity in \cref{lemma:shadow_process_tomography} gives total classical postprocessing time
\[
\mathcal O\!\left(
K_1 K_2 (\omega_P+\omega_Q)\, 3^{\omega_P+\omega_Q}\varepsilon^{-2}\log(K_1K_2\delta^{-1})
\right).
\]
\end{remark}

We now apply shadow process tomography (\cref{lemma:shadow_process_tomography}) to parallelize the estimation of the Pauli-expectation derivatives required by the patchwise probes in \cref{lemma:injectivity_of_patch_wise_tomography}.

\begin{lemma}[Parallel estimation of tomographically complete derivative data]
\label{lemma:parallel_patchwise_derivative_estimation}
Let $\widehat{\mathcal{S}}_H$ and $\widehat{\mathcal{S}}_D$ be supersets of the Hamiltonian and dissipator supports, with
$|\widehat{\mathcal{S}}_H|=\widehat{M}_H$ and $|\widehat{\mathcal{S}}_D|=\widehat{M}_D$.
Define the locality parameters
\begin{equation}
    k_H := \max_{P\in\widehat{\mathcal{S}}_H} |\supp{P}|,
    \qquad
    k_D := \max_{(P_i,P_j)\in\widehat{\mathcal{S}}_D} |\supp{P_i}\cup\supp{P_j}|,
\end{equation}
and let $\mathfrak d$ denote the maximum degree of the dual interaction graph of the Lindbladian as in \cref{def:dual_interaction_graph}.
Then one can estimate a tomographically complete set of patchwise derivative data as in \eqref{eq:tomographically-complete-data-set} using
\begin{equation}
    m
    \;=\;
    \mathcal{O}\left(
        \bigl(9^{k_H}+9^{k_D}\bigr)\,
        \frac{\mathfrak d^{2}}{\varepsilon^{2}}\;
        \mathrm{polylog}\Big(
            \widehat{M}_H,\widehat{M}_D,\mathfrak d,\tfrac{1}{\varepsilon},\tfrac{1}{\delta}
        \Big)
    \right)
\end{equation}
applications of the channel $e^{\mathcal{L}t}$,
such that all estimated derivatives have additive error at most $\varepsilon$ with probability at least $1-\delta$.
\end{lemma}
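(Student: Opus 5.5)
The plan is to combine the node-wise parallel estimator of \cref{lemma:shadow_process_tomography} with the Chebyshev derivative machinery of \cref{cor:first_chebyshev_derivative_parameters_factorial}, using the Pauli-input form of the probes. Every datum $m_T(O,Q)=2^{-n}\tr\bigl(Q\,\mathcal L^\dagger(O)\bigr)$ from \eqref{eq:tomographically-complete-data-set} is the derivative at $t=0$ of
\[
g_{O,Q}(t):=2^{-n}\tr\bigl(O\,e^{t\mathcal L}(Q)\bigr),
\]
by \eqref{eq:pauli-input-derivative-at-zero-gives-coefficients} (note $O,Q$ Hermitian Paulis, so $\tr(Q\mathcal L^\dagger(O))=\tr(O\mathcal L(Q))$). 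By \cref{remark:derivative_bounds_Pauli_input_state}, $|g_{O,Q}^{(k)}(t)|\le(2\mathfrak d)^k k!$ uniformly in $t$. Hence \cref{cor:first_chebyshev_derivative_parameters_factorial} applies with $B=1$, $\Lambda=2\mathfrak d$. This fixes $\tau_{\max}=1/(4\mathfrak d)$, $r=\Theta(\log(\mathfrak d/\varepsilon))$ and the nodewise accuracy $\varepsilon_s=\varepsilon/(20\mathfrak d r^3)$, exactly as in \eqref{eq:cheb_params_pauli_derivative}. The same Gauss--Chebyshev grid $\{t_m\}$ is shared by all probes.

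At each node $t_m$ I would run shadow process tomography on the channel $\Phi=e^{t_m\mathcal L}$. I would split the probe family into two groups and treat them separately so the $3^{\omega_P+\omega_Q}$ factors stay tight.
\begin{itemize}
\item \emph{Hamiltonian-patch probes}: $O,Q\in\mathcal P_T$ with $T\in\widehat{\mathcal T}_H$, so $\omega_P,\omega_Q\le k_H$.
\item \emph{Dissipator-patch probes}: $T\in\widehat{\mathcal T}_D$, so $\omega_P,\omega_Q\le k_D$.
\end{itemize}
For these groups $3^{\omega_P+\omega_Q}\le 9^{k_H}$ and $9^{k_D}$ respectively. The numbers of Paulis involved are $K_1,K_2\le 4^{k_H}\widehat M_H+4^{k_D}\widehat M_D^2$. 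Invoking \cref{lemma:shadow_process_tomography} with accuracy $\varepsilon_s$ and failure probability $\delta/(2(r+1))$ gives simultaneous $\varepsilon_s$-accurate estimates of all $g_{O,Q}(t_m)$ at a cost of
\[
\mathcal O\bigl((9^{k_H}+9^{k_D})\,\varepsilon_s^{-2}\log(K_1K_2 r/\delta)\bigr)
\]
channel uses per node. A union bound over the $r+1$ nodes and the two groups gives overall success probability $\ge 1-\delta$.

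On this good event I would form $\widehat g'_{O,Q}(0)=\sum_m\alpha^{(1)}_m\widehat g_{O,Q}(t_m)$ for every probe. The corollary then guarantees error $\le\varepsilon$ for all probes simultaneously, since its guarantee is deterministic given uniform nodewise accuracy. The total count is $(r+1)$ times the per-node cost. With $\varepsilon_s^{-2}=\mathcal O(\mathfrak d^2 r^6/\varepsilon^2)$ this is
\[
\mathcal O\bigl((9^{k_H}+9^{k_D})\,\mathfrak d^2 r^7\varepsilon^{-2}\log(K_1K_2r/\delta)\bigr).
\]
The factor $\log K_1K_2=\mathcal O(k+\log\widehat M_H+\log\widehat M_D)$ is absorbed into the polylog; I treat $k$ as at most logarithmic or absorb it into the $\widetilde{\mathcal O}$. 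This matches the claim.

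The main obstacle is verifying that shadow tomography estimates exactly the quantity whose derivative bound is available. \cref{lemma:shadow_process_tomography} outputs $2^{-n}\tr(P\Phi(Q))$ with Pauli input $Q$, which is not a state. The smoothness bound must therefore come from \cref{remark:derivative_bounds_Pauli_input_state} rather than \cref{thm:observable_derivative_bounds}, and I would check that the normalization $\|2^{-n}Q\|_1=1$ is what makes $B=1$. A secondary point is the degree bound itself. The remark requires $\mathfrak d_O\le\mathfrak d$, which can fail if $O$'s support overlaps more components than any single component does. For patch probes, $\supp O\subseteq T$ with $T$ a candidate component support, so I would argue $\mathfrak d_O\le\mathfrak d+1$ for the dual graph built on the candidate terms. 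The resulting constant change is harmless.
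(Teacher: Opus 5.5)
Your route is the same as the paper's. You use the Pauli-input probe $2^{-n}\mathrm{tr}(O\,e^{t\mathcal L}(Q))$, take smoothness from \cref{remark:derivative_bounds_Pauli_input_state} with $B=1$ and $\Lambda=2\mathfrak d$, and share one Gauss--Chebyshev grid across all probes. At every node you run shadow process tomography to accuracy $\varepsilon_s=\Theta(\varepsilon/(\mathfrak d r^3))$, then take a union bound over the $r+1$ nodes. Your Hamiltonian/dissipator split is harmless; the paper simply writes $3^{2k_H}+3^{2k_D}$.

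\textbf{The one gap: absorbing $\log(K_1K_2)$ into the polylog.} Because you estimate the \emph{entire} patchwise dataset, the union bound inside \cref{lemma:shadow_process_tomography} ranges over up to $(4^{k_H}\widehat M_H+4^{k_D}\widehat M_D^2)^2$ pairs. This costs a factor $\mathcal O(k+\log\widehat M_H+\log\widehat M_D+\log(r/\delta))$. That $k$ is not an argument of the polylog in the statement, and it need not be logarithmic in those arguments: a single weight-$n$ candidate term gives $\widehat M_H=1$ and $k_H=n$. So, as written, your bound carries an extra factor of $k$.

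The fix is what the paper does. By \cref{lemma:injectivity_of_patch_wise_tomography}, the full patchwise design matrix has column rank $\widehat M=\widehat M_H+\widehat M_D^2$. You can therefore classically select $\widehat M$ linearly independent probes \emph{before} measuring, and run shadow tomography only on those. The union bound then covers $\widehat M$ pairs, and the logarithm becomes $\log(\widehat M(r+1)/\delta)$ with no $k$. A full-rank subset is still a tomographically complete set of data.

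\textbf{Your worry about $\mathfrak d_O$ is justified.} Here you are more careful than the paper, which just asserts $\mathfrak d_O\le\mathfrak d$ because $O$ sits inside a patch. There are two problems with that assertion:
\begin{enumerate}
\item If $\mathrm{supp}(O)$ equals the support of a component $v$ of maximal degree, then $\mathcal N(O)$ consists of $v$ and all its neighbours, so $\mathfrak d_O=\mathfrak d+1$.
\item For a patch coming from a false-positive candidate term, which is not the support of any true component, $\mathfrak d_O$ is not controlled by the true degree at all.
\end{enumerate}
Taking $\mathfrak d$ to be the degree of the dual graph on all candidate components, as you propose, is the right repair. The proof of \cref{thm:observable_derivative_bounds} tolerates zero-coefficient vertices, since they only add zero summands. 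Every patch is then the support of some vertex, so $\mathfrak d_O\le\mathfrak d+1$. This gives $\prod_{\ell<K}(\mathfrak d_O+\ell\mathfrak d)\le(\mathfrak d+1)^K K!$, hence $\Lambda=2(\mathfrak d+1)$, which changes only constants.
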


\begin{proof}
The total number of unknown coefficients to be reconstructed is
\begin{equation}
    \widehat{M} := \widehat{M}_H + \widehat{M}_D^2 .
\end{equation}
By \cref{lemma:injectivity_of_patch_wise_tomography}, it suffices to estimate a tomographically complete set of
patchwise Pauli data of the form
\begin{equation}
    2^{-n}\tr\bigl(Q\,\mathcal{L}^\dagger(O)\bigr),
    \qquad
    \forall O,Q\in\mathcal{P}_T,
\end{equation}
for each patch $T$ arising from $\widehat{\mathcal{S}}_H$ and $\widehat{\mathcal{S}}_D$ (see \eqref{eq:family-of-all-patches}), where $\mathcal P_T := \{P\in\mathcal P_n : \mathrm{supp}(P)\subseteq T\}$.
There are at most $\widehat{M}_H$ patches of size at most $k_H$ and at most $\widehat{M}_D^2$ patches of size at most $k_D$.

For a patch of size $k$, the full set of Pauli probes $(Q,O)$ supported on that patch has cardinality $16^{k}$.
By construction of the design matrix in \cref{sec:linear_relation_expectations} and by
\cref{lemma:injectivity_of_patch_wise_tomography}, one can select a subset of at most $\widehat{M}$ such probes whose
corresponding rows are linearly independent.
This yields a tomographically complete set of $\widehat{M}$ probes of locality at most $\max\{k_H, k_D\}$.

Each required datum $2^{-n}\tr\bigl(Q\,\mathcal{L}^\dagger(O)\bigr)$ can be estimated from short-time derivative of the form:
\begin{equation*}
    \left.\dv{}{t}\,\tr \bigl(O\,\mathcal E_t(2^{-n}Q)\bigr)\right|_{t=0} = 
    \tr\bigl(O\,\mathcal{L}\,e^{\mathcal L\cdot0}\,(2^{-n}Q)\bigr)=
    2^{-n}\tr\bigl(\mathcal{L}^\dagger(O)\,Q\bigr)
\end{equation*}
We do that using Chebyshev interpolation (\cref{cor:first_chebyshev_derivative_parameters_factorial}). 
Since each $O$ is supported entirely within a patch, it holds that the neighborhood size $\mathfrak d_O\le \mathfrak d$ (cf.~\cref{eq:def_neighborhood_O_components}). Therefore, by \cref{remark:derivative_bounds_Pauli_input_state}, we get the same smoothness condition as in \cref{lemma:pauli_expectation_derivative_estimation}, \cref{eq:patch-derivative-bound-via-dual-graph-degree}. In particular, for every integer $r\ge0$ and all $t\ge0$:
\begin{equation}
    \left|\frac{d^r}{dt^r}  \tr\big(O\,e^{\mathcal{L}t}(2^{-n}Q)\big)\right| \le 
    \bigl(2\mathfrak d\bigr)^r\,(r!),
\end{equation}

Thus using the same Chebyshev schedule as in \cref{lemma:pauli_expectation_derivative_estimation}, each datum derivative can be estimated to additive accuracy $\varepsilon$
by sampling at $r{+}1$ Gauss--Chebyshev nodes $\{t_m\}_{m=0}^{r}\subset[0,\tau_{\max}]$, where
$\tau_{\max}=\Theta(1/\mathfrak d)$ and $r=\Theta(\log(\mathfrak d/\varepsilon))$, and by estimating each expectation value to accuracy $\varepsilon_s=\Theta\Big(\varepsilon / (\mathfrak d\,r^{3})\Big)$.

Fix a node $t_m$. The $\widehat{M}$ selected probes give rise to $\widehat{M}$ Pauli pairs $(Q,O)$ at that node. Applying \cref{lemma:shadow_process_tomography} at time $t_m$ shows that all these expectation values can be estimated
simultaneously to accuracy $\varepsilon_s$ using
\begin{equation}
    \mathcal{O}\left(
        \bigl(3^{2k_H}+3^{2k_D}\bigr)\,\varepsilon_s^{-2}\,
        \log\frac{\widehat{M}}{\delta_s}
    \right)
    =
    \mathcal{O}\left(
        \bigl(9^{k_H}+9^{k_D}\bigr)\,\varepsilon_s^{-2}\,
        \log\frac{\widehat{M}}{\delta_s}
    \right)
\end{equation}
samples, where $\delta_s$ is the allowed failure probability at a single node.

Setting $\delta_s=\delta/(r{+}1)$ and applying a union bound over all $r{+}1$ nodes, the total number of channel applications is
\begin{equation}
    m
    =
    \mathcal{O}\left(
        (r{+}1)\bigl(9^{k_H}+9^{k_D}\bigr)\,\varepsilon_s^{-2}\,
        \log\frac{\widehat{M}(r{+}1)}{\delta}
    \right).
\end{equation}
Substituting $\varepsilon_s=\Theta(\varepsilon/(\mathfrak d r^{3}))$ and $r=\Theta(\log(\mathfrak d/\varepsilon))$ yields
\begin{equation}
    m
    \;=\;
    \mathcal{O}\left(
        \bigl(9^{k_H}+9^{k_D}\bigr)\,
        \frac{\mathfrak d^{2}}{\varepsilon^{2}}\;
        \mathrm{polylog}\Big(
            \widehat{M}_H,\widehat{M}_D,\mathfrak d,\tfrac{1}{\varepsilon},\tfrac{1}{\delta}
        \Big)
    \right),
\end{equation}
as claimed.
\end{proof}

\subsection{Stability of least squares under noise} \label{sec:stability_least_squares}

The Chebyshev interpolation procedure provides approximate estimates of the short-time derivatives $d(\rho,O)$, subject to both interpolation bias and sampling noise. Whether we do probe-by-probe estimation (\cref{lemma:pauli_expectation_derivative_estimation}) or the parallelized shadow tomography (\cref{lemma:parallel_patchwise_derivative_estimation}), each entry of the data vector $\vb{d}$ is perturbed by an additive error of at most $\varepsilon_d$, with an overall failure probability at most $\delta$.  
We denote these perturbations collectively by a noise vector $\vb{e}$ and analyze their effect on the least-squares reconstruction of the coefficient vector $\vb{x}$.

Assuming $C$ has full column rank and $\vb{d}=C\vb{x}$, the measured data can be written as
\begin{equation}
    \widehat{\vb{d}} = \vb{d} + \vb{e},
    \qquad \|\vb{e}\|_{\infty} \le \varepsilon_d.
\end{equation}
The least-squares estimator is
\begin{equation}
    \widehat{\vb{x}} = (C^\top C)^{-1} C^\top \widehat{\vb{d}}.
\end{equation}
Subtracting the true coefficient vector and using the induced $\ell_\infty$ matrix norm yields
\begin{equation}
    \label{eq:ls_error_bound}
    \|\widehat{\vb{x}}-\vb{x}\|_{\infty}
    \le \|C^+\|_{\infty \to \infty}\,\|\vb{e}\|_{\infty}
    \le \|C^+\|_{\infty \to \infty}\,\varepsilon_d,
\end{equation}
where $C^+ = (C^\top C)^{-1} C^\top$ is the Moore–Penrose pseudoinverse and $\|C^+\|_{\infty \to \infty}$ is the $\ell_\infty$-induced norm, equal to the largest absolute row sum of $C^+$.  
This quantity measures how strongly entrywise errors in $\vb{d}$ are amplified in the reconstructed coefficients.

To achieve a target accuracy $\varepsilon$ in each recovered parameter, i.e., $\|\widehat{\vb{x}}-\vb{x}\|_{\infty} < \varepsilon$, it suffices to estimate the derivatives with precision
\begin{equation}
    \varepsilon_d = \frac{\varepsilon}{\|C^+\|_{\infty \to \infty}}.
\end{equation}
The induced norm $\|C^+\|_{\infty \to \infty}$ thus quantifies the numerical conditioning of the reconstruction.  
Its value depends on the structure of $C$, determined by the chosen measurement settings and by the candidate Hamiltonian and dissipator supports.  
Once $C$ is fixed, $\|C^+\|_{\infty \to \infty}$ can be computed efficiently and used to set the required accuracy of the derivative estimates. In \cref{sec:numerics_probe_selection_conditioning_factor}, we numerically show that the conditioning factor $\nu$ remains moderate (typically on the order of $10$--$30$) in a physically motivated lattice model up to $n=42$.

\subsection{Coefficient learning algorithm}\label{sec:coefficient_learning_algorithm}

Combining the results of the previous subsections, we now present the complete ancilla-free coefficient-learning procedure.
Given candidate structure supersets $\widehat{\mathcal{S}}_H$ and $\widehat{\mathcal{S}}_D$, the algorithm constructs the corresponding patches, collects the required local data using the parallelized shadow–Chebyshev procedure, and reconstructs the coefficients by solving a square linear system. We also present a non-parallelized version in \cref{rem:serialized-ancilla-free-coefficient-learning} in case the Paulis from the candidate sets are highly non-local.  

\begin{boxedalgorithm}[H]{Ancilla-free Coefficient Learning \label{algorithm:coefficient_learning}}
  \DontPrintSemicolon
  \SetKwInOut{Input}{Inputs}
  \SetKwInOut{Output}{Output}
  \SetKwBlock{Design}{Design matrix}{}
  \SetKwBlock{Schedule}{Schedule}{}

  \Input{\,Candidate supports $\widehat{\mathcal{S}}_H \supseteq \mathcal{S}_H$, $\widehat{\mathcal{S}}_D \supseteq \mathcal{S}_D$; accuracy $\varepsilon$; dual-graph degree $\mathfrak d$; failure probability $\delta$}
  \Output{\,Coefficient estimates $\{\widehat h_i\},\{\widehat a_{ij}\}$ accurate to $\varepsilon$ w.p.\ $\ge 1-\delta$}

  \Design{
    $\mathcal T \gets \{\supp{P}:P\in\widehat{\mathcal{S}}_H\}
    \cup
    \{\supp{P_i}\cup\supp{P_j}:P_i,P_j\in\widehat{\mathcal{S}}_D\}$
    \tcp*{patch family, cf.~\cref{lemma:injectivity_of_patch_wise_tomography}}

    $\mathcal R_{\mathrm{all}} \gets \bigcup_{T\in\mathcal T}\mathcal P_T\times\mathcal P_T$
    \tcp*{all patch-supported Pauli probes}

    Select $\mathcal R\subseteq\mathcal R_{\mathrm{all}}$ with
    $|\mathcal R|=|\widehat{\mathcal{S}}_H|+|\widehat{\mathcal{S}}_D|^2$
    such that the design matrix is full rank
    \tcp*{cf.  \cref{lemma:injectivity_of_patch_wise_tomography}}

    Construct the square design matrix $C$ 
    \tcp*{cf. \cref{sec:linear_relation_expectations}}

    $\nu \gets \|C^+\|_{\infty\to\infty}$ \tcp*{conditioning, cf.~\cref{sec:stability_least_squares}}
  }

  \Schedule{

    $\varepsilon_d\gets\Theta(\varepsilon/\nu)$
    \tcp*{target accuracy for derivative estimation}
    $\tau_{\max}\gets\Theta(1/\mathfrak d)$
    \tcp*{max evolution time, \cref{lemma:parallel_patchwise_derivative_estimation} and \cref{eq:cheb_params_pauli_derivative}}

    $r\gets\Theta(\log(\mathfrak d/\varepsilon_d))$
    \tcp*{Chebyshev degree, \cref{lemma:parallel_patchwise_derivative_estimation} and \cref{eq:cheb_params_pauli_derivative}}

    $\varepsilon_s\gets\Theta(\varepsilon_d/(\mathfrak d r^3))$
    \tcp*{node accuracy, \cref{lemma:parallel_patchwise_derivative_estimation} and \cref{eq:cheb_params_pauli_derivative}}

    $\delta_s\gets\delta/(r{+}1)$
    \tcp*{union bound over nodes}

    $\{t_m\}_{m=0}^r\gets$ Chebyshev--Gauss nodes on $[0,\tau_{\max}]$
    \tcp*{sampling times}

    $\{\alpha^{(1)}_m\}\gets$ Chebyshev weights
    \tcp*{first-derivative estimator}
  }

  \For{$m\gets0$ \KwTo $r$}{
    $\{\widehat f_{\rho,O}(t_m)\}_{(\rho,O)\in\mathcal R}
    \gets
    \texttt{ShadowProcessTomography}(\mathcal E_{t_m},\mathcal R,\varepsilon_s,\delta_s)$
    \tcp*{cf. \cref{lemma:shadow_process_tomography}}
  }

  \For{$(\rho,O)\in\mathcal R$}{
    $\widehat d(\rho,O)\gets\sum_{m=0}^r\alpha^{(1)}_m\,\widehat f_{\rho,O}(t_m)$
    \tcp*{Chebyshev deriv. estimation}
  }

  $\widehat{\vb d}\gets(\widehat d(\rho,O))_{(\rho,O)\in\mathcal R}$
  \tcp*{assembled derivative data}

  Solve $C\widehat{\vb x}=\widehat{\vb d}$ for $\vb x$ \tcp*{square full-rank system}

  Reshape $\widehat{\vb x}\mapsto\{\widehat h_i\},\{\widehat a_{ij}\}$
  \tcp*{coefficient recovery}

  \Return $\{\widehat h_i\},\{\widehat a_{ij}\}$\;
\end{boxedalgorithm}

\begin{theorem}[Ancilla-free coefficient learning]
\label{theorem:ancilla_free_coefficient_learning}
Let $\mathcal{E}_t=e^{t\mathcal{L}}$ be the channel generated by an $n$-qubit Lindbladian $\mathcal{L}$ with dual interaction graph of maximum degree $\mathfrak d$ (cf.~\cref{eq:dual-graph-degree}).  
Let $\widehat{\mathcal{S}}_H \supseteq \mathcal{S}_H$ and $\widehat{\mathcal{S}}_D \supseteq \mathcal{S}_D$ be candidate Hamiltonian and dissipator supports obtained from structure learning, with
\[
\widehat{M}_H := |\widehat{\mathcal{S}}_H|,\qquad
\widehat{M}_D := |\widehat{\mathcal{S}}_D|,
\]
and locality parameters
\[
k_H := \max_{P\in\widehat{\mathcal{S}}_H}|\supp{P}|,\qquad
k_D := \max_{(P_i,P_j)\in\widehat{\mathcal{S}}_D}|\supp{P_i}\cup\supp{P_j}|.
\]
Then there exists an ancilla-free procedure (\cref{algorithm:coefficient_learning}) which, given parameters
$0<\varepsilon,\delta<1$, outputs coefficient estimates
$\{\widehat h_i\}_{P_i\in\widehat{\mathcal{S}}_H}$ and
$\{\widehat a_{ij}\}_{(P_i,P_j)\in\widehat{\mathcal{S}}_D}$
such that the following guarantees hold:

\begin{enumerate}
\item \textit{(Accuracy.)}
With probability at least $1-\delta$, all coefficients are recovered to additive accuracy $\varepsilon$,
\[
|\widehat h_i-h_i|\le\varepsilon,
\qquad
|\widehat a_{ij}-a_{ij}|\le\varepsilon,
\qquad
\forall\, i,j .
\]

\item \textit{(Number of experiments.)}
The total number of $\mathcal{E}_t = e^{\Lindblad t}$ channel uses is
\[
N_{\exp}
=
\mathcal{O}\left(
\bigl(9^{k_H}+9^{k_D}\bigr)\,
\frac{\mathfrak d^{2}\nu^{2}}{\varepsilon^{2}}\;
\mathrm{polylog}\Big(
\widehat{M}_H,\widehat{M}_D,\mathfrak d,\nu,\tfrac{1}{\varepsilon},\tfrac{1}{\delta}
\Big)
\right),
\]
where $\nu \ge \|C^{-1}\|_{\infty\to\infty}$ is the conditioning factor of the square design matrix $C$
constructed in \cref{algorithm:coefficient_learning}. 

\item \textit{(Time resolution.)}
The procedure only applies $\mathcal{E}_t$ at times $t\ge t_{\mathrm{res}}$, where
\[
t_{\mathrm{res}}
=
\Omega\left(
\mathfrak d^{-1}\,
\mathrm{polylog}(\nu\mathfrak d/\varepsilon)^{-1}
\right).
\]

\item \textit{(Total evolution time.)}
All queries satisfy $t\in(0,\tau_{\max})$ with $\tau_{\max}=\Theta(1/\mathfrak d)$, and hence the total evolution time obeys
\[
t_{\mathrm{tot}}
:=
\sum_{\text{calls}} t
\;\le\;
N_{\exp}\,\tau_{\max}
=
\mathcal{O}\left(
\bigl(9^{k_H}+9^{k_D}\bigr)\,
\frac{\mathfrak d\,\nu^{2}}{\varepsilon^{2}}\;
\mathrm{polylog}\Big(
\widehat{M}_H,\widehat{M}_D,\mathfrak d,\nu,\tfrac{1}{\varepsilon},\tfrac{1}{\delta}
\Big)
\right).
\]

\item \textit{(Classical overhead.)}
The total classical pre- and post-processing time is:
\begin{equation*}
    t_{\mathrm{classical}} = \widetilde{\mathcal{O}}\left(\widehat{M}^2(\widehat{M}_H\,16^{k_H}+\widehat{M}_D^2\,16^{k_D}) + \widehat {M}\, k\,9^k \nu^2/\epsilon^2\right)
\end{equation*}
where $k\coloneq\max(k_H, k_D)$. It is dominated by rank elimination $C_{\mathrm{all}} \rightarrow C$ (first term in the sum) and process shadow tomography (second term in the sum).
\end{enumerate}

\noindent
The procedure prepares only product $n$-qubit input states in which each qubit is a Pauli eigenstate and measures each output qubit in a single-qubit Pauli eigenbasis.
\end{theorem}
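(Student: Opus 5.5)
The proof will assemble results already established in the paper, in the order in which \cref{algorithm:coefficient_learning} executes.

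\textbf{Design stage.} By \cref{lemma:injectivity_of_patch_wise_tomography} and \cref{cor:classical_cost_patchwise_tomography}, the full patchwise probe family $\mathcal R_{\mathrm{all}}$ over $\widehat{\mathcal T}$ yields a design matrix $C_{\mathrm{all}}$ of full column rank $\widehat M=\widehat M_H+\widehat M_D^2$. Hence a square invertible submatrix $C$ can be extracted by row selection, for example Gaussian elimination with pivoting on the at most $16^{k_H}\widehat M_H+16^{k_D}\widehat M_D^2$ rows. Each step costs $\mathcal O(\widehat M^2)$ per row, which gives the first term of the classical overhead. Because $\widehat{\mathcal S}_H\supseteq\mathcal S_H$ and $\widehat{\mathcal S}_D\supseteq\mathcal S_D$, the true coefficient vector (zero-padded on spurious candidates) satisfies $\vb d=C\vb x$ exactly, by \cref{eq:coefficient_learning_linear_relation}. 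This is why the superset condition is needed.

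\textbf{Derivative estimation and accuracy.} Set $\varepsilon_d=\varepsilon/\nu$. Apply \cref{lemma:parallel_patchwise_derivative_estimation} with target accuracy $\varepsilon_d$ and failure probability $\delta$. This gives $\|\widehat{\vb d}-\vb d\|_\infty\le\varepsilon_d$ using
\[
\mathcal O\bigl((9^{k_H}+9^{k_D})\,\mathfrak d^2\nu^2\varepsilon^{-2}\,\mathrm{polylog}\bigr)
\]
channel uses. The smoothness input is \cref{remark:derivative_bounds_Pauli_input_state}, which applies because each probe observable $O$ lies in a patch and so $\mathfrak d_O\le\mathfrak d$. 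On this event, the bound \cref{eq:ls_error_bound} with square $C$ (so $C^+=C^{-1}$) gives $\|\widehat{\vb x}-\vb x\|_\infty\le\nu\varepsilon_d=\varepsilon$. Since $\vb x$ lists $h_i$, $a_{ii}$, $\Re a_{ij}$ and $\Im a_{ij}$ separately, the bound on off-diagonal $a_{ij}$ needs one extra step. I will either target $\varepsilon/\sqrt2$, absorbing the factor into constants, or interpret the statement componentwise.

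\textbf{Time resolution, evolution time and classical cost.} The schedule has $\tau_{\max}=1/(4\mathfrak d)$ and $r=\Theta(\log(\mathfrak d\nu/\varepsilon))$. \cref{lemma:time_resolution_chebyshev_nodes} then gives node spacing at least $2\tau_{\max}/(r+1)^2$, which is the stated $t_{\mathrm{res}}$. The total evolution time is bounded by $N_{\exp}\tau_{\max}$. The post-processing cost follows from \cref{rem:classical-cost-shadow-process-tomography}, taking $K_1K_2\to\widehat M$ pairs, weight $\omega_P+\omega_Q\le 2k$, and per-node accuracy $\varepsilon_s=\Theta(\varepsilon/(\nu\mathfrak d r^3))$, summed over $r+1$ nodes. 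This yields $\widetilde{\mathcal O}(\widehat M k 9^k\nu^2/\varepsilon^2)$, with $\mathfrak d$ factors treated as absorbed or polylog as in the statement. Solving the $\widehat M\times\widehat M$ system costs at most $\widehat M^3$, which is dominated by the elimination term.

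\textbf{Main obstacle.} The main obstacle is the sample-complexity step. The Pauli-input probes $2^{-n}Q$ must be realized through differences of eigenstate preparations $\rho_\pm$ inside shadow process tomography, and the $3^{\omega_P+\omega_Q}$ variance bound together with the union bound over $\widehat M$ probes and $r+1$ nodes has to compose correctly. I will handle this by invoking \cref{lemma:shadow_process_tomography} directly, since it is stated for $2^{-n}\tr(P\Phi(Q))$, so no further reduction is needed. The remaining check is that the rescaling of $\delta$ keeps the log factors polylogarithmic.
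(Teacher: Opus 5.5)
Your proposal is correct and follows the paper's proof step for step. Injectivity of patchwise tomography gives a full-column-rank $C_{\mathrm{all}}$, from which elimination extracts a square invertible $C$; \cref{lemma:parallel_patchwise_derivative_estimation} applied with $\varepsilon_d=\varepsilon/\nu$ and the bound $\|\widehat{\vb x}-\vb x\|_\infty\le\nu\varepsilon_d$ give accuracy; Gauss--Chebyshev node spacing gives $t_{\mathrm{res}}$; and the classical cost splits into elimination, shadow post-processing and the final solve.

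Your additions are correct refinements that the paper leaves implicit: zero-padding the spurious candidates makes $\vb d=C\vb x$ hold exactly, and bounding complex $a_{ij}$ through its real and imaginary parts costs a factor $\sqrt2$. One correction: the factor $\mathfrak d^2$ that comes from $\varepsilon_s^{-2}$ in the shadow post-processing cost is not polylogarithmic in general, so you should keep it explicitly in that term rather than ``absorb'' it. The paper's proof drops it in the same way.
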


\begin{proof}
\cref{algorithm:coefficient_learning} (i) constructs a tomographically complete linear system from patchwise Pauli probes,
(ii) estimates the required short-time derivatives using Gauss--Chebyshev sampling, and
(iii) classically solves a linear system for the coefficients.

\medskip
\textbf{Correctness.}
Let $\widehat{M}=\widehat{M}_H+\widehat{M}_D^2$. Consider the (rectangular) design matrix $C_{\mathrm{all}}$ whose rows are indexed by all patchwise Pauli pairs
$\mathcal R_{\mathrm{all}}=\bigcup_{T\in\mathcal T}\mathcal P_T\times\mathcal P_T$ and whose columns are indexed by the unknown coefficient vector $\vb x$
(cf.~\cref{sec:linear_relation_expectations}).  
By \cref{lemma:injectivity_of_patch_wise_tomography}, the coefficient-to-data map is injective when \emph{all} patchwise Pauli data are available; equivalently,
$C_{\mathrm{all}}$ has full column rank $\widehat{M}$. Hence there exists a subset of $\widehat{M}$ linearly independent rows; \cref{algorithm:coefficient_learning} selects such a minimal subset and forms the square invertible matrix $C$ of shape $\widehat{M} \times \widehat{M}$.

Let $\vb d\in\mathbb R^{\widehat{M}}$ denote the exact derivative data corresponding to the selected rows, and let $\widehat{\vb d}=\vb d+\vb e$
be the estimated data produced by the shadow--Chebyshev routine. Solving the square system gives
\[
\widehat{\vb x}-\vb x
=
C^{-1}(\widehat{\vb d}-\vb d)
=
C^{-1}\vb e,
\]
so by definition of the induced norm,
\[
\|\widehat{\vb x}-\vb x\|_\infty
\le
\|C^{-1}\|_{\infty\to\infty}\,\|\vb e\|_\infty
=
\nu\,\|\vb e\|_\infty.
\]
Therefore, by setting the target derivative accuracy $\varepsilon_d$ such that $\|\vb e\|_\infty\le \varepsilon_d = \varepsilon/\nu$, we obtain $\|\widehat{\vb x}-\vb x\|_\infty\le \varepsilon$, which is exactly the claimed
entrywise accuracy for all recovered $(h_i,a_{ij})$.

\medskip
\textbf{Number of experiments.}
It remains to estimate all derivatives such that $\|\vb e\|_\infty\le \varepsilon/\nu$ with probability $\ge 1-\delta$ and to bound the resulting channel complexity. \cref{algorithm:coefficient_learning} uses parallel estimation of all Pauli probe derivatives.   
This is precisely the content of \cref{lemma:parallel_patchwise_derivative_estimation}, applied with target derivative accuracy
\[
\varepsilon_d:=\varepsilon/\nu
\]
and failure probability $\delta$.
Substituting $\varepsilon_d$ into the bound of \cref{lemma:parallel_patchwise_derivative_estimation} yields
\[
N_{\exp}
=
\mathcal{O}\left(
\bigl(9^{k_H}+9^{k_D}\bigr)\,
\frac{\mathfrak d^{2}}{\varepsilon_d^{2}}\;
\mathrm{polylog}\Big(
\widehat{M}_H,\widehat{M}_D,\mathfrak d,\tfrac{1}{\varepsilon_d},\tfrac{1}{\delta}
\Big)
\right)
=
\mathcal{O}\left(
\bigl(9^{k_H}+9^{k_D}\bigr)\,
\frac{\mathfrak d^{2}\nu^{2}}{\varepsilon^{2}}\;
\mathrm{polylog}\Big(
\widehat{M}_H,\widehat{M}_D,\mathfrak d,\nu,\tfrac{1}{\varepsilon},\tfrac{1}{\delta}
\Big)
\right),
\]
which proves the claimed experiment complexity and simultaneously ensures the event $\|\vb e\|_\infty\le \varepsilon/\nu$ holds with probability at least $1-\delta$.

\medskip
\textbf{Time resolution and total evolution time.}
The sampling times are Gauss--Chebyshev nodes on $[0,\tau_{\max}]$ with $\tau_{\max}=\Theta(1/\mathfrak d)$ and
degree $r=\Theta(\log(\mathfrak d/\varepsilon_d))=\Theta(\log(\mathfrak d\nu/\varepsilon))$
(cf. \cref{lemma:parallel_patchwise_derivative_estimation} and \cref{eq:cheb_params_pauli_derivative}). Hence the minimal time spacing satisfies
$t_{\mathrm{res}}=\Omega(\tau_{\max}/r^2)=\Omega(\mathfrak d^{-1}\,\mathrm{polylog}(\mathfrak d\nu/\varepsilon)^{-1})$.
Moreover, since every call uses $t\le \tau_{\max}$,
\[
t_{\mathrm{tot}}\le N_{\exp}\,\tau_{\max}
=
\mathcal{O}\left(
\bigl(9^{k_H}+9^{k_D}\bigr)\,
\frac{\mathfrak d\,\nu^{2}}{\varepsilon^{2}}\;
\mathrm{polylog}\Big(
\widehat{M}_H,\widehat{M}_D,\mathfrak d,\nu,\tfrac{1}{\varepsilon},\tfrac{1}{\delta}
\Big)
\right).
\]

\medskip
\textbf{Classical overhead.}
The classical computation is used to (a) construct a full-column rank matrix $C_{\mathrm{all}}$, (b) select $\widehat{M}$ independent rows to get the square design matrix $C$, (c) perform process shadow tomography and Chebyshev derivative estimation and (d) solve the resulting $\widehat{M}\times \widehat{M}$ linear system.

By \cref{cor:classical_cost_patchwise_tomography}, (a) takes time proportional to $|\mathcal R_{\mathrm{all}}|\le \widehat{M}_H\,16^{k_H}+\widehat{M}_D^2\,16^{k_D}$. (b) can be done by rank-revealing elimination in time $\mathcal O(|\mathcal{R}_{\mathrm{all}}|\,  \widehat M^2)$ = $\mathcal O(|\mathcal{R}_{\mathrm{all}}|\,  \widehat M^2)$. By \cref{rem:classical-cost-shadow-process-tomography}, (c) can be done in $\widetilde{O}(\widehat {M}\, k\,9^k \nu^2/\epsilon^2)$, where $k \coloneq \max(k_H, k_D)$ and $\widetilde{\mathcal{O}}$ omits poly-logarithmic factors. (d) can be done by LU-decomposition and takes $\mathcal{O}(\widehat{M}^3)$. This the total classical time complexity that combines both pre-processing and post-processing is:
\begin{equation*}
    t_{\mathrm{classical}} = \widetilde{\mathcal{O}}\left(\widehat{M}^2(\widehat{M}_H\,16^{k_H}+\widehat{M}_D^2\,16^{k_D}) + \widehat {M}\, k\,9^k \nu^2/\epsilon^2\right)
\end{equation*}

This completes the proof.
\end{proof}

\begin{remark}[Optimality of $\varepsilon$-scaling]
    In general, estimating the coefficients of a Lindbladian in the presence of generic dissipation cannot achieve better precision scaling than $\mathcal{O}(1/\varepsilon^{2})$~\cite{huelga1997improvement,demkowicz2014using,escher2011general,zhou2018achieving}.  
    Therefore, \cref{algorithm:coefficient_learning} achieves the theoretically optimal $\varepsilon$-scaling, saturating the standard quantum limit (SQL).
\end{remark}

\begin{remark}[Serialized derivative estimation] \label{rem:serialized-ancilla-free-coefficient-learning}
If the candidate supports are highly nonlocal, then the locality parameters $k_H,k_D$ are large and the shadow-process parallelization can be counterproductive. In this regime it can be preferable to estimate derivatives \emph{probe-by-probe} once a full-rank design matrix $C$ has been fixed.

Concretely, the parallelized query complexity scales as
\[
m_{\mathrm{par}}
=\mathcal{O}\!\left(
\bigl(9^{k_H}+9^{k_D}\bigr)\,
\frac{\mathfrak d^{2}}{\varepsilon^{2}}\;
\mathrm{polylog}\!\Big(
\widehat{M}_H,\widehat{M}_D,\mathfrak d,\tfrac{1}{\varepsilon},\tfrac{1}{\delta}
\Big)
\right),
\]
whereas estimating the $\widehat M$ required derivatives independently (via \cref{cor:all_probe_derivatives_sample_complexity}) yields
\[
m_{\mathrm{ser}}
=\mathcal{O}\!\left(
\widehat M\cdot
\frac{\mathfrak d^{2}}{\varepsilon^{2}}\;
\mathrm{polylog}\!\Big(\mathfrak d,\tfrac{1}{\varepsilon},\tfrac{\widehat M}{\delta}\Big)
\right).
\]
Thus, whenever $\widehat M \lesssim 9^{k_H}+9^{k_D}$, serialization is at least as favorable in channel-query scaling. Operationally, this corresponds to replacing the call to
$\texttt{ShadowProcessTomography}(\mathcal E_{t_m},\mathcal R,\varepsilon_s,\delta_s)$
by probe-by-probe estimation at each node (\cref{cor:all_probe_derivatives_sample_complexity}).
All resource bounds in \cref{theorem:ancilla_free_coefficient_learning} continue to hold after the substitution
$9^{k_H}+9^{k_D}\mapsto \widehat M$.

The corresponding classical runtime becomes
\[
t_{\mathrm{classical}}
=\widetilde{\mathcal O}\!\left(
\widehat{M}^2(\widehat{M}_H\,16^{k_H}+\widehat{M}_D^2\,16^{k_D})
\right).
\]
Evidently, for highly non-local candidate Paulis, the classical construction of the full rank design matrix $C$ can become prohibitively expensive. Efficient construction of $C$ in this regime remains an open question.

\end{remark}

\begin{remark}[Sparsity-only schedules]
\label{rem:coeff_learning_M_sparse}
The guarantees in \cref{theorem:ancilla_free_coefficient_learning} are stated in terms of an upper bound $\mathfrak d$ on the dual-interaction-graph degree, which governs the Gauss--Chebyshev schedule used for derivative estimation. If such a bound is unavailable, one may instead use a sparsity bound. In particular, the candidate structure size
\[
\widehat M:=|\widehat{\mathcal S}_H|+|\widehat{\mathcal S}_D|^2
\]
provides an explicit (and directly available) upper bound on the number of nonzero Pauli terms appearing in $\mathcal L^\dagger$, and hence can be used as a valid $M$-sparsity bound for scheduling (cf.~\cref{rem:pauli_expectation_derivative_estimation_M_sparse}). Concretely, one can replace $\mathfrak d$ by $\widehat M$ in the choices of $(\tau_{\max},r, \varepsilon_s)$ and in all derivative-estimation resource bounds, and the conclusions of \cref{theorem:ancilla_free_coefficient_learning} continue to hold with this substitution.
\end{remark}

\subsection{Numerical study of probe selection and conditioning factor}
\label{sec:numerics_probe_selection_conditioning_factor}

As summarized in \cref{theorem:ancilla_free_coefficient_learning}, the quantum query complexity of the
coefficient-learning stage depends quadratically on the conditioning factor
$\nu \coloneq \|C^{-1}\|_{\infty\to\infty}$ of the (square) design matrix $C$.
In this section, we numerically study $\nu$ for systems up to $n=42$ using physically motivated candidate
supports $\widehat{\mathcal S}_H$ and $\widehat{\mathcal S}_D$.
We also benchmark a probe-selection heuristic based on Lindbladian patchwise Pauli tomography
(\cref{sec:lindbladian_patchwise_tomography}), which takes $\widehat{\mathcal S}_H$ and
$\widehat{\mathcal S}_D$ as input and constructs a square, full-rank design matrix $C$.

\medskip
\noindent\textbf{Simulated model definition.}
We consider $n=L_xL_y$ qubits on a $L_x\times L_y$ square lattice with periodic boundary conditions.
Our numerics focus on the probe-selection stage, assuming candidate Hamiltonian and dissipator supports
$\widehat{\mathcal S}_H$ and $\widehat{\mathcal S}_D$ are already available.
As illustrated in \cref{fig:numerics_qubit_graph}, we choose these candidates deliberately broad, combining
standard local interactions and noise with a small number of nonlocal interaction terms and collective
(global) noise mechanisms. 

\begin{figure*}
    \centering
    \includegraphics[width=0.8\linewidth]{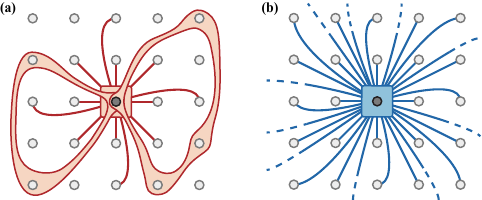}
    \caption{
    \textbf{Simulated candidate interaction patterns.}
    For visualization, we depict all candidate Lindbladian patches that contain a fixed reference qubit: thin line segments depict two-qubit patches, while filled shapes depict single-, three-, and four-qubit patches corresponding to candidate Lindbladian terms.
    (a)~The candidate Hamiltonian structure $\widehat{\mathcal S}_H$ contains all single-qubit Paulis and all
    two-qubit Paulis supported on nearest-neighbour, next-nearest-neighbour, and next-next-nearest-neighbour
    pairs, and in addition a small number ($\propto n$) of randomly sampled nonlocal three- and four-qubit
    Paulis to model unexpected couplings.
    (b)~The candidate dissipator structure $\widehat{\mathcal S}_D$ contains all single-qubit Paulis and all
    two-qubit Paulis supported on nearest-neighbour, next-nearest-neighbour, and next-next-nearest-neighbour
    pairs. Moreover, by including all dissipator terms $P_k\rho P_m$ formed from pairs of single-qubit Paulis
    $P_k,P_m\in\widehat{\mathcal S}_D$, we cover the couplings induced by arbitrary collective jump operators
    of the form
    $J_{\alpha}=\sum_{i=1}^{n}(\alpha_{i,x}X_i+\alpha_{i,y}Y_i+\alpha_{i,z}Z_i)$, which appear as all-to-all
    two-qubit patches on the figure.
    }
    \label{fig:numerics_qubit_graph}
\end{figure*}
    
\smallskip
\noindent\emph{Candidate supports and parameter count.}
For $\widehat{\mathcal S}_H$ we include: (i) all on-site Paulis; (ii) all two-qubit Paulis supported on
pairs of sites within a fixed geometric interaction range (nearest-, next-nearest-, and next-next-nearest
neighbours on the square lattice); and (iii) a small number of randomly sampled nonlocal Pauli strings of
locality $3$ and $4$. Items (i)--(ii) encompass a wide range of standard lattice Hamiltonians, including
Ising- and Heisenberg-type models as well as more general two-body spin couplings, while (iii) is included
to model the possibility of weak or unexpected multi-qubit interactions. Concretely, we sample $\Theta(n)$
nonlocal three- and four-local terms, with counts chosen so that at $n=42$ we include approximately $50$
three-local terms and $10$ four-local terms.

For $\widehat{\mathcal S}_D$ we include: (i) all on-site Paulis and (ii) all two-qubit Paulis supported on
the same geometric interaction range (nearest-, next-nearest-, and next-next-nearest neighbours). In
addition, we allow collective jump operators $J_\alpha$ as in \cref{fig:numerics_qubit_graph}(b), which
capture correlated effects such as collective decay channels and global drive fluctuations. Accordingly, in
our simulations we construct a design matrix for estimating (i) all diagonal Kossakowski coefficients
$a_{kk}$ with $P_k\in\widehat{\mathcal S}_D$, and (ii) all off-diagonal coefficients $a_{km}$ for which
$P_k$ and $P_m$ are arbitrary \emph{single-qubit} Paulis (all of which are contained in
$\widehat{\mathcal S}_D$).

Since $|\widehat{\mathcal S}_H|=\mathcal O(n)$ and $|\widehat{\mathcal S}_D|=\mathcal O(n)$ in this model,
the resulting parameter count satisfies $\widehat M=\mathcal O(n^2)$ and is dominated by the off-diagonal
coefficients $a_{km}$ associated with pairwise Pauli actions $P_k\rho P_m$. 

\medskip
\noindent\textbf{Heuristic probe selection.}
We construct a square design matrix $C$ by selecting a subset of patch-supported Pauli probes
$\mathcal R=\{(\rho_{Q},O)\}$ as in \cref{sec:lindbladian_patchwise_tomography}.
At a high level, the routine proceeds in two phases: a \emph{Hamiltonian pre-pass} that adds one probe
per candidate Hamiltonian term, followed by a \emph{patchwise augmentation} phase that adds probes supported
on the patch family induced by $\widehat{\mathcal S}_H$ and $\widehat{\mathcal S}_D$ until full rank is reached.
In the second phase, we enumerate candidate probes using a budgeted round-robin schedule over patch/locality
\emph{buckets} to avoid long stalls in any single bucket while remaining exhaustive and reproducible.
\Cref{alg:numerics_probe_selection} provides the pseudocode.

\begin{boxedalgorithm}[H]{Classical Probe Selection Heuristic\label{alg:numerics_probe_selection}}
\DontPrintSemicolon
\SetKwInOut{Input}{Inputs}
\SetKwInOut{Output}{Output}
\SetKwBlock{Init}{Initialization}{}
\SetKwBlock{Seed}{Hamiltonian pre-pass phase}{}
\SetKwBlock{Grow}{Patchwise tomography (using round-robin scheduler)}{}

\Input{Candidate supports $\widehat{\mathcal S}_H,\widehat{\mathcal S}_D$; target rank $\widehat M$ (equal to the $\#$ of unknown parameters);}
\Output{Probe set $\mathcal R$ and square full-rank design matrix $C$}

\Init{
Form patch family
$\mathcal T \gets \{\supp{P}:P\in\widehat{\mathcal S}_H\}\cup\{\supp{P_i}\cup\supp{P_j}:P_i,P_j\in\widehat{\mathcal S}_D\}$\;
Initialize $\mathcal R\gets\emptyset$; initialize current rank $r\gets 0$\;
Initialize a round-robin probe scheduler over patches ordered by increasing non-locality;
}

\Seed{
\ForEach{$P\in\widehat{\mathcal S}_H$}{
Choose a Pauli observable $O$ supported on $\supp{P}$ such that $\{O,P\}=0$\;
Set $Q\gets iOP$ and $\rho_{ Q}\gets(I+ Q)/2^n$\;
Append probe $(\rho_{Q},O)$ to $\mathcal R$ and its associated row to $C$\;
Update rank; if rank increases, set $r\gets r+1$\;
\If{$r=\widehat M$}{\Return{$(\mathcal R,C)$}\;}
}
}

\Grow{
\While{$r<\widehat M$}{
Get next bucket $(T,k_1,k_2)$ from the scheduler \tcp*{$k_1$, $k_2$ are localities of $O,Q$ on patch $T$}
Select a random probe $(\rho_{Q},O)$ supported on $T$ with $| \supp{O}\cap T|=k_1$ and $| \supp{Q}\cap T|=k_2$\;
Append probe $(\rho_{Q},O)$ to $\mathcal R$ and its associated row to $C$\;
Update rank; if rank increases, set $r\gets r+1$\;
}
}
\Return{$(\mathcal R,C)$}\;
\end{boxedalgorithm}

The procedure uses only Pauli observables and (mixed) Pauli-eigenstate preparations. The
\emph{Hamiltonian pre-pass} phase is motivated by the fact that, in the absence of dissipation,
individual Hamiltonian coefficients can be isolated from single-probe derivatives. Concretely, suppose
we choose a probe $(\rho,O)$ such that $O$ anticommutes with a target
$P_i\in \widehat{\mathcal S}_H$ and set $\rho=(I+iOP_i)/2^n$. Then
\begin{equation}
\left.\frac{d}{dt}\expval{O(t)}\right|_{t=0}
=\tr\bigl\{i[H,O]\rho\bigr\}
=\sum_k h_k\,\tr\bigl\{i[P_k,O]\rho\bigr\}
=-2h_i,
\label{eq:ham_isolation_no_diss}
\end{equation}
so $h_i=-\tfrac12\left.\frac{d}{dt}\expval{O(t)}\right|_{t=0}$ and no linear system is required. With
dissipation present, however, the same derivative acquires additional contributions, mixing $h_i$ with
dissipative coefficients $\{a_{ij}\}$:
\begin{equation}
\left.\frac{d}{dt}\expval{O(t)}\right|_{t=0}
=-2h_i+\tr\bigl\{O\,\mathcal D(\rho)\bigr\}.
\label{eq:ham_isolation_with_diss}
\end{equation}
Consequently, Hamiltonian coefficients are no longer isolated and must be recovered by solving a linear
system built from many probes. Our second phase therefore performs Pauli patchwise tomography (\cref{sec:lindbladian_patchwise_tomography}) by
iterating over patch-supported probes using a \emph{budgeted round-robin} schedule: buckets are grouped by
(locality) level and visited in increasing level order, and within each level we cycle through buckets
round-robin. This avoids spending
$\Theta(3^{k_1}3^{k_2})$ work in a single bucket before exploring others, while remaining exhaustive.
Nevertheless, the Hamiltonian pre-pass probes provide informative constraints early; in our numerics,
every probe added during this phase increased the rank of the design matrix.

\medskip
\noindent\textbf{Results.}
For each system size $n$ we generate $16$ different seeds that randomize the nonlocal components of
$\widehat{\mathcal S}_H$ and the probe-selection scheduler. For each seed, we run
\cref{alg:numerics_probe_selection} until $C$ is full rank, and compute
$\nu=\|C^{-1}\|_{\infty\to\infty}$. We report: (i) the empirical distribution of $\nu$ versus $n$ in \cref{fig:conditioning_factor} (main text); (ii) rank growth as probes are added (rank versus attempted probes) for the worst-case seed at $n=42$ in \cref{fig:numerics-rank-growth}(a); and (iii) the number of unknown parameters in the simulated model (final rank $\widehat{M}$) together with the worst-case number of attempted probes required to reach full rank as a function of $n$ in \cref{fig:numerics-rank-growth}(b). Overall, these numerics provide evidence that physically motivated models with up to $\sim 2\times 10^4$ unknown parameters can yield moderately conditioned linear systems, and that the corresponding design matrices can be constructed in practice using Pauli patchwise tomography (\cref{sec:lindbladian_patchwise_tomography}).

\begin{figure*}
    \centering
    \includegraphics[width=0.95\linewidth]{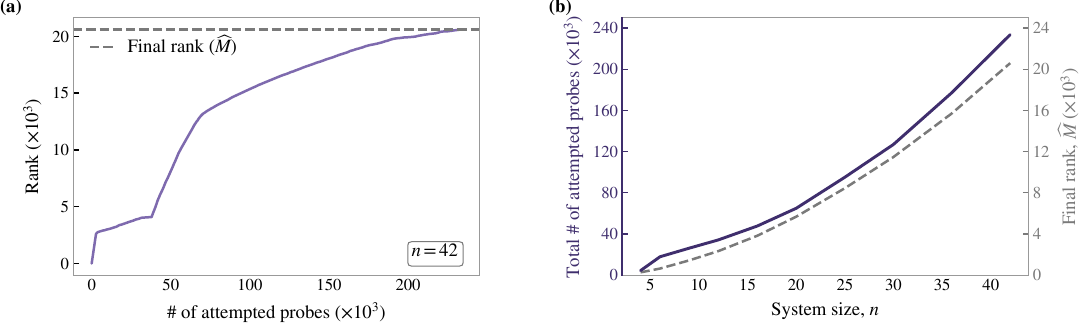}
    \caption{
    \textbf{Observed rank growth.}
    (a) Rank growth of the design matrix $C$ as a function of the number of attempted Pauli patchwise tomography probes, shown for the worst-case seed at $n=42$.
    The initial rapid increase comes from the Hamiltonian pre-pass.
    The subsequent piecewise behaviour reflects the probe-selection scheduler, which prioritizes probe buckets in order of increasing non-locality.
    (b)~Scaling with system size $n$:
    the total number of probes attempted by \cref{alg:numerics_probe_selection} to reach full rank and the final rank $\widehat{M} = \mathcal{O}(n^2)$ (the number of unknown parameters in the simulated model).}
    \label{fig:numerics-rank-growth}
\end{figure*}

\begin{remark}[Worst-case classical cost]
Theorem~\ref{theorem:ancilla_free_coefficient_learning} bounds the classical preprocessing time for probe
selection (and design-matrix construction) as
\[
t_{\mathrm{classical}}
=
\widetilde{\mathcal O}\!\Big(\widehat M^{2}\big(\widehat M_H\,16^{k_H}+\widehat M_D^{2}\,16^{k_D}\big)\Big),
\]
which corresponds to first enumerating all patch-supported probes (of total size
$\widetilde{\mathcal O}(\widehat M\,16^{k})$) and then selecting a subset of $\widehat M$ probes yielding a
full-rank square system. In our numerical experiments we instead adopt a streaming implementation tailored
to dense candidate structures: we generate probes sequentially and perform an incremental rank check after
each probe is added, stopping once the rank reaches $\widehat M$. This avoids materializing the full probe
pool and reduces peak memory usage from $\widetilde{\mathcal O}(\widehat M^{2}16^{k})$ to
$\widetilde{\mathcal O}(\widehat M^{2})$ (storing only the current rank basis / design-matrix rows), at the
expense of additional rank-update work. In the worst case, this yields a classical runtime scaling as
\[
t_{\mathrm{classical}}
=
\widetilde{\mathcal O}\!\Big(\widehat M^{3}\big(\widehat M_H\,16^{k_H}+\widehat M_D^{2}\,16^{k_D}\big)\Big),
\]
up to polylogarithmic factors.
For the candidate structures used in this section (\cref{sec:numerics_probe_selection_conditioning_factor}) we have
$k_H=4$ and $k_D=2$. Empirically, the round-robin scheduler reaches full rank after approximately $3\,\text{-}\,7$ times fewer attempted probes than the worst-case number of patchwise-tomography probes, i.e., without exhaustively iterating over all patch-induced candidates.
\end{remark}

\begin{remark} [Jump operator locality and gauge freedom]
Finally, we emphasize that the relevant notion of locality for both classical preprocessing and quantum
query complexity is the extracted locality defined in \eqref{eq:lindbladian-pauli-locality}, rather than the
support size of jump operators in a particular GKLS representation. For example, although the collective
jump operators $J_{\alpha}=\sum_{i=1}^{n}(\alpha_{i,x}X_i+\alpha_{i,y}Y_i+\alpha_{i,z}Z_i)$ act on all $n$
qubits, their Pauli-basis expansion contributes terms of the form $P_k\rho P_m$ with $P_k,P_m$ single-qubit
Paulis, and therefore induces only two-qubit patches (i.e., $k_D=2$) in our framework. This invariance under
jump-operator gauge highlights why \eqref{eq:lindbladian-pauli-locality} is the appropriate measure of
``algebraic locality'' for our algorithm.
\end{remark}
\section{Optimality of time resolution} \label{sec:appendix-optimality-of-time-resoltion}

Here we formalize and prove the time-resolution optimality statement introduced in
\cref{sec:optimality_resolution_time}. Concretely, we show that if the minimum accessible evolution time
(time resolution) is coarser than $O(M^{2/\kappa-1})$ for some integer $2\leq\kappa\leq\lfloor n/2\rfloor$, then even the \emph{simpler} task of learning only the dissipator
structure becomes information-theoretically hard: any protocol may require an exponential number of samples.
We establish this optimality by proving a sample-complexity lower bound for structure learning of
Lindbladians containing only dissipator terms, under an experimental access model that is \emph{stronger}
than the one assumed by our structure-learning algorithm. In particular, we capture the following features of our task and protocol
\begin{definition}[Practical Lindbladian learning model]\label{def:practical_lindbladian_learning_lower}
Consider learning an $M$-sparse Lindbladian with single-copy inputs, a minimum evolution time $t_0$
(time resolution), and single-copy measurements. Specifically, we restrict the Lindbladians we consider and model the learning protocols as follows:
\begin{itemize}
\item  We consider Lindbladians with the following expansion in the Pauli basis as
\begin{equation}\label{eq:lower_Pauli_exp}
\mathcal{L}(\rho)=-i\sum_{P_i\in \mathcal{S}_H} h_i[P_i,\rho]+\sum_{P_j,P_k\in \mathcal{S}_D} a_{jk}
\Big(P_j\rho P_k-\tfrac12\{P_jP_k,\rho\}\Big)
\end{equation}
with $\abs{h_i}\leq 1$, $\abs{a_{jk}}\leq 1$, where $\mathcal{S}_H$ and $\mathcal{S}_D$ are sets of Pauli operators indexing the Hamiltonian and dissipator structures, respectively. The sparsity condition indicates that the $\mathcal{L}(\rho)$ is $M$-sparse in the sense that $M=\abs{\mathcal{S}_H}+\abs{\mathcal{S}_D}^2$. Here, we consider the case when $M=\text{poly}(n)$.
\item We consider protocols with the following features as in \Cref{theorem:ancilla_free_structure_learning}:
\begin{enumerate}
    \item Prepare a $n$-qubit (we refer to as single-copy) state $\rho_{\text{in}}$ as input.
    \item Evolve the Lindbladian for time $t\geq t_0$ on $\rho_{\text{in}}$.
    \item Measure the output state. Here, we relax the protocols we consider to be able to perform arbitrary measurements on $n$ qubits.
\end{enumerate}
There can be adaptivity in preparing $\rho_{\text{in}}$ and choosing the measurement to perform based on the history of measurement outcomes so far.
\end{itemize}
\item The goal is to identify the structure of the Lindbladian up to parameter $\eta$, i.e., output all $\{h_i\}_i$ and $\{a_{jj}\}_{j}$ larger than $\eta$), with high probability. 
\end{definition}

We prove the following lower bound on the total evolution time.
\begin{theorem}\label{thm:lower}
Consider the practical Lindbladian learning model in \Cref{def:practical_lindbladian_learning_lower}.
Fix any integer $2\leq\kappa\leq\lfloor n/2\rfloor$ and assume the minimum evolution time (time resolution)
satisfies $t_0=O(M^{2/\kappa-1})$.
Then there exists an $M$-sparse Lindbladian with $M=\Theta(n^\kappa)$ containing only dissipator terms such
that any (possibly adaptive) protocol in the above model requires $\exp(\Omega(n))$ samples to identify the
Lindbladian structure above threshold $\eta=1/3$ with high probability.
\end{theorem}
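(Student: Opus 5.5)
The plan is a two-point (Le Cam-style) reduction. I construct two purely dissipative, Pauli-diagonal Lindbladians $\mathcal L_{0,\kappa}$ and $\mathcal L_{1,\kappa}$ whose dissipator supports differ. Both have diagonal Kossakowski entries equal to $1/2\ge\eta=1/3$ and $M=\Theta(n^\kappa)$. Any protocol that identifies the structure above threshold $\eta$ must therefore distinguish them. I then show that for every $t\ge t_0$ both channels map every input to within $\mathcal O(2^{-n})$ trace distance of $I/2^n$. The lower bound follows from a hybrid argument over adaptive rounds.

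\textbf{Construction.} Let $w$ be an integer with $w=\Theta(\kappa)$, and take $\mathcal S_D^{(0)}$ to be a set of $\Theta(n^{\kappa/2})$ weight-$w$ Paulis, chosen symmetrically so that every qubit and every single-qubit Pauli letter appears equally often. Define
\[
\mathcal L_{b,\kappa}(\rho)=\tfrac12\sum_{P\in\mathcal S_D^{(b)}}\bigl(P\rho P-\rho\bigr),
\]
where $\mathcal S_D^{(1)}$ is obtained from $\mathcal S_D^{(0)}$ by swapping one Pauli (or a constant fraction of them) for another Pauli of the same weight and the same symmetry. This keeps $|\mathcal S_D^{(b)}|^2=\Theta(n^\kappa)=M$, and the two dissipator supports differ. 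Because each generator is a sum of commuting Pauli-channel generators, $e^{t\mathcal L_b}$ is a Pauli channel. Each Pauli $Q\neq I$ is an eigenvector with eigenvalue
\[
\lambda_Q(t)=\exp\Bigl(-t\,\#\{P\in\mathcal S_D^{(b)}:\{P,Q\}=0\}\Bigr).
\]

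\textbf{Mixing.} The key combinatorial step is a lower bound, uniform over $Q\neq I$, on the number $N_Q$ of elements of $\mathcal S_D^{(b)}$ that anticommute with $Q$. The worst case should be $Q$ of weight one. Then $N_Q\gtrsim |\mathcal S_D|\,w/n$, since a constant fraction of the weight-$w$ terms covering that qubit carry a non-commuting letter. Higher-weight $Q$ should only have larger $N_Q$, which I would verify by a counting or symmetry argument. Expanding $\rho_{\rm in}-I/2^n$ in the Pauli basis and using $\|X\|_1\le 2^{n/2}\|X\|_2$ gives
\[
\bigl\|e^{t\mathcal L_b}(\rho_{\rm in})-I/2^n\bigr\|_1\le 2^{n/2}\max_{Q\neq I}\lambda_Q(t)\le 2^{n/2}e^{-t\min_Q N_Q}.
\]
This is at most $2^{-n}$ once $t\ge c\,n\big/\min_Q N_Q$. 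The required time is $n^{2}/(w|\mathcal S_D|)$, and the remaining task is to express it as $\Theta(M^{2/\kappa-1})$.

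\textbf{Distinguishing bound.} By the triangle inequality, the two output states in any single round are within trace distance $\delta_{\rm tr}=\mathcal O(2^{-n})$, whatever input was prepared. For an adaptive protocol with $m$ rounds, I would build hybrids $H_\ell$, in which the first $\ell$ rounds use $\mathcal L_1$ and the rest use $\mathcal L_0$. Consecutive hybrids differ in one round, so their transcript distributions differ in total variation by at most $\delta_{\rm tr}$ (Helstrom bound, conditioned on the history). Summing gives total advantage at most $m\delta_{\rm tr}$, so constant success probability forces $m=\Omega(2^n)=\exp(\Omega(n))$.

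\textbf{Main obstacle.} I expect the hardest part to be the parameter bookkeeping. The mixing rate $\min_Q N_Q$ is limited by the diagonal mass, $\sum_j a_{jj}\le|\mathcal S_D|=\sqrt M$, and the weight-one direction $Q$ additionally costs a factor $w/n$. Matching the stated threshold $t_0=M^{2/\kappa-1}$ may therefore require one of the following:
\begin{enumerate}
\item choosing $w$ and $|\mathcal S_D|$ more carefully;
\item adding off-diagonal Kossakowski entries that speed up decay while keeping $a\succeq 0$;
\item reparametrizing $\kappa$ relative to the weight.
\end{enumerate}
I would first try the symmetric weight-$w$ family and compute the exponent exactly, and only then adjust which Pauli set defines $\mathcal S_D^{(1)}$.
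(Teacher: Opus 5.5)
Your two-point reduction and the hybrid argument over adaptive rounds are sound, and more careful about adaptivity than a product-state bound. The proposal nevertheless does not establish the theorem, because your construction does not become indistinguishable by the required time. With $|\mathcal S_D|=\Theta(n^{\kappa/2})$ Paulis of constant weight $w$, your own estimate gives mixing only for $t\gtrsim n^{2}/(w|\mathcal S_D|)=\Theta(n^{2-\kappa/2})$. The statement, however, needs indistinguishability for every $t\ge t_0=\Theta(M^{2/\kappa-1})=\Theta(n^{2-\kappa})$. Already for $\kappa=2$ you need mixing in constant time but get it only at $t\sim n$. In the window between these two times, the two hypotheses can be told apart with polynomially many samples, e.g.\ by preparing $(I+Q)/2^n$ for a weight-one $Q$ anticommuting with the swapped term. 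The ``main obstacle'' you defer is therefore the entire quantitative content of the theorem, and none of your three remedies is carried out. Your step ``higher-weight $Q$ only have larger $N_Q$'' is also false for merely balanced families. The family $\{X_iX_j,\,Y_iY_j,\,Z_iZ_j\}$ over the edges of a regular graph uses every qubit and every letter equally often. Yet each member commutes with $X^{\otimes n}$, so $(I+X^{\otimes n})/2^n$ is stationary and never mixes.

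The missing idea, which is what the paper uses, is to take the dissipator support to be the \emph{complete} set $\mathcal T_\kappa$ of all $3^\kappa\binom{n}{\kappa}$ weight-$\kappa$ Paulis with unit diagonal rates. Sparsity is then counted as the number of nonzero coefficients, $M_\kappa=|\mathcal T_\kappa|=\Theta(n^\kappa)$, since the Kossakowski matrix is diagonal. Completeness on each support gives a bound uniform over all $Q\neq I$. Fix $i\in\mathrm{supp}(Q)$, and for each $\kappa$-set $S\ni i$ let $r=|\mathrm{supp}(Q)\cap S|\ge 1$. A parity count shows exactly $\tfrac{3^\kappa}{2}\bigl(1-(-1/3)^r\bigr)\ge\tfrac49 3^\kappa$ strings supported on $S$ anticommute with $Q$. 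Hence $N_Q\ge\tfrac49 3^\kappa\binom{n-1}{\kappa-1}=\tfrac{4\kappa}{9n}M_\kappa$. Your Hilbert--Schmidt-to-trace conversion then gives $2^{-n}$-closeness to $I/2^n$ for $t\gtrsim n^2/M_\kappa=\Theta(M_\kappa^{2/\kappa-1})$, which is exactly where the exponent comes from.

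Your reading $M=|\mathcal S_D|^2$ follows the appendix definition literally, but the paper's proof counts $M_\kappa=|\mathcal T_\kappa|$ rather than $|\mathcal T_\kappa|^2$. Your own diagonal-mass remark shows why the literal reading cannot work. With a diagonal Kossakowski matrix every Pauli decays at rate at most $2|\mathcal S_D|=2\sqrt M$. So at $t_0=\Theta(n^{2-\kappa})$ the relevant Pauli eigenvalues are still $e^{-O(n^{2-\kappa/2})}$, which is not $e^{-\Omega(n)}$ once $\kappa>2$. No choice of $w$ or of a symmetric family (your first remedy) can close that gap.
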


\subsection{Overview of the proof}
We now explain the high-level idea behind the \emph{time-resolution lower bound} in \Cref{thm:lower}.
The core idea is to construct two purely dissipative Lindbladians $\mathcal{L}_{0,\kappa}$ and
$\mathcal{L}_{1,\kappa}$ with distinct dissipator supports, such that once the evolution time is restricted
to $t\ge t_{0,\kappa}=O(M_\kappa^{2/\kappa-1})$, the resulting output states become nearly indistinguishable.
This shows that enforcing a coarser time resolution can render structure learning exponentially hard.

\begin{itemize}
\item \emph{Reduction to distinguishing.}
Any protocol that learns the dissipator structure with threshold $\eta=1/3$ can in particular distinguish
between the following two cases:
\begin{enumerate}
    \item The underlying Lindbladian is the null $M_\kappa=\Theta(n^\kappa)$-sparse Lindbladian
    $\mathcal{L}_{0,\kappa}$ with only dissipator terms;
    \item The underlying Lindbladian is the alternative $(M_\kappa-1)$-sparse Lindbladian
    $\mathcal{L}_{1,\kappa}$ with only dissipator terms.
\end{enumerate}

\item \emph{Rapid mixing at coarse time resolution.}
After evolving either Lindbladian on any single-copy input state $\rho_{\text{in}}$ for time
$t\ge t_{0,\kappa}=O(M_\kappa^{2/\kappa-1})$, the output state is exponentially close to the maximally mixed
state $I/2^n$. Concretely, for both hypotheses,
$\rho_{\ell,\kappa}(t):=e^{\mathcal{L}_{\ell,\kappa}t}(\rho_{\text{in}})$ with $\ell\in\{0,1\}$ satisfies
\[
\left\|\rho_{\ell,\kappa}(t)-\frac{I}{2^n}\right\|_1 \le 2^{-n}.
\]

\item \emph{Indistinguishability and exponential samples.}
With the time resolution restricted to $t\ge t_{0,\kappa}$, the distinguishing problem above reduces to
distinguishing two quantum states that are both exponentially close to $I/2^n$ using single-copy
measurements (and even allowing adaptivity). By standard quantum hypothesis testing bounds, distinguishing
two hypotheses whose outputs are $\delta$-close in trace distance requires $\Omega(1/\delta)$ samples to
achieve constant success probability. Here $\delta=\Theta(2^{-n})$, implying an exponential sample
complexity lower bound.

\item \emph{Conclusion.}
Therefore, for any integer $2\le\kappa\le\lfloor n/2\rfloor$, if the minimum evolution time is constrained
to $t_0=O(M^{2/\kappa-1})$, then there exists an $M=\Theta(n^\kappa)$-sparse purely dissipative Lindbladian
for which dissipator-structure learning beyond threshold $\eta=1/3$ requires $\exp(\Omega(n))$ samples.
\end{itemize}

In the remainder of this section we provide the construction of $\mathcal{L}_{0,\kappa}$ and
$\mathcal{L}_{1,\kappa}$ and prove the rapid-mixing and indistinguishability claims stated above.

\subsection{Construction of $\mathcal{L}_{0,\kappa}$}\label{sec:L0}
We start with the construction of $\mathcal{L}_{0,\kappa}$. Fix integers $n\ge\kappa\ge2$. Define the balanced weight-$k$ Pauli set
\begin{align}
\mathcal T_\kappa := \{P\in\mathcal P_n:\ |\mathrm{supp}(P)|=\kappa,\ \text{and on its support }P_i\in\{X,Y,Z\}\}.
\end{align}
By a counting argument, we have $|\mathcal T_\kappa| = 3^\kappa\binom{n}{\kappa}$. We define
\begin{align}
\mathcal L_{0,\kappa}(\rho) := \sum_{P\in\mathcal T_\kappa}\big(P\rho P-\rho\big).
\end{align}
We express $\mathcal{L}_{0,\kappa}$ in the Pauli-basis expansion in Eq.~\eqref{eq:lower_Pauli_exp}. We note that $h_i$ are zero, indicating that $\mathcal L_{0,\kappa}$ is a Lindbladian with only dissipator terms. The Kossakowski matrix is diagonal with $a_{PP}=1$ for $P\in\mathcal T_\kappa$, and $a_{jk}=0$ otherwise. So all coefficients satisfy $|a_{jk}|\le1$ and the sparsity of $\mathcal{L}_{0,\kappa}$ satisfies 
\begin{align}
M_\kappa = |\mathcal T_\kappa| = 3^\kappa\binom{n}{\kappa}=\Theta(n^\kappa).
\end{align}

\paragraph{Exponential decay given any single-copy input states}
Given any single-copy input state $\rho_{\text{in}}$, we now show that the output state $\rho_{0,\kappa}(t):=e^{\mathcal{L}_{0,\kappa}t}\rho_{\text{in}}$ is exponentially close to the maximally mixed state when $t\geq t_{0,\kappa}=O(M_\kappa^{2/\kappa-1})$. 

Let $\mathcal L_{0,\kappa}^\dagger$ denote the Heisenberg adjoint:
\begin{align}
\tr\big(O\mathcal L_{0,\kappa}(\rho)\big)=\tr\big((\mathcal L_{0,\kappa}^\dagger(O))\rho\big).
\end{align}
Since $\tr(O(P\rho P-\rho))=\tr((POP-O)\rho)$, we have
\begin{align}
\mathcal L_{0,\kappa}^\dagger(O)=\mathcal L_{0,\kappa}(O)=\sum_{P\in\mathcal T_\kappa}(POP-O).
\end{align}
For any Pauli observables $Q\in\mathcal P_n$, $P\in\mathcal T_\kappa$ in the evolved Lindbladian $\mathcal{L}_{0,\kappa}$ contributes $PQP-Q$. If $[P,Q]=0$ then $PQP=Q$ and this term contributes $0$. If $\{P,Q\}=0$ then $PQP=-S$ and this term contributes $-2Q$. Define
\begin{align}
N_{\text{ac}}^{(\kappa)}(Q):=\big|\{P\in\mathcal T_\kappa:\{P,Q\}=0\}\big|.
\end{align}
Consequently, for $\rho_{0,\kappa}(t)=e^{t\mathcal L_{0,\kappa}}(\rho_{\text{in}})$, 
\begin{align}
\frac{d}{dt}\langle Q\rangle_t = \mathcal L_{0,\kappa}^\dagger(Q)= -2N_{\text{ac}}^{(\kappa)}(Q)\langle Q\rangle_t,
\quad\Rightarrow\quad
\langle Q\rangle_t = e^{-2N_{\text{ac}}^{(\kappa)}(Q)t}\langle Q\rangle_0.
\end{align}
We first show a uniform lower bound on $N_{\text{ac}}^{(\kappa)}(Q)$ for any non-identity $Q$. 

\begin{lemma}
For every non-identity Pauli string $Q\in\mathcal P_n$,
\begin{align}
N_{\text{ac}}^{(\kappa)}(Q)\ge \frac{4}{9}3^\kappa\binom{n-1}{\kappa-1}.
\end{align}
\end{lemma}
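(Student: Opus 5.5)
The plan is to count anticommuting elements of $\mathcal T_\kappa$ directly, by fixing one qubit in the support of $Q$. Let $w=|\supp{Q}|\ge 1$. A string $P\in\mathcal T_\kappa$ anticommutes with $Q$ exactly when the number of sites in $\supp{P}\cap\supp{Q}$ at which $P$ and $Q$ carry different non-identity Paulis is odd. On any single site $s\in\supp{Q}$, among the three choices $P_s\in\{X,Y,Z\}$, exactly two anticommute with $Q_s$ and one commutes with it.

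\textbf{Step 1 (fix a pivot site).} Pick some $s\in\supp{Q}$ and restrict attention to those $P\in\mathcal T_\kappa$ with $s\in\supp{P}$. There are $3^{\kappa}\binom{n-1}{\kappa-1}$ of them. For each one, fix the support $S\ni s$ and the letters of $P$ on $S\setminus\{s\}$, and write $\pi\in\{0,1\}$ for the parity of anticommuting sites of $P$ and $Q$ outside $s$. Then $P$ anticommutes with $Q$ iff the site-$s$ letter $P_s$ anticommutes with $Q_s$ when $\pi=0$, or commutes with $Q_s$ when $\pi=1$.

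\textbf{Step 2 (count the good completions).} Among the three choices for $P_s$, the number giving overall anticommutation is $2$ if $\pi=0$ and $1$ if $\pi=1$. This is at least $1$ out of $3$ in every case. That crude count already yields
\[
N_{\mathrm{ac}}^{(\kappa)}(Q)\ \ge\ \tfrac13\,3^{\kappa}\binom{n-1}{\kappa-1},
\]
which falls short of the target constant $4/9$.

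\textbf{Step 3 (improve the constant).} To reach $4/9$, I would show that $\pi=0$ occurs for at least a $1/3$ fraction of the configurations $(S, P|_{S\setminus\{s\}})$. The average fraction of good $P_s$ is then at least $\tfrac13\cdot\tfrac23+\tfrac23\cdot\tfrac13=\tfrac49$.

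For each fixed $S$, split into cases. If $(S\setminus\{s\})\cap\supp{Q}=\emptyset$, then $\pi=0$ deterministically. Otherwise, choose a second site $s'\in (S\setminus\{s\})\cap\supp{Q}$. The letter at $s'$ is anticommuting with probability $2/3$, independently of the other sites, so conditioning on the rest, $\Pr[\pi=0]\in\{1/3,2/3\}\ge 1/3$.

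Averaging over $S$ therefore gives $\Pr[\pi=0]\ge 1/3$, and the total count is at least $\tfrac49\, 3^\kappa\binom{n-1}{\kappa-1}$. Strings $P$ with $s\notin\supp{P}$ contribute nonnegatively, so they can simply be dropped.

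\textbf{Main obstacle.} The main obstacle is Step 3's conditioning argument: making rigorous that the parity $\pi$ is not biased toward $1$ beyond $2/3$. The independence of the letter at $s'$ handles this cleanly. If an exact computation is preferred instead, the alternative is the generating-function identity $\Pr[\pi=0]=\tfrac12\bigl(1+(-\tfrac13)^{m}\bigr)$, where $m=|(S\setminus\{s\})\cap\supp{Q}|$. This quantity is always $\ge 1/3$, with the minimum attained at $m=1$.
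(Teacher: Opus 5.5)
Your proposal is correct and follows essentially the same route as the paper: fix a site $i\in\mathrm{supp}(Q)$, restrict to the $\binom{n-1}{\kappa-1}$ supports $S\ni i$, and show that at least a $4/9$ fraction of the $3^\kappa$ strings on each such $S$ anticommute with $Q$. The only difference is cosmetic: the paper counts this fraction exactly via the odd-binomial sum, obtaining $\tfrac12\bigl(1-(-\tfrac13)^r\bigr)$ with $r=|\mathrm{supp}(Q)\cap S|$, whereas you condition on the pivot letter, and your two-stage count agrees with the paper's exactly, since $\tfrac13+\tfrac16\bigl(1+(-\tfrac13)^{r-1}\bigr)=\tfrac12\bigl(1-(-\tfrac13)^r\bigr)$.
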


\begin{proof}
Pick any site $i$ with $Q_i\neq I$. Consider all supports $S\subset[n]$ of size $\kappa$ containing $i$; there are $\binom{n-1}{\kappa-1}$ such supports. Fix such an $S$. Let $r:=|\mathrm{supp}(Q)\cap S|$. Since $i\in\mathrm{supp}(Q)\cap S$, we have $r\ge1$.

We denote $\mathcal T_\kappa(S)$ as the set of strings supported exactly on $S$, and count how many anticommute with $Q$. On each of the $r$ sites where $Q$ is non-identity, among the three choices $\{X,Y,Z\}$, exactly $2$ anticommute with $Q$ and $1$ commutes; on the other $\kappa-r$ sites, all $3$ commute with $I$. Let $A$ be the number of anticommuting sites among those $r$ sites. Then $P$ anticommutes with $Q$ iff $A$ is odd. For any $a\in\{0,\dots,r\}$, we have $\abs{\{P\in\mathcal T_\kappa(S):A=a\}}=\binom{r}{a}2^a 3^{\kappa-r}$. Therefore,
\begin{align}
\abs{\{P\in\mathcal T_\kappa(S):\{P,Q\}=0\}}=3^{\kappa-r}\sum_{a\ \text{odd}}\binom{r}{a}2^a.
\end{align}
Compute the odd-sum:
\begin{align}
\sum_{a\ \text{odd}}\binom{r}{a}2^a=\frac{(1+2)^r-(1-2)^r}{2}=\frac{3^r-(-1)^r}{2}.
\end{align}
Hence
\begin{align}
\abs{\{P\in\mathcal T_\kappa(S):\{P,Q\}=0\}}
=\frac{3^\kappa}{2}\big(1-(-1/3)^r\big).
\end{align}
For all $r\ge1$, $\frac{1-(-1/3)^r}{2}\ge \frac49$ as odd $r$ odd gives $\ge1/2$ and even $r$ gives $\ge(1-1/9)/2=4/9$. Therefore for each such support $S$,
\begin{align}
\abs{\{P\in\mathcal T_\kappa(S):\{P,Q\}=0\}}\ge \frac49 3^\kappa.
\end{align}
Summing over the $\binom{n-1}{\kappa-1}$ supports $i\in S$ yields the lemma.
\end{proof}

Given the above lemma, we denote
\begin{align}
N_{\star,\kappa}:=\frac49 3^\kappa\binom{n-1}{\kappa-1}
\end{align}
for simplicity. Then for every $Q\neq I$, we have $N_{\text{ac}}^{(\kappa)}(Q)\ge N_{\star,\kappa}$, hence $|\langle Q\rangle_t|\le e^{-2N_{\star,\kappa}t}$. We expand
\begin{align}
\rho_{0,\kappa}(t)=\frac{1}{2^n}\left(I+\sum_{Q\neq I}\langle Q\rangle_tQ\right).
\end{align}
We compute the Hilbert–Schmidt norm using orthogonality:
\begin{align}
\left\|\rho_{0,\kappa}(t)-\frac{I}{2^n}\right\|_2^2=\frac{1}{2^n}\sum_{Q\neq I}\langle Q\rangle_t^2
\le \frac{e^{-4N_{\star,\kappa}t}}{2^n}\sum_{Q\neq I}\langle Q\rangle_0^2.
\end{align}
For any state $\rho_{\text{in}}$, $\sum_{Q\neq I}\langle Q\rangle_0^2 = 2^n\tr(\rho_{\text{in}}^2)-1\le 2^n-1\le 2^n$, hence
\begin{align}
\left\|\rho_{0,\kappa}(t)-\frac{I}{2^n}\right\|_2\le e^{-2N_{\star,\kappa}t}.
\end{align}
We convert this distance to trace norm:
\begin{align}
\left\|\rho_{0,\kappa}(t)-\frac{I}{2^n}\right\|_1\le 2^{n/2}e^{-2N_{\star,\kappa}t}.
\end{align}
To ensure $\|\rho_{0,\kappa}(t)-I/2^n\|_1\le 2^{-n}$, it suffices that $2^{n/2}e^{-2N_{\star,\kappa}t}\le 2^{-n}$, which means $e^{-2N_{\star,\kappa}t}\le 2^{-3n/2}$. Thus, a valid threshold is
\begin{align}
t\ge t_{0,\kappa}:=\frac{3n\ln2+2}{4(N_{\star,\kappa}-1)}.
\end{align}
Express $t_{0,\kappa}$ in terms of $M_{\kappa}$.  
Using $\binom{n-1}{\kappa-1}=\frac{\kappa}{n}\binom{n}{\kappa}$, we have $N_{\star,\kappa}=\frac493^\kappa\binom{n-1}{\kappa-1}=\frac493^\kappa\cdot \frac{\kappa}{n}\binom{n}{\kappa}=\frac{4\kappa}{9n}M_{\kappa}$. Therefore
\begin{align}
t_{0,\kappa}=\frac{3n\ln2+2}{4\cdot \left(\frac{4\kappa}{9n}M_{\kappa}-1\right)}=\frac{9n(3n\ln2+2)}{16\kappa M_{\kappa}-9n}.
\end{align}
So for fixed $\kappa$, we have $t_{0,\kappa}=O\big(M_{\kappa}^{2/\kappa-1}\big)$ given that $M_{\kappa}=\Theta(n^\kappa)$.

Concluding all the above derivations, we reach the following lemma:
\begin{lemma}\label{lem:lower_L0_decay}
Consider the $M_{\kappa}=\Theta(n^\kappa)$-sparse null Lindbladian consisting only dissipator terms as 
\begin{align}\label{eq:lower_L0}
\mathcal{L}_{0,\kappa}(\rho):= \sum_{P\in\mathcal T_\kappa}(P\rho P-\rho),
\end{align}
where
\begin{align}
\mathcal T_\kappa := \{P\in\mathcal P_n:\ |\mathrm{supp}(P)|=\kappa,\ \text{and on its support }P_i\in\{X,Y,Z\}\}.
\end{align}
Given any single-copy input state $\rho_{\text{in}}$, the output state $\rho_{0,\kappa}(t):=e^{\mathcal{L}_{0,\kappa}t}\rho_{\text{in}}$ 
satisfies
\begin{align}
\left\|\rho_{0,\kappa}(t)-\frac{I}{2^n}\right\|_1 \le 2^{-n}
\end{align}
when 
\begin{align}
t\geq t_{0,\kappa}=\frac{9n(3n\ln2+2)}{16\kappa M_{\kappa}-9n}=O(M_{\kappa}^{2/\kappa-1}).
\end{align}
\end{lemma}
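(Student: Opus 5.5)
The statement of \cref{lem:lower_L0_decay} is essentially a packaging of the derivation preceding it, so the plan is to organize that derivation into a clean chain of four steps.

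\textbf{Step 1: Heisenberg picture.} Work with the adjoint $\mathcal L_{0,\kappa}^\dagger$, which coincides with $\mathcal L_{0,\kappa}$ because each term $P\rho P-\rho$ is self-adjoint under the Hilbert--Schmidt inner product. For each $P\in\mathcal T_\kappa$ and Pauli $Q$, use $PQP=\pm Q$ according to whether $P$ and $Q$ commute or anticommute. This shows that every Pauli $Q$ is an eigenvector of $\mathcal L_{0,\kappa}^\dagger$ with eigenvalue $-2N_{\text{ac}}^{(\kappa)}(Q)$. Hence $\langle Q\rangle_t=e^{-2N_{\text{ac}}^{(\kappa)}(Q)t}\langle Q\rangle_0$ exactly, for every input $\rho_{\text{in}}$, and $\langle I\rangle_t=1$ is preserved.

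\textbf{Step 2: uniform spectral gap.} Invoke the preceding counting lemma, $N_{\text{ac}}^{(\kappa)}(Q)\ge N_{\star,\kappa}=\tfrac49 3^\kappa\binom{n-1}{\kappa-1}$ for every $Q\neq I$. Rewrite this with $\binom{n-1}{\kappa-1}=\tfrac{\kappa}{n}\binom n\kappa$ to obtain $N_{\star,\kappa}=\tfrac{4\kappa}{9n}M_\kappa$.

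\textbf{Step 3: trace-norm bound.} Expand $\rho_{0,\kappa}(t)-I/2^n=2^{-n}\sum_{Q\ne I}\langle Q\rangle_t Q$. Pauli orthogonality then gives a squared Hilbert--Schmidt norm of at most $2^{-n}e^{-4N_{\star,\kappa}t}\sum_{Q\neq I}\langle Q\rangle_0^2$. Apply the purity bound $\sum_{Q\ne I}\langle Q\rangle_0^2=2^n\tr\rho_{\text{in}}^2-1\le 2^n$ to get $\|\cdot\|_2\le e^{-2N_{\star,\kappa}t}$. Finally, $\|X\|_1\le 2^{n/2}\|X\|_2$ yields $\|\rho_{0,\kappa}(t)-I/2^n\|_1\le 2^{n/2}e^{-2N_{\star,\kappa}t}$.

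\textbf{Step 4: the threshold.} It suffices that $2N_{\star,\kappa}t\ge \tfrac32 n\ln 2$. The stated $t_{0,\kappa}=\frac{3n\ln2+2}{4(N_{\star,\kappa}-1)}$ is at least $\frac{3n\ln 2}{4N_{\star,\kappa}}$ whenever $N_{\star,\kappa}>1$, which holds for $n\ge\kappa\ge2$. This gives the required bound for all $t\ge t_{0,\kappa}$, and monotonicity in $t$ is automatic. Substituting $N_{\star,\kappa}=\tfrac{4\kappa}{9n}M_\kappa$ produces the closed form $\frac{9n(3n\ln2+2)}{16\kappa M_\kappa-9n}$.

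The remaining asymptotic statement follows from $M_\kappa=\Theta(n^\kappa)$, so $n=\Theta(M_\kappa^{1/\kappa})$ for fixed $\kappa$. Then $t_{0,\kappa}=\Theta(n^2/M_\kappa)=O(M_\kappa^{2/\kappa-1})$, where the denominator $16\kappa M_\kappa-9n=\Theta(M_\kappa)$ because $\kappa\ge2$.

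I expect no real obstacle, since the counting lemma already does the hard work. The one point needing care is Step 4: checking that the slightly padded constant ($+2$ in the numerator, $-1$ in the denominator) indeed implies $e^{-2N_{\star,\kappa}t}\le 2^{-3n/2}$, and that the denominator is positive for all admissible $n,\kappa$.
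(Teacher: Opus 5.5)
Your route is the paper's route: eigen-decay of each Pauli in the Heisenberg picture, the counting lemma for the gap $N_{\star,\kappa}$, the Hilbert--Schmidt bound via purity, and the $2^{n/2}$ conversion to trace norm. Steps 1--3 are correct. The step that fails is the final substitution in Step 4. With $N_{\star,\kappa}=\frac{4\kappa}{9n}M_\kappa$ one has $4(N_{\star,\kappa}-1)=\frac{16\kappa M_\kappa-36n}{9n}$. Hence $\frac{3n\ln 2+2}{4(N_{\star,\kappa}-1)}=\frac{9n(3n\ln 2+2)}{16\kappa M_\kappa-36n}$, not $\frac{9n(3n\ln 2+2)}{16\kappa M_\kappa-9n}$. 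The paper's derivation just before the lemma makes the same slip; a denominator of $-9n$ would come from $4N_{\star,\kappa}-1$. The lemma's closed form is strictly smaller than the padded threshold you proved sufficient. So your argument as written does not cover $t\in\bigl[\frac{9n(3n\ln 2+2)}{16\kappa M_\kappa-9n},\frac{9n(3n\ln 2+2)}{16\kappa M_\kappa-36n}\bigr)$.

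The repair takes one line and skips the padded form entirely. What you actually need is $t\ge\frac{3n\ln 2}{4N_{\star,\kappa}}=\frac{27n^2\ln 2}{16\kappa M_\kappa}$. The stated $t_{0,\kappa}=\frac{27n^2\ln 2+18n}{16\kappa M_\kappa-9n}$ exceeds this, because its numerator is larger and its denominator is smaller but positive. Positivity follows from $\binom{n}{\kappa}\ge n/\kappa$, which gives $16\kappa M_\kappa\ge 16\cdot 3^\kappa n\ge 144n>9n$. With this check the lemma holds exactly as stated, and your asymptotic conclusion $t_{0,\kappa}=\Theta(n^2/M_\kappa)=O(M_\kappa^{2/\kappa-1})$ is unaffected.
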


\subsection{Construction of $\mathcal{L}_{1,\kappa}$}\label{sec:L1}

We then provide the construction of $\mathcal{L}_{1,\kappa}$. Fix integers $n\ge\kappa\ge2$. Define the alternative weight-$k$ Pauli set
\begin{align}
\mathcal T_\kappa' := \{P\in\mathcal P_n:\ |\mathrm{supp}(P)|=\kappa,\ \text{and on its support }P_i\in\{X,Y,Z\},\ P\neq X^{\otimes n}\}.
\end{align}
By a counting argument, we have $|\mathcal T_\kappa'|=|\mathcal T_\kappa-1| = 3^\kappa\binom{n}{\kappa}-1$. We define
\begin{align}
\mathcal L_{1,\kappa}(\rho) := \sum_{P\in\mathcal T_\kappa'}\big(P\rho P-\rho\big).
\end{align}
We express $\mathcal{L}_{1,\kappa}$ in the Pauli-basis expansion in Eq.~\eqref{eq:lower_Pauli_exp}.Again, we have $h_i$ being zero, indicating that $\mathcal L_{1,\kappa}$ is a Lindbladian with only dissipator terms. The Kossakowski matrix is diagonal with $a_{PP}=1$ for $P\in\mathcal T_\kappa'$, and $a_{jk}=0$ otherwise. So all coefficients satisfy $|a_{jk}|\le1$ and the sparsity of $\mathcal{L}_{1,\kappa}$ is $M_\kappa-1 = |\mathcal T_\kappa'| = 3^\kappa\binom{n}{\kappa}-1=\Theta(n^\kappa)$. Following the exact way we proved \Cref{lem:lower_L0_decay}, we have the following lemma:
\begin{lemma}\label{lem:lower_L1_decay}
Consider the $(M_{\kappa}-1)=\Theta(n^\kappa)$-sparse null Lindbladian consisting only dissipator terms as 
\begin{align}\label{eq:lower_L1}
\mathcal{L}_{1,\kappa}(\rho):= \sum_{P\in\mathcal T_\kappa'}(P\rho P-\rho),
\end{align}
where
\begin{align}
\mathcal T_\kappa' := \{P\in\mathcal P_n:\ |\mathrm{supp}(P)|=\kappa,\ \text{and on its support }P_i\in\{X,Y,Z\},\ P\neq X^{\otimes n}\}.
\end{align}
Given any single-copy input state $\rho_{\text{in}}$, the output state $\rho_{1,\kappa}(t):=e^{\mathcal{L}_{1,\kappa}t}\rho_{\text{in}}$ 
satisfies
\begin{align}
\left\|\rho_{1,\kappa}(t)-\frac{I}{2^n}\right\|_1 \le 2^{-n}
\end{align}
when 
\begin{align}
t\geq t_{0,\kappa}=\frac{9n(3n\ln2+2)}{16\kappa M_{\kappa}-9n}=O(M_{\kappa}^{2/\kappa-1}).
\end{align}
\end{lemma}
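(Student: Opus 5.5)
The proof follows the argument for \Cref{lem:lower_L0_decay} essentially verbatim, with one modification: the anticommutation count must be re-bounded after deleting a single jump operator. First, $\mathcal L_{1,\kappa}$ is a sum of Pauli dephasing-type terms $P\rho P-\rho$. So, exactly as before, $\mathcal L_{1,\kappa}^\dagger=\mathcal L_{1,\kappa}$, and every Pauli string $Q$ is an eigenoperator:
\begin{align}
\mathcal L_{1,\kappa}^\dagger(Q)=-2N'^{(\kappa)}_{\text{ac}}(Q)\,Q,
\qquad
N'^{(\kappa)}_{\text{ac}}(Q):=\big|\{P\in\mathcal T_\kappa':\{P,Q\}=0\}\big|.
\end{align}
Hence $\langle Q\rangle_t=e^{-2N'^{(\kappa)}_{\text{ac}}(Q)t}\langle Q\rangle_0$ for the evolution under $\mathcal L_{1,\kappa}$.

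The key step is a uniform lower bound on $N'^{(\kappa)}_{\text{ac}}(Q)$ for $Q\neq I$. Since $\mathcal T_\kappa'$ is $\mathcal T_\kappa$ with at most one element removed, we have $N'^{(\kappa)}_{\text{ac}}(Q)\ge N^{(\kappa)}_{\text{ac}}(Q)-1\ge N_{\star,\kappa}-1$, using the earlier counting lemma. (If the excluded string is not actually in $\mathcal T_\kappa$, e.g.\ when $\kappa<n$, nothing is removed and the bound $N_{\star,\kappa}$ holds directly; either way $N_{\star,\kappa}-1$ is valid.) This is the only place where the argument differs, and I expect it to be the main point to check. It is also why the threshold $t_{0,\kappa}$ was already stated with $N_{\star,\kappa}-1$ in the denominator. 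We need $N_{\star,\kappa}>1$, which holds for $n\ge\kappa\ge2$ since $N_{\star,\kappa}\ge \tfrac49\cdot 9(n-1)\ge 4$.

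Then repeat the norm estimates. Expand $\rho_{1,\kappa}(t)=2^{-n}\bigl(I+\sum_{Q\ne I}\langle Q\rangle_tQ\bigr)$ and use Pauli orthogonality together with $\sum_{Q\neq I}\langle Q\rangle_0^2\le 2^n$. This gives $\|\rho_{1,\kappa}(t)-I/2^n\|_2\le e^{-2(N_{\star,\kappa}-1)t}$, and hence $\|\rho_{1,\kappa}(t)-I/2^n\|_1\le 2^{n/2}e^{-2(N_{\star,\kappa}-1)t}$. Requiring this to be at most $2^{-n}$ amounts to $2(N_{\star,\kappa}-1)t\ge \tfrac32 n\ln 2$. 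This is implied by $t\ge (3n\ln2+2)/(4(N_{\star,\kappa}-1))=t_{0,\kappa}$.

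Finally, substitute $N_{\star,\kappa}=\tfrac{4\kappa}{9n}M_\kappa$ to get the closed form $t_{0,\kappa}=\frac{9n(3n\ln2+2)}{16\kappa M_\kappa-9n}$. Since $M_\kappa=\Theta(n^\kappa)$, we have $n=\Theta(M_\kappa^{1/\kappa})$, so the ratio is $\Theta(n^2/M_\kappa)=O(M_\kappa^{2/\kappa-1})$, which is exactly the same threshold as in \Cref{lem:lower_L0_decay}.
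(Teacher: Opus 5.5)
Your argument is the same as the paper's, which simply says the proof of the $\mathcal L_{0,\kappa}$ decay lemma carries over. The one new ingredient, $N'^{(\kappa)}_{\text{ac}}(Q)\ge N^{(\kappa)}_{\text{ac}}(Q)-1\ge N_{\star,\kappa}-1$, is correct. Your parenthetical that $X^{\otimes n}\notin\mathcal T_\kappa$ when $\kappa<n$, so that nothing is actually removed, is also right; this makes the statement's ``$(M_\kappa-1)$-sparse'' description inaccurate, though it does not affect the decay claim.

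There is, however, a false step at the very end, inherited from the paper's own computation. Substituting $N_{\star,\kappa}=\frac{4\kappa}{9n}M_\kappa$ gives $4(N_{\star,\kappa}-1)=\frac{16\kappa M_\kappa-36n}{9n}$, so
\[
\frac{3n\ln 2+2}{4(N_{\star,\kappa}-1)}=\frac{9n(3n\ln2+2)}{16\kappa M_\kappa-36n}>\frac{9n(3n\ln2+2)}{16\kappa M_\kappa-9n}.
\]
The closed form in the statement is therefore strictly smaller than the threshold your chain of implications uses. As written, you have shown the trace-distance bound only for $t\ge\frac{9n(3n\ln2+2)}{16\kappa M_\kappa-36n}$, not for every $t\ge t_{0,\kappa}$ with $t_{0,\kappa}$ as stated.

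The fix uses slack you already have.
\begin{itemize}
\item What you actually need is $t\ge\frac{3n\ln2}{4(N_{\star,\kappa}-1)}=\frac{27n^2\ln2}{16\kappa M_\kappa-36n}$.
\item Cross-multiplying shows that $\frac{9n(3n\ln2+2)}{16\kappa M_\kappa-9n}\ge\frac{27n^2\ln2}{16\kappa M_\kappa-36n}$ is equivalent to $32\kappa M_\kappa\ge 81n^2\ln2+72n$.
\item Your own bound $N_{\star,\kappa}\ge 4(n-1)$ gives $\kappa M_\kappa=\frac{9n}{4}N_{\star,\kappa}\ge 9n(n-1)$.
\item Finally, $288n(n-1)\ge 81n^2\ln2+72n$ holds for all $n\ge 2$.
\end{itemize}
In the regime $\kappa<n$ it is even simpler: nothing is removed, so $N_{\star,\kappa}$ itself works, and the required threshold $\frac{27n^2\ln2}{16\kappa M_\kappa}$ is trivially below the stated one. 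With this replacement for the last step, the proof is complete.
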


\subsection{Reducing to the state distinguishing problem}
Finally, we wrap everything up into a proof for \Cref{thm:lower}.

\begin{proof}[Proof of \Cref{thm:lower}]
Suppose the sample complexity lower bound is $m_0$ for any practical protocol with schemes in \Cref{def:practical_lindbladian_learning_lower} that can learn the structure of a Lindbladian with only dissipator terms, i.e., identify the nonzero coefficients of the dissipator terms that are at least of amplitude $\eta=1/3$. Given such a protocol, we can then solve the following Lindbladian distinguishing task:
\begin{problem}[Lindbladian distinguishing task]\label{prob:lindbladian_distinguish}
We are guaranteed that either of the following two holds given an unknown $n$-qubit Lindbladian:
\begin{enumerate}
\item The underlying Lindbladian is the null Lindbladian $\mathcal{L}_{0,\kappa}$ defined in \cref{eq:lower_L0};
\item The underlying Lindbladian is the alternative Lindbladians $\mathcal{L}_{1,\kappa}$  defined in \cref{eq:lower_L1}.
\end{enumerate}
The goal is to distinguish between the two cases with high probability.
\end{problem}
This indicates that the sample complexity $m_1$ of solving \Cref{prob:lindbladian_distinguish} is at most the sample complexity for protocols that learn the structure of a Lindbladian up to $\eta=1/3$. In other words, the sample complexity lower bound $m_1$ for \Cref{def:practical_lindbladian_learning_lower} satisfies $m_1\leq m_0$.

We denote the input states for the $m_1$ samples are $\{\rho_{\text{in}}^{(i)}\}_{i\in[m_1]}$, and the corresponding output states are $\{\rho_{\text{out}}^{(i)}\}_{i\in[m_1]}$. The evolution time can be adaptively chosen as any $t\geq t_{0,\kappa}=\Theta(M_\kappa^{2/\kappa-1})$ in \Cref{lem:lower_L0_decay} and \Cref{lem:lower_L1_decay}. According to \Cref{lem:lower_L0_decay} and \Cref{lem:lower_L1_decay}, we have the following properties:
\begin{itemize}
\item If the underlying Lindbladian is $\mathcal{L}_{0,\kappa}$, the output states $\bigotimes_{i=1}^{m_1}\rho_{\text{out}}^{(i)}$ is $m_1\cdot 2^{-n}$ close to $(\tfrac{I}{2^n})^{\otimes m_1}$ in total variation distance. 
\item If the underlying Lindbladian is $\mathcal{L}_{1,\kappa}$, the output states $\bigotimes_{i=1}^{m_1}\rho_{\text{out}}^{(i)}$ is $m_1\cdot 2^{-n}$ close to $(\tfrac{I}{2^n})^{\otimes m_1}$ in the total variation distance. 
\end{itemize}

Therefore, we can conclude that the total variation distance between $\bigotimes_{i=1}^{m_1}\rho_{\text{out}}^{(i)}$ for the two cases is bounded by $2m_1\cdot 2^{-n}$. We note that even with joint measurement, the probability of distinguishing between these two sets of states is bounded by $2^{n-1}\cdot m_1^{-1}$. As a result, we can derive that, in order to solve this distinguishing task with probability, say, $2/3$, we require
\begin{align}
m_1\geq 3\cdot 2^{n-2}.
\end{align}
And the claimed exponential lower bound on $m_0$ immediately holds as $m_0\geq m_1$.
\end{proof}

\section{SPAM robustness}
\label{sec:spam_robustness}

In realistic experimental platforms, noise associated with state preparation and measurement (SPAM) is often a substantial source of error. For the learning guarantees derived above to be experimentally meaningful, our algorithm must therefore be robust to SPAM noise. Since the algorithm consists of a structure–learning stage followed by coefficient learning, both stages must tolerate such imperfections.

We consider a standard and widely used SPAM model in which errors are described by single–qubit depolarizing channels acting independently on each qubit. Concretely, for an intended single-qubit state $\rho$, the actual prepared state is
\begin{equation}
    \rho \;\longmapsto\; \mathcal E_{\mathrm P}(\rho)
    :=
    r_{\mathrm P}\rho + (1-r_{\mathrm P})\tfrac{I}{2},
\end{equation}
and immediately before the measurement, the single-qubit state undergoes
\begin{equation}
    \rho \;\longmapsto\; \mathcal E_{\mathrm M}(\rho)
    :=
    r_{\mathrm M}\rho + (1-r_{\mathrm M})\tfrac{I}{2}.
\end{equation}
We assume the retention parameters $r_{\mathrm P}$ and $r_{\mathrm M}$ are known from prior calibration and write
$r=r_{\mathrm P}r_{\mathrm M}$. In our access model, an experiment therefore implements the sequence
\begin{equation}
\label{eq:spam_access_model}
    \rho
    \;\longmapsto\;
    \mathcal E_{\mathrm M}^{\otimes n}
    \circ e^{t\mathcal L}
    \circ \mathcal E_{\mathrm P}^{\otimes n}(\rho),
\end{equation}
after which the measurement is performed.

First, consider the SPAM-robustness of the structure–learning step, which relies on Pauli channel estimation via population recovery. It was shown recently that Pauli error estimation can be made robust to SPAM noise of the above form~\cite{o2025spam}. Writing $\delta_s:=1-r$ to quantify the magnitude of the overall SPAM error, the SPAM–robust algorithm operates in two regimes. In the low–SPAM regime $\delta_s \ll \tfrac{1}{n}\log{\tfrac{1}{\varepsilon_s}}$, the sample complexity matches the SPAM–free setting up to a polynomial factor $\operatorname{poly}\bigl(1/\varepsilon_s\bigr)$ in terms of the per-node measurement accuracy $\varepsilon_s$. Consequently, since the structure learning guarantees of \Cref{result:structure_learning_informal} rely on $\varepsilon_s \le \widetilde{\Theta}(\eta^2/M^2)$, our algorithm would continue to hold with an additional $\operatorname{poly}\bigl(M/\eta \bigr)$ sample overhead. This regime, however, requires SPAM rates that decrease with system size and may be unrealistic experimentally. In the general regime, where $\delta_s \gg \tfrac{1}{n}\log{\tfrac{1}{\varepsilon_s}}$, the sample complexity incurs a subexponential overhead $\exp(\Theta(n^{1/3}))$. Moreover, \citet{o2025spam} provide evidence that tolerating even a tiny amount of SPAM in Pauli error estimation must incur an $\exp(n^{1/3})$ overhead, implying that any structure–learning method based on Pauli error estimation necessarily inherits this scaling. However, while asymptotically unfavorable, this scaling remains practical for moderate system sizes $50\lesssim n\lesssim 1000$. 

We now turn to the coefficient–learning stage. Under the depolarizing SPAM model above, the preparation and measurement noise channels act diagonally in the Pauli basis. In particular, for any $n$–qubit Pauli operator $P$,
\begin{equation}
    \mathcal E_{\mathrm P}(P)=r_{\mathrm P}^{w(P)}P,
    \qquad
    \mathcal E_{\mathrm M}^\dagger(P)=r_{\mathrm M}^{w(P)}P,
\end{equation}
where $w(P)$ denotes the Pauli weight. As a consequence, the short–time Pauli derivative data accessed by the
algorithm are damped relative to the noiseless case. Concretely, instead of $\tfrac{1}{2^n}\tr\bigl(Q\,\mathcal L^\dagger(O)\bigr)$, we obtain
\begin{equation}
\label{eq:spam_damped_derivative_data}
    \tfrac{1}{2^n}\tr\Bigl(
        \mathcal E_{\mathrm P}(Q)\,
        \mathcal L^\dagger\bigl(\mathcal E_{\mathrm M}^\dagger(O)\bigr)
    \Bigr)
    =
    r_{\mathrm M}^{w(Q)}\,r_{\mathrm P}^{w(O)}\,
    \tfrac{1}{2^n}\tr\bigl(Q\,\mathcal L^\dagger(O)\bigr).
\end{equation}
Recovering the noiseless derivative therefore amounts to rescaling the measured data by the known factor $r_{\mathrm M}^{-w(Q)}r_{\mathrm P}^{-w(O)}$, which amplifies variance and therefore increases the required sampling precision. Specifically, if the Pauli probes satisfy $w(Q),w(O)\le k$, as is the case for $k$–local Lindbladians, the worst–case variance amplification scales as $r^{-2k}$ with $r=r_{\mathrm P}r_{\mathrm M}$. Consequently, coefficient learning remains SPAM robust under known single–qubit depolarizing noise at the cost of increasing the total sample complexity of \Cref{result:coefficient_learning_informal} by a factor of $r^{-2k}$, in agreement with the analysis of~\cite{stilck2024efficient, franca2025learning}.

\end{document}